\documentclass[11pt,letterpaper]{article}

\usepackage{amssymb,amsmath,amsfonts}
\usepackage{graphicx,xcolor,enumitem}
\usepackage{epsfig}
\usepackage{amsthm}
\usepackage{bm}
\usepackage{subcaption}
\usepackage{nicematrix}
\usepackage[round]{natbib}
\usepackage{csquotes}

\usepackage{multirow}
\usepackage{algorithm, algorithmic}

\usepackage{diagbox}
	
%\usepackage{array}
%\newcolumntype{N}{>{\centering\arraybackslash}m{.5in}}
%\newcolumntype{G}{>{\centering\arraybackslash}m{2in}}

%\newcommand*{\boldsymbol}[1]{#1}

\renewcommand{\mkbegdispquote}[2]{\itshape}
\usepackage[twoside, hmarginratio=1:1, vmarginratio=1:1, left=1in,top=1in]{geometry}

\RequirePackage[breaklinks=true, hidelinks]{hyperref}
\usepackage{breakcites}

\newcommand{\cA}{\mathcal{A}}

\newcommand{\E}{\mathbb{E}}
\newcommand{\R}{\mathbb{R}}
\newcommand{\p}{\mathbb{P}}

\newcommand{\cL}{{\mathcal L}}
\newcommand{\cS}{{\mathcal S}}
\newcommand{\cF}{{\mathcal F}}

\newcommand{\cU}{{\mathcal U}}
\newcommand{\cV}{{\mathcal V}}
\newcommand{\cM}{{\mathcal M}}
\newcommand{\cN}{{\mathcal N}}
\newcommand{\cI}{{\mathcal I}}

\newcommand{\tr}{{\rm{tr}}}

\newcommand{\one}{\mathbf{1}}

\newtheorem{theorem}{Theorem}

\newtheorem{assumption}[theorem]{Assumption}

\newtheorem{definition}[theorem]{Definition}

\newtheorem{lemma}[theorem]{Lemma}

\newtheorem{proposition}[theorem]{Proposition}
\newtheorem{remark}[theorem]{Remark}

\theoremstyle{definition}

\numberwithin{equation}{section}
\numberwithin{theorem}{section}

%\allowdisplaybreaks

\begin{document}

\title{Goal-based portfolio selection with mental accounting\footnote{The authors thank Professor Thaleia Zariphopoulou, Professor Panagiotis E. Souganidis, and seminar participants at the Byrne B2A2 (Back to Ann Arbor) Conference for their helpful comments. Erhan Bayraktar is partially supported by the National Science Foundation and by the Susan M. Smith chair.  Bingyan Han is partially supported by the Guangzhou-HKUST(GZ) Joint Funding Scheme (No. 2025A03J3858).}}

\author{
	Erhan Bayraktar \thanks{Department of Mathematics, University of Michigan, Ann Arbor, Email: erhan@umich.edu.}
	\and Bingyan Han \thanks{Thrust of Financial Technology, The Hong Kong University of Science and Technology (Guangzhou), Email: bingyanhan@hkust-gz.edu.cn.}
}

\date{May 9, 2026}
\maketitle

\begin{abstract}
	We present a continuous-time portfolio selection framework that reflects goal-based investment principles and mental accounting behavior. In this framework, an investor with multiple investment goals constructs separate portfolios, each corresponding to a specific goal, with penalties imposed on fund transfers between these goals, referred to as mental costs. By applying the stochastic Perron's method, we demonstrate that the value function is the unique constrained viscosity solution of a Hamilton-Jacobi-Bellman equation system. Numerical analysis reveals several key features: the free boundaries exhibit complex shapes with bulges and notches; the optimal strategy for one portfolio depends on the wealth level of another; investors must diversify both among stocks and across portfolios; and they may postpone reallocating surplus from an important goal to a less important one until the former's deadline approaches.
	\\[2ex] 
	\noindent{\textbf {Keywords}: Goal-based portfolio selection, dynamic programming, viscosity solutions, comparison principle, stochastic Perron's method.}
	\\[2ex]
	%\noindent{\textbf {Mathematics Subject Classification:} 91G10, 49L20, 49L25, 60G46.} % \\
	%\noindent{\textbf {JEL Classification:}} 
\end{abstract}

\section{Introduction}
% Motivation
The classical mean-variance portfolio selection and expected utility theory have two fundamental assumptions. First, investors possess a clear understanding of risk aversion and can specify precise risk aversion values for their investment decisions. Second, the principle of fungibility holds. Money is a fungible asset, as any \$1 can be exchanged for another. However, behavioral research and recent developments in wealth management challenge these assumptions \citep{thaler1985mental,barberis2001mental,choi2009mental,das2010portfolio,das2022dynamic,capponi2024}.  

Estimating risk aversion parameters can be a challenging task. In contrast, investors are better at stating the thresholds and relative importance of their goals than their risk aversion coefficients. For instance, investors are typically aware of the exact amount and deadline for their children's college tuition fees, and can clearly identify this educational goal as significantly more important than a vacation. Goal-based investing aligns investors' needs with portfolio design, receiving attention from both the academic community \citep{das2010portfolio,das2022dynamic,capponi2024,gargano2024goal} and the retail robo-advising industry. \cite{capponi2024} highlights that investment platforms like Schwab and Betterment can specify various goals, including retirement, house downpayments, and car purchases.

On the other hand, for the principle of fungibility, \cite{thaler1985mental} mentioned the following anecdote:

``Mr. and Mrs. L and Mr. and Mrs. H went on a fishing trip in the northwest and caught some salmon. They packed the fish and sent it home on an airline, but the fish were lost in transit. They received \$300 from the airline. The couples take the money, go out to dinner and spend \$225. They had never spent that much at a restaurant before."

\cite{thaler1985mental} commented that the dinner would not have occurred if the \$300 had come from a salary increase. This lavish dinner, exceeding their typical budget, resulted from perceiving the \$300 as a ``windfall gain". \cite{thaler1985mental} refers to this cognitive phenomenon as mental accounting. Money is mentally categorized based on its intended purpose, creating the illusion of different values and thereby violating the principle of fungibility. Additionally, distinct financial accounts also exist physically to serve specific objectives, such as the 529 College Savings Plan and Coverdell Education Savings Accounts for educational expenses, or 401(k) and IRA accounts for retirement savings.

% Our framework 
Motivated by the observations above, we propose a continuous-time portfolio selection framework that aligns with the goal-based investment philosophy and mental accounting behaviors. Our primary assumption is that individuals, considering multiple investment goals, construct several portfolios, each linked to a specific goal. Consequently, funds are labeled according to these goals. To reflect their priorities, investors can assign different weights to each goal. Moreover, penalties can be imposed when transferring funds between goals, with a higher penalty applied to transfers from more important goals to less important ones. We refer to these penalties as mental costs, which resemble transaction costs, although investors do not lose actual money in transfers. Each portfolio concludes upon the expiration of its corresponding goal, and the terminal value of each portfolio reflects the wealth amount to fulfill that goal.

%Technical contributions
Our technical contributions are as follows. The problem inherently involves multiple wealth variables due to its multiple goals. Moreover, the value function loses one state variable after the deadline for each goal, which is a distinctive feature of our problem. Through a heuristic derivation, we obtain a Hamilton-Jacobi-Bellman (HJB) equation system. Our main technical result is that the value function is the unique constrained viscosity solution of this HJB equation system, which is also continuous when restricted to the time horizon between consecutive goal deadlines. Our proof technique is the stochastic Perron's method; see \cite{bayraktar2013stochastic,bayraktar2015stochastic}. Three main challenges arise in the proof. First, we modify the stochastic Perron's method to establish the viscosity solution property at the goal deadline when the value function loses a state variable. Second, there is a state constraint that the wealth processes are nonnegative, which complicates the proof of the viscosity solution property on the boundary. We adopt the concept of constrained viscosity solutions as in \cite{zariphopoulou1994consumption,bentahar2007}. Third, the mental costs lead to gradient constraints from two sides as follows:
\begin{equation*}
	- \theta_{i}  \leq \partial_{K+1} V_k - \partial_{i} V_k \leq \lambda_{i}, \quad i = k, \ldots, K.
\end{equation*}
Consequently, the proof of the comparison principle requires modification, revealing an interesting connection with the technique in \citet[Proposition 2.2]{hynd2012eigenvalue}. Furthermore, it is also harder to find a strict classical subsolution in Lemma \ref{lem:strict_l}. 

%Numerical findings

We apply a standard numerical scheme based on the finite difference discretization and the penalization method; see \citet[Section 2.3]{azimzadeh2017}. Given many parameters and factors influencing optimal decisions, we conduct a detailed numerical analysis of optimal behaviors in Section \ref{sec:num}:
\begin{enumerate}[label={(\arabic*)}]	
	\item The free boundaries exhibit complex shapes with bulges and notches,  indicating that optimal transfer decisions between goals are nonlinear in wealth variables. These boundaries are dynamic, reflecting the trade-off between current and future goals. When two goals are active, the optimal investment in one portfolio also depends on the wealth level of another portfolio.
	\item Our model introduces a new diversification problem. Agents must diversify not only among stocks but also across portfolios. A portfolio with surplus can use its positions to hedge risks in another underfunded portfolio. The optimal hedging method is strongly influenced by stock return parameters such as correlation.
	\item If an agent prioritizes one goal over another, she will allocate all available funds to the more important goal at its deadline until the target is met, causing the continuation region to shrink to line segments. If the important goal has enough funds already, the agent may still delay allocating surplus to the less important goal until the important goal's deadline approaches.
\end{enumerate}

%Relevant literature
Our paper contributes to the growing literature on goal-based portfolio selection. \cite{das2010portfolio} explore separate portfolios for different goals and impose different thresholds on the failure probability of each goal. They focus on a single time period, with all goals sharing the same deadline. \cite{das2022dynamic} extend this by allowing different deadlines and capturing the competition among goals, though their model assumes finite states for both strategy and wealth. Using data from a FinTech app, \cite{gargano2024goal} show that setting savings goals increases individual savings rates. \cite{capponi2024} propose a continuous-time framework for multiple-goal wealth management, which is also solved using the HJB equation method. They construct a unified portfolio that encompasses all goals, making withdrawals when a decision is made to fulfill a goal. Our model reduces to \cite{capponi2024} when mental cost parameters are set to zero. Thus, our results can be seen as interpolating between two extremes: zero mental cost, as in \cite{capponi2024}, and infinite mental cost, where portfolios are managed separately for each goal.

Another relevant stream of literature is on the stochastic Perron's method. Following the early developments by \cite{bayraktar2012linear,bayraktar2013stochastic,bayraktar2014Dynkin}, \cite{bayraktar2015stochastic} study the lifetime ruin problem under proportional transaction costs. \cite{belak2019utility} consider the utility maximization problem with constant and proportional transaction costs. \cite{belak2022optimal} further explore the case with strictly positive transaction costs.  \cite{rokhlin2014} extends the stochastic Perron's method to optimal control problems with state constraints. Other applications can be found in \cite{sirbu2014stochastic,bayraktar2016robust,bayraktar2016targetgame,bayraktar2016target,belak2017impulse,bayraktar2022vanish}, to name a few.

Finally, our mental cost formulation is closely related to the extensive literature on transaction costs, pioneered by \cite{magill1976portfolio,davis1990portfolio,davis1993european,shreve1994optimal}. However, unlike transaction cost problems where costs appear in the state process, our proportional mental costs appear in the objective function. This leads to significant differences in the shape of the continuation regions. Our value function does not exhibit homotheticity as in \cite{shreve1994optimal}. Another distinctive feature is the reduction in the number of state variables by one after each deadline. Moreover, the PDE in the continuation region still has the Hamiltonian operator and results in a degenerate nonlinear equation, motivating the use of viscosity solution methods.

The rest of the paper is organized as follows. Section \ref{sec:form} introduces the motivation and formulation for goal-based portfolio selection with mental accounting. Section \ref{sec:HJB} derives the HJB equation and presents the concepts of viscosity solutions. Section \ref{sec:num} details the numerical analysis\footnote{The visualization of the free boundaries can be found at \url{https://github.com/hanbingyan/goaldemo}.}. The proof of the main Theorem \ref{thm:viscosity} then proceeds in three steps: the construction of stochastic supersolutions and a viscosity subsolution in Section \ref{sec:sto_sup}, the construction of stochastic subsolutions and a viscosity supersolution in Section \ref{sec:sto_sub}, and the comparison principle in Section \ref{sec:compare}. {\color{black} Section \ref{sec:con} concludes the paper.} Finally, the Appendix provides the proofs of the main results.

\section{Formulation}\label{sec:form}
{\color{black} \subsection{Empirical motivation}

In addition to the anecdote from \cite{thaler1985mental}, empirical evidence from \cite{choi2009mental} illustrates mental account segregation in a retirement savings context. The authors analyzed a firm that changed its 401(k) matching rules in March 2003.
\begin{enumerate}[label={(\arabic*)}]	
	\item Before March 2003, the firm directed 100\% of matching contributions into employer stock. Employees chose the allocation only for their own contributions, although they were free to transfer the employer-match balances out of employer stock after receiving them.
	\item After March 2003, the firm removed this restriction. New enrollees were required to actively choose the asset allocation for both their own contributions and the employer match.
\end{enumerate}

If employees fully integrated the two accounts, they could have offset the employer-stock exposure in the match account by reducing employer-stock purchases in their own-contribution account. The data suggest that many employees did not make this offsetting adjustment. Employees who enrolled in February 2003 allocated 23.2\% of their own-contribution flows to employer stock; combined with the employer match, their total 401(k) allocation to employer stock was 55.9\%; see \citet[Table 1]{choi2009mental}. By contrast, employees who enrolled in March 2003, after the allocation choices for own contributions and employer matches were made jointly, had a total combined allocation to employer stock of 22.5\%. Notably, the own-contribution allocations to employer stock in the first regime precisely match the combined allocation of both contribution types in the second regime.

This example is consistent with the mental accounting framework. When the plan design presented the components as distinct accounts, employees behaved as if they were partially segregated, illustrating a flypaper effect where money remains in its initial category. In the second regime, the firm made the two allocation decisions jointly salient. This institutional change reduced the cognitive and psychological cost of integrating the two accounts, and the resulting employer-stock exposure fell substantially.

This study suggests that the cognitive effort required to break down mental boundaries acts as a substantial friction, which we refer to as a mental cost. These frictions affect the investor's decision process even though they do not mechanically reduce the total financial wealth. The relevant friction in the first regime was not a monetary fee, but rather the cognitive burden of recognizing the combined exposure, deciding how to offset it, and executing the reallocation.

Beyond cognitive frictions, substantial industry practices and structural incentives also support goal-based segregation. While a frictionless, unified portfolio is theoretically optimal for pure risk-adjusted returns, managing separate portfolios provides distinct practical advantages. Different goals possess heterogeneous characteristics, including distinct horizons, loss tolerances, tax treatments, and feasible investment sets. Earmarked accounts, such as Health Savings Accounts for medical expenses or 529 plans for education, enforce financial discipline and protect funds from being diverted. Consequently, the practical trade-off is that separate accounts may sacrifice some diversification efficiency in order to improve salience, commitment, and goal discipline.

\subsection{Model setup}
Motivated by the empirical evidence above, we consider separate portfolios for different goals.} Suppose the investor has $K + 1$ goals. Each goal $k \in \{1, \ldots, K+1\}$ requires a target wealth amount $G_k$ by a deterministic deadline $T_k$. For simplicity, suppose the deadlines are different and sorted as $T_1 < \ldots < T_k < \ldots < T_{K+1}$. For notational convenience, denote $T_0 := 0$ and $T := T_{K+1}$. Hence, the investment problem spans the time horizon $[0, T]$. The investor allocates the budget across portfolios $X_k$ for each goal $G_k$. The last goal $K+1$, referred to as the fundamental goal or portfolio, addresses essential needs or bequests and has the longest deadline, distinguishing it from the other $K$ goals. 

Let $(\Omega, \cF, \p)$ be a probability space supporting an $\R^d$-valued Brownian motion $W := \{W(t) : t \in [0, T] \}$. The filtration $\mathbb{F} := \{\cF_t : t \in [0, T] \}$ denotes the completion of the natural filtration of the Brownian motion $W$, and $\mathbb{F}$ satisfies the usual conditions. Consider a financial market with a risk-free asset and $n$ risky assets (stocks). Denote $r$ as the constant risk-free interest rate. Stock prices $\{ S(u): u \in [t, T]\}$ are driven by an $m$-dimensional factor process $\{ Y(u) : u \in [t, T]\}$. The dynamics of $S$ and $Y$ are given by
\begin{align}
	d Y(u) & = \mu_Y(Y(u)) du + \sigma_Y (Y(u)) dW(u), \quad Y(t) = y, \label{Y} \\
	d S(u) & = \text{diag}(S(u)) [\mu(Y(u)) du + \sigma(Y(u)) dW(u)]. \label{stock}
\end{align}
The factor process $Y$ has a drift coefficient $\mu_Y: \R^m \rightarrow \R^m$ and diffusion coefficient $\sigma_Y: \R^m \rightarrow \R^{m \times d}$. The stock price process $S$ has a drift $\mu: \R^m \rightarrow \R^n $ and volatility $\sigma: \R^m \rightarrow \R^{n \times d}$.

The wealth process for goal $k$ is denoted by $X_k$. In accordance with the financial market model \eqref{Y}-\eqref{stock}, the wealth process $X_{K+1}(u)$ for the fundamental portfolio $K+1$ follows   
\begin{equation}\label{Xlast}
	\begin{aligned}
		d X_{K+1}(u) = & r X_{K+1}(u) du + (\mu(Y(u)) - r \mathbf{1})^\top \alpha_{K+1}(u) X_{K+1}(u) du \\
		& + X_{K+1}(u) \alpha^\top_{K+1}(u) \sigma(Y(u)) dW(u) \\ 
		& - \sum^K_{k=1} {\bf 1}_{ \{ u \leq T_k\}} d L_k(u) + \sum^K_{k=1} {\bf 1}_{ \{ u \leq T_k\}} d M_k (u), \quad u \in[t, T], \\
		X_{K+1}(t-) = & x_{K+1} \geq 0.
	\end{aligned}
\end{equation}
For the remaining portfolios with $k = 1, \ldots, K$, the wealth process $X_k(u)$ follows
\begin{equation}\label{Xk}
	\begin{aligned}
		d X_k (u)  = & r X_k(u) du +  (\mu(Y(u)) - r \mathbf{1})^\top \alpha_k(u) X_k(u) du + X_k(u) \alpha^\top_k(u) \sigma(Y(u)) dW(u) \\
		& + d L_k(u) - d M_k(u), \quad u \in[t, T_k], \\
		X_k(t-) =& x_k \geq 0 . 
	\end{aligned}
\end{equation}

The $n$-dimensional vector $\alpha_k$ represents the proportion of $X_k$ invested in the stocks. The process $L_k$ tracks the cumulative amount of money withdrawn from portfolio $X_{K+1}$ to $X_k$ for goal $k$, while $M_k$ records the cumulative amount transferred back from portfolio $X_k$ to $X_{K+1}$. For simplicity, we assume that the investor can only transfer funds between the fundamental portfolio $X_{K+1}$ and the portfolio $X_k$ to adjust the budget for different goals. {\color{black}  Admissible controls are defined as follows.

\begin{definition}[Admissible controls]\label{def:admissible}
	\begin{enumerate}[label={(\arabic*)}]
		\item For each $k=1, \ldots, K+1$, the control process $\{\alpha_k(u): u \in[0, T_k]\}$ is progressively measurable. No short selling or borrowing is allowed; that is, $\alpha_k(u)$ takes values in the $n$-dimensional simplex $\cA$ a.s. 
		
		\item For each $k=1, \ldots, K$, the processes $(L_k, M_k)$ are right-continuous with left limits (RCLL), nonnegative, nondecreasing, and $\mathbb{F}$-adapted. Both are initialized as $L_k(0-) = M_k(0-) = 0$, meaning immediate transfers at time $0$ are permitted. The portfolio $X_k$, processes $L_k$ and $M_k$ exist over the interval $[0, T_k]$.
		
		\item A no-bankruptcy condition is enforced on the state processes, meaning $X_k(u) \geq 0$, a.s., for all $u \in [0, T_k]$ and $k = 1, \ldots, K+1$.
	\end{enumerate} 
	The conditions above define admissibility from the initial time $0$. When the current data are $(t, x_{k:K+1}, y)$, where $t \in [T_{k-1}, T_k]$ and $x_{k:K+1} := (x_{k}, \ldots, x_{K+1})$, we denote $\cU(t, x_{k: K+1}, y)$ as the set of admissible controls $(\alpha_{k:K+1}, L_{k:K}, M_{k:K})$ on $[t, T]$ that satisfy the analogous conditions above (e.g., initialized at $L_k(t-) = M_k(t-) = 0$) and ensure the no-bankruptcy condition given $X_{k:K+1}(t-) = x_{k:K+1}$.
\end{definition}	

Note that the set $\cU(T_k, x_{k: K+1}, y)$ contains admissible controls $(\alpha_{k:K+1}, L_{k:K}, M_{k:K})$, whereas $\cU(T_k, x_{k+1: K+1}, y)$ includes only $(\alpha_{k+1:K+1}, L_{k+1:K}, M_{k+1:K})$. Specifically, $\cU(T_k, x_{k:K+1}, y)$ represents the control set immediately prior to the expiration of goal $k$ and accommodates final transfers $(\Delta L_k(T_k), \Delta M_k(T_k))$ for goal $k$. In contrast, $\cU(T_k, x_{k+1: K+1}, y)$ is utilized right after goal $k$ expires. This distinction is critical when defining the value functions $V_k(T_k, x_{k: K+1}, y)$ and $V_{k+1}(T_k, x_{k+1: K+1}, y)$ later in \eqref{eq:val_func}.

}

 If the investor prioritizes essential goals, such as tuition fees, over non-essential ones like vacations, reallocating funds from the tuition-fee portfolio to the vacation portfolio should be penalized more heavily than the reverse. This leads to the following objective at time $0$:
\begin{align}
	\inf_{\substack{(\alpha_{1:K+1}, L_{1:K}, M_{1:K}) \\ \in \cU(0, x_{1: K+1}, y)} }\E \Big[ &\sum^{K+1}_{k=1} w_k e^{-\beta T_k}(G_k - X_k(T_k))^+ \label{obj0} \\
	& + \sum^K_{k=1} \lambda_k \int^{T_k}_0 e^{-\beta t} d L_k(t) + \sum^K_{k=1} \theta_k \int^{T_k}_0 e^{-\beta t} d M_k(t) \Big]. \nonumber
\end{align}  
 Here, $\beta \geq 0$ is the discount rate, and $w_k \geq 0$ represents the relative importance of each goal. As a benchmark level, the weight for goal $K+1$ is set as $w_{K+1} =1.0$. Constants $\lambda_k, \theta_k \geq 0$ represent mental costs that penalize transfers between portfolios. We focus solely on the linear preference \eqref{obj0} and omit the all-or-nothing preference in \cite{capponi2024}. The linear preference \eqref{obj0} is consistent with dimensional analysis since all terms are in dollars. In fact, our technical results rely only on the continuity and boundedness of the linear preference. We omit potential extensions to more general preferences for simplicity. Other relevant problems and formulations include shortfall risk \citep{cvitanic1999dynamic,cvitanic2000SICON,pham2002shortfall}, lifetime shortfall \citep{bayraktar2009shortfall}, lifetime minimum wealth \citep{bayraktar2007minwealth}, bequest goals \citep{bayraktar2016bequest}, survival problems \citep{browne1997survival}, and reaching goals by a deadline \citep{browne1999deadline}.

During the time $t \in [T_{k-1}, T_{k}]$ with $k = 1, \ldots, K+1$, the value function is defined as
\begin{equation}\label{eq:val_func}
	\begin{aligned}
	V_k(t, x_{k:K+1}, y) := \inf_{\substack{ (\alpha_{k:K+1}, L_{k:K}, M_{k:K} ) \\ \in \cU(t, x_{k:K+1}, y) } }  \E \Big[& \sum^{K+1}_{i = k} w_i e^{-\beta (T_i-t)} (G_i - X_i(T_i))^+  \\
	 & + \sum^K_{i=k} \lambda_i \int^{T_i}_t e^{-\beta(s-t)} d L_i(s) + \sum^K_{i=k} \theta_i \int^{T_i}_t e^{-\beta (s-t)} d M_i(s) \\
	&\Big| X_{k:K+1}(t-) = x_{k:K+1}, Y(t) = y \Big].
	\end{aligned}
\end{equation}
The value function $V_k(t, x_{k:K+1}, y)$ applies when goals $k, \ldots, K+1$ are active. In particular, when defining $V_k(T_{k-1}, x_{k:K+1}, y)$ using \eqref{eq:val_func}, the condition $X_{k:K+1}(T_{k-1} -) = x_{k:K+1}$ should be interpreted as transfers between $X_{k:K+1}$ at $T_{k-1}$ are allowed, while no transfers involving $X_{k-1}$ occur, as goal $k-1$ has expired. In contrast, for $V_{k-1}(T_{k-1}, x_{k-1:K+1}, y)$, transfers between $X_{k-1:K+1}$ at $T_{k-1}$ are allowed.

In the last period $t \in [T_K, T]$, we interpret $L_{k:K}$ and $M_{k:K}$ as null for $k=K+1$, where only the last goal $K+1$ remains active. In this case, the value function simplifies to
\begin{align*}
	V_{K+1}(t, x_{K+1}, y)  := \inf_{\alpha_{K+1} \in \cA} \E \Big[ & e^{-\beta(T-t)}(G_{K+1} - X_{K+1}(T))^+ \Big| X_{K+1}(t) = x_{K+1}, Y(t) = y \Big].
\end{align*}

We impose the following standing assumption throughout the paper. For technical simplicity, we do not pursue the most general conditions in this paper.
\begin{assumption}\label{stand_assum}
	The coefficients $\mu_Y(\cdot)$ and $\sigma_Y(\cdot)$ of the stochastic factors are Lipschitz continuous. The coefficients  $\mu(\cdot)$ and $\sigma(\cdot)$ of the stock prices are bounded Lipschitz. The variance matrix $\sigma(y) \sigma(y)^\top$ is invertible, and the inverse is uniformly bounded for all $y \in \R^m$. The number of stocks does not exceed the number of Brownian motions, i.e., $n \leq d$. 
	
	For each goal $k$, the required cash amount $G_k$ is constant and thus bounded. The mental cost parameters satisfy $\lambda_k \geq 0 $, $\theta_k \geq 0$, with $\lambda_k + \theta_k > 0$.
\end{assumption}

{\color{black} Under Definition \ref{def:admissible} and Assumption \ref{stand_assum}, the total wealth process $X_{\text{total}} := \sum_{i=1}^{K+1} X_i$ satisfies $\E[\sup_u |X_{\text{total}}(u)|^2] < \infty$. Since $0 \le X_k(u) \le X_{\text{total}}(u)$, the individual state processes are also square-integrable.
	
%The condition $\lambda_k + \theta_k > 0$ prevents frictionless transfers between $X_k$ and $X_{K+1}$. Without it, the state variable $x_k$ would become redundant, and the uniqueness of the value function could not be guaranteed. This strict inequality is necessary to construct the strict classical subsolution in Lemma \ref{lem:strict_l}, which is required to prove the comparison principle and establish uniqueness.

The condition $\lambda_k+\theta_k>0$ rules out frictionless round-trip transfers between $X_k$ and $X_{K+1}$. Economically, it ensures that moving wealth out of and back into a goal account involves a strictly positive mental cost. Without this condition, the two accounts could be instantaneously rebalanced at zero net cost, making the separate state variable $x_k$ redundant. This condition also has a technical role. The strict inequality is used in the construction of the strict classical subsolution in Lemma~\ref{lem:strict_l}, which is a key step in the comparison argument and hence in the uniqueness of the constrained viscosity solution.

Before proceeding to the technical analysis, we briefly discuss the practical calibration of the model inputs. First, modern wealth management platforms usually require investors to specify concrete target amounts and deadlines. Although educational and retirement goals possess naturally fixed horizons, future extensions could incorporate random deadlines to capture more flexible objectives. Second, prioritization weights can be elicited through structured questionnaires. By establishing the fundamental portfolio as a baseline benchmark, financial advisors can utilize preference-ranking tools to evaluate the relative importance of secondary goals. Finally, mental cost parameters could potentially be reverse-engineered from observed behavioral data, such as the implicit frictions preventing optimal transfers documented in \cite{choi2009mental}. Because psychological frictions fluctuate in response to platform design or market stress, the dynamic calibration of these parameters is also an important topic for future research.
}

\section{The HJB equation}\label{sec:HJB}
The classical Merton's portfolio problem typically considers a single wealth variable. An exception is \cite{deelstra2001dual}, which maximizes the expected utility from terminal wealth in a multivariate financial market with transaction costs. Another example is \cite{belak2022optimal}, where two wealth variables represent a money market account and a stock account, respectively. In our framework, mental accounting naturally leads to a multivariate formulation. A distinctive feature of our problem is that the value function loses a variable $x_k$ when goal $k$ expires. The value function can be viewed as an array:
\begin{equation}\label{value}
	(\{ V_1(t, x_{1:K+1}, y) \}_{t \in [0, T_1]}, \ldots, \{ V_k(t, x_{k:K+1}, y) \}_{t \in [T_{k-1}, T_{k}]}, \ldots, \{ V_{K+1}(t, x_{K+1}, y) \}_{t \in [T_K, T]} ).
\end{equation}
At the deadline $T_k$, both $V_k(T_k, x_{k:K+1}, y)$ and $V_{k+1}(T_k, x_{k+1:K+1}, y)$ are defined, representing the optimal value of the objective right before and after the deadline $T_k$, respectively.

For $k=1, \ldots, K+1$, we define the following regions of the state space. The closed state space is denoted by $\overline{\cS}_k := [0, \infty)^{K-k+2} \times \R^m$, and its open interior by $\cS_k := (0, \infty)^{K-k+2} \times \R^m$. The set $\overline{\cS}^o_k: = \{ [0, \infty)^{K-k+2} \backslash \{0\} \} \times \R^m$ excludes the corner at $0$ for $x_{k:K+1}$. 

Through a heuristic derivation, we obtain the HJB equation system as follows:
\begin{enumerate}[label={(\arabic*)}]
	\item For time $t \in [T_{k-1}, T_{k})$ with $k = 1, \ldots, K$, the goals $k, \ldots, K+1$ are active. The Hamiltonian $H$ is defined as 
	\begin{equation}
		\begin{aligned}
		& H(x_{k:K+1}, y, \partial V_k, \partial^2 V_k) \\
		& \;  = \sum^{K+1}_{i=k} r x_{i} \partial_i V_k + \mu_Y(y)^\top \partial_y V_k  \label{H}  \\
		& \quad + \inf_{\alpha_{k:K+1} \in \cA^{K-k+2}} \Big\{  \sum^{K+1}_{i=k}  (\mu(y) - r \one )^\top \alpha_i x_i \partial_i V_k + \frac{1}{2} \tr\left[\Sigma(\alpha_{k:K+1}, x_{k:K+1}, y) \partial^2 V_k  \right]  \Big\}.
	    \end{aligned}
	\end{equation}

	In the Hamiltonian $H$, $\partial V_k$ and $\partial^2 V_k$ represent the first and second partial derivatives with respect to the state variables $(x_{k:K+1}, y)$, respectively. Specifically, $\partial V_k = (\partial_k V_k, \ldots, \partial_{K+1} V_k,$ $ \partial_y V_k)$. Here, $\partial_i V_k$ denotes the first partial derivative of $V_k$ with respect to $x_i$, and $\partial_y V_k$ is the $m$-dimensional partial derivative with respect to $y$. Let $\partial^2 V_k$ be the $(K-k+2+m) \times (K-k+2+m)$-dimensional Hessian matrix, where $(x_{k:K+1}, y)$ are the variables. The matrix $\Sigma$ in the Hamiltonian $H$ is given by
	\begin{align}
		& \Sigma(\alpha_{k:K+1}, x_{k:K+1}, y) \\
		& \quad := \begin{pmatrix}
			(\alpha x)^\top_{k:K+1} \sigma(y) \sigma(y)^\top (\alpha x)_{k:K+1} & (\alpha x)^\top_{k:K+1} \sigma(y) \sigma_Y(y)^\top \\
			\sigma_Y(y) \sigma(y)^\top (\alpha x)_{k:K+1} & \sigma_Y(y) \sigma_Y(y)^\top
		\end{pmatrix}, \nonumber
	\end{align} 
	where
	\begin{equation}
		(\alpha x)_{k:K+1} := (\alpha_{k} x_{k}, \ldots, \alpha_{K+1} x_{K+1}) \in \R^{n \times (K-k+2)}.
	\end{equation}
	
	Define the operator $F$ as follows:
	\begin{equation}
		\begin{aligned}
		& F(t, x_{k:K+1}, y, V_k, \partial_t V_k, \partial V_k, \partial^2 V_k) \\
		&\quad  := \max \Big\{  \beta V_k - \partial_t V_k - H(x_{k:K+1}, y, \partial V_k, \partial^2 V_k), \\
		&\quad \qquad \qquad - \lambda_{k} + \partial_{K+1} V_k - \partial_{k} V_k, \ldots, - \lambda_{K} + \partial_{K+1} V_k - \partial_{K} V_k, \\
		&\quad \qquad \qquad - \theta_{k} - \partial_{K+1} V_k + \partial_{k} V_k, \ldots, - \theta_K - \partial_{K+1} V_k + \partial_K V_k \Big\}.
		\end{aligned}
	\end{equation}
	
	Then the HJB equation with $t \in [T_{k-1}, T_{k})$ is
	\begin{equation}\label{F}
		F(t, x_{k:K+1}, y, V_k, \partial_t V_k, \partial V_k, \partial^2 V_k)  = 0.
	\end{equation}

	\item At time $T_k$, the boundary condition that connects the value function $V_k(T_k, x_{k:K+1}, y)$ with $V_{k+1}(T_k, x_{k+1:K+1}, y)$ is given by
	\begin{equation}\label{Tk_bc}
		\begin{aligned}
		\max \Big\{ & V_k(T_k, x_{k:K+1}, y) - w_k (G_k - x_k)^+ - V_{k+1}(T_k, x_{k+1:K+1}, y), \\
		& - \lambda_{k} + \partial_{K+1} V_k - \partial_{k} V_k, \ldots, - \lambda_{K} + \partial_{K+1} V_k - \partial_{K} V_k, \\
		& -\theta_{k} - \partial_{K+1} V_k + \partial_k V_k, \ldots, - \theta_K - \partial_{K+1} V_k + \partial_{K} V_k \Big\} = 0.
		\end{aligned}
	\end{equation}
	
	{\color{black} The first term in \eqref{Tk_bc} corresponds to the scenario where no instantaneous fund transfer is optimal. When this term is binding, the value just before the deadline, $V_k$, equals the post-deadline value, $V_{k+1}$, plus the penalty for any shortfall in goal $k$. The subsequent gradient terms define the intervention regions. If any of these terms equals zero, it indicates that an immediate transfer is optimal. Hence, \eqref{Tk_bc} ensures that the value function captures the choice between maintaining the current allocation and executing a final transfer exactly at the deadline. This intuitive explanation is rigorously verified in the subsequent proof.}
	
	\item When the time $t \in [T_K, T]$, only the last goal $K+1$ is active. Then the HJB equation for $V_{K+1}(t, x_{K+1}, y)$ is classical:
	\begin{equation}\label{Vlast}
		\begin{aligned}
		&\beta V_{K+1}(t, x_{K+1}, y) - \partial_t V_{K+1}(t, x_{K+1}, y) - H(x_{K+1}, y, \partial V_{K+1}, \partial^2 V_{K+1}) = 0, \\
		&  V_{K+1}(T, x_{K+1}, y) = (G_{K+1} - x_{K+1})^+.
		\end{aligned}
	\end{equation}
	
	\item When the wealth is zero for every portfolio, the boundary condition is given by
	\begin{equation}
		V_k(t, 0, y) = \sum^{K+1}_{i = k} w_i e^{-\beta (T_i-t)} G_i.
	\end{equation}
\end{enumerate}

Since the system described above is the only HJB equation system in this paper, we simply refer to it as {\it the HJB equation system}.

The main result of this paper characterizes the value function defined in \eqref{value} as the unique viscosity solution of the HJB equation system. We use the following notation from the theory of viscosity solutions. Given a locally bounded function $v_k(t, x_{k:K+1}, y): [T_{k-1}, T_k] \times \overline{\cS}_k \rightarrow \R$, the upper semicontinuous (USC) envelope of $v_k$ is defined as
\begin{equation*}
	v^*_k(t, x_{k:K+1}, y) := \limsup_{\substack{(s, z, w) \rightarrow (t, x_{k:K+1}, y) \\ (s, z, w) \in [T_{k-1}, T_k] \times \overline{\cS}_k} } v_k(s, z, w).
\end{equation*}
The lower semicontinuous (LSC) envelope of $v_k$ is defined as
\begin{equation*}
	v_{k, *}(t, x_{k:K+1}, y) := \liminf_{\substack{(s, z, w) \rightarrow (t, x_{k:K+1}, y) \\ (s, z, w) \in [T_{k-1}, T_k] \times \overline{\cS}_k} } v_k(s, z, w).
\end{equation*}
\cite{bentahar2007} defines the envelopes using interior points $(s, z, w) \in [T_{k-1}, T_k) \times \cS_k$. Our definition makes the proof easier with the stochastic envelopes $v_{k, +}$ and $v_{k, -}$ introduced later.  

% See \citet[Equation 4.1]{crandall1992user} for the precise definition.
 
\begin{definition}[Viscosity subsolution]\label{def:vis_sub} 
	Consider an array of functions
	\begin{equation}\label{vis_sub}
		(\{ v_1(t, x_{1:K+1}, y) \}_{t \in [0, T_1]}, \ldots, \{ v_k(t, x_{k:K+1}, y) \}_{t \in [T_{k-1}, T_{k}]}, \ldots, \{ v_{K+1}(t, x_{K+1}, y) \}_{t \in [T_K, T]} ),
	\end{equation}
 where $v_k(t, x_{k:K+1}, y): [T_{k-1}, T_k] \times \overline{\cS}_k \rightarrow \R, k=1,\ldots, K+1$ is locally bounded. The array \eqref{vis_sub} is a viscosity subsolution of the HJB equation system if 
\begin{enumerate}[label={(\arabic*)}]	
	\item for each $k=1,\ldots, K$,
	\begin{equation}\label{F_sub}
		\begin{aligned}
		F(& \bar{t}, \bar{x}_{k:K+1}, \bar{y}, v^*_k(\bar{t}, \bar{x}_{k:K+1}, \bar{y}), \\
		  &\partial_t \varphi(\bar{t}, \bar{x}_{k:K+1}, \bar{y}), \partial \varphi(\bar{t}, \bar{x}_{k:K+1}, \bar{y}), \partial^2 \varphi(\bar{t}, \bar{x}_{k:K+1}, \bar{y})) \leq 0
		\end{aligned} 
	\end{equation}
	for all $(\bar{t},\bar{x}_{k:K+1}, \bar{y}) \in [T_{k-1}, T_k) \times \cS_k$ and for all $\varphi \in C^{1, 2}([T_{k-1}, T_k) \times \cS_k)$ such that $(\bar{t},\bar{x}_{k:K+1}, \bar{y})$ is a maximum point of $v^*_k - \varphi$;
	
	\item for each $T_k$ with $k=1,\ldots, K$,
		\begin{equation}
			\begin{aligned}
				\max \Big\{ & v^*_k(T_k, \bar{x}_{k:K+1}, \bar{y}) - w_k (G_k - \bar{x}_k)^+ - v^*_{k+1}(T_k, \bar{x}_{k+1:K+1}, \bar{y}), \\
				& - \lambda_{k} + \partial_{K+1} \varphi(\bar{x}_{k:K+1}, \bar{y}) - \partial_{k} \varphi(\bar{x}_{k:K+1}, \bar{y}), \ldots, \\
				& - \lambda_{K} + \partial_{K+1} \varphi(\bar{x}_{k:K+1}, \bar{y}) - \partial_{K} \varphi(\bar{x}_{k:K+1}, \bar{y}), \\
				& -\theta_{k} - \partial_{K+1} \varphi (\bar{x}_{k:K+1}, \bar{y}) + \partial_k \varphi(\bar{x}_{k:K+1}, \bar{y}), \ldots, \\
				&- \theta_K - \partial_{K+1} \varphi(\bar{x}_{k:K+1}, \bar{y}) + \partial_{K} \varphi (\bar{x}_{k:K+1}, \bar{y}) \Big\} \leq 0
			\end{aligned}
		\end{equation}
		for all $(\bar{x}_{k:K+1}, \bar{y}) \in \cS_k$ and for all $\varphi \in C^{2}(\cS_k)$ such that $(\bar{x}_{k:K+1}, \bar{y})$ is a maximum point of $v^*_k(T_k, \cdot, \cdot) - \varphi$;
		\item 
		\begin{equation}\label{Vlast_sub}
			\begin{aligned}
				&\beta v^*_{K+1}(\bar{t},\bar{x}_{K+1}, \bar{y}) - \partial_t \varphi(\bar{t},\bar{x}_{K+1}, \bar{y}) \\
				&\quad - H(\bar{x}_{K+1}, \bar{y}, \partial \varphi(\bar{t},\bar{x}_{K+1}, \bar{y}), \partial^2 \varphi(\bar{t},\bar{x}_{K+1}, \bar{y})) \leq 0,
			\end{aligned}
		\end{equation}
		for all $(\bar{t},\bar{x}_{K+1}, \bar{y}) \in [T_K, T) \times \cS_{K+1}$ and for all $\varphi \in C^{1, 2}([T_K, T) \times \cS_{K+1})$ such that $(\bar{t},\bar{x}_{K+1}, \bar{y})$ is a maximum point of $v^*_{K+1} - \varphi$;
		\item
		 \begin{align}
			v^*_k(t, 0, y) & \leq \sum^{K+1}_{i = k} w_i e^{-\beta (T_i-t)} G_i, \quad \forall\;  t \in [0, T], y \in \R^m, k = 1, \ldots, K+1, \label{vissub_bd0}\\
			v^*_{K+1}(T, x_{K+1}, y) & \leq (G_{K+1} - x_{K+1})^+, \quad \forall \; (x_{K+1}, y) \in \cS_{K+1}. \label{visub_T}
		\end{align}
\end{enumerate}
\end{definition}

\begin{definition}[Viscosity supersolution]\label{def:vis_super} 
	Consider an array of functions
	\begin{equation}\label{vis_super}
		(\{ v_1(t, x_{1:K+1}, y) \}_{t \in [0, T_1]}, \ldots, \{ v_k(t, x_{k:K+1}, y) \}_{t \in [T_{k-1}, T_{k}]}, \ldots, \{ v_{K+1}(t, x_{K+1}, y) \}_{t \in [T_K, T]} ),
	\end{equation}
	where $v_k(t, x_{k:K+1}, y): [T_{k-1}, T_k] \times \overline{\cS}_k \rightarrow \R, k=1,\ldots, K+1$ is locally bounded. The array \eqref{vis_super} is a viscosity supersolution of the HJB equation system if 
	\begin{enumerate}[label={(\arabic*)}]	
		\item for each $k=1,\ldots, K$,
		\begin{equation}\label{F_super}
			\begin{aligned}
				F(& \bar{t}, \bar{x}_{k:K+1}, \bar{y}, v_{k, *}(\bar{t}, \bar{x}_{k:K+1}, \bar{y}), \\
				&\partial_t \varphi(\bar{t}, \bar{x}_{k:K+1}, \bar{y}), \partial \varphi(\bar{t}, \bar{x}_{k:K+1}, \bar{y}), \partial^2 \varphi(\bar{t}, \bar{x}_{k:K+1}, \bar{y})) \geq 0
			\end{aligned} 
		\end{equation}
		for all $(\bar{t},\bar{x}_{k:K+1}, \bar{y}) \in [T_{k-1}, T_k) \times \overline{\cS}^o_k$ and for all $\varphi \in C^{1, 2}([T_{k-1}, T_k) \times \overline{\cS}^o_k)$ such that $(\bar{t},\bar{x}_{k:K+1}, \bar{y})$ is a minimum point of $v_{k, *} - \varphi$;
		
		\item for each $T_k$ with $k=1,\ldots, K$,
		\begin{equation}
			\begin{aligned}
				\max \Big\{ & v_{k, *}(T_k, \bar{x}_{k:K+1}, \bar{y}) - w_k (G_k - \bar{x}_k)^+ - v_{k+1, *}(T_k, \bar{x}_{k+1:K+1}, \bar{y}), \\
				& - \lambda_{k} + \partial_{K+1} \varphi(\bar{x}_{k:K+1}, \bar{y}) - \partial_{k} \varphi(\bar{x}_{k:K+1}, \bar{y}), \ldots, \\
				& - \lambda_{K} + \partial_{K+1} \varphi(\bar{x}_{k:K+1}, \bar{y}) - \partial_{K} \varphi(\bar{x}_{k:K+1}, \bar{y}), \\
				& -\theta_{k} - \partial_{K+1} \varphi (\bar{x}_{k:K+1}, \bar{y}) + \partial_k \varphi(\bar{x}_{k:K+1}, \bar{y}), \ldots, \\
				&- \theta_K - \partial_{K+1} \varphi(\bar{x}_{k:K+1}, \bar{y}) + \partial_{K} \varphi (\bar{x}_{k:K+1}, \bar{y}) \Big\} \geq 0
			\end{aligned}
		\end{equation}
		for all $(\bar{x}_{k:K+1}, \bar{y}) \in \overline{\cS}^o_k$ and for all $\varphi \in C^{2}(\overline{\cS}^o_k)$ such that $(\bar{x}_{k:K+1}, \bar{y})$ is a minimum point of $v_{k, *}(T_k, \cdot, \cdot) - \varphi$;
		\item 
		\begin{equation}\label{Vlast_super}
			\begin{aligned}
				&\beta v_{K+1, *}(\bar{t},\bar{x}_{K+1}, \bar{y}) - \partial_t \varphi(\bar{t},\bar{x}_{K+1}, \bar{y}) \\
				&\quad - H(\bar{x}_{K+1}, \bar{y}, \partial \varphi(\bar{t},\bar{x}_{K+1}, \bar{y}), \partial^2 \varphi(\bar{t},\bar{x}_{K+1}, \bar{y})) \geq 0,
			\end{aligned}
		\end{equation}
		for all $(\bar{t},\bar{x}_{K+1}, \bar{y}) \in [T_K, T) \times \overline{\cS}^o_{K+1}$ and for all $\varphi \in C^{1, 2}([T_K, T) \times \overline{\cS}^o_{K+1})$ such that $(\bar{t},\bar{x}_{K+1}, \bar{y})$ is a minimum point of $v_{K+1, *} - \varphi$;
		\item 
		\begin{align}
			v_{k, *}(t, 0, y) & \geq \sum^{K+1}_{i = k} w_i e^{-\beta (T_i-t)} G_i, \; \forall\;  t \in [0, T], y \in \R^m, k = 1, \ldots, K+1, \label{vissup_bd0} \\
			v_{K+1, *}(T, x_{K+1}, y) & \geq (G_{K+1} - x_{K+1})^+, \quad \forall \; (x_{K+1}, y) \in \cS_{K+1}. \label{vissup_T}
		\end{align}
	\end{enumerate}

\end{definition}

\begin{definition}[Constrained viscosity solution]\label{def:cons}
	Denote an array of functions as
	\begin{equation}\label{vis_sol}
		(\{ v_1(t, x_{1:K+1}, y) \}_{t \in [0, T_1]}, \ldots, \{ v_k(t, x_{k:K+1}, y) \}_{t \in [T_{k-1}, T_{k}]}, \ldots, \{ v_{K+1}(t, x_{K+1}, y) \}_{t \in [T_K, T]} ),
	\end{equation}
	where $v_k(t, x_{k:K+1}, y): [T_{k-1}, T_k] \times \overline{\cS}_k \rightarrow \R, k=1,\ldots, K+1$ is locally bounded. The array \eqref{vis_sol} is a constrained viscosity solution of the HJB equation system if it is a viscosity subsolution under Definition \ref{def:vis_sub} and a viscosity supersolution under Definition \ref{def:vis_super}.
\end{definition}

{\color{black} In a standard viscosity solution on an unconstrained domain, both the subsolution and supersolution properties are tested at every point. In a state-constrained problem, however, the boundary possesses a different status because certain controls would push the state outside the admissible domain.} In Definition \ref{def:cons}, only the viscosity supersolution property holds on the boundary $\overline{\cS}^o_{k} \backslash \cS_k$, i.e., at least one $x_i$, for $i \in \{k, \ldots, K+1\}$, is zero, and $x_{k:K+1} \neq 0$. The viscosity subsolution property holds in the interior $\cS_k$. The boundary value at the corner $x_{k:K+1} = 0$ is known. {\color{black} To motivate the constrained viscosity solutions in our setup, we consider a smooth value function with an optimal control to simplify the discussion. On the boundary $\overline{\cS}^o_k \backslash \cS_k$, one or more portfolios possess zero wealth. It is not allowed to transfer funds out of these empty portfolios. If the subsolution property were required to hold on this boundary, the maximum operator would mandate that all terms be non-positive. This condition is too strong since it would improperly test gradient constraints corresponding to boundary-violating transfers. Conversely, the supersolution property remains valid on the boundary, as it requires only that at least one term within the maximum operator be non-negative, which is guaranteed by the optimal control. For an explicit example motivating constrained viscosity solutions, interested readers are directed to \citet[p. 553]{soner1986optimal}.}

The main result of this paper is stated as follows:
\begin{theorem}\label{thm:viscosity}
	The value function array \eqref{value} is the unique constrained viscosity solution of the HJB equation system. For each $k = 1, \ldots, K+1$, $V_k(t, x_{k:K+1}, y)$ is continuous and bounded on $[T_{k-1}, T_k] \times \overline{\cS}_k$.
\end{theorem}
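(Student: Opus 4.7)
I would prove Theorem \ref{thm:viscosity} by the stochastic Perron's method of \cite{bayraktar2013stochastic,bayraktar2015stochastic}, adapted to the array structure \eqref{value}. Introduce two classes of locally bounded functions organized as arrays: the class $\mathcal{V}^-$ of \emph{stochastic supersolutions}, each lower semicontinuous, dominating the remaining cost along every admissible control (in a submartingale sense on each piece $[T_{k-1},T_k]$), respecting the gradient cone $-\theta_i \leq \partial_{K+1} v_k - \partial_i v_k \leq \lambda_i$ in a suitable sense, and satisfying the deadline inequality $v_k(T_k,\cdot) \leq v_{k+1}(T_k,\cdot) + w_k(G_k-x_k)^+$; and the class $\mathcal{V}^+$ of \emph{stochastic subsolutions}, defined symmetrically. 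The two envelopes $v^-_k := \inf_{v \in \mathcal{V}^-} v_k$ and $v^+_k := \sup_{v \in \mathcal{V}^+} v_k$ sandwich the value function, $v^+_k \leq V_k \leq v^-_k$, by design. A nonempty supersolution comes from the deterministic bound $\sum_{i\geq k} w_i e^{-\beta(T_i-t)} G_i$, and a nonempty subsolution from the trivial zero function; boundedness of $V_k$ is then immediate.

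\textbf{Viscosity sub/supersolution properties.} The first main step is to prove that $(v^-_k)^*$ is a viscosity subsolution and $(v^+_k)_*$ a viscosity supersolution of the HJB system. On the interior $[T_{k-1},T_k) \times \cS_k$, the standard Perron contradiction argument applies: if the subsolution property of $(v^-_k)^*$ fails at some $(\bar t,\bar x,\bar y)$, one constructs a smooth local perturbation that produces a strictly smaller element of $\mathcal{V}^-$, contradicting the infimum. The nonstandard aspect is the deadline $T_k$, where the state variable $x_k$ disappears. Here the stochastic supersolution definition must be engineered so that, on the left-limit side, the perturbation argument also produces the deadline inequality \eqref{Tk_bc}; essentially, a stochastic supersolution on $[T_{k-1},T_k]$ is glued to one on $[T_k,T_{k+1}]$ via the very inequality being verified. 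For the supersolution side I would test on the full half-space $\overline{\cS}^o_k$ (including relative boundaries $\{x_i=0\}$ with $x_{k:K+1}\neq 0$), which is precisely the \emph{constrained} viscosity formulation of \cite{zariphopoulou1994consumption,bentahar2007}. The corner $x_{k:K+1}=0$ is handled by the explicit boundary data, since no admissible control can leave $\{x=0\}$.

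\textbf{Comparison principle.} The crucial and most delicate step is the comparison principle in Section \ref{sec:compare}, which I would carry out by backward induction on $k$, starting with the classical one-wealth problem \eqref{Vlast} for $V_{K+1}$ and then using \eqref{Tk_bc} to transfer comparison at $T_k$ into terminal data for the $k$-problem on $[T_{k-1},T_k]$. On each slab, I would use a doubling-of-variables argument between a USC subsolution $u$ and an LSC supersolution $v$, with three nontrivial modifications. First, the two-sided mental-cost constraints $-\theta_i \leq \partial_{K+1} V_k - \partial_i V_k \leq \lambda_i$ mean the maximum in $F$ can be attained by a first-order term; following the idea of \citet[Proposition 2.2]{hynd2012eigenvalue}, I would design the penalty function $\phi(x,z)$ so that $\partial_{K+1}\phi - \partial_i\phi$ lies strictly in the interior of $[-\theta_i,\lambda_i]$, which excludes these branches at the doubled maximum and reduces matters to the parabolic branch. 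Second, to cope with the state constraint I would add a boundary-repelling term built from a \emph{strict classical subsolution} $\ell$ of the full HJB operator and gradient constraints, produced in Lemma \ref{lem:strict_l}; the construction of $\ell$ is delicate because it must simultaneously satisfy the two-sided gradient inequalities strictly, a nontrivial algebraic requirement. Third, the usual Ishii lemma applied to the parabolic branch concludes the argument as in the standard theory, with the non-degeneracy of $\sigma\sigma^\top$ absorbed by the Lipschitz coefficients and boundedness assumed on $\mu,\sigma$. This is the step I expect to be the main obstacle.

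\textbf{Conclusion.} The comparison principle applied to $(v^-_k)^*$ (subsolution) and $(v^+_k)_*$ (supersolution) yields $(v^-_k)^* \leq (v^+_k)_*$. Combined with $v^+_k \leq v^-_k$ (a consequence of the sandwich $v^+_k \leq V_k \leq v^-_k$ and USC/LSC envelopes), this forces $(v^-_k)^* = V_k = (v^+_k)_*$, so $V_k$ is both USC and LSC, hence continuous on $[T_{k-1},T_k] \times \overline{\cS}_k$. Boundedness was already recorded in the first step. Uniqueness of the constrained viscosity solution follows by applying the same comparison principle to any two candidate solutions, completing Theorem \ref{thm:viscosity}.
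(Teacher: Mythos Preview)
Your overall architecture matches the paper's: stochastic Perron's method yields a viscosity subsolution from the infimum of stochastic supersolutions and a viscosity supersolution from the supremum of stochastic subsolutions, and a backward-inductive comparison principle closes the gap. However, several details are inverted or missing in ways that matter for the argument to go through. Your semicontinuity assignments are swapped: stochastic supersolutions must be USC (not LSC) so the class is stable under pointwise minimum, which the Perron perturbation $v^n \wedge (\varphi-\eta)$ requires; dually, stochastic subsolutions are LSC. Relatedly, the defining inequality for a stochastic supersolution holds for \emph{some} suitable control (a supermartingale for one $U$), whereas the ``for all controls, submartingale'' direction is the stochastic \emph{sub}solution property. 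Also, the zero function is a valid stochastic subsolution but does not deliver the corner value $v_{-,k}(t,0,y) \geq \sum_{i\geq k} w_i e^{-\beta(T_i-t)} G_i$ needed for the viscosity supersolution boundary condition \eqref{vissup_bd0}; the paper constructs a nontrivial stochastic subsolution for this purpose, namely the value function of the problem \emph{without} mental costs (Lemma \ref{lem:no_mental}), which attains the corner value exactly.

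Your comparison sketch captures the Hynd-type handling of the two-sided gradient constraints and the role of the strict subsolution $\ell$, but omits a step that is essential for the constrained setting. Proposition \ref{p:comp_btw} requires the USC subsolution $u_k$ to satisfy $u_k(t,x^o,y)=\limsup_{x'\to x^o,\,x'_j>0} u_k(t,x',y)$ on $\overline{\cS}^o_k\setminus\cS_k$ (condition \eqref{modi_bd_u}); the raw upper envelope $v_{k,+}$ need not have this property. The paper proves Lemma \ref{lem:boundary} to show $V_k$ is dominated by that $\limsup$, then replaces $v_{k,+}$ on the relative boundary by this $\limsup$ to obtain $\tilde v_{k,+}$, which is still a viscosity subsolution and now satisfies \eqref{modi_bd_u} (with Lemma \ref{lem:conti} providing the companion continuity statement used inside the doubling argument). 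Without this modification you cannot feed the upper envelope into the comparison principle, so the sandwich does not close.
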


To motivate the technical proof of Theorem \ref{thm:viscosity}, we present the numerical analysis first.

\section{Numerical analysis}\label{sec:num}
{\color{black} The viscosity characterization provides the theoretical basis for our numerical procedure. In the finite-difference approximation, the portfolio control is recovered by optimizing the discretized Hamiltonian, while transfer regions are identified when the discrete analogues of the gradient constraints become binding. Regarding theoretical convergence, standard results typically require monotonicity. However, discretizing cross-partial derivatives generally destroys this property, leaving a rigorous convergence proof as an open problem. Following common practice, we verify convergence empirically, adopting a validation routine similar to the companion paper \cite{tahar2010merton} to \cite{bentahar2007}. Furthermore, the scheme can be interpreted through the Markov chain approximation method detailed in \citet[Chapter IX]{fleming2006}. Therefore, the numerical solution characterizes the optimal policy for a financially meaningful discrete approximation of the continuous-time model.} The main numerical observations are summarized as follows:
\begin{enumerate}[label={(\arabic*)}]	
	\item Qualitative properties of free boundaries and optimal investment strategies: Free boundaries can exhibit complex shapes, such as bulges and notches, indicating that optimal transfer decisions vary nonlinearly with wealth levels. These free boundaries are also time-dependent, reflecting the trade-off between immediate and delayed transfers. For optimal investment strategies, Table \ref{tab:Goal2} provides the optimal proportions when only one goal remains, based on the linear preference $(G_i - X_i)^+$. When both goals are active, the optimal proportion vector for one portfolio depends on the wealth levels of both portfolios. For example, the optimal proportion vector $\alpha^*_1$ for portfolio $X_1$ is a function of both wealth variables $(x_1, x_2)$ rather than $x_1$ alone.
	\item Interaction between mental accounting and diversification: Unlike previous studies, our model needs to consider diversification between portfolios in addition to diversification between stocks. When one goal is fully funded and another is not, the portfolio with enough funds can either serve as a cash reserve or remain invested in the stock market to hedge risk. The optimal strategy depends on the financial environment, particularly the correlation between stock returns. Figures \ref{longeralpha_05} and \ref{shorteralpha_05} demonstrate the case of positive correlation, while Figures \ref{longeralpha_n09} and \ref{shorteralpha_n09} show the negative correlation case.
	\item Competition between goals: When one goal is important, the agent supports it in two ways. If this goal is underfunded as the deadline approaches, the agent reallocates all available funds until the target is met. In extreme cases, the continuation region in low-wealth areas reduces to line segments. Conversely, if the important goal is adequately funded while another is not, the agent still hesitates to shift resources to the less important goal, instead continuing to invest in the important portfolio and deferring allocation until its deadline.
\end{enumerate}

\subsection{Benchmark setting}
In all numerical investigations presented herein, we consider two goals $G_1$ and $G_2$ for simplicity. The goal $G_1$ has an expiration time of $T_1 = 1.0$, while $G_2$ expires at $T = 2.0$.  We refer to $G_1$ as the shorter-term goal and $G_2$ as the longer-term goal.

Suppose the financial market consists of two stocks, and the Brownian motion $W$ is two-dimensional. Consider deterministic parameters for the mean and covariance of returns. Hence, there is no factor process $Y$. Denote $\mu$ as the two-dimensional return vector and $\sigma$ as the $2 \times 2$ volatility matrix.  The investment proportion vectors are denoted as $\alpha_1 = (\alpha_{11}, \alpha_{12})^\top$ and $\alpha_2 = (\alpha_{21}, \alpha_{22})^\top$. Here, we interpret $\alpha_{ij}$ as the proportion of wealth from portfolio $i$ invested in stock $j$. The portfolio dynamics are expressed as follows:
\begin{equation}
	\begin{aligned}
		d X_1 (t)  = & r X_1(t) dt +  (\mu - r\one_2)^\top \alpha_1(t) X_1(t) dt + X_1(t)  \alpha^\top_1(t) \sigma dW(t) \\
		& + d L(t) - d M(t), \quad t \in[0, T_1], \quad X_1(0-) = x_1 \geq 0, \\
		d X_2(t) = & r X_2(t) dt + (\mu - r\one_2)^\top \alpha_2(t) X_2(t) dt + X_2(t)  \alpha^\top_2(t) \sigma dW(t) \\ 
		& -  {\bf 1}_{ \{ t \leq T_1\}} d L(t) + {\bf 1}_{ \{ t \leq T_1\}} d M (t), \quad t \in[0, T], \quad X_2(0-) = x_2 \geq 0.
	\end{aligned}
\end{equation} 

We specify the parameters as follows:
\begin{itemize}
	\item The shorter-term goal $G_1$ needs \$5,000, and the longer-term goal $G_2$ requires \$4,000. For simplicity, the monetary amounts mentioned later are all in thousands.
	\item The risk-free rate $r=0$ and the discount rate $\beta=0$.
	\item  Let $\mu = (0.2, 0.3)^\top$ be the expected return of stocks. The volatility matrix $\sigma$ is
	\begin{align}
		& \sigma = \begin{pmatrix}
			\sigma_1 & 0  \\
			\rho \sigma_2 & \sqrt{1 - \rho^2} \sigma_2 
		\end{pmatrix}, \nonumber
	\end{align}
	obtained from the Cholesky decomposition. We set $\sigma_1 = 0.3$ and $\sigma_2 = 0.4$. The correlation is specified later.
	\item The objective is given by
	\begin{align}
		\inf_{\alpha_1, \alpha_2, L, M} \mathbb{E} \Big[ & e^{-\beta T}(G_2 - X_2(T))^+ + w_1 e^{- \beta T_1}(G_1 - X_1(T_1))^+ \nonumber \\
		& + \lambda_1 \int^{T_1}_0 e^{-\beta t} d L(t) + \theta_1 \int^{T_1}_0 e^{- \beta t} d M(t) \Big].
	\end{align}    
	Unless specified otherwise, we set $w_1 = 1$, $\lambda_1 = 0.3$, and $\theta_1 = 0.1$.
\end{itemize}

We summarize the notations used for the two goals as follows. The variables $G_1$ and $G_2$ denote the shorter-term and longer-term goals, respectively. The portfolios associated with these goals are denoted by $X_1$ and $X_2$, respectively. The variables $x_1$ and $x_2$ correspond to the wealth amounts allocated for the shorter-term and longer-term goals, respectively.

\subsection{Two-stock market with correlation $\rho=0.5$}
In this section, we set the correlation $\rho =0.5$ as our benchmark configuration.
\subsubsection{After goal 1 expires: $t \in [T_1, T]$}
The HJB equation in this period is
\begin{align*}
	&\beta V_2(t, x_2) - \partial_t V_2(t, x_2) - H(x_2, \partial_2 V_2, \partial^2_{22} V_2) = 0, \quad V_2(T, x_2) = (G_2 - x_2)^+.
\end{align*}
When short-selling and borrowing are allowed, an explicit solution exists in the single-stock case; see \citet[Example 4.1]{cvitanic1999dynamic}. However, for the constrained two-stock setting considered here, no closed-form solution is available to our knowledge. Therefore, we employ a finite difference scheme to solve the problem numerically. The time step size is $\Delta t = 0.01$, the wealth step size is $\Delta x_2 = 0.2$, and the investment proportions have a step size of $0.01$.

When wealth $x_2$ is low, the optimal strategy suggests allocating all funds to the second stock, as it yields a higher expected return compared to the first one. If wealth $x_2$ surpasses a certain critical threshold, the portfolio is adjusted to diversify investments across both stocks. With further increases in wealth, the allocation to both risky assets diminishes, as outlined in Table \ref{tab:Goal2}. Indeed, when the wealth is close to the goal amount, the agent reduces the riskiness of the portfolio to meet the goal with certainty.
\begin{table}[H]
	\centering 
	\begin{tabular}{cccccccccccc}
		\hline 
		Wealth $x_2$ &  2.0 & 2.2  &  2.4  &  2.6   &  2.8   &  3.0 &  3.2  &  3.4  &  3.6  &  3.8  &  4.0 \\
		\hline 
		Proportion in Stock 1 & 0.0  & 0.0  &  0.07 & 0.22 &  0.33 &  0.42 & 0.48 & 0.38  & 0.28  &  0.16 &  0.0    \\
		\hline 
		Proportion in Stock 2 &  1.0  & 1.0  &  0.93 &  0.78 & 0.67 & 0.58 &  0.51  & 0.41  & 0.3 &  0.18   &  0.0 \\
		\hline 
	\end{tabular}
	\caption{The optimal investment proportion $\alpha^*_2$ after the shorter-term goal expires. The dollar amounts $x_2$ are in thousands.}\label{tab:Goal2}
\end{table}

\subsubsection{When goal 1 expires: $t = T_1$}

At the deadline $T_1$, the boundary condition connecting $V_1(T_1, x_1, x_2)$ with $V_2(T_1, x_2)$ is given by
\begin{align*}
	\max \Big\{ & V_1(T_1, x_1, x_2) - w_1 (G_1 - x_1)^+ - V_2(T_1, x_2), \\
	& - \lambda_1 + \partial_2 V_1(T_1, x_1, x_2) - \partial_1 V_1(T_1, x_1, x_2), \\
	& - \theta_1 - \partial_2 V_1(T_1, x_1, x_2)  + \partial_1 V_1(T_1, x_1, x_2) \Big\} = 0.
\end{align*}

For the numerical algorithm, we adopt a penalty scheme based on the finite difference discretization and the policy iteration; see \citet[Section 2.3]{azimzadeh2017} for an introduction. The wealth step size is $\Delta x_1 = \Delta x_2 = 0.2$.

\begin{figure}
	\centering
	\includegraphics[width=0.7\textwidth]{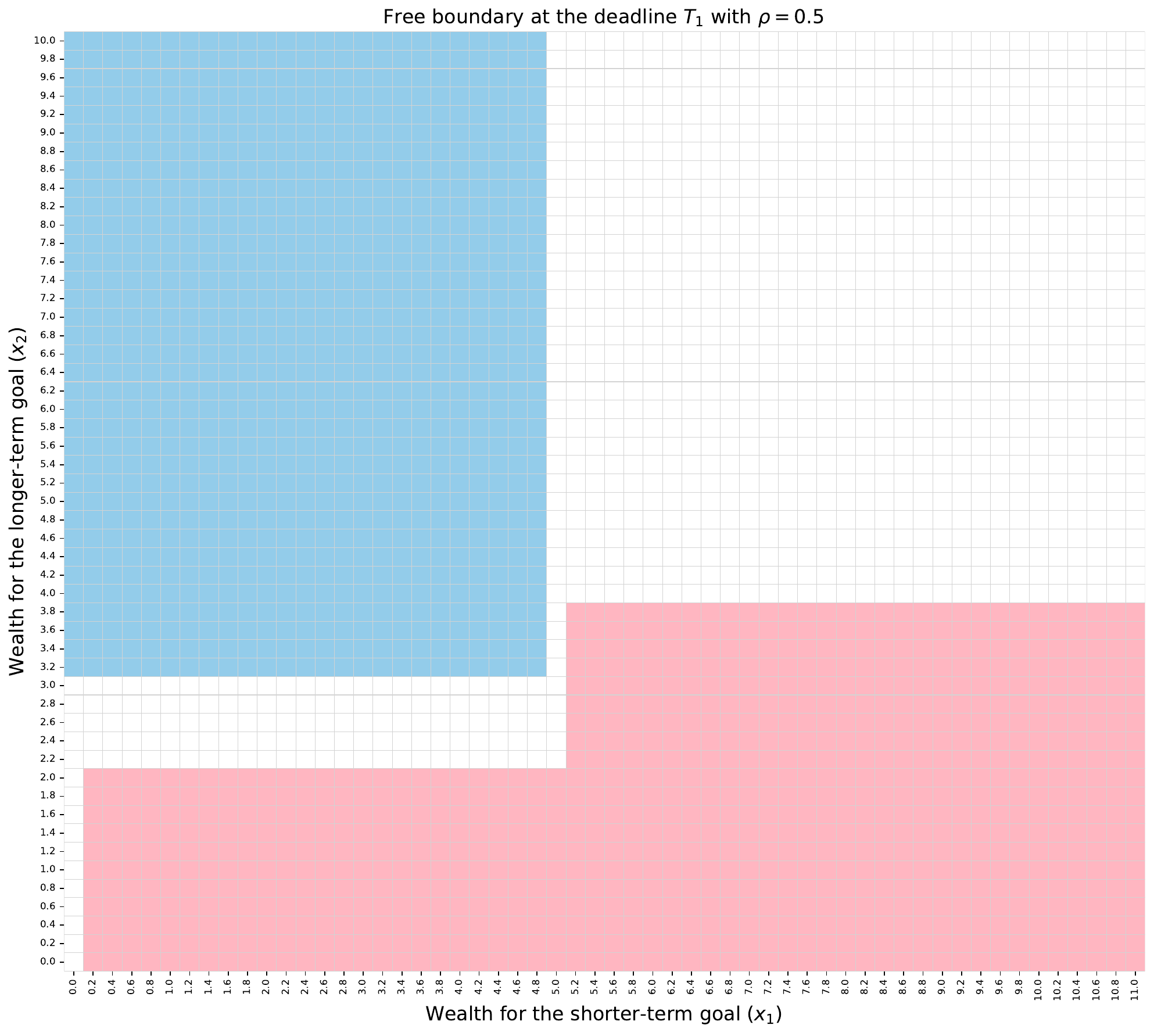}
	\caption{Free boundaries at the shorter-term goal deadline $T_1=1$. We set the correlation $\rho = 0.5$, $\lambda_1=0.3$, and $\theta_1=0.1$.  }\label{T1impulse_corr05}
\end{figure}

Figure \ref{T1impulse_corr05} illustrates the free boundary at the deadline $T_1$. The light blue region represents the transaction region from the longer-term to the shorter-term portfolio, while the light pink region indicates the reverse transaction. Since the transfer does not incur real costs, the wealth pair $(x_1, x_2)$ moves along the $(1, -1)$ or $(-1, 1)$ direction. When the longer-term wealth $x_2$ is high and the shorter-term wealth $x_1$ is low, the agent transfers from $x_2$ to $x_1$, but the longer-term wealth $x_2$ should not be lower than $3.0$. Conversely, when the longer-term wealth $x_2$ is low, the transfer from $x_1$ to $x_2$ aims to achieve the level $x_2=2.2$. Consequently, if the shorter-term wealth $x_1 \leq 2.2$, the optimal decision suggests leaving zero wealth amount in the shorter-term portfolio. Moreover, the threshold $x_2 = 2.2$ also acts as the stopping point of the transaction when the shorter-term wealth $x_1 \leq 7.2$. Instead, if $x_1 > 7.2$, the agent keeps the required amount $5.0$ for the shorter-term goal, allocating any surplus to the longer-term goal until $x_2 = 4.0$ is reached.

Importantly, the thresholds $2.2$ and $3.0$ above rely on the mental cost $\lambda_1$ and $\theta_1$. Furthermore, these thresholds are also influenced by the correlation between stocks, as evident from a comparison with Figure \ref{T1impulse_corrn09}. 

Under the current specification, the agent shows greater support for the longer-term portfolio due to two main reasons. First, the longer-term portfolio enjoys an extra year for stock investment earnings. Second, the mental cost $\theta_1$ associated with the transaction from the shorter-term to the longer-term portfolio is lower than the mental cost $\lambda_1$ for the opposite direction. The transfer from $x_1$ to $x_2$ halts until the marginal benefit of doing so becomes lower than retaining funds in $x_1$, as depicted in Figure \ref{T1impulse_corr05}.

\subsubsection{When both goals are active: $t \in [0, T_1]$}
Recall that the HJB equation is
\begin{align*}
	\max \Big\{ & \beta V_1(t, x_1, x_2) - \partial_t V_1(t, x_1, x_2) - H(x_{1:2}, \partial V_1, \partial^2 V_1), \\
	& - \lambda_1 + \partial_2 V_1 (t, x_1, x_2) - \partial_1 V_1 (t, x_1, x_2), - \theta_1 - \partial_2 V_1 (t, x_1, x_2)  + \partial_1 V_1 (t, x_1, x_2) \Big\} = 0.
\end{align*}

\begin{table}[H]
	\centering
	\small
	\begin{tabular}{c|ccccc}\hline
		\diagbox[width=10em]{Proportion \\ in Stock 2}{ Proportion in \\ Stock 1}& 0.0 & 0.25 & 0.50 & 0.75 & 1.0 \\ \hline
		0.0  &  0 & 5 &  9&  12 & 14 \\ 
		0.25  & 1 & 6 & 10  & 13  & \\ 
		0.50  & 2 & 7  & 11 &  & \\ 
		0.75  &  3& 8 &  &  & \\ 
		1.0  &  4 &  &  &  & \\ \hline
	\end{tabular}
	\caption{Strategy codes for investment proportions in stocks.}
	\label{code}
\end{table}

For the numerical algorithm, we use the penalty scheme with the wealth step size $\Delta x_1 = \Delta x_2 = 0.2$ and the time step size $\Delta t = 0.2$. The investment proportion vectors $\alpha_1$ and $\alpha_2$ are two-dimensional. Hence, to facilitate analysis, we flatten and label them as detailed in Table \ref{code}. For instance, the strategy code 4 corresponds to $(0.0, 1.0)$, indicating 100\% of the wealth invested in the second stock. In all figures of the investment strategies, such as Figures \ref{longeralpha_05} and \ref{shorteralpha_05}, values in each cell represent the codes listed in Table \ref{code}. Similar to Figure \ref{T1impulse_corr05}, the light blue and pink regions denote intervention zones where the transfer occurs and thus the investment strategy $\alpha_i$ is not binding there.

To understand the optimal investment strategies, we consider three scenarios involving different combinations of wealth levels, given the current time $t=0.8$ close to the deadline of the shorter-term goal:
\begin{enumerate}[label={(\arabic*)}]	
	\item {\bf Insufficient funds for both goals}
	
	{\bf The longer-term portfolio strategy}: Without the shorter-term goal, the optimal approach for the longer-term goal should align closely with the strategy outlined in Table \ref{tab:Goal2}. However, with two goals in mind, the longer-term portfolio strategy $\alpha^*_2$ also depends on the shorter-term wealth. As depicted in Figure \ref{longeralpha_05}, when the shorter-term wealth $x_1 \leq 2.4$, the longer-term portfolio $X_2$ allocates all wealth (Strategy Code 4) to the second stock. Only when $x_1 > 2.4$ does the longer-term portfolio $X_2$ begin to diversify across two stocks. If $x_1$ gets closer to the target level, the longer-term portfolio $X_2$ follows the strategy presented in Table \ref{tab:Goal2} more closely.
	
	An intriguing observation arises within $(x_1, x_2) \in [2.6, 3.2] \times [2.6, 2.8]$. The longer-term portfolio strategy $\alpha^*_2$ follows a pattern $(0.0, 1.0) \rightarrow (0.25, 0.75) \rightarrow (0.0, 1.0)$ (Strategy Code $4 \rightarrow 8 \rightarrow 4$) as the longer-term wealth $x_2$ increases, while the shorter-term wealth $x_1$ remains fixed. Notably, when the longer-term wealth $x_2$ increases from $2.8$ to $3.2$, the agent does not decrease the investment proportion but rather increases the weight on the second stock. This trend is particularly pronounced in the case of negative correlation $\rho = -0.9$; see Figure \ref{longeralpha_n09}. To elucidate this, we recall the strip of the continuation region depicted in Figure \ref{T1impulse_corr05}. When $x_2$ approaches $3.2$,  the longer-term portfolio $X_2$ transfers to the shorter-term one $X_1$. Observe that the time $t=0.8$ in Figure \ref{longeralpha_05} is near the deadline $T_1$. Consequently, even with a higher $x_2$, the agent should maintain an aggressive investment strategy rather than reducing investment as depicted in Table \ref{tab:Goal2}, due to the existence of another portfolio.

	{\bf The shorter-term portfolio strategy}: Based on Figure \ref{shorteralpha_05}, the shorter-term portfolio concentrates solely on the second stock initially, and diversifies only when wealth $x_1$ approaches the target level. This strategy is reasonable given the current time $t = 0.8$ is near the deadline. Therefore, the shorter-term portfolio $X_1$ is advised to pursue an aggressive investment approach.
	
	\item {\bf The longer-term goal has adequate funds, while the shorter-term goal does not}
	
	{\bf The longer-term portfolio strategy}:  As depicted in Figure \ref{longeralpha_05}, the longer-term portfolio $X_2$ avoids stock investments within this domain and acts as cash reserves. The surplus in the longer-term wealth $x_2$ is designated to support the shorter-term goal with the method elaborated in the following discussion on the bulges of intervention regions.
	
	{\bf The shorter-term portfolio strategy}: Illustrated in Figure \ref{shorteralpha_05}, when the longer-term wealth $x_2$ is high, the shorter-term portfolio follows the strategy $$(0.0, 1.0) \rightarrow (0.25, 0.75) \rightarrow (0.5, 0.5) \rightarrow (0.25, 0.5) \rightarrow (0.0, 0.0),$$ 
	or equivalently, $4 \rightarrow 8 \rightarrow 11 \rightarrow 7 \rightarrow 0$ in the strategy code. This approach mirrors the one delineated in Table \ref{tab:Goal2}. Intuitively, the agent manages the shorter-term portfolio as if it were the sole portfolio, utilizing the longer-term portfolio as a cash reservoir.

	Around the state $(x_1, x_2) = (4.0, 4.6)$, the shorter-term portfolio avoids aggressive strategies such as $\alpha_1 = (0.0, 1.0)$. Given that the shorter-term portfolio will receive funds from the longer-term counterpart, stock investments for $X_1$ adopt a conservative position. This behavior aligns with findings from \cite{capponi2024}, in which the optimal strategy becomes conservative when wealth approximates the target, but more aggressive under higher wealth levels.
	
	\item {\bf The shorter-term goal has adequate funds, while the longer-term goal does not} 
	
	This scenario closely mirrors the one described above, while the behaviors for the shorter-term and longer-term portfolios are interchanged.
\end{enumerate}

For the free boundaries, the main observation is that there exist two bulges in the intervention regions. We examine one of them around $(x_1, x_2) \in [3.4, 4.6] \times [3.8, 5.6]$. The other bulge can be interpreted similarly.
\begin{itemize}
	\item When $(x_1, x_2) \in [3.4, 4.6] \times [3.8, 5.6]$, it means the shorter-term wealth $x_1$ approaches the required level of $5.0$, while the longer-term wealth $x_2$ slightly exceeds the mandated level of $4.0$. The optimal control suggests transferring to the shorter-term goal. Both portfolios experience a reduction in stock investment within this region. Therefore, to achieve the shorter-term goal, it relies on wealth from the longer-term portfolio rather than investment returns from stocks. This strategy aims to mitigate risk and circumvent potential failure of the shorter-term goal, after taking the associated mental costs into account.
	
	It is crucial to highlight that the bulge phenomenon hinges on the correlation between two risky assets, with the current setting being $\rho = 0.5$. Consequently, the diversification benefit is limited if the surplus remains in the longer-term portfolio $X_2$, instead of transferring to portfolio $X_1$. We will later demonstrate that an extremely negative correlation eliminates the bulge, as the diversification benefit dominates; see Figure \ref{longeralpha_n09} and \ref{shorteralpha_n09}.
	
	\item In the area around $(x_1, x_2) \in [3.4, 4.8] \times [5.6, 10.0]$, we adopt a more ``patient" approach to support the shorter-term goal by waiting for the free boundary to shift rightward.
	\begin{itemize}
		\item[(a)]	If the shorter-term portfolio exhibits positive performance, causing an increase in $x_1$ such that the free boundary cannot catch up, we avoid transactions and mitigate associated mental costs.
		\item[(b)]  Conversely, if the shorter-term portfolio $X_1$ performs poorly, we support $x_1$ back to the free boundary by reallocating wealth from $x_2$. In cases of persistent underperformance, we transfer progressively to support the shorter-term portfolio $X_1$ towards the target level.
	\end{itemize}
\end{itemize}

Consider a scenario where both portfolios originate from regions of low wealth, such as $(x_1, x_2) = (1.4, 1.4)$. In this case, the wealth pair will move within the continuation region and rebound against free boundaries. Should both portfolios exhibit strong performance, they will traverse the ``strait" delineated by the two bulges and achieve both goals. It is less likely for one portfolio to substantially exceed the target level while the other remains unfunded.

The bulge phenomenon also aligns with the main finding in \cite{capponi2024}, where the agent reduces investment risk when approaching a goal and increases the risk when resources are more abundant. Differently, our framework unveils a more intricate interaction between mental accounting and diversification.

\begin{figure}[H]
	\centering
	\includegraphics[width=0.95\textwidth]{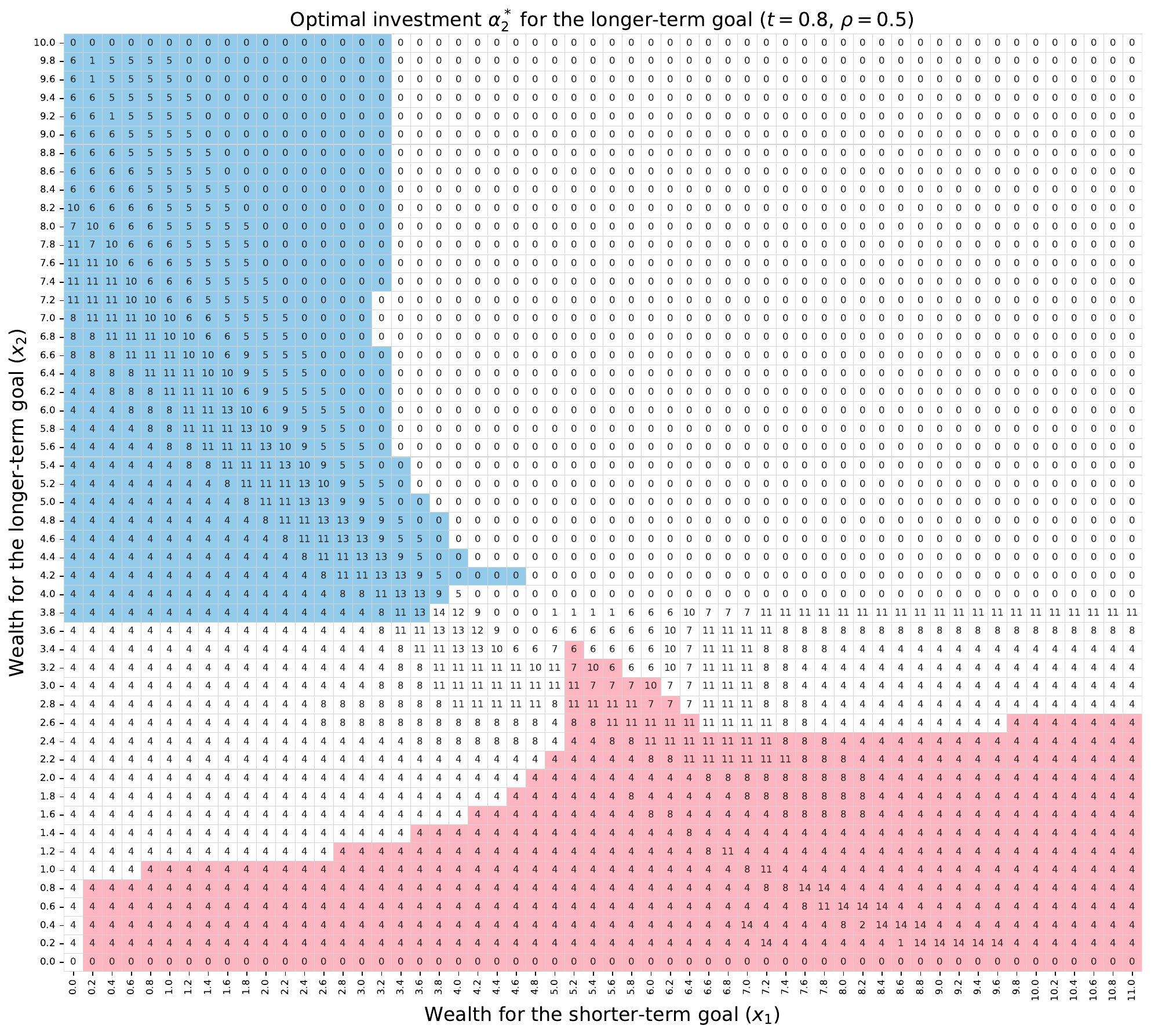}
	\caption{The optimal investment proportion $\alpha^*_2$ for the longer-term goal at time $t=0.8$. We set the correlation $\rho=0.5$ here.}\label{longeralpha_05}
\end{figure}

\begin{figure}[H]
	\centering
	\includegraphics[width=0.95\textwidth]{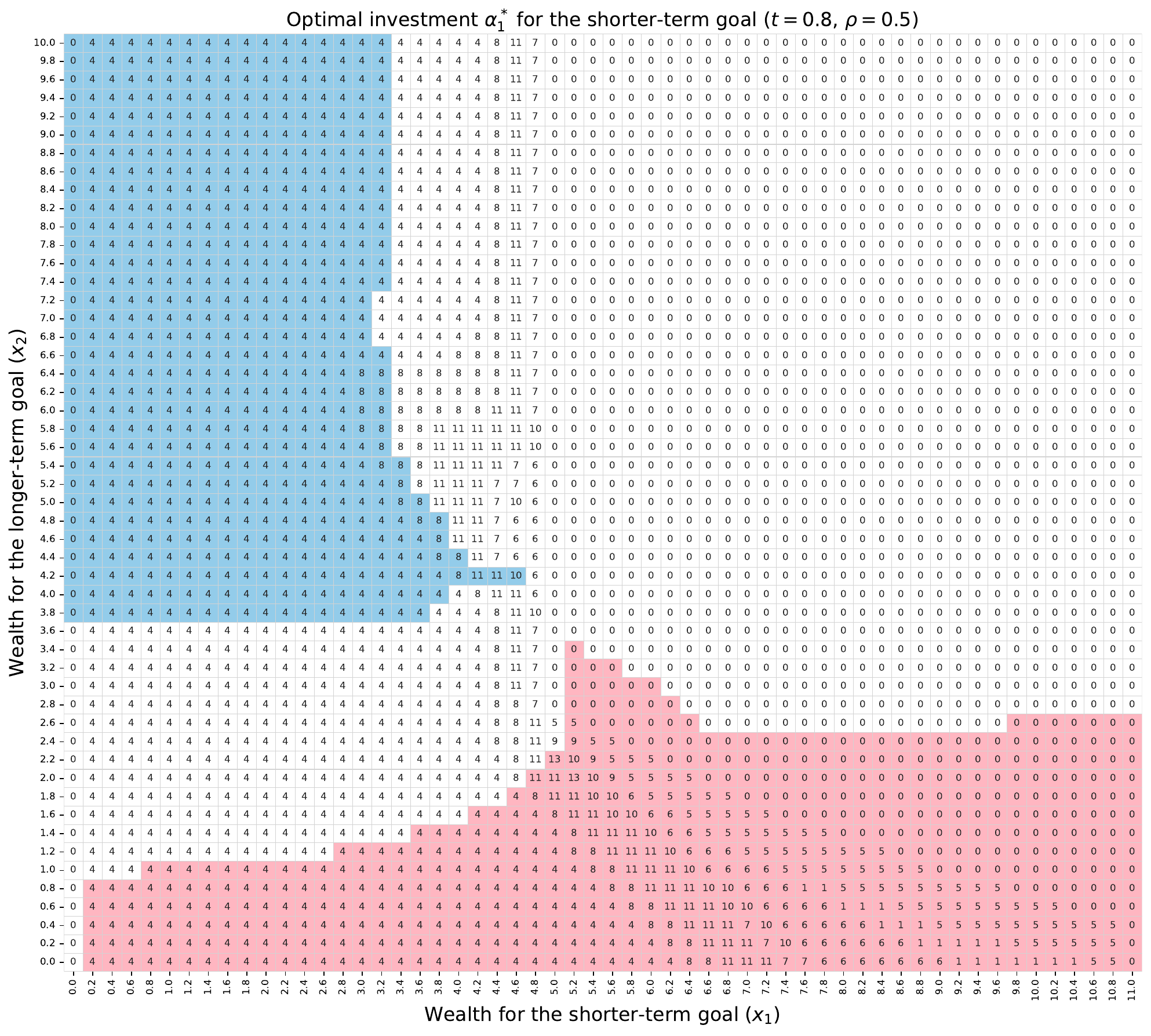}
	\caption{The optimal investment proportion $\alpha^*_1$ for the shorter-term goal at time $t=0.8$. We set the correlation $\rho=0.5$ here.}\label{shorteralpha_05}
\end{figure}

\subsection{Two-stock market with the correlation $\rho=-0.9$}

Commonly, diversification can reduce portfolio risk more effectively under a negative correlation. When we set $\rho=-0.9$ instead, it generates several effects as follows:
\begin{enumerate}[label={(\arabic*)}]	
	\item In the present configuration, a negative correlation yields a decline in the value function, especially when $t=0$. This decrease is observed in both the continuation and intervention regions, suggesting that the reductions stem from mitigating investment risks via diversification and diminishing mental costs of transactions between portfolios.
	\item A comparison between Figure \ref{T1impulse_corr05} and \ref{T1impulse_corrn09} reveals changes in the upper bound of the light pink region. When the shorter-term wealth $x_1 \leq 5.0$, the upper bound of the light pink region increases from $2.2$ to $2.6$, while it decreases from $4.0$ to $3.6$ when $x_1 > 5.0$. These changes can be interpreted as follows. At the deadline $T_1$, if both goals are unfunded, then it is better to transfer some money to the longer-term portfolio, which could still yield profit from the stock investment. Since a negative correlation makes the stock investment more attractive, we transfer more when both portfolios are in the low wealth region. Next, consider scenarios where funds for the shorter-term goal are abundant. With a negative correlation, stock investment risks decrease significantly. Investment returns can achieve the required level when the longer-term portfolio exceeds certain critical levels. Consequently, taking the mental costs into account, the transfer is not needed and the shorter-term goal may retain more funds than necessary.
	
	Under the current conditions, the light blue region remains unchanged, although this is not always the case when parameters are different.
	
	\item In Figure \ref{longeralpha_n09} or \ref{shorteralpha_n09}, the bulges disappear. To explain this, we note that the relevant terms in the negative Hamiltonian $-H$ are $- \rho \alpha_{i, 1} \alpha_{j, 2} \sigma_1 \sigma_2 x_i x_j \partial^2_{ij} V_1$, where $i, j \in \{1, 2\}$ are portfolio identifiers. Under the current parameters, we numerically find that $-H$ becomes larger with a negative correlation, leading the continuous HJB equation to be satisfied with equality.
	
	\item When an individual portfolio surpasses the target level, it continues to invest in the stock market, preparing for another unfunded portfolio. This contrasts with the positive correlation scenario depicted in Figures \ref{longeralpha_05} and \ref{shorteralpha_05}. In Figure \ref{shorteralpha_n09}, with $x_1 > 5.0$ and $x_2 < 4.0$, the optimal strategy for the shorter-term portfolio still involves investing in the first stock. This is because the longer-term portfolio predominantly invests in the second stock within the same area, seeking a higher return. Given the negative correlation between stocks, it is preferable for the shorter-term portfolio to diversify investments to mitigate risk, rather than relying solely on the money account. This phenomenon also holds for the longer-term portfolio strategy when $x_1 < 5.0$ and $x_2 > 4.0$ in Figure \ref{longeralpha_n09}.
\end{enumerate}

\begin{figure}[H]
	\centering
	\includegraphics[width=0.7\textwidth]{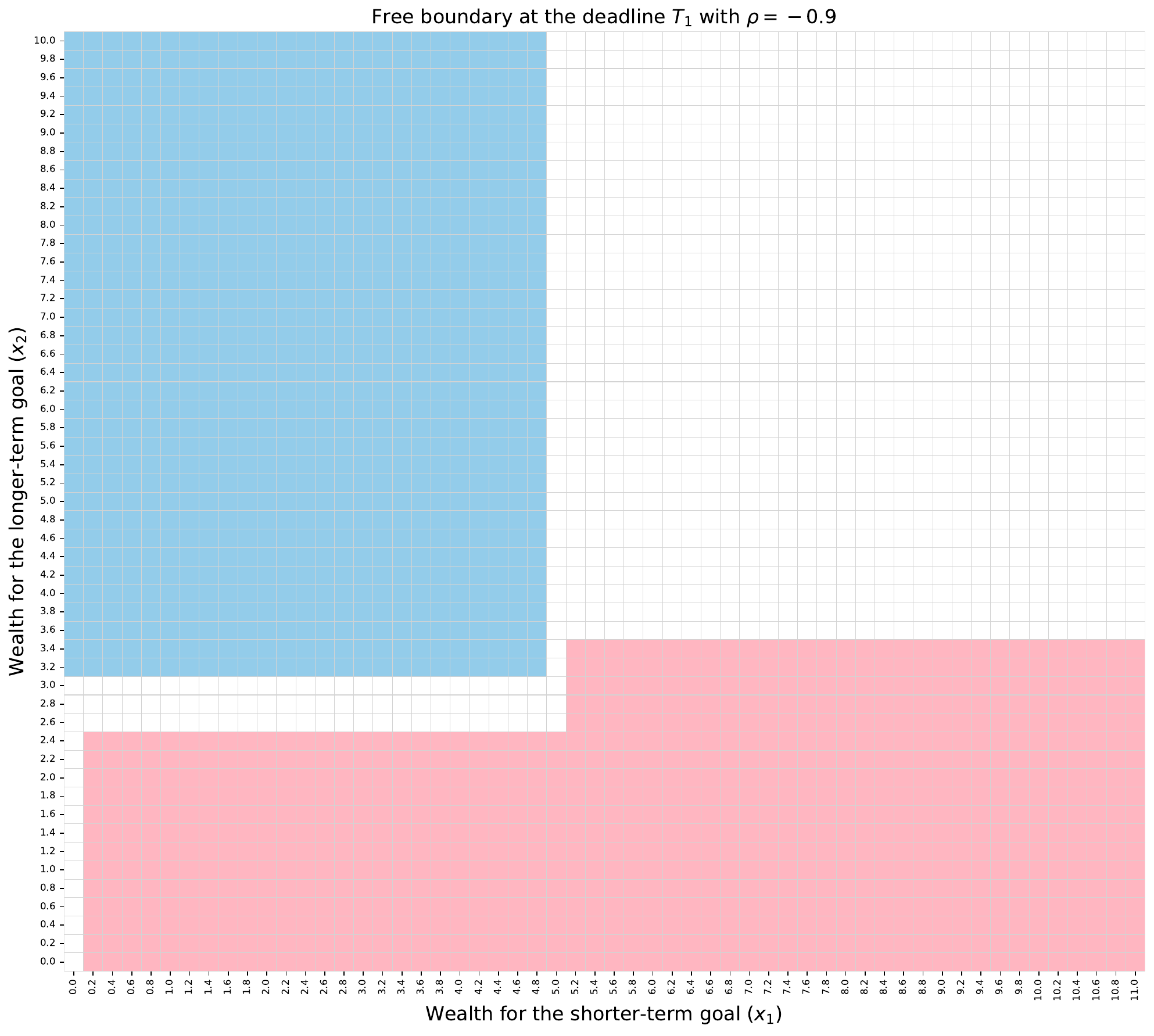}
	\caption{Free boundaries at the shorter-term goal deadline $T_1$. The correlation $\rho = -0.9$ here. }\label{T1impulse_corrn09}
\end{figure}

\begin{figure}[H]
	\centering
	\includegraphics[width=0.95\textwidth]{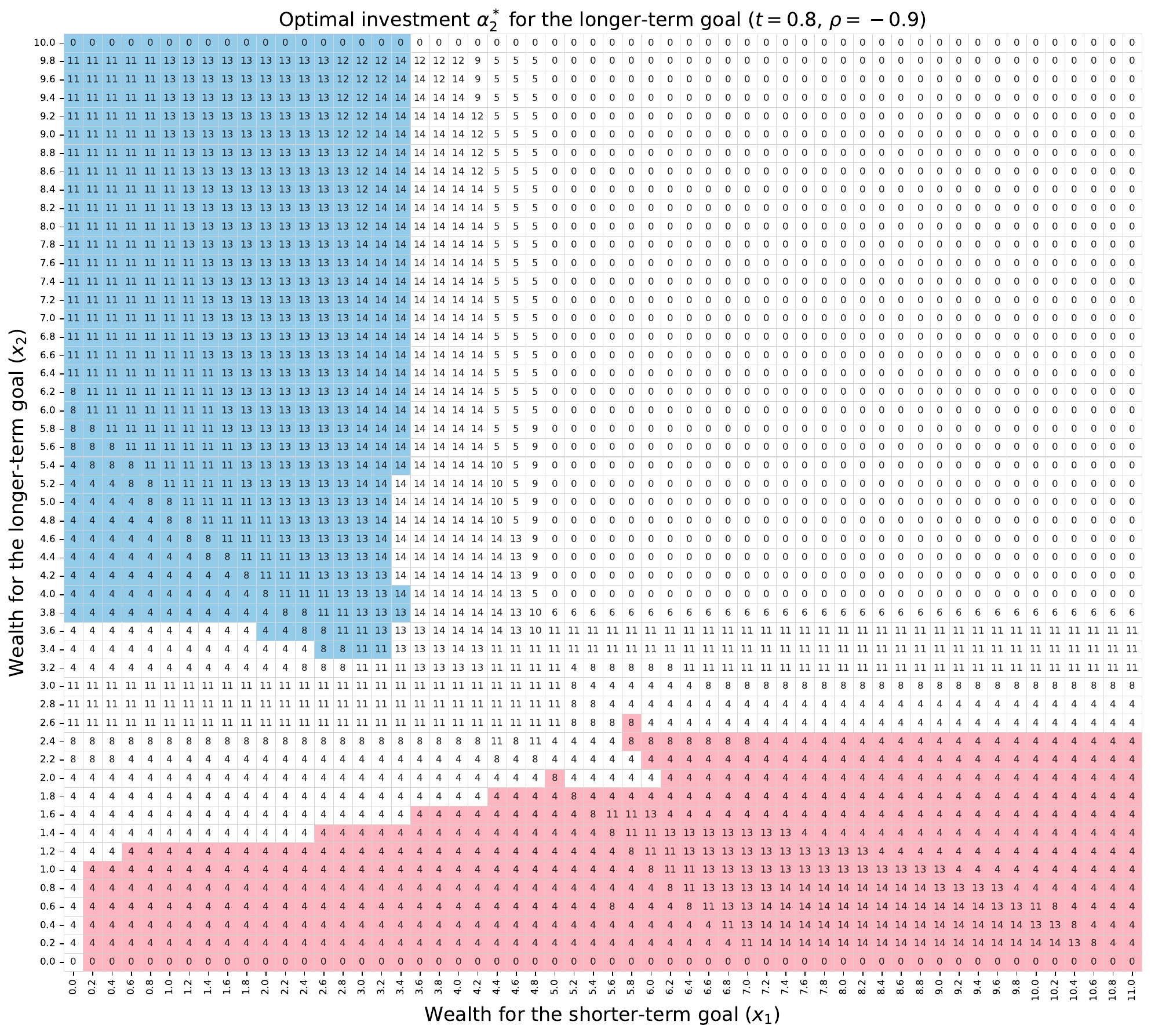}
	\caption{The optimal investment proportions $\alpha^*_2$ for the longer-term goal at time $t=0.8$. The correlation $\rho=-0.9$ here.}\label{longeralpha_n09}
\end{figure}

\begin{figure}[H]
	\centering
	\includegraphics[width=0.95\textwidth]{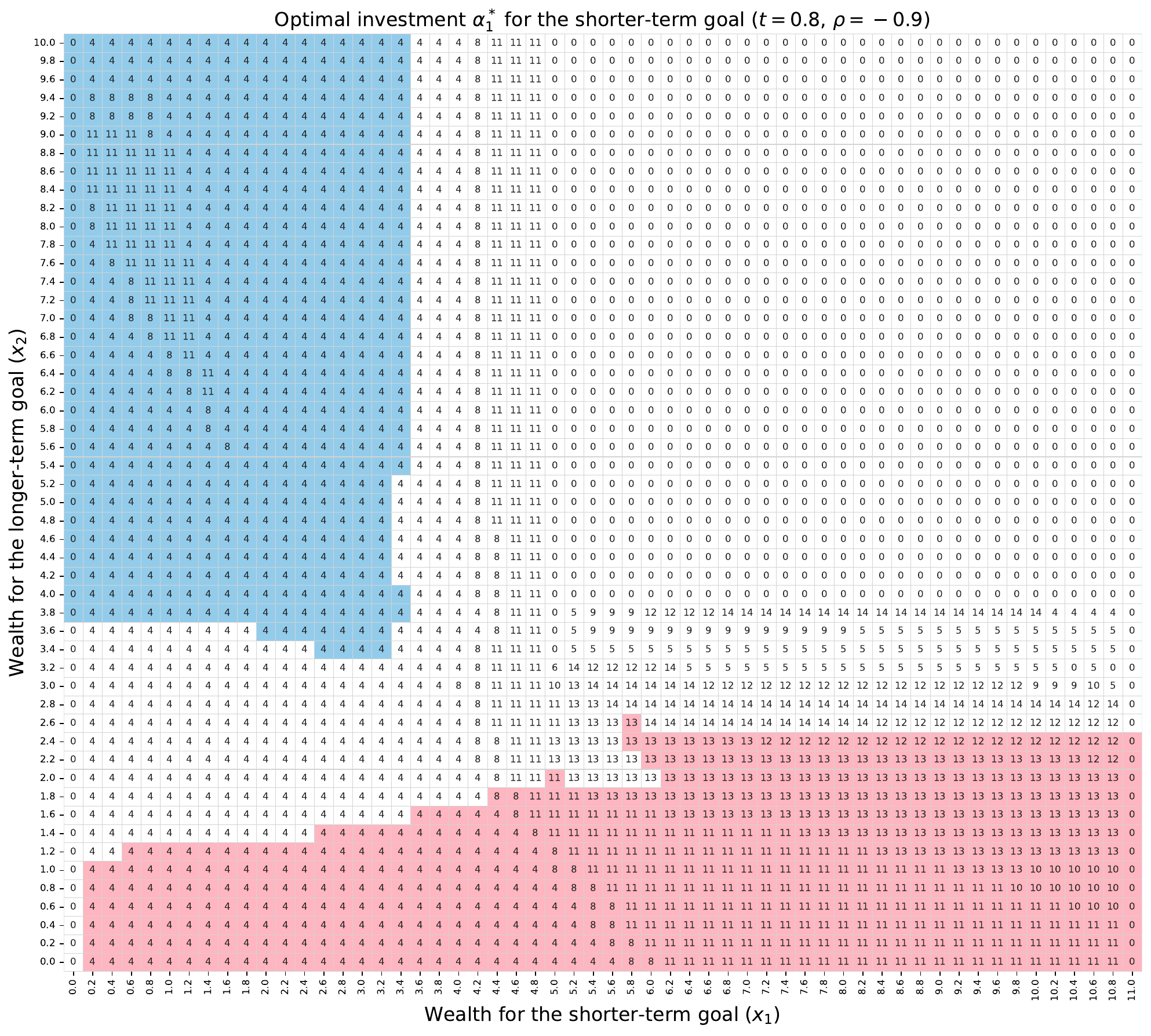}
	\caption{The optimal investment proportions $\alpha^*_1$ for the shorter-term goal at time $t=0.8$. The correlation $\rho=-0.9$ here.}\label{shorteralpha_n09}
\end{figure}

\subsection{When the shorter-term goal is important}
\begin{figure}[H]
	\centering
	\includegraphics[width=0.7\textwidth]{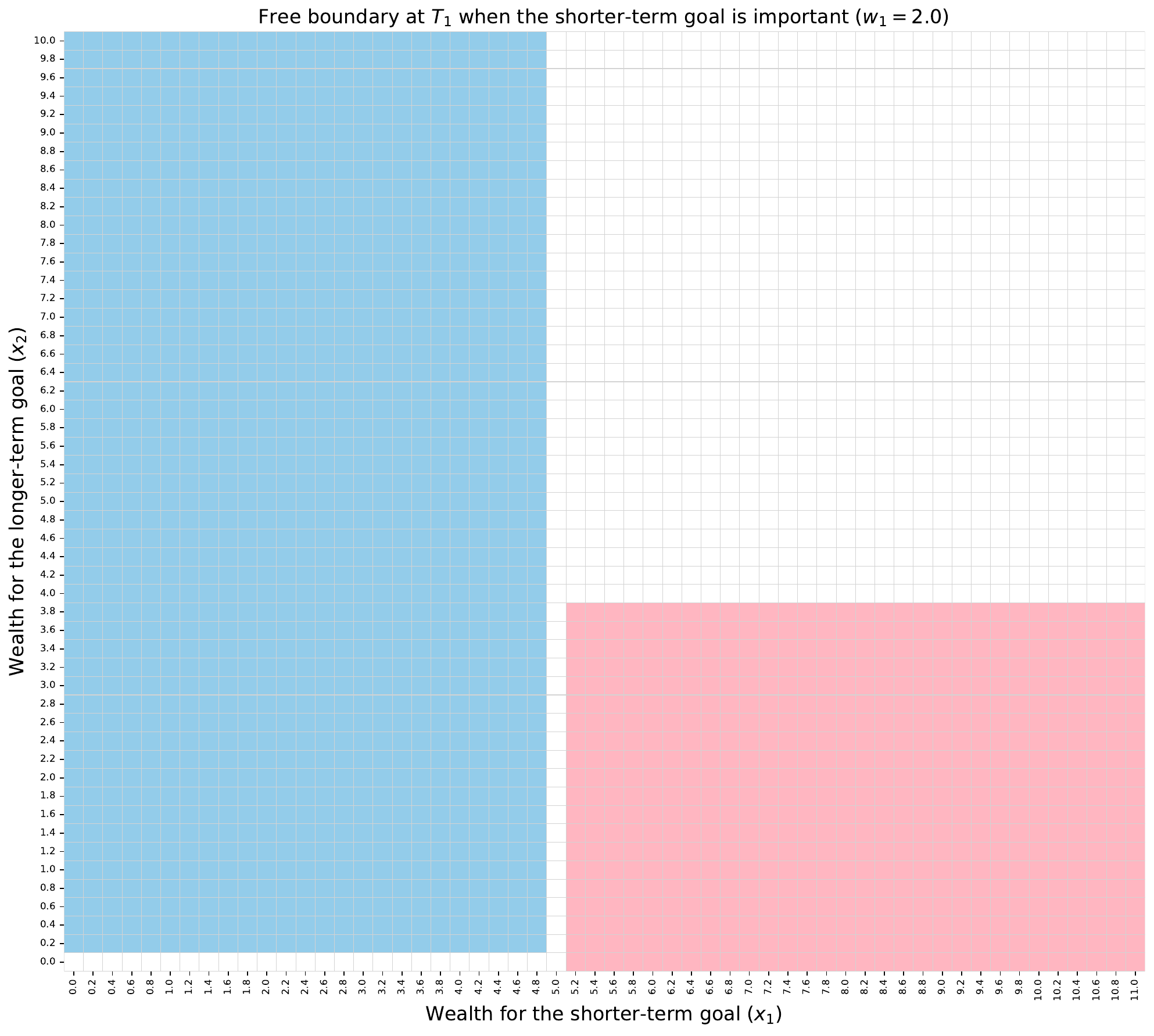}
	\caption{Free boundaries at the shorter-term goal deadline $T_1$. The weight $w_1=2.0$ and the correlation $\rho = 0.5$ here. }\label{T1impulse_important}
\end{figure}	

Suppose the shorter-term goal is prioritized by setting $w_1 = 2.0$, while other parameters are unchanged and $\rho = 0.5$. As illustrated in Figure \ref{T1impulse_important}, at the deadline $T_1$, the agent allocates all available funds toward the shorter-term goal. The continuation region in lower wealth areas is reduced to the line segments $(x_1, x_2) \in [0.0, 5.0] \times \{ 0.0 \}$ and $(x_1, x_2) \in \{ 5.0 \} \times [0.0, 4.0]$. A comparison of Figures \ref{shorter_important} and \ref{shorteralpha_05} reveals similar investment strategies in the stocks, while the figure for the longer-term case is omitted for simplicity.

At time $t = 0.8$, we observe that the bulges persist, and a ``notch" appears in $(x_1, x_2) \in [5.2, 6.6] \times [0.0, 1.6]$. This indicates that, despite the shorter-term portfolio exceeding the required level of 5.0, the agent postpones any transfers until $T_1$. In this region, the wealth in the longer-term portfolio is notably low, with full investment in the second stock. The shorter-term portfolio maintains moderate risk exposure in the stock market, with proportions $(0.25, 0.75)$ (Strategy Code 8) or $(0.5, 0.5)$ (Strategy Code 11), as shown in Figure \ref{shorter_important}. The notch phenomenon can be explained similarly to the bulge. The agent supports the longer-term portfolio $X_2$ by waiting for the free boundary to shift leftward. If the longer-term portfolio $X_2$ performs favorably, this strategy reduces the transfer amount needed to support $X_2$. Conversely, if the portfolio $X_2$ performs poorly, the agent will only support $X_2$ if there is a surplus not required by $G_1$ at time $T_1$, due to the importance of the shorter-term goal.

\begin{figure}[H]
	\centering
	\includegraphics[width=0.95\textwidth]{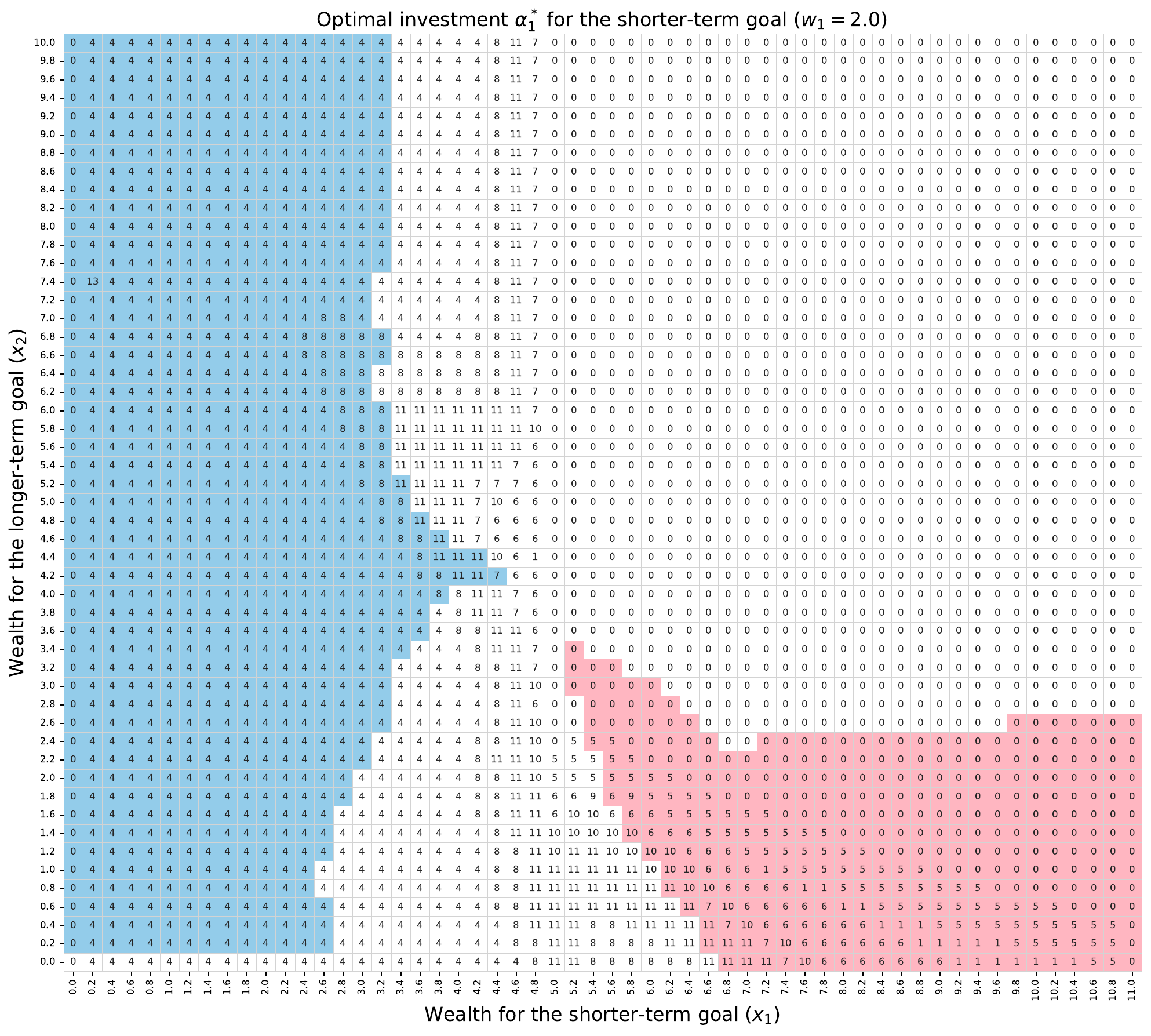}
	\caption{The optimal investment proportions $\alpha^*_1$ for the shorter-term goal at time $t=0.8$.  The weight $w_1=2.0$ and the correlation $\rho = 0.5$ here. }\label{shorter_important}
\end{figure}	

In addition to the figures presented in this paper, we provide visualizations of the free boundaries for the three settings discussed above at \url{https://github.com/hanbingyan/goaldemo}. These visualizations illustrate the evolution of the free boundaries when both goals are active.

\section{Stochastic supersolution}\label{sec:sto_sup}

%%%%%%%%%%%%%%%%%%%%%%%
The remaining sections present the proof of Theorem \ref{thm:viscosity}. The main approach is based on the stochastic Perron's method; see \cite{bayraktar2013stochastic, bayraktar2015stochastic}. The proof proceeds in three steps: 
\begin{enumerate}[label={(\arabic*)}]	
	\item In Section \ref{sec:sto_sup}, we introduce a family of stochastic supersolutions which bound the value function from above. The infimum of this family is called the upper stochastic envelope and is shown to be a viscosity subsolution. 
	
	\item In Section \ref{sec:sto_sub}, we introduce a family of stochastic subsolutions which bound the value function from below. The supremum of this family is called the lower stochastic envelope and is shown to be a viscosity supersolution. 
	
	\item Finally, a comparison argument in Section \ref{sec:compare} completes the proof. 
\end{enumerate}

The stochastic Perron's method bypasses the need to first establish the dynamic programming principle (DPP), instead deriving it after proving that the value function is a viscosity solution. This approach circumvents the complicated and sometimes incomplete proofs of the DPP.

Before presenting the details, we outline the main technical challenges and the techniques used to address them:
\begin{enumerate}[label={(\arabic*)}]	
	\item The value function loses one state variable $x_k$ after the deadline of goal $k$. Then we must define stochastic supersolutions and subsolutions properly in Definitions \ref{def:sto_super} and \ref{def:sto_sub}, respectively. The boundary condition in equation \eqref{Tk_bc} at the deadline $T_k$ links the value function $V_k(T_k, x_{k:K+1}, y)$ to $V_{k+1}(T_k, x_{k+1:K+1}, y)$, which is a novel problem. The stochastic Perron's method is adapted to prove the viscosity solution property at $T_k$, as detailed in Step 3 of the proof of Propositions \ref{prop:vissub} and \ref{prop:vissuper}.
	\item Another technical challenge is the state constraint that the wealth processes are nonnegative. Similar to \citet[Equation (4.3)]{bentahar2007} and \citet[Theorem 1.2]{rokhlin2014}, Lemma \ref{lem:boundary} shows that the upper stochastic envelope $v_+$ also provides information about the value function on the boundary $\overline{\cS}^o_{k} \backslash \cS_k$. Together with Lemma \ref{lem:conti}, we can relax the continuity assumption imposed in the \citet[Theorem 7.9]{crandall1992user} when proving the comparison principle. A similar idea is also used in \citet[Lemma 4.1]{bentahar2007}. 
	\item The mental accounting formulation leads to gradient constraints from two sides as follows:
	\begin{equation*}
		- \theta_{i}  \leq \partial_{K+1} V_k - \partial_{i} V_k \leq \lambda_{i}, \quad i = k, \ldots, K.
	\end{equation*}
	This requires a modified approach to proving the comparison principle, similar to the technique in \citet[Proposition 2.2]{hynd2012eigenvalue}. The gradient constraints also complicate the construction of a strict classical subsolution as in Lemma \ref{lem:strict_l}. In fact, if we extend the domain of $x_{1:K+1}$ from $[0, \infty)^{K+1}$ to $\R^{K+1}$ and try to avoid the difficulty from state constraints like \cite{capponi2024}, then we could not find a strict classical subsolution on $\R^{K+1}$. Therefore, this paper follows the techniques mentioned in item (2) instead.
\end{enumerate}

For $k=1, \ldots, K+1$, a tuple $(\tau, \xi_{k:K+1}, \xi_y)$ is called a {\it random initial condition} for the state process \eqref{Y}, \eqref{Xlast}, and \eqref{Xk}, if $\tau \in [T_{k-1}, T]$ is an $\mathbb{F}$-stopping time and $(\xi_{k:K+1}, \xi_y)$ is an $\cF_{\tau}$-measurable random vector with $\p((\xi_{k:K+1}, \xi_y) \in \overline{\cS}_k) = 1$. It is possible that some elements in $\xi_{k:K+1}$ may be unnecessary. For example, if $\p( \tau > T_k) = 1$, then the random initial value $\xi_k$ is not needed since the portfolio $X_k$ terminates after the deadline $T_k$. However, we keep the redundant $\xi_k$ for notational convenience.

Denote $\{(X_{k:K+1}, Y)(t; \tau, \xi_{k:K+1}, \xi_y, U) \}_{t \in [\tau, T]}$ as the solution of the state process \eqref{Y}, \eqref{Xlast}, and \eqref{Xk}, with a control $U$ and random initial condition $(\tau, \xi_{k:K+1}, \xi_y)$ in the sense that
\begin{equation*}
	(X_{k:K+1}, Y)(\tau-; \tau, \xi_{k:K+1}, \xi_y) = (\xi_{k:K+1}, \xi_y).
\end{equation*}
Here, $U:=(\alpha_{k:K+1}, L_{k:K}, M_{k:K})$ is called $(\tau, \xi_{k:K+1}, \xi_y)$-admissible if 
\begin{equation*}
\p((X_{k:K+1}, Y)(t; \tau, \xi_{k:K+1}, \xi_y, U) \in \overline{\cS}_k, \; \tau \leq t \leq T ) = 1.
\end{equation*}
We adopt the convention that $X_k(t) = X_k(T_k)$ for $t > T_k$, i.e., the state ceases at the deadline $T_k$. Similarly, some elements in $X_{k:K+1}$ and $U$ may be redundant if $\p(\tau > T_k) = 1$.

Similar to \citet[Lemma 2.2]{bayraktar2015stochastic}, we establish the following result on the concatenation of admissible controls. The proof is also similar to that of \citet[Lemma 2.2]{bayraktar2015stochastic} and we omit the details for brevity.
\begin{lemma}\label{lem:glue}
	For $k=1, \ldots, K+1$, consider a random initial condition $(\tau, \xi_{k:K+1}, \xi_y)$ with $\tau \in [T_{k-1}, T_k]$.
	\begin{enumerate}
		\item If $(\alpha^1_{k:K+1}, L^1_{k:K}, M^1_{k:K})$ and $(\alpha^2_{k:K+1}, L^2_{k:K}, M^2_{k:K})$ are $(\tau, \xi_{k:K+1}, \xi_y)$-admissible and $A$ is an $\cF_\tau$-measurable set, then
		\begin{align*}
			&(\alpha_{k:K+1}(t), L_{k:K}(t), M_{k:K}(t)) \\
			& := \one_{ \{ \tau \leq t \} \cap A} (\alpha^1_{k:K+1}(t), L^1_{k:K}(t) - L^1_{k:K}(\tau-), M^1_{k:K}(t) - M^1_{k:K}(\tau-)) \\
			& \qquad + \one_{ \{ \tau \leq t \} \cap A^c} (\alpha^2_{k:K+1}(t), L^2_{k:K}(t) - L^2_{k:K}(\tau-), M^2_{k:K}(t) - M^2_{k:K}(\tau-))
		\end{align*}
		is also $(\tau, \xi_{k:K+1}, \xi_y)$-admissible.
		\item Let $U^1 := (\alpha^1_{k:K+1}, L^1_{k:K}, M^1_{k:K})$ be a $(\tau, \xi_{k:K+1}, \xi_y)$-admissible control, $\rho \in [\tau, T]$ be an $\mathbb{F}$-stopping time, and $(\xi^1_{k:K+1}, \xi^1_y) := (X_{k:K+1}, Y)(\rho; \tau, \xi_{k:K+1}, \xi_y, U^1)$. Then $(\rho, \xi^1_{k:K+1}, \xi^1_y)$ is a random initial condition. If $U^2 := (\alpha^2_{k:K+1}, L^2_{k:K}, M^2_{k:K})$ is $(\rho, \xi^1_{k:K+1}, \xi^1_y)$-admissible, then
		\begin{align*}
			&(\alpha_{k:K+1}(t), L_{k:K}(t), M_{k:K}(t)) \\
			& := \one_{\{\tau \leq  t < \rho \}} (\alpha^1_{k:K+1}(t), L^1_{k:K}(t), M^1_{k:K}(t)) \\
			& \qquad + \one_{ \{ \rho \leq t \leq T\}} (\alpha^2_{k:K+1}(t), L^2_{k:K}(t) - L^2_{k:K}(\rho-) + L^1_{k:K}(\rho), \\
			& \hspace{3.0cm} M^2_{k:K}(t) - M^2_{k:K}(\rho-) + M^1(\rho))
		\end{align*}
		is $(\tau, \xi_{k:K+1}, \xi_y)$-admissible.
	\end{enumerate}
\end{lemma}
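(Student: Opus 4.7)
The plan is to verify directly each defining property of admissibility for the concatenated controls: progressive measurability and $\cA$-valuedness of the proportions, RCLL/nondecreasing/nonnegativity/adaptedness of $(L_{k:K},M_{k:K})$ with the initialization at $\tau-$ being zero, and, crucially, the state constraint that $X_{k:K+1}$ remains in $[0,\infty)^{K-k+2}$ throughout $[\tau,T]$. Throughout, I rely on strong uniqueness of the SDE system \eqref{Y}--\eqref{Xk}, which holds under the standing Lipschitz/bounded coefficient assumption.

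For part 1, the key observation is that $\one_A$ and $\one_{A^c}$ are $\cF_\tau$-measurable, so the pasted control on $[\tau,T]$ is progressively measurable. The RCLL, nondecreasing, nonnegativity, and $L(\tau-)=M(\tau-)=0$ properties are inherited coordinate-wise from the shifted processes $L^i_{k:K}-L^i_{k:K}(\tau-)$ and $M^i_{k:K}-M^i_{k:K}(\tau-)$; since the two pieces live on disjoint events, there is no cross-contamination. To verify the state constraint, strong uniqueness yields that on $A$ the glued state process coincides a.s.\ with $(X^1,Y)(\cdot;\tau,\xi_{k:K+1},\xi_y,U^1)$, and on $A^c$ with the analogous $U^2$-driven process; nonnegativity of $X_{k:K+1}$ then follows from the admissibility of $U^1$ and $U^2$.

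For part 2, I first observe that $\xi^1_{k:K+1}$ is $\cF_\rho$-measurable (progressive measurability of $(X^1,Y)$ evaluated at the stopping time $\rho$) and satisfies $\p((\xi^1_{k:K+1},\xi^1_y)\in\overline{\cS}_k)=1$ by admissibility of $U^1$, so $(\rho,\xi^1_{k:K+1},\xi^1_y)$ is a random initial condition. The concatenated $\alpha$ is progressively measurable and $\cA$-valued because each piece is; the concatenated $(L,M)$ is RCLL and nondecreasing across the join point $\rho$ precisely because on $[\rho,T]$ we take $L^2(\cdot)-L^2(\rho-)+L^1(\rho)$ (and similarly for $M$), which matches the RCLL value $L^1(\rho)$ of the first piece at $\rho$ and then grows monotonically. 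The state constraint is checked by the same strong-uniqueness argument: on $[\tau,\rho]$ the state agrees with $(X^1,Y)$, and on $[\rho,T]$, restarting from $(\xi^1_{k:K+1},\xi^1_y)$ with the $(\rho,\xi^1_{k:K+1},\xi^1_y)$-admissible $U^2$, the state coincides with $(X^2,Y)(\cdot;\rho,\xi^1_{k:K+1},\xi^1_y,U^2)$ and thus stays in $\overline{\cS}_k$. The main mild subtlety is the convention $X_j(t)=X_j(T_j)$ for $t>T_j$ and the truncation $\one_{\{u\le T_j\}}$ in \eqref{Xlast}, which freezes the $j$-th coordinate and kills any further transfers past the deadline $T_j$; since $U^1$ and $U^2$ both respect this convention on their respective domains of activity, the concatenated control does as well, and the loss of state variables across deadlines poses no additional difficulty.
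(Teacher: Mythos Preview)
Your proposal is correct and follows the standard direct-verification route: check measurability/adaptedness, the RCLL/nondecreasing/nonnegative structure of the transfer processes with the appropriate shifts at the join points, and use strong uniqueness of the SDE system to identify the glued state process with the individual pieces on the relevant events/intervals, thereby inheriting the no-bankruptcy constraint. The paper itself does not give a proof of this lemma; it simply states that the argument is similar to \cite[Lemma 2.2]{bayraktar2015stochastic} and omits the details, so your write-up is precisely the kind of routine verification the authors had in mind.
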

Here, if the stopping time $\rho$ is after the deadline of goal $i$ with $i \geq k$, i.e., $\p(\rho > T_i) = 1$, then the initial conditions $\xi^1_{k:i}$ become redundant. Again, we include $\xi^1_{k:i}$ for notational convenience.

\begin{definition}[Stochastic supersolution]\label{def:sto_super}
	Denote an array of functions as
	\begin{equation}\label{sto_super}
		(\{ v_1(t, x_{1:K+1}, y) \}_{t \in [0, T_1]}, \ldots, \{ v_k(t, x_{k:K+1}, y) \}_{t \in [T_{k-1}, T_{k}]}, \ldots, \{ v_{K+1}(t, x_{K+1}, y) \}_{t \in [T_K, T]} ),
	\end{equation}
	where $v_k(t, x_{k:K+1}, y): [T_{k-1}, T_k] \times \overline{\cS}_k \rightarrow \R, k=1,\ldots, K+1$ is bounded and USC. The array \eqref{sto_super} is a stochastic supersolution of the HJB equation system if
	\begin{enumerate}[label={(\arabic*)}]	
		\item for each $k=1, \ldots K+1$, consider any random initial condition $(\tau, \xi_{k:K+1}, \xi_y)$ with $\tau \in [T_{k-1}, T_k]$, $(\xi_{k:K+1}, \xi_y) \in \cF_\tau$ and $\p((\xi_{k:K+1}, \xi_y) \in \cS_k) = 1$. There exists an admissible control $U:= (\alpha_{k:K+1}, L_{k:K}, M_{k:K}) \in \cU(\tau, \xi_{k:K+1}, \xi_y)$, such that
		\begin{equation*}
			\p((X_{k:K+1}, Y)(t; \tau, \xi_{k:K+1}, \xi_y, U) \in \cS_k, \; \tau \leq t \leq T) = 1,
		\end{equation*}
		and for any stopping time $\rho \in [\tau, T]$,
		\begin{equation}
			v_k(\tau, \xi_{k:K+1}, \xi_y) \geq \E \big[ \cM \big([\tau, \rho], v_{k:K+1}, (X_{k:K+1}, Y)(\cdot; \tau, \xi_{k:K+1}, \xi_y, U) \big) \big| \cF_\tau \big],
		\end{equation}
		where {\color{black} for $k =1, \ldots, K$,}
		\begin{align}
			& \cM \big([\tau, \rho], v_{k:K+1}, (X_{k:K+1}, Y)(\cdot; \tau, \xi_{k:K+1}, \xi_y, U) \big) \label{M} \\
			& \quad := \Big\{ e^{-\beta(\rho - \tau)} v_{k}(\rho, (X_{k:K+1}, Y)(\rho; \tau, \xi_{k:K+1}, \xi_y, U)) \nonumber \\
			& \quad \qquad + \lambda_k \int^{\rho}_\tau e^{-\beta(s-\tau)} d L_k(s) + \theta_k \int^{\rho}_\tau e^{-\beta (s - \tau)} d M_k(s) \Big \} \one_{\{\tau \leq \rho < T_k\}} \nonumber \\		
			& \qquad + \sum^{K-1}_{l=k} \Big\{ e^{-\beta(\rho - \tau)} v_{l+1}(\rho, (X_{l+1:K+1}, Y)(\rho; \tau, \xi_{k:K+1}, \xi_y, U)) \nonumber \\
			&\quad \qquad \qquad +  \sum^{l}_{i = k} w_i e^{-\beta (T_i - \tau)} (G_i - X_i(T_i; \tau, \xi_{k:K+1}, \xi_y, U))^+  \nonumber \\
			& \quad \qquad \qquad + \sum^{l+1}_{i=k} \lambda_i \int^{\rho \wedge T_i}_\tau e^{-\beta(s-\tau)} d L_i(s) + \sum^{l+1}_{i=k} \theta_i \int^{\rho \wedge T_i}_\tau e^{-\beta (s - \tau)} d M_i(s) \Big \} \one_{\{T_l \leq \rho < T_{l+1} \}} \nonumber \\
			&\qquad + \Big\{ e^{-\beta(\rho - \tau)} v_{K+1}(\rho, (X_{K+1}, Y)(\rho; \tau, \xi_{k:K+1}, \xi_y, U)) \nonumber \\
			&\qquad \qquad +  \sum^{K}_{i = k} w_i e^{-\beta (T_i - \tau)} (G_i - X_i(T_i; \tau, \xi_{k:K+1}, \xi_y, U))^+  \nonumber \\
			& \qquad \qquad + \sum^{K}_{i=k} \lambda_i \int^{T_i}_\tau e^{-\beta(s-\tau)} d L_i(s) + \sum^{K}_{i=k} \theta_i \int^{T_i}_\tau e^{-\beta (s - \tau)} d M_i(s) \Big \} \one_{\{T_K \leq \rho \leq T\}}. \nonumber
		\end{align}	 
		{\color{black}
		For $k = K+1$,  the functional $\cM$ simplifies to
		\begin{equation}\label{M:K+1}
		\begin{aligned}
			& \cM \big([\tau, \rho], v_{K+1}, (X_{K+1}, Y)(\cdot; \tau, \xi_{K+1}, \xi_y, U) \big) \\
			& \quad := e^{-\beta(\rho - \tau)} v_{K+1}(\rho, (X_{K+1}, Y)(\rho; \tau, \xi_{K+1}, \xi_y, U)).
		\end{aligned}
		\end{equation} 
		}
		 We refer to $U$ as a {\bf suitable} control for $v_{k}$ (and $v_{k+1}, \ldots, v_{K+1}$ in  \eqref{sto_super}) with the random initial condition $(\tau, \xi_{k:K+1}, \xi_y)$.  
	\item  
	\begin{align}
		v_k(t, 0, y) & \geq \sum^{K+1}_{i = k} w_i e^{-\beta (T_i-t)} G_i, \quad \forall\;  t \in [0, T], y \in \R^m, k = 1, \ldots, K+1,\\
		v_{K+1}(T, x_{K+1}, y) & \geq (G_{K+1} - x_{K+1})^+, \quad \forall \; (x_{K+1}, y) \in \cS_{K+1}.
	\end{align}
	\end{enumerate}
	Denote the set of stochastic supersolutions as $\cV^+$. 
\end{definition}
\begin{remark}
  The first argument $[\tau, \rho]$ in $\cM$ represents the interval of integration. The second $\tau$ is in the random initial condition of state processes. In general, the random initial condition can be earlier than the lower limit of the integral. Then two different stopping times can appear in these two places. See \eqref{Q1M} for an example.
\end{remark}
{\color{black} Clearly, the array with functions
\begin{equation}\label{ex:sto_super}
	v_k(t, x_{k:K+1}, y) := \sum^{K+1}_{i = k} w_i e^{-\beta (T_i-t)} G_i,
\end{equation} 
is a stochastic supersolution with the suitable control $U = (0, 0, 0)$ that does not invest in stocks and makes no transfer.
}

Following a similar argument as in \citet[Remark 3.2]{bayraktar2015stochastic}, for $k=1, \ldots, K+1$, any stochastic supersolution in $\cV^+$ dominates the value function in the following sense:
\begin{align*}
	V_{k}(t, x_{k:K+1}, y)  & \leq v_{k}(t, x_{k:K+1}, y) \quad \text{ on } \quad [T_{k-1}, T_k] \times (\cS_k \cup (\{ 0 \} \times \R^m)).
\end{align*}

Lemma \ref{lem:two_stochsup} below shows that the family $\cV^+$ of stochastic supersolutions is stable under minimum. The proof of Lemma \ref{lem:two_stochsup} is similar to \citet[Lemma 3.2]{bayraktar2015stochastic} and omitted here.
\begin{lemma}\label{lem:two_stochsup}
	If $ (v^1_1, \ldots, v^1_k, \ldots, v^1_{K+1})$ and $ (v^2_1, \ldots, v^2_k, \ldots, v^2_{K+1})$ are stochastic supersolutions, then $(v^1_1 \wedge v^2_1, \ldots, v^1_k \wedge v^2_k, \ldots, v^1_{K+1} \wedge v^2_{K+1})$ is also a stochastic supersolution.
\end{lemma}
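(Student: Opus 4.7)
The plan is to mimic the standard argument for closedness of the family of stochastic supersolutions under pairwise minimum (as in \citet[Lemma 3.2]{bayraktar2015stochastic}), adapted to the array structure \eqref{sto_super} and the concatenation Lemma \ref{lem:glue}(1). Write $v_k := v^1_k \wedge v^2_k$ for $k=1,\ldots,K+1$. The array $(v_1,\ldots,v_{K+1})$ is bounded and USC because each $v^j_k$ is, and the minimum of two bounded USC functions is bounded USC. The boundary inequalities in item (2) of Definition \ref{def:sto_super} are preserved under minimum since each of $v^1_k, v^2_k$ already dominates the same deterministic lower bound. So only the supermartingale-type inequality in item (1) needs work.

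Fix $k \in \{1,\ldots,K+1\}$, a random initial condition $(\tau,\xi_{k:K+1},\xi_y)$ with $\tau\in[T_{k-1},T_k]$ and $(\xi_{k:K+1},\xi_y)\in\cS_k$ a.s. Pick $U^1, U^2 \in \cU(\tau,\xi_{k:K+1},\xi_y)$ that are suitable for $(v^1_k,\ldots,v^1_{K+1})$ and $(v^2_k,\ldots,v^2_{K+1})$ respectively, so each keeps the state in $\cS_k$ until the corresponding deadline. Define the $\cF_\tau$-measurable splitting set
\begin{equation*}
A := \{v^1_k(\tau,\xi_{k:K+1},\xi_y)\le v^2_k(\tau,\xi_{k:K+1},\xi_y)\}\in\cF_\tau,
\end{equation*}
and apply Lemma \ref{lem:glue}(1) with $A$ to concatenate $U^1$ on $A$ with $U^2$ on $A^c$, obtaining a $(\tau,\xi_{k:K+1},\xi_y)$-admissible control $U$. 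By construction, the state paths and the $dL$-$dM$ integrals produced by $U$ from time $\tau$ onward coincide pathwise with those of $U^1$ on $A$ and with those of $U^2$ on $A^c$; in particular the state stays in $\cS_k$ a.s.

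For any stopping time $\rho\in[\tau,T]$, because the functional $\cM$ in \eqref{M} is monotone in the function argument $v_{k:K+1}$, we have $\cM([\tau,\rho],v_{k:K+1},\cdot)\le \cM([\tau,\rho],v^j_{k:K+1},\cdot)$ for $j=1,2$. Then, using that $A\in\cF_\tau$ and that $U$ agrees with $U^j$ on the corresponding piece of $\Omega$,
\begin{align*}
\one_A\,\E\bigl[\cM([\tau,\rho],v_{k:K+1},(X,Y)(\cdot;\tau,\xi,U))\bigm|\cF_\tau\bigr]
&=\E\bigl[\one_A\,\cM([\tau,\rho],v_{k:K+1},(X,Y)(\cdot;\tau,\xi,U^1))\bigm|\cF_\tau\bigr]\\
&\le \one_A\,\E\bigl[\cM([\tau,\rho],v^1_{k:K+1},(X,Y)(\cdot;\tau,\xi,U^1))\bigm|\cF_\tau\bigr]\\
&\le \one_A\,v^1_k(\tau,\xi_{k:K+1},\xi_y)=\one_A\,v_k(\tau,\xi_{k:K+1},\xi_y),
\end{align*}
and the analogous inequality on $A^c$ using $U^2$ and $v^2_k$. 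Adding the two and using $\one_A v^1_k+\one_{A^c}v^2_k = v_k(\tau,\xi_{k:K+1},\xi_y)$ proves the required supermartingale inequality for $v_k$. Hence $U$ is suitable for $(v_1,\ldots,v_{K+1})$ with random initial condition $(\tau,\xi_{k:K+1},\xi_y)$, and $(v_1,\ldots,v_{K+1})\in\cV^+$.

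The main delicate point is the bookkeeping at the deadlines $T_l$ inside $\cM$: we must check that on $A$, evaluating $\cM$ with $v$ along the $U$-path genuinely equals evaluating $\cM$ with $v$ along the $U^1$-path, and similarly on $A^c$. This is immediate from Lemma \ref{lem:glue}(1) since the stock control, the transfer processes, and hence each indicator $\one_{\{T_l\le \rho<T_{l+1}\}}$ and each terminal value $X_i(T_i)$ appearing in \eqref{M} are identical to those under $U^1$ on $A$ (respectively under $U^2$ on $A^c$). Everything else is the standard monotonicity-plus-splitting trick, so no further technicality arises.
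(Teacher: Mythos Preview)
Your proof is correct and follows exactly the approach the paper indicates (it cites \citet[Lemma 3.2]{bayraktar2015stochastic} and omits the details): the $\cF_\tau$-measurable splitting set $A$, the concatenation via Lemma~\ref{lem:glue}(1), and the monotonicity of $\cM$ in its function argument are precisely the ingredients of that argument. The extra remark about bookkeeping at the deadlines $T_l$ is the only point where the array structure requires a word beyond the original reference, and you have handled it correctly.
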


Given $k = 1, \ldots, K+1$ and $(t, x_{k:K+1}, y) \in [T_{k-1}, T_k] \times \overline{\cS}_k$, define
\begin{align}
	&v_{k, +}(t, x_{k:K+1}, y) := \inf \big\{ v_k(t, x_{k:K+1}, y) |\; v_k \text{ is the $k$-th element of some } v \in \cV^+ \big\}.  \label{v+}
\end{align}
Here, $v \in \cV^+$ means that $v := (v_1, \ldots, v_k, \ldots, v_{K+1})$ is a stochastic supersolution. Importantly, the infimum in $v_{k, +}$ only considers $v_k$ when it can construct a stochastic supersolution $v$ with some $(v_1, \ldots, v_{k-1}, v_{k+1}, \ldots, v_{K+1})$. We write the upper stochastic envelope as $v_+ := (v_{1, +}, \ldots, v_{k, +}, \ldots, $ $v_{K+1, +})$.
\begin{proposition}\label{prop:vissub}
	The upper stochastic envelope $v_+$ is a viscosity subsolution of the HJB equation under Definition \ref{def:vis_sub}.
\end{proposition}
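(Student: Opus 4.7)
The plan is to argue by contradiction following the stochastic Perron's method of \cite{bayraktar2013stochastic,bayraktar2015stochastic}: if the upper stochastic envelope $v_+$ fails the viscosity subsolution property at some point, I will exhibit a strictly smaller element of $\cV^+$, contradicting the infimum definition \eqref{v+}. The argument splits into four cases matching Definition \ref{def:vis_sub}: the interior HJB for $k \le K$, the interior HJB for $V_{K+1}$, the linking condition at each deadline $T_k$, and the boundary conditions at $x_{k:K+1}=0$ and $t=T$.

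For the interior HJB with $k \le K$, suppose \eqref{F_sub} fails at some $(\bar t, \bar x_{k:K+1}, \bar y) \in [T_{k-1}, T_k) \times \cS_k$ with test function $\varphi$. By standard reductions (adding a quadratic penalty, replacing $\varphi$ by $\varphi + v^*_{k,+}(\bar t,\bar x_{k:K+1},\bar y) - \varphi(\bar t,\bar x_{k:K+1},\bar y)$), I may assume the point is a strict global maximum of $v^*_{k,+} - \varphi$ with matching values there. Either the PDE term or one of the gradient terms in $F$ is strictly positive. In the PDE case, select $\bar \alpha \in \cA^{K-k+2}$ attaining the infimum inside $H$ at $(\bar t, \bar x_{k:K+1}, \bar y)$; continuity produces a small ball $B$ on which Itô's formula applied to $e^{-\beta(\cdot - \bar t)}\varphi$ along the state process driven by the constant control $\bar \alpha$ with no transfers yields the supermartingale bound
\[
\E\!\left[e^{-\beta(\rho - \tau)}\varphi(\rho,\cdot)\,\big|\,\cF_\tau\right] \le \varphi(\tau,\cdot)
\]
up to the exit time $\rho$ of $B$. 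Pick $\epsilon>0$ so that $v^*_{k,+} \le \varphi + \epsilon$ on $\partial B$ while $v^*_{k,+}(\bar t,\bar x_{k:K+1},\bar y) > \varphi(\bar t,\bar x_{k:K+1},\bar y) + \epsilon/2$ (possible by upper semi-continuity and the strict maximum), and define
\[
\tilde v_k := \begin{cases} v_{k,+} \wedge (\varphi + \epsilon) & \text{on } B,\\ v_{k,+} & \text{otherwise,} \end{cases}
\]
with $\tilde v_j := v_{j,+}$ for $j \ne k$. Using Lemma \ref{lem:glue} to concatenate $\bar \alpha$ inside $B$ with a suitable control of the original supersolution after the exit time, one verifies $(\tilde v_1,\ldots,\tilde v_{K+1}) \in \cV^+$, and the strict inequality at $(\bar t,\bar x_{k:K+1},\bar y)$ contradicts \eqref{v+}. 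If instead a constraint such as $-\lambda_i + \partial_{K+1}\varphi - \partial_i \varphi > 0$ holds strictly, an instantaneous transfer $\Delta L_i = \delta > 0$ reduces the objective by $\delta(\partial_{K+1}\varphi - \partial_i\varphi - \lambda_i) + o(\delta)>0$, giving the same type of contradiction; the $\theta_i$ side is symmetric. The interior HJB for $V_{K+1}$ in \eqref{Vlast_sub} is identical without gradient terms.

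For the linking condition at $T_k$, the gradient-constraint part is treated as above; the remaining case is
\[
v^*_{k,+}(T_k,\bar x_{k:K+1},\bar y) - w_k(G_k - \bar x_k)^+ - v^*_{k+1,+}(T_k,\bar x_{k+1:K+1},\bar y) > 0.
\]
Take a suitable control $U^{k+1}$ for $v_{k+1,+}$ at $(T_k,\bar x_{k+1:K+1},\bar y)$ and extend it by trivial $(L_k,M_k)$ components to an admissible control at $(T_k,\bar x_{k:K+1},\bar y)$; its expected cost equals $w_k(G_k-\bar x_k)^+$ plus the cost from $U^{k+1}$, which is strictly below $v^*_{k,+}(T_k,\bar x_{k:K+1},\bar y)$. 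Capping $v_{k,+}$ locally by a smooth interpolant between these two values, exactly as in the interior case, again produces a strictly smaller element of $\cV^+$. Finally, the boundary conditions \eqref{vissub_bd0} and \eqref{visub_T} follow directly from Definition \ref{def:sto_super}(2), which every $v \in \cV^+$ satisfies with the matching inequality.

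The hard part is the linking step at $T_k$: the state dimension drops across the deadline, so Perron's construction must glue a local cap of $v_{k,+}$ on a $(K-k+3)$-dimensional neighbourhood to a suitable control of $v_{k+1,+}$ built from a $(K-k+2)$-dimensional random initial condition, while preserving upper semi-continuity and the gradient constraints required in Definition \ref{def:sto_super}. Verifying that the glued tuple still meets the concatenation requirements of Lemma \ref{lem:glue} is the principal technical novelty relative to \cite{bayraktar2013stochastic,bayraktar2015stochastic}, and I expect it to require careful treatment of how the suitable controls at the two sides of $T_k$ interact with the instantaneous transfers $(\Delta L_k(T_k),\Delta M_k(T_k))$ available right before the deadline.
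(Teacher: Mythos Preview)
Your overall architecture is right, but the proposal has a structural gap that would make several steps fail as written: you treat $v_+ = (v_{1,+},\ldots,v_{K+1,+})$ as if it were itself an element of $\cV^+$, with its own suitable controls. It is not. The upper envelope $v_{k,+}$ is an \emph{infimum} over $\cV^+$ and need not admit a suitable control in the sense of Definition \ref{def:sto_super}. So when you write $\tilde v_k := v_{k,+} \wedge (\varphi + \epsilon)$ and ``concatenate $\bar\alpha$ inside $B$ with a suitable control of the original supersolution after the exit time,'' there is no original supersolution to fall back on; likewise ``take a suitable control $U^{k+1}$ for $v_{k+1,+}$'' in the linking step is not available. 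The paper fixes this with a Dini-type argument: using \citet[Proposition 4.1]{bayraktar2012linear} and \citet[Lemma 2.4]{bayraktar2014Dynkin}, one finds a decreasing sequence $v^n \in \cV^+$ with $v^n \searrow v_+$ and picks $n$ large enough that $v^n$ approximates $v_+$ on the relevant compact sets. All local caps are then taken with $v^n_k$, not $v_{k,+}$, and the suitable controls used in the concatenation are those of $v^n$. This is the missing ingredient throughout.

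Two further points. First, your boundary-condition argument is backwards: Definition \ref{def:sto_super}(2) gives $v_k(t,0,y) \ge \sum_i w_i e^{-\beta(T_i-t)}G_i$, whereas \eqref{vissub_bd0} requires $v^*_{k,+}(t,0,y) \le \sum_i w_i e^{-\beta(T_i-t)}G_i$. The paper obtains the correct inequality by exhibiting the constant array $v_k(t,x,y) = \sum_i w_i e^{-\beta(T_i-t)}G_i$ as a specific element of $\cV^+$, so that $v_{k,+}$ lies below it. Second, at the linking deadline $T_k$ the test function $\varphi$ lives only on $\cS_k$ with no time variable, but the supermartingale property in Definition \ref{def:sto_super} must be checked for random initial conditions $\tau \in [T_{k-1},T_k]$. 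The paper handles this by introducing a time-penalized function $\varphi^p(t,x,y) = \varphi(x,y) + p(T_k - t)$ (or an analogous $\psi^{\varepsilon,\eta,p}$), choosing $p$ large enough that the PDE inequality holds on a strip $[T_k-\delta,T_k]\times \overline{B}$; only then can the local cap be verified to lie in $\cV^+$. Your sketch of the linking step omits this construction entirely, and the gradient-constraint cases at $T_k$ also require a nontrivial geometric argument (projecting onto $\partial B$ along the transfer direction and allowing possible ``triple transactions'' at $T_k$) that does not reduce to a one-line instantaneous transfer.
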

The proof of Proposition \ref{prop:vissub} is given in Appendix \ref{sec:proof_stosuper}.

Lemma \ref{lem:boundary} below shows that the upper stochastic envelope $v_+$ also provides information about the value function on the boundary $\overline{\cS}^o_{k} \backslash \cS_k$. A similar claim can be found in \citet[Equation (4.3)]{bentahar2007} and \citet[Theorem 1.2]{rokhlin2014}. We do not have to consider the case with $k = K+1$, since the set $\overline{\cS}^o_{K+1} \backslash \cS_{K+1}$ is empty.

\begin{lemma}\label{lem:boundary}
	For each $k=1, \ldots, K$, consider $t \in [T_{k-1}, T_k]$ and $(x^o_{k:K+1}, y) \in \overline{\cS}^o_{k} \backslash \cS_k$. $v_{k, +}$ in the upper stochastic envelope $v_+$ has the following property:
	\begin{align}\label{ineq:uscbd}
		V_k(t, x^o_{k:K+1}, y) \leq \limsup_{\substack{ x'_{k:K+1} \rightarrow x^o_{k:K+1} \\ x'_j > 0, \, j = k,\ldots,K+1}} v_{k, +}(t, x'_{k:K+1}, y),
	\end{align}
	where $V_k$ is the value function.
\end{lemma}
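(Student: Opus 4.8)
We want to show that the upper stochastic envelope, which is known to be bounded and USC and to dominate $V_k$ on $\cS_k\cup(\{0\}\times\R^m)$, also controls $V_k$ on the remaining part of the boundary $\overline{\cS}^o_k\setminus\cS_k$, in the limsup sense \eqref{ineq:uscbd}. The natural approach is a perturbation argument at the boundary: starting from a boundary point $(x^o_{k:K+1},y)$ with some coordinates equal to zero, we push the state slightly into the interior, invoke the domination of $V_k$ by $v_{k,+}$ there, and then let the perturbation vanish, using continuity (or at least lower semicontinuity) of $V_k$ in the interior direction to recover $V_k$ at the boundary point on the left side.

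\medskip
\textbf{Step 1: Interior domination.} For interior data $(t,x'_{k:K+1},y)\in[T_{k-1},T_k]\times\cS_k$ we already have $V_k(t,x'_{k:K+1},y)\le v_{k,+}(t,x'_{k:K+1},y)$, since every $v\in\cV^+$ dominates $V_k$ on $\cS_k\cup(\{0\}\times\R^m)$ and $v_{k,+}$ is the infimum over $\cV^+$. Thus it suffices to prove
\[
V_k(t,x^o_{k:K+1},y)\le\limsup_{\substack{x'_{k:K+1}\to x^o_{k:K+1}\\ x'_j>0,\,j=k,\ldots,K+1}}V_k(t,x'_{k:K+1},y).
\]

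\textbf{Step 2: Approaching the boundary along admissible states.} Fix $\varepsilon>0$ and let $x^\varepsilon_{k:K+1}:=x^o_{k:K+1}+\varepsilon\one$, which lies in $\cS_k$ (all coordinates strictly positive). From any $(t,x^o_{k:K+1},y)$-admissible control $U$ for the value function problem \eqref{eq:val_func}, we construct the \emph{same} control applied from the perturbed initial condition $(t,x^\varepsilon_{k:K+1},y)$. By linearity of the wealth dynamics \eqref{Xlast}--\eqref{Xk} in the initial wealth and in the controlled return term, and since no short-selling/borrowing keeps positions bounded, the perturbed wealth processes dominate pathwise the original ones: $X^\varepsilon_i(u)\ge X_i(u)$ for all $u$ and all $i$ (indeed $X^\varepsilon_i(u)-X_i(u)$ solves a linear SDE started at $\varepsilon>0$ with nonnegative coefficients, hence stays nonnegative a.s.). In particular $X^\varepsilon$ is admissible for the perturbed problem. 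Then in the objective \eqref{eq:val_func}, the transfer-cost terms $L_i,M_i$ are \emph{unchanged}, while $(G_i-X^\varepsilon_i(T_i))^+\le(G_i-X_i(T_i))^++$ (an error controlled by $\varepsilon$ times the flow derivative); more precisely $0\le(G_i-X_i(T_i))^+-(G_i-X^\varepsilon_i(T_i))^+\le X^\varepsilon_i(T_i)-X_i(T_i)$, whose expectation is $O(\varepsilon)$ uniformly in $U$ by a Gronwall estimate on the linear difference SDE. Taking the infimum over $U$ and then over the (near-)optimal controls gives $V_k(t,x^\varepsilon_{k:K+1},y)\le V_k(t,x^o_{k:K+1},y)+C\varepsilon$ for a constant $C$ depending only on the market coefficients, the horizon, and the $G_i$'s. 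Letting $\varepsilon\downarrow0$ shows $\limsup_{\varepsilon\to0}V_k(t,x^\varepsilon_{k:K+1},y)\le V_k(t,x^o_{k:K+1},y)$; combined with Step~1 this already yields one inequality, but we need the reverse to get \eqref{ineq:uscbd}, so we also run the estimate in the other direction.

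\medskip
\textbf{Step 3: The reverse estimate and conclusion.} For the matching lower bound, take a near-optimal control $U^\varepsilon$ for the \emph{perturbed} problem started at $(t,x^\varepsilon_{k:K+1},y)$, and apply it from $(t,x^o_{k:K+1},y)$; the only subtlety is admissibility, since shrinking the initial wealth by $\varepsilon$ could in principle push a portfolio negative. One handles this by stopping: define the reallocation (an extra $M_i$-type transfer, or a forced liquidation $\alpha_i\equiv0$) at the first time any $X_i$ hits a small level, which costs at most $O(\varepsilon)$ in the objective because of the $\theta_i,\lambda_i$-linearity and the boundedness of $G_i$; alternatively use that $X^o_i(u)\ge X^\varepsilon_i(u)-\varepsilon e^{C(u-t)}$ pathwise and absorb the defect at the terminal times. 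This gives $V_k(t,x^o_{k:K+1},y)\le V_k(t,x^\varepsilon_{k:K+1},y)+C'\varepsilon$, hence $V_k(t,x^o_{k:K+1},y)\le\liminf_{\varepsilon\to0}V_k(t,x^\varepsilon_{k:K+1},y)\le\limsup_{x'\to x^o,\,x'_j>0}v_{k,+}(t,x',y)$, using Step~1 on the right. This is exactly \eqref{ineq:uscbd}.

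\medskip
\textbf{Main obstacle.} The delicate point is Step~3: preserving admissibility (the state-constraint $X_i\ge0$) when we perturb the initial wealth \emph{downward}, i.e.\ when approaching a boundary point from strictly positive interior points but transporting a control optimized at the interior point back to the boundary. One must argue that the $O(\varepsilon)$ corrective action — either an immediate small transfer back into the deficient account, or an early liquidation of the risky position — changes the objective by at most $O(\varepsilon)$, which relies crucially on the linearity of \eqref{obj0} in the transfer processes and on the boundedness of $\mu,\sigma,\sigma^{-1}$ and of the $G_i$ from the Standing Assumption; a Gronwall/Doob estimate on the linear difference SDE for $X^\varepsilon-X^o$ then makes this rigorous. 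The rest is bookkeeping over the finitely many deadlines $T_{k},\ldots,T_K$ and the array structure of $\cM$ in \eqref{M}.
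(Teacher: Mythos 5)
Your Step 1 is right, and the logical shape of the argument (find one interior sequence $x'\to x^o$ along which $V_k(t,x',y)$ does not drop below $V_k(t,x^o,y)$, then push that bound up to $v_{k,+}$ via interior domination) is the correct one. But Step 3, which carries all the weight, has a genuine gap, and Step 2 turns out to be unnecessary for the conclusion (it gives the opposite inequality, which \eqref{ineq:uscbd} does not need).

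The problem with Step 3 is the admissibility repair. You want $V_k(t,x^o_{k:K+1},y)\le V_k(t,x^o_{k:K+1}+\varepsilon\one,y)+C'\varepsilon$, by running a near-optimal control $U^\varepsilon$ for the richer initial state from the poorer one. But $x^o_{k:K+1}+\varepsilon\one$ is not reachable from $x^o_{k:K+1}$ by any admissible action (it has strictly more total wealth), so the two states are only related through a pathwise SDE comparison, and the difference process $X^\varepsilon_i-X^o_i$ is a stochastic exponential started at $\varepsilon$ that can grow arbitrarily large. When $X^o_i$ threatens to go negative, the ``forced liquidation $\alpha_i\equiv 0$'' patch leaves $X^o_i$ stuck at or near $0$ until $T_i$, so $(G_i-X^o_i(T_i))^+\approx G_i$ on that event, while $(G_i-X^\varepsilon_i(T_i))^+$ may be essentially zero. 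Unless you control the probability of the stopping event \emph{and} the conditional gap simultaneously, the resulting cost is not $O(\varepsilon)$. The ``immediate small transfer'' alternative you mention is the right idea, but in the proposal it is a dynamic corrective device inside a Gronwall argument, and you never bound its cumulative cost.

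The paper sidesteps all of this by making the transfer the whole argument, and by choosing the interior approach point so that it \emph{is} reachable. From $(x^o_{k:K+1},y)\in\overline{\cS}^o_k\setminus\cS_k$, pick the smallest $i$ with $x^o_i>0$ and define $Z_{k:K+1}(x^o_{k:K+1},\zeta)$ by moving $\zeta$ dollars from $x^o_i$ to each coordinate that is zero (routed via $x_{K+1}$), as in \eqref{z}--\eqref{transfer}. This is a legitimate control action at time $t$: total wealth is preserved, no SDE comparison is needed, and the mental cost of the move is exactly $\sum_j \lambda_j\zeta\Delta L_j+\sum_j\theta_j\zeta\Delta M_j = O(\zeta)$ by the linearity of the objective \eqref{obj0}. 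After the transfer the state sits at $Z_{k:K+1}(x^o_{k:K+1},\zeta)\in\cS_k$, where any stochastic supersolution $v_k$ dominates the value, so continuing with a suitable control for $v_k$ gives
$V_k(t,x^o_{k:K+1},y)\le O(\zeta)+v_k(t,Z_{k:K+1}(x^o_{k:K+1},\zeta),y)$;
taking the infimum over $v_k\in\cV^+$ and letting $\zeta\downarrow 0$ gives \eqref{ineq:uscbd} directly, because $Z(\zeta)\to x^o$ from $\cS_k$. No Gronwall estimate, no stopping construction, and no claim of (near-)continuity of $V_k$ across the boundary is ever used; only the one-sided inequality is established, which is exactly what the lemma asks for. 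In short: replace your $\varepsilon\one$ perturbation by the reachable perturbation $Z(\zeta)$ obtained from a single transfer at time $t$, and Step 3 collapses to a one-line estimate; Step 2 can then be dropped entirely.
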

We introduce some notation used in the proofs of Lemma \ref{lem:boundary} and also Lemma \ref{lem:conti}. Given some $k=1, \ldots, K$, for any $(x_{k:K+1}, y) \in \overline{\cS}^o_{k}$, there exists at least one index $i \in \{k, \ldots, K+1\}$ such that $x_i > 0$. Without loss of generality, consider the smallest index $i$ with $x_i > 0$. Suppose $N$ elements in $x_{k:K+1}$ are zero. If $(x_{k:K+1}, y) \in \overline{\cS}^o_{k}\backslash \cS_k$, then $N > 0$. Instead, if $(x_{k:K+1}, y) \in \cS_k$, then $N=0$ and no transfer is needed. We introduce a small constant $\zeta > 0$. Define a new wealth state $Z_{k:K+1}(x_{k:K+1}, \zeta)$:
\begin{equation}\label{z}
	Z_i(x_{k:K+1}, \zeta) = x_i - N \zeta \text{ and for } j \neq i, \; 
	Z_j(x_{k:K+1}, \zeta) = \left\{ 
	\begin{array}{ c l }
		\zeta, & \text{ if } x_j = 0, \\
		x_j , & \text{ if } x_j > 0.
	\end{array}
	\right.
\end{equation}
From the state $x_{k:K+1}$ on the boundary to the state $Z_{k:K+1}(x_{k:K+1}, \zeta)$ in the interior, we reallocate $\zeta$ dollars from $x_i$ to each portfolio $j$ with $x_j = 0$. In general, the index $i$ may not be the fundamental portfolio $K+1$. Hence, the transfer is achieved via the route $x_i \rightarrow x_{K+1} \rightarrow x_j$. Evidently, there exists a unique transfer plan $(\zeta \Delta L_{k:K}(x_{k:K+1}), \zeta \Delta M_{k:K}(x_{k:K+1}))$ such that
\begin{equation}\label{transfer}
	\begin{aligned}
		& x_{K+1} - \sum_{j=k}^{K} \zeta \Delta L_j (x_{k:K+1}) +  \sum_{j=k}^{K} \zeta \Delta M_j(x_{k:K+1})  = Z_{K+1}(x_{k:K+1}, \zeta), \\
		& x_j + \zeta \Delta L_j(x_{k:K+1}) - \zeta \Delta M_j(x_{k:K+1}) = Z_j(x_{k:K+1}, \zeta), \text{ for } j = k, \ldots, K.
	\end{aligned}
\end{equation}
Elements of $\Delta L_{k:K}(x_{k:K+1})$ and $\Delta M_{k:K}(x_{k:K+1})$ are nonnegative integers. For each $j=1, \ldots, K$, $\Delta L_j(x_{k:K+1})$ and $\Delta M_j(x_{k:K+1})$ are not positive at the same time.

The proof of Lemma \ref{lem:boundary} is given in the Appendix \ref{sec:proof_stosuper}. Thanks to Lemma \ref{lem:boundary}, we can modify $v_+$ as follows. When $k=1, \ldots, K$ and $t \in [T_{k-1}, T_k]$, set
\begin{equation}\label{eq:modified}
	 \tilde{v}_{k, +} (t, x_{k:K+1}, y) = \left\{ 
	\begin{array}{ c l }
		\limsup_{\substack{ x'_{k:K+1} \rightarrow x_{k:K+1} \\ x'_j > 0, \, j = k,\ldots,K+1}} v_{k, +}(t, x'_{k:K+1}, y), & \text{ if } (x_{k:K+1}, y) \in \overline{\cS}^o_{k} \backslash \cS_k, \\
		v_{k, +} (t, x_{k:K+1}, y), & \text{ otherwise}.
	\end{array}
	\right.
\end{equation}
This modification is unnecessary for $v_{K+1, +}$. Additionally, we note that $\tilde{v}_+ := (\tilde{v}_{1, +}, \ldots, \tilde{v}_{k, +}, \ldots, $ $\tilde{v}_{K, +}, v_{K+1, +})$ is a viscosity subsolution of the HJB equation system. In Section \ref{sec:compare}, it will be used to validate the condition \eqref{modi_bd_u} in the comparison principle.

\section{Stochastic subsolution}\label{sec:sto_sub}

\begin{definition}[Stochastic subsolution]\label{def:sto_sub}
	Denote an array of functions as
	\begin{equation}\label{sto_sub}
		(\{ v_1(t, x_{1:K+1}, y) \}_{t \in [0, T_1]}, \ldots, \{ v_k(t, x_{k:K+1}, y) \}_{t \in [T_{k-1}, T_{k}]}, \ldots, \{ v_{K+1}(t, x_{K+1}, y) \}_{t \in [T_K, T]} ),
	\end{equation}
	where $v_k(t, x_{k:K+1}, y): [T_{k-1}, T_k] \times \overline{\cS}_k \rightarrow \R, k=1,\ldots, K+1$ is bounded and LSC. The array \eqref{sto_sub} is a stochastic subsolution of the HJB equation system if
	\begin{enumerate}[label={(\arabic*)}]	
		\item for each $k=1, \ldots K+1$, consider any random initial condition $(\tau, \xi_{k:K+1}, \xi_y)$ with $\tau \in [T_{k-1}, T_k]$, $(\xi_{k:K+1}, \xi_y) \in \cF_\tau$ and $\p((\xi_{k:K+1}, \xi_y) \in \overline{\cS}^o_k) = 1$. For any admissible control $U:= (\alpha_{k:K+1}, L_{k:K}, M_{k:K}) \in \cU(\tau, \xi_{k:K+1}, \xi_y)$ and any stopping time $\rho \in [\tau, T]$,
		\begin{equation}
			v_k(\tau, \xi_{k:K+1}, \xi_y) \leq \E \big[ \cM \big([\tau, \rho], v_{k:K+1}, (X_{k:K+1}, Y)(\cdot; \tau, \xi_{k:K+1}, \xi_y, U) \big) \big|\cF_\tau \big],
		\end{equation}
		where $\cM \big([\tau, \rho], v_{k:K+1}, (X_{k:K+1}, Y)(\cdot; \tau, \xi_{k:K+1}, \xi_y, U) \big)$ is defined in \eqref{M} and \eqref{M:K+1}.
		\item  
		\begin{align}
			v_k(t, 0, y) & \leq \sum^{K+1}_{i = k} w_i e^{-\beta (T_i-t)} G_i, \quad \forall\;  t \in [0, T], y \in \R^m, k = 1, \ldots, K+1,\\
			v_{K+1}(T, x_{K+1}, y) & \leq (G_{K+1} - x_{K+1})^+, \quad \forall \; (x_{K+1}, y) \in \cS_{K+1}.
		\end{align}
	\end{enumerate}
	Denote the set of stochastic subsolutions as $\cV^-$.
\end{definition}

Using a similar argument to \citet[Remark 4.2]{bayraktar2015stochastic}, we can demonstrate that any stochastic subsolution in $\cV^-$ is dominated by the value function as follows:
\begin{align}
	v_{k}(t, x_{k:K+1}, y) & \leq V_{k}(t, x_{k:K+1}, y) \quad \text{ on } \quad [T_{k-1}, T_k] \times \overline{\cS}_k, \quad k=1, \ldots, K+1.
\end{align}

Lemma \ref{lem:two_sub} below shows that the family $\cV^-$ of stochastic subsolutions is stable under maximum. We omit the proof, since it follows directly from the definition of stochastic subsolutions and is similar to \citet[Lemma 4.1]{bayraktar2015stochastic}.
\begin{lemma}\label{lem:two_sub}
	If $ (v^1_1, \ldots, v^1_k, \ldots, v^1_{K+1})$ and $ (v^2_1, \ldots, v^2_k, \ldots, v^2_{K+1})$ are stochastic subsolutions, then $(v^1_1 \vee v^2_1, \ldots, v^1_k \vee v^2_k, \ldots, v^1_{K+1} \vee v^2_{K+1})$ is also a stochastic subsolution.
\end{lemma}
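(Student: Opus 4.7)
The plan is to verify the two conditions of Definition \ref{def:sto_sub} for the array $\bar{v} := (\bar{v}_1, \ldots, \bar{v}_{K+1})$ with $\bar{v}_k := v^1_k \vee v^2_k$. Each $\bar{v}_k$ is bounded and LSC as the maximum of two bounded LSC functions, and the boundary inequalities in part (2) carry over trivially because the right-hand sides there do not depend on $v_k$. Hence the real content sits in the dynamic inequality in part (1).

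For that, I would fix $k$, a random initial condition $(\tau, \xi_{k:K+1}, \xi_y)$ valued in $\overline{\cS}^o_k$, an arbitrary admissible control $U \in \cU(\tau, \xi_{k:K+1}, \xi_y)$, and a stopping time $\rho \in [\tau, T]$, and partition $\Omega$ via the $\cF_\tau$-measurable set $A := \{v^1_k(\tau, \xi_{k:K+1}, \xi_y) \geq v^2_k(\tau, \xi_{k:K+1}, \xi_y)\}$. Writing $\bar{v}_k(\tau, \xi_{k:K+1}, \xi_y) = v^1_k \mathbf{1}_A + v^2_k \mathbf{1}_{A^c}$, multiplying the subsolution inequality for $v^1$ by $\mathbf{1}_A$, the one for $v^2$ by $\mathbf{1}_{A^c}$, and pulling the $\cF_\tau$-measurable indicators inside the conditional expectation would give
\begin{equation*}
\bar{v}_k(\tau, \xi_{k:K+1}, \xi_y) \leq \E\big[\mathbf{1}_A \cM([\tau,\rho], v^1_{k:K+1}, \cdot) + \mathbf{1}_{A^c} \cM([\tau,\rho], v^2_{k:K+1}, \cdot) \,\big|\, \cF_\tau\big].
\end{equation*}
The fact that one and the same $U$ serves both inequalities is an important simplification relative to Lemma \ref{lem:two_stochsup}, where one selects different suitable controls and glues them via Lemma \ref{lem:glue}; here the subsolution property is universally quantified over $U$, so no gluing is needed.

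The final step is to upgrade the mixed bound to $\E[\cM([\tau,\rho], \bar{v}_{k:K+1}, \cdot) \mid \cF_\tau]$. Inspecting the definition \eqref{M}, each function argument $v_j$ enters only through the nonnegative term $e^{-\beta(\rho-\tau)} v_j(\rho, \cdot)$ multiplied by a $\{0,1\}$-valued indicator of which subinterval $\rho$ lies in, while the cumulative transfer costs and terminal shortfall terms do not involve $v_j$ at all. Hence $\cM$ is monotone nondecreasing in each of its function arguments, and from $v^i_j \leq \bar{v}_j$ for $i = 1,2$ and all $j = k, \ldots, K+1$ I would obtain the pathwise bound $\mathbf{1}_A \cM([\tau,\rho], v^1_{k:K+1}, \cdot) + \mathbf{1}_{A^c} \cM([\tau,\rho], v^2_{k:K+1}, \cdot) \leq \cM([\tau,\rho], \bar{v}_{k:K+1}, \cdot)$, closing the argument. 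I do not anticipate any substantial obstacle here: the only thing to check carefully is that the pointwise monotonicity of $\cM$ in $v_{k:K+1}$ holds uniformly across all three indicator pieces in \eqref{M}, which is immediate from the formula.
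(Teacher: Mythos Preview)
Your proposal is correct and is precisely the direct verification from Definition \ref{def:sto_sub} that the paper has in mind; the paper in fact omits the proof entirely, noting only that it ``follows directly from the definition of stochastic subsolutions'' and is analogous to \citet[Lemma 4.1]{bayraktar2015stochastic}. Your observation that no control-gluing is needed here (in contrast to Lemma \ref{lem:two_stochsup}) because the subsolution inequality is universally quantified over $U$ is exactly the reason this lemma is simpler than its supersolution counterpart.
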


To address the boundary conditions of the viscosity supersolution, we consider an important example of a stochastic subsolution: the {\it value} function without mental accounting, i.e., $\lambda_{k} = \theta_k = 0$. In this context, \cite{capponi2024} considered a single portfolio $X$ for all goals. Between two goal deadlines, the single wealth process $X$ satisfies
\begin{equation}\label{eq:no_account}
	\begin{aligned}
		d X(u) &=  r X(u) du + (\mu(Y(u)) - r \mathbf{1})^\top \alpha(u) X(u) du + X(u) \alpha^\top(u) \sigma(Y(u)) dW(u), \\
		X(t) &= x \geq 0,
	\end{aligned}
\end{equation}
where the initial wealth $x$ is one-dimensional. At the goal deadline $T_k$, the agent withdraws $C_k$ to fund goal $k$. Hence, $X(T_k) = X(T_k-) - C_k$.

During time $t \in [T_{k-1}, T_{k}]$ with $k = 1, \ldots, K+1$, the value function without mental accounting is defined as 
\begin{equation}\label{obj:no_acc}
	\begin{aligned}
		J(t, x, y) := \inf_{\substack{ (\alpha, C_{1:K+1} ) \\ \in \cU(t, x, y) } }  \E \Big[& \sum^{K+1}_{i = 1} w_i e^{-\beta (T_i-t)} (G_i - C_i)^+ \one_{\{ T_i > t \}} \Big| X(t) = x, Y(t) = y \Big].
	\end{aligned}
\end{equation}
{\color{black} Here, $\cU(t, x, y)$ denotes the set of admissible controls consisting solely of the investment strategy and the withdrawal sequence, as cross-account transfers are no longer applicable.} Importantly, we use $X(t) = x$ instead of $X(t-) = x$ in \eqref{obj:no_acc}. Hence, $J(T_k, x, y)$ is the value function right after funding the goal $k$. Based on $J(t, x, y)$, we introduce
\begin{align*}
	\Gamma_k(t, x_{k:K+1}, y) & := J(t, x_k + \ldots + x_{K+1}, y) \quad \text{ for } t \in [T_{k-1}, T_k), \\
	\Gamma_k(T_k, x_{k:K+1}, y) & := J(T_k- , x_k + \ldots + x_{K+1}, y).
\end{align*}

\begin{lemma}\label{lem:no_mental}
	$\Gamma := (\Gamma_1, \ldots, \Gamma_k, \ldots, \Gamma_{K+1})$ is a stochastic subsolution.
\end{lemma}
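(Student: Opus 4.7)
The plan is to exploit the fact that zero mental cost is the most permissive version of the multi-portfolio problem: the agent can then freely pool all assets and manage a single wealth, recovering exactly the setup of \cite{capponi2024}. Concretely, for any admissible control $U = (\alpha_{k:K+1}, L_{k:K}, M_{k:K})$ in our problem, the pooled wealth $S(u) := \sum_{i=k}^{K+1} X_i(u)$ satisfies the single-portfolio dynamics \eqref{eq:no_account} with effective proportion $\alpha^{\text{eff}}(u) := \sum_{i=k}^{K+1} \alpha_i(u) X_i(u)/S(u)$, which is a convex combination of elements of the simplex $\cA$ and hence lies in $\cA$; the $L$- and $M$-transfers cancel in the sum. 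Choosing single-portfolio withdrawals $C_i := X_i(T_i)$ then yields an admissible control for the problem in \cite{capponi2024} whose cost matches the goal-shortfall part $\sum_i w_i e^{-\beta(T_i-\tau)}(G_i - X_i(T_i))^+$ of our cost and is dominated by the full multi-portfolio cost since $\lambda_i, \theta_i \geq 0$.

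For the dynamic inequality in Definition \ref{def:sto_sub}, fix a random initial condition $(\tau, \xi_{k:K+1}, \xi_y)$ with $\p((\xi_{k:K+1}, \xi_y) \in \overline{\cS}^o_k) = 1$, an admissible $U$, and a stopping time $\rho \in [\tau, T]$. I would invoke the dynamic programming principle for $J$, which follows from the viscosity-solution characterization of \cite{capponi2024}, along the single-portfolio control constructed above. On the event $\{T_l \leq \rho < T_{l+1}\}$ with $l \geq k$, the post-withdrawal pooled wealth at $\rho$ equals $\sum_{i=l+1}^{K+1} X_i(\rho)$, so $J(\rho, \cdot, Y(\rho))$ at that value coincides with $\Gamma_{l+1}(\rho, X_{l+1:K+1}(\rho), Y(\rho))$; similarly on $\{\rho < T_k\}$ one has $J(\rho, S(\rho), Y(\rho)) = \Gamma_k(\rho, X_{k:K+1}(\rho), Y(\rho))$. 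The accumulated running cost up to $\rho$ in the single-portfolio problem matches $\sum_{i=k}^{l} w_i e^{-\beta(T_i - \tau)}(G_i - X_i(T_i))^+$, and adding the nonnegative $\lambda_i, \theta_i$ integrals in $\cM$ preserves the inequality and delivers
\begin{equation*}
\Gamma_k(\tau, \xi_{k:K+1}, \xi_y) = J\bigl(\tau, \textstyle\sum_{i=k}^{K+1} \xi_i, \xi_y\bigr) \leq \E\bigl[\cM([\tau, \rho], \Gamma_{k:K+1}, (X_{k:K+1}, Y)(\cdot; \tau, \xi_{k:K+1}, \xi_y, U)) \,\bigm|\, \cF_\tau \bigr].
\end{equation*}

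The remaining verifications are routine. Boundedness $0 \leq \Gamma_k \leq \sum_{i \geq k} w_i G_i$ is inherited from $J$, and lower semicontinuity follows by composition of the continuous linear map $(x_{k:K+1}, y) \mapsto (\sum_i x_i, y)$ with $J$, which is continuous under the assumptions of \cite{capponi2024}; the left limit $\Gamma_k(T_k, \cdot) = J(T_k-, \cdot)$ is handled by dominated convergence in \eqref{obj:no_acc}. For the boundary conditions, $\Gamma_k(t, 0, y) = J(t, 0, y) = \sum_{i=k}^{K+1} w_i e^{-\beta(T_i - t)} G_i$ because the zero-wealth single-portfolio cannot fund any goal, and $\Gamma_{K+1}(T, x_{K+1}, y) = J(T-, x_{K+1}, y) = (G_{K+1} - x_{K+1})^+$ by optimally setting the lone withdrawal $C_{K+1} = x_{K+1}$.

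The principal obstacle will be the bookkeeping when $\rho$ straddles several goal deadlines: DPP for $J$ must be applied iteratively across each $T_i \in [\tau, \rho]$, with the injection $C_i = X_i(T_i)$ at that instant, and the resulting expression matched piece-by-piece with the indicator decomposition $\one_{\{T_l \leq \rho < T_{l+1}\}}$ of $\cM$. A clean route is to decompose $[\tau, \rho]$ along the goal deadlines, apply the single-period DPP on each subinterval together with the boundary relation $J(T_i-, x, y) \leq w_i(G_i - C_i)^+ + J(T_i, x - C_i, y)$ at each crossing with $C_i = X_i(T_i)$, and then aggregate using the tower property of conditional expectation.
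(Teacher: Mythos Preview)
Your proposal is correct and follows essentially the same route as the paper: both arguments drop the nonnegative mental-cost integrals and reduce the submartingale inequality to the dynamic programming principle for the single-portfolio value function $J$ of \cite{capponi2024}, combined with its continuity on $[T_{k-1},T_k)$. The paper's proof is terser---it simply cites \citet[Theorem 4.1 and Lemma E.2]{capponi2024} for the DPP and continuity of $J$---whereas you make the control-mapping explicit via the pooled wealth $S$ and the effective proportion $\alpha^{\text{eff}}$, and you spell out the deadline-by-deadline decomposition; but the substance is the same.
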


Given $k = 1, \ldots, K+1$ and $(t, x_{k:K+1}, y) \in [T_{k-1}, T_k] \times \overline{\cS}_k$, we define
\begin{align}
	&v_{k, -}(t, x_{k:K+1}, y) := \sup \big\{ v_k(t, x_{k:K+1}, y) |\; v_k \text{ is the $k$-th element of some } v \in \cV^- \big\}.  \label{v-}
\end{align}
The lower stochastic envelope is denoted as $v_- := (v_{1, -}, \ldots, v_{k, -}, \ldots, v_{K+1, -})$.

\begin{proposition}\label{prop:vissuper}
	The lower stochastic envelope $v_-$ is a viscosity supersolution of the HJB equation under Definition \ref{def:vis_super}.
\end{proposition}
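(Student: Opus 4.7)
The plan mirrors the classical stochastic Perron argument, now in its ``lower'' form. First I would record basic properties of $v_-$: each component $v_{k,-}$ is bounded above by the value function $V_k$ and, because the family $\cV^-$ is nonempty (Lemma \ref{lem:no_mental} supplies $\Gamma \in \cV^-$), bounded below by a constant. The boundary conditions at $x_{k:K+1} = 0$ and at $t = T$ in Definition \ref{def:vis_super} transfer directly from the pointwise bounds in Definition \ref{def:sto_sub}. The main work is then to verify the PDE and transmission inequalities on $v_{k,-,*}$.

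Argue by contradiction: suppose at some point $(\bar t, \bar x_{k:K+1}, \bar y) \in [T_{k-1}, T_k) \times \overline{\cS}^o_k$ with a test function $\varphi \in C^{1,2}$ touching $v_{k,-,*}$ from below, the operator $F$ is strictly negative. Since $F$ is a maximum, every one of its entries is strictly negative at the touching point: the parabolic continuation term, and each of the gradient constraints $\partial_{K+1}\varphi - \partial_i \varphi \in (-\theta_i, \lambda_i)$. I would perturb by
\[
\psi^{\varepsilon,\delta}(t, x, y) := \varphi(t, x, y) + \varepsilon - \delta\bigl[|t - \bar t|^2 + |(x,y) - (\bar x_{k:K+1}, \bar y)|^2\bigr]
\]
on a small ball $B_r$ around $(\bar t, \bar x_{k:K+1}, \bar y)$. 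For $\varepsilon, \delta, r$ small enough, $\psi^{\varepsilon,\delta}$ remains a strict classical subsolution of the full system on $B_r$, satisfies $\psi^{\varepsilon,\delta} > v_{k,-,*}$ on a smaller neighborhood, and $\psi^{\varepsilon,\delta} < v_{k,-}$ outside $B_r$, enabling a gluing construction. Define $v^{\varepsilon}_k := \max(v_{k,-}, \psi^{\varepsilon,\delta})$ inside $B_r$, $v^{\varepsilon}_k := v_{k,-}$ outside, and $v^{\varepsilon}_j := v_{j,-}$ for $j \neq k$. The key step is to verify that the new array lies in $\cV^-$. For an arbitrary admissible control $U$ and stopping time $\rho$, apply It\^o's formula to $e^{-\beta(t - \tau)} \psi^{\varepsilon,\delta}$ while the state remains in $B_r$: the classical HJB subsolution property gives a nonpositive drift, while the strict gradient inequalities ensure that every transfer increment $dL_i, dM_i$ contributes a nonnegative quantity to the integrand of $\cM$ in \eqref{M}. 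On exits from $B_r$, one patches with the stochastic subsolution inequality already available for $v_-$ via Lemma \ref{lem:two_sub} and the concatenation machinery of Lemma \ref{lem:glue}. Since $v^{\varepsilon}_k(\bar t, \bar x_{k:K+1}, \bar y) > v_{k,-}(\bar t, \bar x_{k:K+1}, \bar y)$, this contradicts the definition of $v_{k,-}$ as a supremum.

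The remaining cases are handled by the same contradiction-perturbation-gluing scheme. For the transmission at $t = T_k$ with $k \leq K$: if the entry $v_{k,-,*}(T_k, \bar x, \bar y) - w_k(G_k - \bar x_k)^+ - v_{k+1,-,*}(T_k, \bar x_{k+1:K+1}, \bar y)$ is strictly negative, define the perturbation at $t = T_k$ so that it lies strictly below the right-hand side locally, and propagate backward onto a slab $[T_k - \eta, T_k]$ by adding a large linear-in-time term so that the PDE and gradient constraints on $[T_{k-1}, T_k)$ are strictly satisfied. The gluing then attaches to $v_{k+1,-}$ across $T_k$ through the structure of $\cM$ in \eqref{M}, which already carries the terminal payoff $w_k(G_k - X_k(T_k))^+$ and the cumulative transfer costs. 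The $k = K+1$ PDE case and the terminal condition at $t = T$ are treated the same way.

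The main obstacle I anticipate is the interplay between the state constraint and the gradient obstacles. The viscosity supersolution property must hold on all of $\overline{\cS}^o_k$, including boundary faces where some $x_i = 0$, but admissible controls starting there cannot drive that coordinate negative; this restricts the controls available when testing the stochastic subsolution inequality for $v^{\varepsilon}$. The two-sided mental cost constraint is what rescues the argument, but preserving it under the quadratic perturbation $-\delta|\cdot|^2$ requires choosing $\delta$ small enough that $\partial_{K+1}\psi^{\varepsilon,\delta} - \partial_i \psi^{\varepsilon,\delta}$ stays strictly inside $(-\theta_i, \lambda_i)$ throughout $B_r$, uniformly as the touching point approaches the constraint boundary. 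This is the supersolution-side counterpart of the strict-subsolution construction that the authors flag as delicate in Lemma \ref{lem:strict_l}, and it is the step I expect to require the most care.
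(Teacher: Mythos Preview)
Your overall architecture is right—the contradiction--perturbation--gluing scheme is exactly what the paper uses—but there is a genuine gap in the gluing step. You propose to set $v^\varepsilon_k := \max(v_{k,-}, \psi^{\varepsilon,\delta})$ inside $B_r$ and $v^\varepsilon_j := v_{j,-}$ for $j \neq k$, and then verify that this array lies in $\cV^-$. The difficulty is that $v_- = (v_{1,-}, \ldots, v_{K+1,-})$ is \emph{not} known to be a stochastic subsolution: it is the pointwise supremum of $\cV^-$, and the submartingale inequality in Definition~\ref{def:sto_sub} does not survive an uncountable supremum. So when the state exits $B_r$ you cannot ``patch with the stochastic subsolution inequality already available for $v_-$''---no such inequality is available, and Lemma~\ref{lem:two_sub} only gives closure under \emph{finite} maxima. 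The same objection applies to the components $v^\varepsilon_j = v_{j,-}$ for $j \neq k$.

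The paper repairs this with the standard Dini step in stochastic Perron arguments. Since $v_{k,-} - \varphi$ is LSC and strictly positive on the compact annulus $E$ surrounding the touching point, and since $v_{k,-}$ is approximated from below by a nondecreasing sequence of $k$-th components of genuine elements of $\cV^-$ (via \citet[Proposition~4.1]{bayraktar2012linear} and Lemma~\ref{lem:two_sub}), one can select a \emph{single} $v^n \in \cV^-$ with $v^n_k \geq \varphi^{p,\eta}$ on $E$ and, for the deadline case, a second one giving $v^n_{k+1}$ close enough to $v_{k+1,-}$ on the relevant compact; their componentwise maximum is again in $\cV^-$. The glued function is then $v^n_k \vee \varphi^{p,\eta}$ locally and $v^n_k$ outside, with all other components taken from $v^n$, not from $v_-$. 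The contradiction comes from $\varphi^{p,\eta}(\bar t, \bar x, \bar y) > v_{k,-}(\bar t, \bar x, \bar y)$, so the modified $k$-th component strictly exceeds the supremum at that point. A secondary point you do not address: because $L_{k:K}, M_{k:K}$ are singular controls with jumps, the state can overshoot $B_r$ at the exit time $\tau^1$, landing where the local classical subsolution property of $\psi$ fails. The paper handles this with an intermediate-position truncation (the construction of $\tilde\xi^1$ and the decomposition into $U^1, U^2$ in Step~3 of the proof), and your It\^o-formula argument will need the same device.
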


Proofs of Lemma \ref{lem:no_mental} and Proposition \ref{prop:vissuper} are provided in Appendix \ref{sec:proof_stosub}.

\section{Comparison principle}\label{sec:compare}
In this section, we establish a comparison principle to prove the continuity and uniqueness of viscosity solutions of the HJB equation system. Detailed proofs of all results in this section can be found in Appendix \ref{sec:proof_comp}.

The first step is to construct a strict classical subsolution as follows.
\begin{lemma}\label{lem:strict_l}
	Let constants $c_1, c_2>0$, $a_i > 0, i = k, \ldots, K+1$, $q \in (0, 1)$, $\gamma>0$, and define
	\begin{equation}
		l(t, x_{k:K+1}, y) := - c_1 e^{\gamma(T_k-t)} \Big(1 + \sum^{K+1}_{i=k} a_i x_i \Big)^q - c_2 e^{\gamma(T_k - t)} (1+|y|^2),
	\end{equation}
	with $(t, x_{k:K+1}, y) \in [T_{k-1}, T_k] \times \overline{\cS}_k$. 
	
	Suppose
	\begin{equation}
		\begin{aligned}
			 - \theta_i < (a_i - a_{K+1}) < \lambda_i, \; i = k, \ldots K, \quad 0 < c_1 q e^{\gamma(T_k - t)} < 1,  \quad t \in [T_{k-1}, T_k],
		\end{aligned} 
	\end{equation}
	and $\gamma$ is large enough. Then $l$ is a strict classical subsolution of \eqref{F} on $[T_{k-1}, T_k) \times \overline{\cS}_k$.
\end{lemma}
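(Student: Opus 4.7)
The plan is to check directly, without any viscosity machinery, that each argument inside the $\max$ defining $F$ in \eqref{F} evaluated at $l$ is strictly negative pointwise on $[T_{k-1}, T_k) \times \overline{\cS}_k$. Writing $S := 1 + \sum_{i=k}^{K+1} a_i x_i \geq 1$ and $\tau := e^{\gamma(T_k-t)}$, the derivatives separate cleanly because $l$ is the sum of a function of $(t,x)$ and a function of $(t,y)$: $\partial_t l = c_1 \gamma \tau S^q + c_2 \gamma \tau (1+|y|^2)$, $\partial_i l = -c_1 q \tau a_i S^{q-1}$ for $i = k,\ldots,K+1$, $\partial_{ij} l = -c_1 q(q-1) \tau a_i a_j S^{q-2}$, $\partial_y l = -2 c_2 \tau y$, $\partial_y^2 l = -2 c_2 \tau I_m$, and the cross Hessian $\partial^2_{xy} l \equiv 0$. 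Note $q(q-1) < 0$ and hence $\partial^2_x l$ is positive semidefinite of rank one.

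For the $2(K-k+1)$ gradient constraints, I compute $\partial_{K+1} l - \partial_i l = c_1 q \tau S^{q-1}(a_i - a_{K+1})$. Since $S \geq 1$ and $q - 1 \leq 0$ we have $S^{q-1} \leq 1$, which together with the hypothesis $0 < c_1 q \tau < 1$ forces the multiplicative factor $c_1 q \tau S^{q-1}$ to lie strictly inside $(0,1)$. Combined with the strict bracket $-\theta_i < a_i - a_{K+1} < \lambda_i$, a case split on the sign of $a_i - a_{K+1}$ yields both $-\lambda_i + \partial_{K+1} l - \partial_i l < 0$ and $-\theta_i - \partial_{K+1} l + \partial_i l < 0$. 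In the edge case $a_i = a_{K+1}$ the gradient difference vanishes, and the bracket hypothesis then forces $\lambda_i, \theta_i > 0$, preserving strictness.

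For the remaining PDE term, the clean identity $\beta l - \partial_t l = -(\beta + \gamma)[c_1 \tau S^q + c_2 \tau (1+|y|^2)]$ shows the left-hand side is very negative for large $\gamma$, so it suffices to produce a matching lower bound on $H$. The outside-the-infimum pieces satisfy $|\sum_i r x_i \partial_i l| \leq r q c_1 \tau S^q$ via $S^{q-1}(S-1) \leq S^q$, and $|\mu_Y(y)^\top \partial_y l| \leq C c_2 \tau(1+|y|^2)$ by linear growth of $\mu_Y$. Inside the infimum, introduce the lumped vector $\eta := (\alpha x)a = \sum_i a_i x_i \alpha_i \in \R^n$; the drift piece equals $-c_1 q \tau S^{q-1}(\mu(y)-r\one)^\top \eta$, and $\|\eta\| \leq \sum_i a_i x_i \|\alpha_i\| \leq S-1$ under the simplex constraint $\alpha_i \in \cA$, yielding a lower bound $-C c_1 \tau S^q$. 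The trace $\tfrac{1}{2}\tr[\Sigma \partial^2 l]$ decouples via $\partial^2_{xy} l = 0$ into a nonnegative rank-one term $-\tfrac{1}{2}c_1 q(q-1)\tau S^{q-2}\|\sigma(y)^\top \eta\|^2$ and $-c_2 \tau \tr[\sigma_Y(y)\sigma_Y(y)^\top]$, the latter bounded below by $-C c_2 \tau(1+|y|^2)$. Aggregating, $H \geq -C[c_1 \tau S^q + c_2 \tau(1+|y|^2)]$ with $C$ depending only on $q, r$ and the model parameters, and hence $\beta l - \partial_t l - H \leq (C-\beta-\gamma)[c_1 \tau S^q + c_2 \tau(1+|y|^2)] < 0$ as soon as $\gamma > C - \beta$.

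The main obstacle is that a priori the Hamiltonian infimum could be driven arbitrarily negative by the drift term; the decisive structural observation is that $\partial_i l = -c_1 q \tau a_i S^{q-1}$ has the same sign across all portfolios $i$, so the drift collapses to a single lumped vector $\eta$ whose norm is $O(S)$ under the simplex constraint. This keeps the drift contribution at order $S^q$, the same order as $\beta l - \partial_t l$, rather than a higher order that would defeat the ``$\gamma$ large'' argument. A secondary subtlety is the edge $a_i = a_{K+1}$ in the gradient constraints, which must extract strict positivity of both $\lambda_i$ and $\theta_i$ from the bracket hypothesis.
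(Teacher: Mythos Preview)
Your proposal is correct and follows essentially the same approach as the paper: compute the derivatives explicitly, verify the gradient constraints via $c_1 q \tau S^{q-1} \in (0,1)$ together with the bracket hypothesis on $a_i - a_{K+1}$, and bound the Hamiltonian by $C[c_1\tau S^q + c_2\tau(1+|y|^2)]$ so that the $-(\beta+\gamma)$ coefficient from $\beta l - \partial_t l$ dominates for large $\gamma$. Your lumped-vector device $\eta = \sum_i a_i x_i \alpha_i$ and the observation that the $x$-Hessian trace term is nonnegative (rank-one, sign from $q(q-1)<0$) are slightly sharper than the paper's cruder matrix-norm bound on that term, but the overall architecture is the same.
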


Next, we present the following lemma, which is crucial to relax the continuity assumption imposed in \citet[Theorem 7.9]{crandall1992user}. A similar result can be found in \citet[Lemma 4.1]{bentahar2007}.

For notational simplicity, denote the vector $\zeta g(x_{k:K+1}) := x_{k:K+1} - Z_{k:K+1}(x_{k:K+1}, \zeta)$ with $Z_{k:K+1}(x_{k:K+1}, \zeta)$ defined in \eqref{z}.    
\begin{lemma}\label{lem:conti}
	Let $u \in \text{USC}([T_{k-1}, T_k] \times \overline{\cS}_k)$ such that for $(x^o_{k:K+1}, y) \in \overline{\cS}^o_{k} \backslash \cS_k$,
	\begin{equation}
		u(t, x^o_{k:K+1}, y) = \limsup_{\substack{ x'_{k:K+1} \rightarrow x^o_{k:K+1} \\ x'_j > 0, \, j = k,\ldots,K+1}} u(t, x'_{k:K+1}, y).
	\end{equation}
	Suppose
	\begin{equation}\label{u_sub}
		\begin{aligned}
			\max \Big\{ & - \lambda_{k} + \partial_{K+1} u - \partial_{k} u, \ldots, - \lambda_{K} + \partial_{K+1} u - \partial_{K} u, \\
			& - \theta_{k} - \partial_{K+1} u + \partial_{k} u, \ldots, - \theta_K - \partial_{K+1} u + \partial_K u \Big\} \leq 0
		\end{aligned}
	\end{equation}
	holds on $[T_{k-1}, T_k] \times \cS_k$ in the viscosity sense. Then
	\begin{equation}\label{conti}
		\begin{aligned}
			& \limsup_{\zeta \downarrow 0} u(t, x_{k:K+1} - \zeta g(x_{k:K+1}), y) = \liminf_{\zeta \downarrow 0} u(t, x_{k:K+1} - \zeta g(x_{k:K+1}), y) \\
			& = u(t, x_{k:K+1}, y)
		\end{aligned}
	\end{equation}
	holds for any $t \in [T_{k-1}, T_k]$ and $(x_{k:K+1}, y) \in \overline{\cS}^o_{k}$.
\end{lemma}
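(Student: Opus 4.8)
\textbf{Proof proposal for Lemma \ref{lem:conti}.}

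The statement has two halves: (a) the limit along the segment $\zeta \mapsto x_{k:K+1} - \zeta g(x_{k:K+1})$ exists, and (b) it equals $u(t, x_{k:K+1}, y)$. The plan is to exploit the gradient inequality \eqref{u_sub}, which says that along the particular transfer direction $g(x_{k:K+1})$ — a finite combination of the elementary directions $e_i - e_{K+1}$ and $e_{K+1} - e_i$ — the function $u$ cannot increase (or decrease) too fast; in fact, when integrated, a bound of the form $u(t, x_{k:K+1}, y) - u(t, x_{k:K+1} - \zeta g, y) \le C\zeta$ holds for one sign and a matching lower bound for the other, yielding Lipschitz control of $\zeta \mapsto u(t, x_{k:K+1} - \zeta g(x_{k:K+1}), y)$ on the interior segment. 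Concretely, I would first observe that for $(x_{k:K+1}, y) \in \cS_k$ the vector $g(x_{k:K+1}) = 0$ and the claim is trivial, so the content is at boundary points $(x^o_{k:K+1}, y) \in \overline{\cS}^o_k \backslash \cS_k$ where some coordinates vanish. For such a point and small $\zeta > 0$, $x^o_{k:K+1} - \zeta g(x^o_{k:K+1}) = Z_{k:K+1}(x^o_{k:K+1}, \zeta) \in \cS_k$ lies in the interior, so \eqref{u_sub} applies there in the viscosity sense.

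The key technical step is to turn the viscosity gradient inequality into a genuine one-sided Lipschitz estimate along the segment. Here I would invoke the standard fact (used e.g. in \cite{bentahar2007}) that a USC viscosity subsolution of a first-order inequality of the form $-\lambda_i + \partial_{K+1}u - \partial_i u \le 0$ and $-\theta_i - \partial_{K+1}u + \partial_i u \le 0$ is, after a change of variables aligning one axis with the direction $e_i - e_{K+1}$, monotone-in-a-cone / Lipschitz in that direction with the explicit constants $\lambda_i, \theta_i$. Summing over the finitely many elementary transfers composing $g$ (recall $\Delta L_j, \Delta M_j$ are nonnegative integers and the transfer route is $x_i \to x_{K+1} \to x_j$), one gets for $0 < \zeta' < \zeta$ small,
\begin{equation*}
	\big| u(t, x^o_{k:K+1} - \zeta g, y) - u(t, x^o_{k:K+1} - \zeta' g, y) \big| \le C (\zeta - \zeta'),
\end{equation*}
with $C$ depending only on $\{\lambda_i, \theta_i\}$ and the integer transfer counts. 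A Cauchy-sequence argument then shows $\lim_{\zeta \downarrow 0} u(t, x^o_{k:K+1} - \zeta g(x^o_{k:K+1}), y)$ exists; this gives the first equality in \eqref{conti}.

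For the second equality — identifying the limit with $u(t, x^o_{k:K+1}, y)$ — I would use the hypothesis that $u$ already equals its restricted $\limsup$ from the interior at boundary points. Since $x^o_{k:K+1} - \zeta g(x^o_{k:K+1}) = Z_{k:K+1}(x^o_{k:K+1}, \zeta) \to x^o_{k:K+1}$ from inside $\cS_k$ as $\zeta \downarrow 0$, upper semicontinuity gives $\limsup_{\zeta\downarrow 0} u(t, Z_{k:K+1}(x^o_{k:K+1}, \zeta), y) \le u(t, x^o_{k:K+1}, y)$; the reverse inequality $u(t, x^o_{k:K+1}, y) \le \liminf_{\zeta\downarrow 0}u(\cdot)$ follows from the $\limsup$-representation hypothesis combined with the Lipschitz estimate (which lets me replace the general interior-approaching sequence in the $\limsup$ by the specific segment up to an $O(\zeta)$ error). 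Combining the two directions yields equality. I would also need a small remark handling the choice of the smallest index $i$ with $x_i > 0$ and the fact that $Z_i = x_i - N\zeta$ stays positive for $\zeta$ small, so the construction is well defined. The main obstacle I anticipate is the rigorous passage from the viscosity-sense gradient bound to the pointwise Lipschitz estimate along a non-axis direction that is itself a composite of several elementary transfers — this requires either a careful mollification/sup-convolution argument or a direct doubling-of-variables estimate restricted to the one-dimensional segment, and bookkeeping the constants through the route $x_i \to x_{K+1} \to x_j$; everything else is routine semicontinuity manipulation.
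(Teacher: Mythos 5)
Your first half (two-sided Lipschitz along the segment $\zeta \mapsto x^o - \zeta g(x^o)$, then a Cauchy argument) is correct but is extra work the paper never needs: the paper establishes the sandwich
\begin{equation*}
	u(t, x^o, y) \;\leq\; \liminf_{\zeta \downarrow 0} u(t, x^o - \zeta g(x^o), y) \;\leq\; \limsup_{\zeta \downarrow 0} u(t, x^o - \zeta g(x^o), y) \;\leq\; u(t, x^o, y),
\end{equation*}
the right-most inequality by USC and the left-most by the argument below, and the existence of the limit drops out automatically.

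The genuine gap is in your mechanism for the left-most (``reverse'') inequality. You invoke ``the Lipschitz estimate (which lets me replace the general interior-approaching sequence in the $\limsup$ by the specific segment up to an $O(\zeta)$ error),'' but the estimate you actually built is a Lipschitz bound \emph{along the fixed segment}, and that cannot compare $u(t, x', y)$ at a generic nearby interior point $x'$ with $u$ at a segment point: the transfer directions $\{e_j - e_{K+1}\}$ do not span $\R^{K-k+2}$, so $x'$ and the segment $\{x^o - \zeta g(x^o)\}$ generically lie on different total-wealth hyperplanes and no transfer route joins them. The correct mechanism, and what the paper does, is to apply the integrated one-sided gradient inequality at the \emph{varying} base point $x'$ (all of whose coordinates are positive), shifted by the \emph{fixed} vector $\zeta g(x^o)$:
\begin{equation*}
	u(t, x', y) \;\leq\; u(t, x' - \zeta g(x^o), y) + \sum_{j=k}^{K}\bigl(\lambda_j\,\zeta\,\Delta L_j(x^o) + \theta_j\,\zeta\,\Delta M_j(x^o)\bigr),
\end{equation*}
valid for $x'$ in a small positive neighborhood of $x^o$ and $\zeta$ small. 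Taking $\limsup_{x' \to x^o,\ x'_j > 0}$ on both sides, the left side equals $u(t, x^o, y)$ by hypothesis, while the right side is controlled by $u(t, x^o - \zeta g(x^o), y) + O(\zeta)$ because $x' - \zeta g(x^o) \to x^o - \zeta g(x^o)$ and $u$ is USC. Sending $\zeta \downarrow 0$ then gives the reverse inequality. So the key ingredient is the gradient estimate at all nearby interior points together with USC at the shifted points, not a Lipschitz bound confined to the segment.
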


In the proof of the comparison principle, one technical difficulty is that the constraints on $\partial_{K+1} v - \partial_{k} v$ are from both the above and below. To address this, we perturb $u - v$ by introducing $\delta u - v$, where the constant $\delta$ is close to $1$. The same idea also appears in \citet[Proposition 2.2]{hynd2012eigenvalue}.
\begin{proposition}[Comparison principle: $t \in [T_{k-1}, T_{k})$]\label{p:comp_btw}
	Given some $k=1, \ldots, K$, suppose
	\begin{enumerate}[label={(\arabic*)}]	
		\item  $u_k \in USC([T_{k-1}, T_k] \times \overline{\cS}_{k})$ is a viscosity subsolution of \eqref{F} on $[T_{k-1}, T_k) \times \cS_k$, that is, the USC function $u_k$ satisfies condition \eqref{F_sub} in Definition \ref{def:vis_sub}. Moreover, for $(x^o_{k:K+1}, y) \in \overline{\cS}^o_{k} \backslash \cS_k$ and $t \in [T_{k-1}, T_k)$,
		\begin{equation}\label{modi_bd_u}
			u_k(t, x^o_{k:K+1}, y) = \limsup_{\substack{ x'_{k:K+1} \rightarrow x^o_{k:K+1} \\ x'_j > 0, \, j = k,\ldots,K+1}} u_k(t, x'_{k:K+1}, y);
		\end{equation}
		\item $v_k \in LSC([T_{k-1}, T_k] \times \overline{\cS}_{k})$ is a viscosity supersolution of \eqref{F} on $[T_{k-1}, T_k) \times \overline{\cS}^o_k$, that is, the LSC function $v_k$ satisfies condition \eqref{F_super} in Definition \ref{def:vis_super};
		\item
		\begin{align}
			u_k(t, 0, y) & \leq v_k(t, 0, y), \quad \forall\;  t \in [T_{k-1}, T_k], \; y \in \R^m,\\
			u_k(T_k, x_{k:K+1}, y) & \leq v_k(T_k, x_{k:K+1}, y), \quad \forall\; (x_{k:K+1}, y) \in \overline{\cS}_{k};
		\end{align}
	\item $u_k$ and $v_k$ are bounded,
	\end{enumerate}
	then $v_k$ dominates $u_k$ in the sense that
	\begin{equation}
		u_k(t, x_{k:K+1}, y) \leq v_k(t, x_{k:K+1}, y), \quad \forall\; (t, x_{k:K+1}, y) \in [T_{k-1}, T_k] \times \overline{\cS}_{k}.
	\end{equation}
\end{proposition}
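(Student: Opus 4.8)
The plan is to follow the classical Ishii/Crandall--Lions doubling-of-variables scheme, but with two modifications dictated by the structure of the problem: the two-sided gradient constraints $-\theta_i \le \partial_{K+1}V_k - \partial_i V_k \le \lambda_i$ and the state constraint $x_{k:K+1}\ge 0$. As is standard, I would argue by contradiction: assume $m:=\sup_{[T_{k-1},T_k]\times\overline{\cS}_k}(u_k - v_k)>0$. Using Lemma~\ref{lem:strict_l}, I would first build a strict classical subsolution $l$ of \eqref{F} and replace $u_k$ by $u_k^\eta := (1-\eta)u_k + \eta\, l$ for small $\eta>0$; by the concavity/structure of $F$ in its arguments, $u_k^\eta$ is then a \emph{strict} subsolution, and for $\eta$ small enough $\sup(u_k^\eta - v_k)>0$ still holds. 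This strictness will be what ultimately produces the contradiction. The boundary and terminal inequalities in hypothesis~(3), together with the boundedness in~(4), guarantee that the positive supremum is attained at an interior time and away from the corner $x_{k:K+1}=0$.

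The crucial twist is handling the two-sided gradient constraint. Following the idea of \citet[Proposition 2.2]{hynd2012eigenvalue}, instead of comparing $u_k^\eta$ with $v_k$ directly I would compare $\delta\, u_k^\eta$ with $v_k$ for a constant $\delta$ slightly less than $1$ (or slightly more, depending on the sign bookkeeping). The point is that when the penalized supremum in the doubled functional is achieved at a point where one of the gradient-constraint terms of $F$ is active for $v_k$ (the supersolution), one gets an inequality on $\partial_{K+1}v_k - \partial_i v_k$ of the \emph{wrong} sign to immediately contradict the subsolution inequality for $u_k^\eta$; the factor $\delta$ breaks this degeneracy and lets the strictness of $l$ win. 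So the doubling functional I would use is roughly
\begin{equation*}
	\Phi_\varepsilon(t,p,q) := \delta u_k^\eta(t,p) - v_k(t,q) - \frac{1}{2\varepsilon}\big(|p-q|^2\big) - (\text{small localization terms in } t \text{ and in the } y\text{-variable}),
\end{equation*}
where $p=(x_{k:K+1},y)$, $q=(x'_{k:K+1},y')$, and the localization terms ensure a maximizer exists despite the unbounded $y$-domain; boundedness of $u_k^\eta,v_k$ plus the coercivity supplied by the $(1+|y|^2)$ piece of $l$ take care of this.

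The state constraint enters when the maximizer $(\hat t,\hat p,\hat q)$ of $\Phi_\varepsilon$ lands on the boundary $\overline{\cS}^o_k\setminus\cS_k$, i.e.\ some $\hat x_i=0$. Here $u_k^\eta$ need not satisfy any equation (the subsolution property of $u_k$ is only assumed in the interior $\cS_k$), so I cannot directly apply the maximum principle / Ishii's lemma. This is where Lemma~\ref{lem:conti}, applied to the USC function $u_k$ (which satisfies the one-sided regularization \eqref{modi_bd_u} by hypothesis~(1) and the gradient inequality \eqref{u_sub} because it is a subsolution of \eqref{F}), is invoked: it says $u_k$ is continuous along the transfer direction $x_{k:K+1}-\zeta g(x_{k:K+1})$ into the interior, so I can perturb $\hat p$ slightly into $\cS_k$ along that direction with vanishing cost, effectively moving the maximizer off the boundary at the price of an $o(1)$ term. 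The analogous role of Lemma~\ref{lem:boundary} / \eqref{modi_bd_u} is to guarantee this regularization is consistent with the value the functions actually take on the boundary. After this reduction, both points are interior, Ishii's lemma gives matrices $X,Y$ with the usual $X\le Y$ inequality, and I subtract the subsolution inequality for $\delta u_k^\eta$ at $\hat p$ from the supersolution inequality for $v_k$ at $\hat q$.

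To close the argument: in the ``diffusion branch'' of $F$, the Lipschitz/boundedness assumptions on $\mu,\sigma,\mu_Y,\sigma_Y$ and the uniform bound on $(\sigma\sigma^\top)^{-1}$ control the Hamiltonian difference $H(\hat p,\dots)-H(\hat q,\dots)$ by $C(|\hat p-\hat q|^2/\varepsilon + |\hat p-\hat q|)$, which vanishes as $\varepsilon\to 0$ (using the standard estimate $|\hat p-\hat q|^2/\varepsilon\to 0$); in the ``gradient-constraint branches'', the $\delta$-trick yields a strictly positive residual. Combined with the strict subsolution gain from $l$ and the $\beta\ge 0$ term (handled either by the $e^{\gamma(T_k-t)}$ weight in $l$ or by a further $e^{-Kt}$ rescaling if $\beta=0$), one arrives at $0<(\text{strict positive constant})\le o(1)$, a contradiction. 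Hence $m\le 0$, i.e.\ $u_k\le v_k$ everywhere. I expect the main obstacle to be the bookkeeping at the boundary: correctly combining the $\delta$-perturbation, the strict-subsolution perturbation by $l$, and the Lemma~\ref{lem:conti}-based interior approximation simultaneously, so that none of the three error terms reintroduces a sign problem — in particular checking that the perturbation by $l$ does not violate the two-sided gradient constraint, which is exactly why Lemma~\ref{lem:strict_l} requires the sharp condition $-\theta_i<a_i-a_{K+1}<\lambda_i$.
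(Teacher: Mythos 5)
Your strategy captures most of the right ingredients — the contradiction via a positive supremum, the strict classical subsolution from Lemma~\ref{lem:strict_l}, the Hynd-style $\delta$-perturbation $\delta u - v$ to dodge the two-sided gradient constraints, Ishii's lemma, and the roles of Lemma~\ref{lem:conti} and~\eqref{modi_bd_u} at boundary points. Your convex-combination variant $(1-\eta)u_k+\eta l$ for producing strictness is a fine alternative to the paper's additive form $\delta u_K(s)+\varepsilon l(s)$, and the estimates you outline for the diffusion branch and for $\beta=0$ are sound.

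The gap is the boundary handling. You propose to penalize with $\frac{1}{2\varepsilon}|p-q|^2$, detect whether the maximizer lands on $\overline{\cS}^o_k\setminus\cS_k$, and then ``perturb $\hat p$ slightly into $\cS_k$ \ldots effectively moving the maximizer off the boundary at the price of an $o(1)$ term.'' This step does not work as stated: the perturbed point $\hat p - \zeta g(\hat p)$ is close to the maximizer but is \emph{not} a maximizer, so Ishii's lemma and the elliptic maximum principle cannot be invoked there — that step needs an exact maximum, and there is no $o(1)$ fallback. What the paper actually does is \emph{build the shift into the penalty}: in place of $\tfrac{n^2}{2}|s-s'|^2$ it uses $|n(s-s')+\zeta g(\hat s)|^2 + \zeta|s'-\hat s|^2$, whose minimum in $s-s'$ sits at $s-s'=-\zeta g(\hat s)/n$, \emph{not} at $s=s'$. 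This forces $s_n = s'_n - (\zeta g(\hat s)+o(1))/n$, so the subsolution point $s_n$ enters $\cS_k$ for large $n$ while the supersolution point $s'_n\to\hat s$ is allowed to approach $\overline{\cS}^o_k$, where the supersolution inequality still holds. The localizer $\zeta|s'-\hat s|^2$ must be attached to the supersolution variable $s'$, not symmetrically, so that it does not corrupt the subsolution test derivatives. Lemma~\ref{lem:conti} is then used not to perturb a maximizer but to prove $\lim_n u_K(\hat s - \zeta g(\hat s)/n) = u_K(\hat s)$, which is what makes the shifted penalty and the localizer vanish and pins $(s_n,s'_n)\to(\hat s,\hat s)$. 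Without this shifted-penalty device your argument has no mechanism to put the subsolution point in the interior, and the doubling step cannot be closed.
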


In Proposition \ref{prop:Tk}, we omit the time variable $T_k$ for simplicity.
\begin{proposition}[Terminal comparison at $T_k$]\label{prop:Tk}
	Given some $k=1, \ldots K$, consider a continuous and bounded function $f: \overline{\cS}_{k+1} \rightarrow \R$ and the PDE for $h: \overline{\cS}_{k} \rightarrow \R$:
		\begin{equation}\label{Tk_f}
		\begin{aligned}
			& F_k (x_{k:K+1}, y, h(x_{k:K+1}, y), \partial_{k:K+1} h(x_{k:K+1}, y)) \\
			& \quad := \max \Big\{ h(x_{k:K+1}, y) - w_k (G_k - x_k)^+ - f(x_{k+1:K+1}, y), \\
			& \quad \qquad \qquad - \lambda_{k} + \partial_{K+1} h - \partial_{k} h, \ldots, - \lambda_{K} + \partial_{K+1} h - \partial_{K} h, \\
			& \quad \qquad \qquad -\theta_{k} - \partial_{K+1} h + \partial_k h, \ldots, - \theta_K - \partial_{K+1} h + \partial_{K} h \Big\} = 0.
		\end{aligned}
	\end{equation}
	Suppose
	\begin{enumerate}[label={(\arabic*)}]	
		\item  $u_k \in USC(\overline{\cS}_{k})$ is a viscosity subsolution of \eqref{Tk_f} on $\cS_k$, i.e.,
			\begin{equation*}
				F_k (\bar{x}_{k:K+1}, \bar{y}, u_k(\bar{x}_{k:K+1}, \bar{y}), \partial_{k:K+1} \varphi(\bar{x}_{k:K+1}, \bar{y})) \leq 0
			\end{equation*}
			for all $(\bar{x}_{k:K+1}, \bar{y}) \in \cS_k$ and for all $\varphi \in C^{2}(\cS_k)$ such that $(\bar{x}_{k:K+1}, \bar{y})$ is a maximum point of $u_k - \varphi$. Moreover, for $(x^o_{k:K+1}, y) \in \overline{\cS}^o_{k} \backslash \cS_k$,
			\begin{equation}
				u_k(x^o_{k:K+1}, y) = \limsup_{\substack{ x'_{k:K+1} \rightarrow x^o_{k:K+1} \\ x'_j > 0, \, j = k,\ldots,K+1}} u_k(x'_{k:K+1}, y);
			\end{equation} 
		\item $v_k \in LSC(\overline{\cS}_{k})$ is a viscosity supersolution of \eqref{Tk_f} on $\overline{\cS}^o_k$, i.e.,
		\begin{equation*}
			F_k (\bar{x}_{k:K+1}, \bar{y}, v_k(\bar{x}_{k:K+1}, \bar{y}), \partial_{k:K+1} \varphi(\bar{x}_{k:K+1}, \bar{y})) \geq 0
		\end{equation*}
		for all $(\bar{x}_{k:K+1}, \bar{y}) \in \overline{\cS}^o_k$ and for all $\varphi \in C^{2}(\overline{\cS}^o_k)$ such that $(\bar{x}_{k:K+1}, \bar{y})$ is a minimum point of $v_k - \varphi$;
		\item
		\begin{align}
			u_k(0, y) & \leq v_k(0, y), \quad \forall \;  y \in \R^m;
		\end{align}
		\item $u_k$ and $v_k$ are bounded,
	\end{enumerate}
	then
	\begin{equation}
		u_k(x_{k:K+1}, y) \leq v_k(x_{k:K+1}, y), \quad \forall\; (x_{k:K+1}, y) \in \overline{\cS}_{k}.
	\end{equation}
\end{proposition}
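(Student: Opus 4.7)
The plan is a doubling-variables comparison argument adapted to handle two features: the two-sided gradient constraints and the state constraint $x_{k:K+1} \geq 0$. Following \citet[Proposition 2.2]{hynd2012eigenvalue}, the key trick is that if $u_k$ is a viscosity subsolution of $F_k = 0$ and $\delta \in (0,1)$, then any test function $\varphi$ touching $\delta u_k$ from above induces a test function $\varphi/\delta$ for $u_k$, which yields the inequalities $(\partial_{K+1}\varphi - \partial_i \varphi)(\hat x, \hat y) \leq \delta \lambda_i$ and $(\partial_i \varphi - \partial_{K+1}\varphi)(\hat x, \hat y) \leq \delta \theta_i$, strictly below the thresholds $\lambda_i, \theta_i$. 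Assume for contradiction that $M := \sup_{\overline{\cS}_k}(u_k - v_k) > 0$; since $u_k, v_k$ are bounded, pick $\delta$ close enough to $1$ so that $\sup(\delta u_k - v_k) > 0$ still.

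Introduce the doubled functional
\[
\Phi_\eta(x, y, x', y') := \delta u_k(x, y) - v_k(x', y') - \tfrac{1}{2\eta}\bigl(|x - x'|^2 + |y - y'|^2\bigr) - \epsilon \bigl(1 + |y|^2 + |y'|^2\bigr),
\]
with small $\epsilon > 0$ so that $\sup \Phi_\eta > 0$ is preserved. By USC of $u_k$, LSC of $v_k$, boundedness, and the quadratic $y$-penalty, the supremum is attained at some $(\hat x, \hat y, \hat x', \hat y') \in (\overline{\cS}_k \times \R^m)^2$. Standard Crandall--Ishii compactness gives $\tfrac{1}{\eta}(|\hat x - \hat x'|^2 + |\hat y - \hat y'|^2) \to 0$ and $\lim_\eta \Phi_\eta(\hat x, \hat y, \hat x', \hat y')$ bounded below away from zero as $\eta \to 0$.

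To invoke the viscosity properties, $(\hat x, \hat x')$ must be located away from the corner and $\hat x$ placed in $\cS_k$. The corner $\hat x = 0$ (resp.\ $\hat x' = 0$) is ruled out by hypothesis $u_k(0, \cdot) \leq v_k(0, \cdot)$ combined with LSC of $v_k$: after adding a large constant $C$ to both functions, absorbed by replacing $W(x, y) := w_k(G_k - x_k)^+ + f(x, y)$ by $W + C$ in the equation, one may assume $u_k, v_k \geq 0$, so $\delta u_k(0, \cdot) - v_k(0, \cdot) \leq -(1-\delta) u_k(0, \cdot) \leq 0$. If $\hat x \in \overline{\cS}^o_k \setminus \cS_k$, Lemma \ref{lem:conti} applies since the gradient inequalities \eqref{u_sub} are satisfied by $u_k$ in the viscosity sense (they appear among the maximands of $F_k \leq 0$), and combined with the boundary-regularization hypothesis on $u_k$ it gives $u_k(\hat x - \zeta g(\hat x), \hat y) \to u_k(\hat x, \hat y)$ as $\zeta \downarrow 0$; one can thus replace $\hat x$ by the interior point $\hat x - \zeta g(\hat x) \in \cS_k$ without losing the near-supremum. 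No analogous shift is required for $\hat x'$, since the supersolution property holds on all of $\overline{\cS}^o_k$.

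With $\hat x \in \cS_k$ and $\hat x' \in \overline{\cS}^o_k$, set $p := (\hat x - \hat x')/\eta$. The subsolution property for $u_k$ via the test function $\varphi/\delta$ yields $u_k(\hat x, \hat y) \leq W(\hat x, \hat y)$ together with the strict bounds $p_{K+1} - p_i \leq \delta \lambda_i < \lambda_i$ and $p_i - p_{K+1} \leq \delta \theta_i < \theta_i$ for $i = k, \ldots, K$ (with the degenerate cases $\lambda_i = 0$ or $\theta_i = 0$ handled by adding a small multiple of a strict classical subsolution analogous to Lemma \ref{lem:strict_l}). The supersolution property at $(\hat x', \hat y')$ with the same $p$ gives
\[
\max\bigl\{v_k(\hat x', \hat y') - W(\hat x', \hat y'),\; p_{K+1} - p_i - \lambda_i,\; p_i - p_{K+1} - \theta_i\bigr\} \geq 0;
\]
the strict subsolution bounds exclude the two gradient terms, forcing $v_k(\hat x', \hat y') \geq W(\hat x', \hat y')$. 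Hence $\delta u_k(\hat x, \hat y) - v_k(\hat x', \hat y') \leq \delta W(\hat x, \hat y) - W(\hat x', \hat y')$, which converges to $-(1-\delta) W(\bar x, \bar y) \leq 0$ (by continuity of $W$ and $W \geq 0$) as $\eta \to 0$, contradicting $\lim_\eta \Phi_\eta > 0$. The main obstacle is the interplay between state constraints and two-sided gradient constraints: the state constraint requires Lemma \ref{lem:conti} plus the regularization hypothesis to relocate the doubling maximizer into $\cS_k$, while the two-sided gradient constraint requires the Hynd-style $\delta$-perturbation, and the two must be orchestrated so that strict inequalities hold precisely at an interior maximizer.
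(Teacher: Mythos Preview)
Your overall strategy—doubling of variables plus the Hynd-style $\delta$-perturbation to create a strict gap in the gradient constraints—is the same as the paper's, and your handling of the ``third case'' (where the supersolution forces $v_k \geq W$) is essentially correct. But two steps in your argument do not go through as written.

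First, the localization is incomplete. Your penalty $\epsilon(1+|y|^2+|y'|^2)$ gives no coercivity in the $x$-directions, so there is no reason the supremum of $\Phi_\eta$ is attained: a maximizing sequence can have $|x^j|\to\infty$ with $|x^j-x'^j|$ and $|y^j|$ bounded, since $u_k,v_k$ are merely bounded. The paper fixes this by adding $\varepsilon l(s)+\varepsilon l(s')$ with the strict subsolution $l(x_{k:K+1},y)=-c_1(1+\sum a_i x_i)^q-c_2(1+|y|^2)$, which both forces compactness in $x$ and (through its $x$-gradients) supplies the strict inequalities you need when some $\lambda_i$ or $\theta_i$ vanishes. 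You acknowledge the degenerate case requires ``adding a small multiple of a strict classical subsolution'', but your $\Phi_\eta$ never actually contains such a term, so the argument is inconsistent.

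Second, your state-constraint step is too loose. If $\hat x\in\overline{\cS}^o_k\setminus\cS_k$, Lemma~\ref{lem:conti} indeed gives $u_k(\hat x-\zeta g(\hat x),\hat y)\to u_k(\hat x,\hat y)$, but the shifted point $(\hat x-\zeta g(\hat x),\hat y,\hat x',\hat y')$ is in general \emph{not} a maximizer of $\Phi_\eta$, so you cannot apply the viscosity subsolution property there. ``Near the supremum'' is not enough: you need an actual local maximum of $u_k-\varphi$ at an interior point. The paper's resolution (imported from Proposition~\ref{p:comp_btw}) is structural: first locate the reference maximizer $\hat s$ of $\delta u_k-v_k+2\varepsilon l$, define the inward direction $g(\hat s)$, and then use the \emph{shifted} penalty $|n(s-s')+\zeta g(\hat s)|^2+\zeta|s'-\hat s|^2$. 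The maximizer $(s_n,s'_n)$ of this modified functional satisfies $s_n=s'_n-(\zeta g(\hat s)+o(1))/n$, which lies in $\cS_k$ for large $n$; Lemma~\ref{lem:conti} is used only to show $u_k(\hat s-\zeta g(\hat s)/n)\to u_k(\hat s)$ so that $(s_n,s'_n)\to(\hat s,\hat s)$. This is the standard constrained-viscosity device from \citet{crandall1992user} and \citet{bentahar2007}, and it is the missing piece in your proposal.
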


The comparison principle when $t \in [T_K, T)$ is classical and much easier, since there are no gradient constraints involved. However, we could not find a precise reference applicable to our case. Therefore, we include the proof in Appendix for readers' convenience. The assumptions in Proposition \ref{prop:comp_last} are not the most general ones. 

\begin{proposition}[Comparison principle: $t \in [T_K, T)$]\label{prop:comp_last}
	Consider the HJB equation \eqref{Vlast} for $V_{K+1}(t, x_{K+1}, y)$. Suppose
	\begin{enumerate}[label={(\arabic*)}]	
	\item  $u_{K+1} \in USC([T_K, T] \times \overline{\cS}_{K+1})$ is a viscosity subsolution of \eqref{Vlast} on $[T_K, T) \times \cS_{K+1}$, that is, the USC function $u_{K+1}$ satisfies condition \eqref{Vlast_sub} in Definition \ref{def:vis_sub};
	\item $v_{K+1} \in LSC([T_K, T] \times \overline{\cS}_{K+1})$ is a viscosity supersolution of \eqref{Vlast} on $[T_K, T) \times \cS_{K+1}$, that is, the LSC function $v_{K+1}$ satisfies condition \eqref{Vlast_super} in Definition \ref{def:vis_super};
	\item
	\begin{align}
		u_{K+1}(t, 0, y) & \leq v_{K+1}(t, 0, y), \quad \forall\;  t \in [T_K, T], \; y \in \R^m,\\
		u_{K+1}(T, x_{K+1}, y) & \leq v_{K+1}(T, x_{K+1}, y), \quad \forall\; (x_{K+1}, y) \in \overline{\cS}_{K+1};
	\end{align}
	\item $u_{K+1}$ and $v_{K+1}$ are bounded,
\end{enumerate}
then
\begin{equation}
	u_{K+1}(t, x_{K+1}, y) \leq v_{K+1}(t, x_{K+1}, y), \quad \forall\; (t, x_{K+1}, y) \in [T_K, T] \times \overline{\cS}_{K+1}.
\end{equation}
\end{proposition}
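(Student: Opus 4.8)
\textbf{Proof plan for Proposition \ref{prop:comp_last}.}
The plan is to run the classical doubling-of-variables argument for parabolic viscosity solutions, with two adaptations: a slow-growth penalization to handle the unbounded state space, and exploitation of the inequalities $u_{K+1}\le v_{K+1}$ on $\{t=T\}$ and on $\{x_{K+1}=0\}$ to confine the relevant maximum to $[T_K,T)\times\cS_{K+1}$, where both functions satisfy \eqref{Vlast}. Assume for contradiction that $m_0:=\sup_{[T_K,T]\times\overline{\cS}_{K+1}}(u_{K+1}-v_{K+1})>0$. Because $\mu,\sigma$ are bounded Lipschitz and $\mu_Y,\sigma_Y$ are Lipschitz (hence of linear growth), the drift and diffusion coefficients are proportional to $x_{K+1}$ in the wealth variable and of linear growth in $y$; consequently the penalty functions $\phi(x):=\log(1+x^2)$ and $\psi(y):=\log(1+|y|^2)$ satisfy that $x\phi'(x)$, $x^2\phi''(x)$, $(1+|y|)|\nabla\psi(y)|$ and $(1+|y|)^2\|D^2\psi(y)\|$ are all bounded, so that for $\epsilon>0$ small and then $\eta>0$ small relative to $\epsilon$ the perturbed function
\begin{equation*}
	\tilde u(t,x,y):=u_{K+1}(t,x,y)-\epsilon(T-t)-\eta\phi(x)-\eta\psi(y)
\end{equation*}
is a \emph{strict} viscosity subsolution of \eqref{Vlast} on $[T_K,T)\times\cS_{K+1}$, with strictness margin at least $\epsilon/2$: the term $-\epsilon(T-t)$ supplies the strictness, the $\eta$-terms perturb the operator $H$ by at most $C\eta$ uniformly (by the bounds above), and the zeroth-order term $\beta\eta(\phi+\psi)\ge0$ only helps. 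Moreover $\sup(\tilde u-v_{K+1})\ge m_0/2>0$ for $\epsilon,\eta$ small, this supremum is attained (upper semicontinuity of $\tilde u-v_{K+1}$ plus coercivity of the penalty), and any maximizer $(\hat t,\hat x,\hat y)$ satisfies $\hat t<T$ (since $\tilde u-v_{K+1}=u_{K+1}-v_{K+1}\le0$ at $t=T$) and $\hat x>0$ (since $u_{K+1}-v_{K+1}\le0$ at $x_{K+1}=0$).

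Next I would double variables: for $\kappa>0$ maximize $\tilde u(t,x,y)-v_{K+1}(t,x',y')-\tfrac{\kappa}{2}(|x-x'|^2+|y-y'|^2)$ over $[T_K,T]\times([0,\infty)\times\R^m)^2$, with a maximizer $(t_\kappa,x_\kappa,y_\kappa,x'_\kappa,y'_\kappa)$. By the standard penalization lemma, $\kappa(|x_\kappa-x'_\kappa|^2+|y_\kappa-y'_\kappa|^2)\to0$ and, along a subsequence, $(x_\kappa,y_\kappa),(x'_\kappa,y'_\kappa)\to(\hat x,\hat y)$ and $t_\kappa\to\hat t$; hence for $\kappa$ large $t_\kappa<T$, $x_\kappa,x'_\kappa>0$, and all these points remain in a fixed compact set. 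Applying the parabolic maximum principle for semicontinuous functions (\citet[Theorem 8.3]{crandall1992user}) produces a common $b_\kappa\in\R$ and symmetric matrices $X_\kappa,Y_\kappa$ with $(b_\kappa,p_\kappa,X_\kappa)\in\overline{\cP}^{2,+}\tilde u(t_\kappa,x_\kappa,y_\kappa)$ and $(b_\kappa,p_\kappa,Y_\kappa)\in\overline{\cP}^{2,-}v_{K+1}(t_\kappa,x'_\kappa,y'_\kappa)$, where $p_\kappa:=\kappa(x_\kappa-x'_\kappa,y_\kappa-y'_\kappa)$, and $\begin{pmatrix}X_\kappa&0\\0&-Y_\kappa\end{pmatrix}\le 3\kappa\begin{pmatrix}I&-I\\-I&I\end{pmatrix}$. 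Using the strict subsolution inequality for $\tilde u$, the supersolution inequality for $v_{K+1}$, and subtracting (so that $b_\kappa$ cancels) yields
\begin{equation*}
	\tfrac{\epsilon}{2}+\beta\big(\tilde u(t_\kappa,x_\kappa,y_\kappa)-v_{K+1}(t_\kappa,x'_\kappa,y'_\kappa)\big)\le H(x_\kappa,y_\kappa,p_\kappa,X_\kappa)-H(x'_\kappa,y'_\kappa,p_\kappa,Y_\kappa).
\end{equation*}

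The remaining task is to show the right-hand side tends to $0$ as $\kappa\to\infty$, and this is the step I expect to be the main obstacle. Writing $H$ as $rx p_x+\mu_Y(y)^\top p_y+\inf_{\alpha\in\cA}\{(\mu(y)-r\one)^\top\alpha x p_x+\tfrac12\tr[\Sigma(\alpha,x,y)M]\}$ and using $|\inf_\alpha f-\inf_\alpha g|\le\sup_{\alpha\in\cA}|f-g|$ over the compact simplex $\cA$, the drift terms contribute $r\kappa|x_\kappa-x'_\kappa|^2+(\mu_Y(y_\kappa)-\mu_Y(y'_\kappa))^\top\kappa(y_\kappa-y'_\kappa)=O\big(\kappa(|x_\kappa-x'_\kappa|^2+|y_\kappa-y'_\kappa|^2)\big)\to0$ by the Lipschitz property of $\mu_Y$; the first-order control term is handled similarly using boundedness of $x_\kappa,x'_\kappa$, that $\mu$ is bounded Lipschitz, and that $\alpha$ lies in the simplex. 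For the second-order term I would factor $\Sigma(\alpha,x,y)=R(\alpha,x,y)R(\alpha,x,y)^\top$ with $R(\alpha,x,y)=\big((\alpha x)^\top\sigma(y);\ \sigma_Y(y)\big)$ of size $(1+m)\times d$, so that the matrix inequality gives, column by column, $\tr[\Sigma(\alpha,x_\kappa,y_\kappa)X_\kappa]-\tr[\Sigma(\alpha,x'_\kappa,y'_\kappa)Y_\kappa]\le 3\kappa\|R(\alpha,x_\kappa,y_\kappa)-R(\alpha,x'_\kappa,y'_\kappa)\|_F^2\le C\kappa(|x_\kappa-x'_\kappa|^2+|y_\kappa-y'_\kappa|^2)\to0$, using boundedness and Lipschitz continuity of $\sigma,\sigma_Y$ together with the boundedness of the maximizing points. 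Hence $H(x_\kappa,y_\kappa,p_\kappa,X_\kappa)-H(x'_\kappa,y'_\kappa,p_\kappa,Y_\kappa)\to0$; since $\beta\ge0$ and $\tilde u(t_\kappa,x_\kappa,y_\kappa)-v_{K+1}(t_\kappa,x'_\kappa,y'_\kappa)\to\sup(\tilde u-v_{K+1})>0$, passing to the limit gives $\epsilon/2\le0$, a contradiction, so $m_0\le0$ and $u_{K+1}\le v_{K+1}$. Besides this Hamiltonian-difference estimate, the only other somewhat delicate point is verifying that $\tilde u$ is a genuine strict subsolution despite the geometric-Brownian-motion-type coefficients in $x_{K+1}$ and the linear-growth coefficients in $y$, which is exactly why the logarithmic penalties (rather than, say, quadratic ones) are chosen.
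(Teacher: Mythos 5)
Your proof is correct, and it takes a genuinely different route from the paper. The paper penalizes \emph{symmetrically} with $\varepsilon l(s)+\varepsilon l(s')$, where $l(t,x,y)=-c_1 e^{\gamma(T-t)}(1+a_{K+1}x)^q-c_2 e^{\gamma(T-t)}(1+|y|^2)$ with $q\in(0,1)$ and $\gamma$ large is engineered to be simultaneously a coercive localizer and a \emph{strict classical subsolution} of the HJB operator; the strictness is then harvested in a case split between $\beta>0$ (where the contradiction comes from the zeroth-order term $\beta[v-u-2\varepsilon l]$) and $\beta=0$ (where it comes from the strict subsolution inequality of $l$ at $\hat s$). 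You instead decouple the two roles: the logarithmic terms $\eta\phi(x)+\eta\psi(y)$ supply coercivity with a uniformly $O(\eta)$ perturbation of $H$, while the linear-in-time term $\epsilon(T-t)$ supplies the strictness directly through $\partial_t$, uniformly for all $\beta\ge0$, so no case split is needed. A further minor structural difference is that you keep a common time variable and invoke the parabolic Ishii lemma (CIL Theorem 8.3), whereas the paper doubles the full vector $s=(t,x,y)$ and uses the elliptic Ishii lemma; both are valid, and the matrix bound $3\kappa\begin{pmatrix}I&-I\\-I&I\end{pmatrix}$ you obtain after the usual $\delta=1/\kappa$ choice is the same as the paper's $\Xi_n+\tfrac1n\Xi_n^2$. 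Your estimate $\tr[\Sigma X_\kappa]-\tr[\Sigma'Y_\kappa]\le 3\kappa\|R-R'\|_F^2$ via the factorization $\Sigma=RR^\top$ applied columnwise is the same mechanism the paper uses via $C_n,D_n$. What your route buys is a unified treatment of $\beta\ge 0$ and no need to construct and verify a bespoke strict subsolution; what it gives up is that the paper's single function $l$ is reused unchanged in the more delicate Proposition~\ref{p:comp_btw}, where it must also respect the two-sided gradient constraints, so the paper's construction earns its keep there even if it looks heavier here.
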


We now present the proof of Theorem \ref{thm:viscosity}.
\begin{proof}
	First, using a similar argument as in \citet[Remark 3.2 and 4.2]{bayraktar2015stochastic}, we have for $k=1, \ldots, K+1$:
	\begin{align}
		v_{k, -}(t, x_{k:K+1}, y) & \leq V_{k}(t, x_{k:K+1}, y) \quad \text{ on } \quad [T_{k-1}, T_k] \times \overline{\cS}_k, \\
		V_{k}(t, x_{k:K+1}, y)  & \leq v_{k, +}(t, x_{k:K+1}, y) \quad \text{ on } \quad [T_{k-1}, T_k] \times (\cS_k \cup (\{ 0 \} \times \R^m)).
	\end{align}
	Thanks to Lemma \ref{lem:boundary}, the modified function $ \tilde{v}_{k, +} (t, x_{k:K+1}, y)$ defined in \eqref{eq:modified} satisfies
	\begin{align}
		V_{k}(t, x_{k:K+1}, y)  & \leq \tilde{v}_{k, +}(t, x_{k:K+1}, y) \quad \text{ on } \quad [T_{k-1}, T_k] \times \overline{\cS}_k, \quad k = 1, \ldots, K.
	\end{align}
	
	Moreover, the modified function $\tilde{v}_+ = (\tilde{v}_{1, +}, \ldots, \tilde{v}_{k, +}, \ldots, \tilde{v}_{K, +}, v_{K+1, +})$ is a viscosity subsolution of the HJB equation under Definition \ref{def:vis_sub}. By Proposition \ref{prop:vissuper}, the lower stochastic envelope $v_-$ is a viscosity supersolution of the HJB equation under Definition \ref{def:vis_super}. To obtain the continuity and uniqueness of the viscosity solution, we apply the comparison principle in Propositions \ref{prop:comp_last}, \ref{prop:Tk}, and \ref{p:comp_btw} backward in time: 
	\begin{enumerate}[label={(\arabic*)}]	
		\item By Propositions \ref{prop:vissub} and \ref{prop:vissuper}, $v_{K+1, +}$ is USC and $v_{K+1, -}$ is LSC. Moreover, 
		\begin{align}
			v_{K+1, +}(t, 0, y) & \leq v_{K+1, -}(t, 0, y), \quad \forall\;  t \in [T_K, T], \; y \in \R^m,\\
			v_{K+1, +}(T, x_{K+1}, y) & \leq v_{K+1, -}(T, x_{K+1}, y), \quad \forall\; (x_{K+1}, y) \in \overline{\cS}_{K+1};
		\end{align}
		Then Proposition \ref{prop:comp_last} implies 
		$$v_{K+1, +} (t, x_{K+1}, y) \leq v_{K+1, -} (t, x_{K+1}, y)  \quad \text{on} \quad [T_{K}, T] \times \overline{\cS}_{K+1}.$$ 
		It leads to
		$$v_{K+1, +}(t, x_{K+1}, y) = v_{K+1, -}(t, x_{K+1}, y) = V_{K+1}(t, x_{K+1}, y) \text{ on } [T_{K}, T] \times \overline{\cS}_{K+1}.$$
		Hence, $V_{K+1}$ is continuous and unique;
		\item  At the deadline $T_{K}$, Propositions \ref{prop:vissub} and \ref{prop:vissuper} show that $\tilde{v}_{K, +}(T_K, \cdot, \cdot)$ is USC and $v_{K, -}(T_K, \cdot, \cdot)$ is LSC. Moreover, $\tilde{v}_{K, +}(T_K, 0, y) \leq v_{K, -}(T_K, 0, y)$. By Lemmas \ref{lem:boundary}, \ref{lem:conti}, and step (1) above, we can apply Proposition \ref{prop:Tk} and obtain 
		$$\tilde{v}_{K, +}(T_K, x_{K:K+1}, y) = v_{K, -}(T_K, x_{K:K+1}, y) = V_K(T_K, x_{K:K+1}, y) \text{ on } \overline{\cS}_K.$$
		Then $V_K(T_K, \cdot, \cdot)$ is continuous in $(x_{K:K+1}, y)$ and unique;
		\item On $[T_{K-1}, T_K)$, Propositions \ref{prop:vissub} and \ref{prop:vissuper} show that $\tilde{v}_{K, +}$ is USC, $v_{K, -}$ is LSC, and
		\begin{align}
			\tilde{v}_{K, +}(t, 0, y) & \leq v_{K, -}(t, 0, y), \quad \forall\;  t \in [T_{K-1}, T_K], \; y \in \R^m.
		\end{align}
		Step (2) above shows that
		\begin{align}
			\tilde{v}_{K, +}(T_K, x_{K:K+1}, y) & \leq v_{K, -}(T_K, x_{K:K+1}, y), \quad \forall\; (x_{K:K+1}, y) \in \overline{\cS}_{K}.
		\end{align} 
		
		Hence, by Propositions \ref{prop:vissub} and \ref{prop:vissuper}, together with Lemmas \ref{lem:boundary} and \ref{lem:conti}, we can apply Proposition \ref{p:comp_btw} and show that
		$$\tilde{v}_{K, +}(t, x_{K:K+1}, y) = v_{K, -}(t, x_{K:K+1}, y) = V_K(t, x_{K:K+1}, y) \quad \text{ on } \quad [T_{K-1}, T_K] \times \overline{\cS}_{K}. $$
		Then $V_K$ is continuous in $(t, x_{K:K+1}, y)$ and unique;
		\item We repeat the procedures from steps (2) and (3) down to the first interval $[0, T_1]$. In summary, for each $k = 1, \ldots, K+1$, $V_k(t, x_{k:K+1}, y)$ is continuous and bounded on $[T_{k-1}, T_k] \times \overline{\cS}_k$.
	\end{enumerate}	
\end{proof}

{\color{black} To conclude the technical analysis, we highlight several methodological features that may be useful for other singular control problems with state constraints. First, the stochastic Perron's method is modified to work on a constrained state space, following the state-constraint perspective of \citet{soner1986optimal}. This modification is needed because some transfers are inadmissible at the boundary of the solvency region. Second, the proof handles goal deadlines at which the value function changes from \(V_k\) to \(V_{k+1}\), and the boundary condition must incorporate both the terminal goal penalty and the possibility of an instantaneous transfer. Third, the singular control structure leads to bilateral gradient constraints, reflecting the possibility of transferring wealth in both directions between a goal account and the fundamental account. Finally, the comparison argument uses a perturbed doubling-variable technique, inspired by \citet{hynd2012eigenvalue}, to handle these bilateral gradient constraints. These features suggest that the approach may be applicable to other multidimensional singular control problems in which state constraints, switching or deadline conditions, and bilateral intervention constraints appear together.}

{\color{black} 
	
\section{Conclusion}\label{sec:con}
This paper studies a continuous-time goal-based portfolio selection problem with mental accounting. The investor maintains separate portfolios for distinct goals, while transfers across accounts are allowed subject to mental costs. This formulation captures partial segmentation across goals and provides a tractable framework for studying how investors allocate wealth when different objectives have different deadlines, target levels, and priorities.

On the theoretical side, we develop a modified stochastic Perron's method for an HJB system with state-dimension reduction, state constraints, and bilateral gradient constraints. The resulting viscosity characterization identifies the value function and provides the basis for the numerical computation of portfolio policies, transfer regions, and free boundaries.

Several limitations and open questions remain. First, the empirical interpretation and calibration of mental cost parameters represent an important practical challenge, especially because such costs may depend on institutional design, market conditions, and investor experience. Second, the present model assumes deterministic deadlines and target amounts, whereas many real-world goals have stochastic or flexible horizons. Extending the analysis to random deadlines would introduce additional randomness into the terminal conditions of the HJB system and is a natural direction for future research. Third, the current model focuses on internal frictions generated by mental accounting. Incorporating external market frictions, such as transaction costs, taxes, or liquidity constraints, together with internal mental costs may generate richer economic behavior and more intricate boundary structures. Finally, establishing a rigorous convergence theory for the numerical scheme remains an important open problem.}

%\section*{Acknowledgment}

\bibliographystyle{apalike}
\bibliography{ref.bib}

@article{choi2009mental,
	title={Mental accounting in portfolio choice: Evidence from a flypaper effect},
	author={Choi, James J and Laibson, David and Madrian, Brigitte C},
	journal={American Economic Review},
	volume={99},
	number={5},
	pages={2085--2095},
	year={2009},
	publisher={American Economic Association}
}

@article{soner1986optimal,
	title={Optimal control with state-space constraint {I}},
	author={Soner, Halil Mete},
	journal={SIAM Journal on Control and Optimization},
	volume={24},
	number={3},
	pages={552--561},
	year={1986},
	publisher={SIAM}
}

@article{tahar2010merton,
	title={Merton problem with taxes: Characterization, computation, and approximation},
	author={Ben Tahar, Imen and Soner, H Mete and Touzi, Nizar},
	journal={SIAM Journal on Financial Mathematics},
	volume={1},
	number={1},
	pages={366--395},
	year={2010},
	publisher={SIAM}
}

@book{fleming2006,
	title={Controlled Markov Processes and Viscosity Solutions},
	author={Fleming, Wendell H and Soner, Halil Mete},
	volume={25},
	year={2006},
	publisher={Springer Science \& Business Media}
}

@article{browne1997survival,
	title={Survival and growth with a liability: Optimal portfolio strategies in continuous time},
	author={Browne, Sid},
	journal={Mathematics of Operations Research},
	volume={22},
	number={2},
	pages={468--493},
	year={1997},
	publisher={INFORMS}
}

@article{cvitanic2000SICON,
	title={Minimizing expected loss of hedging in incomplete and constrained markets},
	author={Cvitani{\'c}, Jak{\v{s}}a},
	journal={SIAM Journal on Control and Optimization},
	volume={38},
	number={4},
	pages={1050--1066},
	year={2000},
	publisher={SIAM}
}

@article{pham2002shortfall,
	title={Minimizing shortfall risk and applications to finance and insurance problems},
	author={Pham, Huy{\^e}n},
	journal={The Annals of Applied Probability},
	volume={12},
	number={1},
	pages={143--172},
	year={2002},
	publisher={Institute of Mathematical Statistics}
}

@article{browne1999deadline,
	title={Reaching goals by a deadline: Digital options and continuous-time active portfolio management},
	author={Browne, Sid},
	journal={Advances in Applied Probability},
	volume={31},
	number={2},
	pages={551--577},
	year={1999},
	publisher={Cambridge University Press}
}

@article{bayraktar2016bequest,
	title={Optimally investing to reach a bequest goal},
	author={Bayraktar, Erhan and Young, Virginia R},
	journal={Insurance: Mathematics and Economics},
	volume={70},
	pages={1--10},
	year={2016},
	publisher={Elsevier}
}

@article{bayraktar2007minwealth,
	title={Correspondence between lifetime minimum wealth and utility of consumption},
	author={Bayraktar, Erhan and Young, Virginia R},
	journal={Finance and Stochastics},
	volume={11},
	pages={213--236},
	year={2007},
	publisher={Springer}
}

@article{bayraktar2009shortfall,
	title={Minimizing the lifetime shortfall or shortfall at death},
	author={Bayraktar, Erhan and Young, Virginia R},
	journal={Insurance: Mathematics and Economics},
	volume={44},
	number={3},
	pages={447--458},
	year={2009},
	publisher={Elsevier}
}

@article{bayraktar2022vanish,
	title={Convergence of optimal investment problems in the vanishing fixed cost limit},
	author={Bayraktar, Erhan and Belak, Christoph and Christensen, Sören and Seifried, Frank T},
	journal={SIAM Journal on Control and Optimization},
	volume={60},
	number={5},
	pages={2712--2736},
	year={2022},
	publisher={SIAM}
}

@article{bayraktar2016target,
	title={Stochastic {Perron} for stochastic target problems},
	author={Bayraktar, Erhan and Li, Jiaqi},
	journal={Journal of Optimization Theory and Applications},
	volume={170},
	pages={1026--1054},
	year={2016},
	publisher={Springer}
}

@article{bayraktar2016targetgame,
	title={Stochastic {Perron} for stochastic target games},
	author={Bayraktar, Erhan and Li, Jiaqi},
	journal={The Annals of Applied Probability},
	pages={1082--1110},
	year={2016}
}

@phdthesis{azimzadeh2017,
	title={Impulse Control in Finance: Numerical Methods and Viscosity Solutions},
	author={Azimzadeh, Parsiad},
	year={2017},
	school={University of Waterloo}
}

@article{belak2019utility,
	title={Utility maximisation in a factor model with constant and proportional transaction costs},
	author={Belak, Christoph and Christensen, S{\"o}ren},
	journal={Finance and Stochastics},
	volume={23},
	pages={29--96},
	year={2019},
	publisher={Springer}
}

@article{davis1990portfolio,
	title={Portfolio selection with transaction costs},
	author={Davis, Mark HA and Norman, Andrew R},
	journal={Mathematics of Operations Research},
	volume={15},
	number={4},
	pages={676--713},
	year={1990},
	publisher={INFORMS}
}

@article{sirbu2014stochastic,
	title={Stochastic {Perron's} method and elementary strategies for zero-sum differential games},
	author={Sirbu, Mihai},
	journal={SIAM Journal on Control and Optimization},
	volume={52},
	number={3},
	pages={1693--1711},
	year={2014},
	publisher={SIAM}
}

@article{bayraktar2016robust,
	title={Robust feedback switching control: Dynamic programming and viscosity solutions},
	author={Bayraktar, Erhan and Cosso, Andrea and Pham, Huyen},
	journal={SIAM Journal on Control and Optimization},
	volume={54},
	number={5},
	pages={2594--2628},
	year={2016},
	publisher={SIAM}
}

@article{zariphopoulou1994consumption,
	title={Consumption-investment models with constraints},
	author={Zariphopoulou, Thaleia},
	journal={SIAM Journal on Control and Optimization},
	volume={32},
	number={1},
	pages={59--85},
	year={1994},
	publisher={SIAM}
}

@article{belak2017impulse,
	title={A general verification result for stochastic impulse control problems},
	author={Belak, Christoph and Christensen, Soren and Seifried, Frank Thomas},
	journal={SIAM Journal on Control and Optimization},
	volume={55},
	number={2},
	pages={627--649},
	year={2017},
	publisher={SIAM}
}

@article{gargano2024goal,
	title={Goal setting and saving in the {FinTech} era},
	author={Gargano, Antonio and Rossi, Alberto G},
	journal={The Journal of Finance},
	volume={79},
	number={3},
	pages={1931--1976},
	year={2024},
	publisher={Wiley Online Library}
}

@article{magill1976portfolio,
	title={Portfolio selection with transactions costs},
	author={Magill, Michael JP and Constantinides, George M},
	journal={Journal of Economic Theory},
	volume={13},
	number={2},
	pages={245--263},
	year={1976},
	publisher={Academic Press}
}

@article{davis1993european,
	title={European option pricing with transaction costs},
	author={Davis, Mark HA and Panas, Vassilios G and Zariphopoulou, Thaleia},
	journal={SIAM Journal on Control and Optimization},
	volume={31},
	number={2},
	pages={470--493},
	year={1993},
	publisher={SIAM}
}

@article{belak2022optimal,
	title={Optimal investment for retail investors},
	author={Belak, Christoph and Mich, Lukas and Seifried, Frank T},
	journal={Mathematical Finance},
	volume={32},
	number={2},
	pages={555--594},
	year={2022},
	publisher={Wiley Online Library}
}

@article{shreve1994optimal,
	title={Optimal investment and consumption with transaction costs},
	author={Shreve, Steven E and Soner, H Mete},
	journal={The Annals of Applied Probability},
	pages={609--692},
	year={1994},
	publisher={JSTOR}
}

@article{deelstra2001dual,
	title={Dual formulation of the utility maximization problem under transaction costs},
	author={Deelstra, Griselda and Pham, Huy{\^e}n and Touzi, Nizar},
	journal={Annals of Applied Probability},
	pages={1353--1383},
	year={2001},
	publisher={JSTOR}
}

@article{rokhlin2014,
	title={Stochastic {Perron's} method for optimal control problems with state constraints},
	author={Rokhlin, Dmitry B.},
	journal={Electronic Communications in Probability},
	volume={19},
	year={2014},
	publisher={Institute of Mathematical Statistics}
}

@book{pham2009book,
	title={Continuous-time Stochastic Control and Optimization with Financial Applications},
	author={Pham, Huy{\^e}n},
	volume={61},
	year={2009},
	publisher={Springer Science \& Business Media}
}

@article{bayraktar2012linear,
	title={Stochastic {Perron's} method and verification without smoothness using viscosity comparison: the linear case},
	author={Bayraktar, Erhan and Sirbu, Mihai},
	journal={Proceedings of the American Mathematical Society},
	volume={140},
	number={10},
	pages={3645--3654},
	year={2012}
}

@article{bayraktar2014Dynkin,
	title={Stochastic {Perron's} method and verification without smoothness using viscosity comparison: Obstacle problems and {Dynkin} games},
	author={Bayraktar, Erhan and Sirbu, Mihai},
	journal={Proceedings of the American Mathematical Society},
	volume={142},
	number={4},
	pages={1399--1412},
	year={2014}
}

@article{hynd2012eigenvalue,
	title={The eigenvalue problem of singular ergodic control},
	author={Hynd, Ryan},
	journal={Communications on Pure and Applied Mathematics},
	volume={65},
	number={5},
	pages={649--682},
	year={2012},
	publisher={Wiley Online Library}
}

@article{das2010portfolio,
	title={Portfolio optimization with mental accounts},
	author={Das, Sanjiv and Markowitz, Harry and Scheid, Jonathan and Statman, Meir},
	journal={Journal of Financial and Quantitative Analysis},
	volume={45},
	number={2},
	pages={311--334},
	year={2010},
	publisher={Cambridge University Press}
}

@article{crandall1992user,
	title={User's guide to viscosity solutions of second order partial differential equations},
	author={Crandall, Michael G and Ishii, Hitoshi and Lions, Pierre-Louis},
	journal={Bulletin of the American Mathematical Society},
	volume={27},
	number={1},
	pages={1--67},
	year={1992}
}

@article{bentahar2007,
	title={The dynamic programming equation for the problem of optimal investment under capital gains taxes},
	author={Ben Tahar, Imen and Soner, H Mete and Touzi, Nizar},
	journal={SIAM Journal on Control and Optimization},
	volume={46},
	number={5},
	pages={1779--1801},
	year={2007},
	publisher={SIAM}
}

@article{barberis2001mental,
	title={Mental accounting, loss aversion, and individual stock returns},
	author={Barberis, Nicholas and Huang, Ming},
	journal={Journal of Finance},
	volume={56},
	number={4},
	pages={1247--1292},
	year={2001},
	publisher={Wiley Online Library}
}

@article{thaler1985mental,
	title={Mental accounting and consumer choice},
	author={Thaler, Richard},
	journal={Marketing Science},
	volume={4},
	number={3},
	pages={199--214},
	year={1985},
	publisher={Informs}
}

@article{capponi2024,
  title={A continuous time framework for sequential goal-based wealth management},
  author={Capponi, Agostino and Zhang, Yuchong},
  journal={Management Science},
  volume={70},
  number={11},
  pages={7664--7691},
  year={2024},
  publisher={INFORMS}
}

@article{das2022dynamic,
	title={Dynamic optimization for multi-goals wealth management},
	author={Das, Sanjiv R and Ostrov, Daniel and Radhakrishnan, Anand and Srivastav, Deep},
	journal={Journal of Banking \& Finance},
	volume={140},
	pages={106192},
	year={2022},
	publisher={Elsevier}
}

@article{bayraktar2013stochastic,
	title={Stochastic {Perron's} method for {Hamilton--Jacobi--Bellman} equations},
	author={Bayraktar, Erhan and Sirbu, Mihai},
	journal={SIAM Journal on Control and Optimization},
	volume={51},
	number={6},
	pages={4274--4294},
	year={2013},
	publisher={SIAM}
}

@article{bayraktar2015stochastic,
	title={Stochastic {Perron's} method for the probability of lifetime ruin problem under transaction costs},
	author={Bayraktar, Erhan and Zhang, Yuchong},
	journal={SIAM Journal on Control and Optimization},
	volume={53},
	number={1},
	pages={91--113},
	year={2015},
	publisher={SIAM}
}

@article{cvitanic1999dynamic,
	title={On dynamic measures of risk},
	author={Cvitani{\'c}, Jak{\v{s}}a and Karatzas, Ioannis},
	journal={Finance and Stochastics},
	volume={3},
	number={4},
	pages={451--482},
	year={1999},
	publisher={Springer}
}

\appendix
\section{Proofs of results} 

\subsection{Results of the stochastic supersolution}\label{sec:proof_stosuper}

\begin{proof}[Proof of Proposition \ref{prop:vissub}]
	{\bf Step 1}. The boundary/terminal conditions in Definition \ref{def:vis_sub} (4): {\color{black} Recall the example of a stochastic supersolution in \eqref{ex:sto_super}.} Since $v_+$ is the infimum of stochastic supersolutions, $v_+$ satisfies the boundary condition \eqref{vissub_bd0} at $x_{k:K+1} = 0$. By a modification of the proof in \citet[Theorem 3.1]{bayraktar2013stochastic}, it follows that $v_{K+1, +}$ in $v_+$ satisfies the terminal condition \eqref{visub_T} at $T$.
	
	{\bf Step 2}. The viscosity subsolution property in Definition \ref{def:vis_sub} (3), when $t \in [T_K, T)$: Similarly, it follows from the proof of \citet[Theorem 3.1]{bayraktar2013stochastic}.
	
	{\bf Step 3}. The viscosity subsolution property in Definition \ref{def:vis_sub} (2), when $t=T_K$: Let $(\bar{x}_{K:K+1}, \bar{y}) \in \cS_K = (0, \infty)^2 \times \R^m$ and $\varphi(x_{K:K+1}, y) \in C^2(\cS_K)$ be a test function such that $v_{K, +}(T_K, \cdot, \cdot) - \varphi(\cdot, \cdot)$ attains a strict local maximum of zero at $(\bar{x}_{K:K+1}, \bar{y})$. We want to prove
	\begin{equation}
		\begin{aligned}
			\max \Big\{ & v_{K, +}(T_K, \bar{x}_{K:K+1}, \bar{y}) - w_K (G_K - \bar{x}_K)^+ - v_{K+1, +}(T_K, \bar{x}_{K+1}, \bar{y}), \\
			& - \lambda_{K} + \partial_{K+1} \varphi(\bar{x}_{K:K+1}, \bar{y}) - \partial_{K} \varphi(\bar{x}_{K:K+1}, \bar{y}), \\
			&- \theta_K - \partial_{K+1} \varphi(\bar{x}_{K:K+1}, \bar{y}) + \partial_{K} \varphi (\bar{x}_{K:K+1}, \bar{y}) \Big\} \leq 0.
		\end{aligned}
	\end{equation}
	Assume on the contrary that the left-hand side is strictly positive. There are three cases to consider: 
	
	{\bf Step 3: Case (1)}. $ v_{K, +}(T_K, \bar{x}_{K:K+1}, \bar{y}) - w_K (G_K - \bar{x}_K)^+ - v_{K+1, +}(T_K, \bar{x}_{K+1}, \bar{y}) > 0$. 
	
	As $v_{K+1, +}$ is USC, there exists a small $\varepsilon > 0$ such that
	\begin{equation*}
		v_{K, +}(T_K, \bar{x}_{K:K+1}, \bar{y}) \geq w_K (G_K - x_K)^+ + v_{K+1, +}(T_K, x_{K+1}, y) + \varepsilon
	\end{equation*}
	on the compact set $\overline{B(\bar{x}_{K:K+1}, \bar{y}, \varepsilon)}$, the closure of 
	\begin{equation}\label{Beps}
		B(\bar{x}_{K:K+1}, \bar{y}, \varepsilon) : = \{(x_{K:K+1}, y): |(x_{K:K+1}, y) - (\bar{x}_{K:K+1}, \bar{y})| < \varepsilon \}.
	\end{equation}
	Since $(\bar{x}_{K:K+1}, \bar{y})$ is in the interior $\cS_{K}$ and $\varepsilon$ is small enough, we can choose $\overline{B(\bar{x}_{K:K+1}, \bar{y},\varepsilon)} \subset \cS_K$. By \citet[Proposition 4.1]{bayraktar2012linear}, we obtain a nonincreasing sequence of stochastic supersolutions $v^n_{K+1} \searrow v_{K+1, +}$. Moreover, every $v^n_{K+1}$ has a corresponding stochastic supersolution $v^n = (v^n_1, \ldots, v^n_K, v^n_{K+1})$. Indeed, we can specify the function class $\mathcal{G}$ in \citet[Proposition 4.1]{bayraktar2012linear} to be functions $v_{K+1}$ that can form a stochastic supersolution $v = (v_1, \ldots, v_K, v_{K+1})$ with some $(v_1, \ldots, v_{K})$, which can be different for different $v_{K+1}$. By \citet[Lemma 2.4]{bayraktar2014Dynkin}, for the given compact set $\overline{B(\bar{x}_{K:K+1}, \bar{y},\varepsilon)}$ defined above, there exists a large enough $n_1$ such that $v^{n_1}_{K+1}$ is close enough to $v_{K+1, +}$ and yields 
	\begin{equation}\label{n1}
		v_{K, +}(T_K, \bar{x}_{K:K+1}, \bar{y}) \geq w_K (G_K - x_K)^+ + v^{n_1}_{K+1}(T_K, x_{K+1}, y) + \frac{\varepsilon}{2}
	\end{equation}  
	on $\overline{B(\bar{x}_{K:K+1}, \bar{y},\varepsilon)}$. Besides, $v^{n_1}_{K+1}$ corresponds to a stochastic supersolution $v^{n_1} = (v^{n_1}_1, \ldots$, $ v^{n_1}_K, v^{n_1}_{K+1})$. 
	
	Define sets
	\begin{align*}
		D(T_K, \bar{x}_{K:K+1}, \bar{y}, \varepsilon) & := (T_K - \varepsilon, T_K] \times B(\bar{x}_{K:K+1}, \bar{y}, \varepsilon), \\
		E(\varepsilon) & := \overline{D(T_K, \bar{x}_{K:K+1}, \bar{y},\varepsilon)} \backslash D(T_K, \bar{x}_{K:K+1}, \bar{y},\varepsilon/2). 
	\end{align*}

	Since $v_{K, +}$ is USC and $E(\varepsilon)$ is compact, $v_{K, +}$ is bounded from above on $E(\varepsilon)$. For a small enough $\eta > 0$, we obtain
	\begin{equation*}
		\sup_{(t, x_{K:K+1}, y) \in E(\varepsilon)} v_{K, +}(t, x_{K:K+1}, y) - v_{K, +} (T_K, \bar{x}_{K:K+1}, \bar{y}) < \frac{\varepsilon^2}{4 \eta} - \varepsilon. 
	\end{equation*}
	As this inequality is strict, we can find another $v^{n_2}_K$, which corresponds to a stochastic supersolution $v^{n_2} = (v^{n_2}_1, \ldots, v^{n_2}_K, v^{n_2}_{K+1})$, and
	\begin{equation}\label{n2}
		\sup_{(t, x_{K:K+1}, y) \in E(\varepsilon)} v^{n_2}_{K}(t, x_{K:K+1}, y) - v_{K, +} (T_K, \bar{x}_{K:K+1}, \bar{y}) < \frac{\varepsilon^2}{4 \eta} - \varepsilon. 
	\end{equation}
	Finally, we take
	$$v^n := (v^{n}_1, \ldots, v^{n}_{K+1}) := (v^{n_1}_1 \wedge v^{n_2}_1, \ldots, v^{n_1}_{K+1} \wedge v^{n_2}_{K+1}),$$
	which is a stochastic supersolution by Lemma \ref{lem:two_stochsup}. The inequalities \eqref{n1} and \eqref{n2} also hold for $v^n_{K+1}$ and $v^n_K$, respectively.
	
	We introduce an operator
	\begin{equation}
		\begin{aligned}
			& [L^{\alpha_{k:K+1}}V_k] (x_{k:K+1}, y) \\
			& \quad := \sum^{K+1}_{i=k} r x_{i} \partial_i V_k + \mu_Y(y)^\top \partial_y V_k  \\
			& \qquad +  \sum^{K+1}_{i=k}  (\mu(y) - r \one )^\top \alpha_i x_i \partial_i V_k + \frac{1}{2} \tr\left[\Sigma(\alpha_{k:K+1}, x_{k:K+1}, y) \partial^2 V_k  \right] ,
		\end{aligned}
	\end{equation}
	such that the Hamiltonian can be expressed as
	\begin{equation}
		\begin{aligned}
			& H(x_{k:K+1}, y, \partial V_k, \partial^2 V_k) = \inf_{\alpha_{k:K+1} \in \cA^{K-k+2}} [L^{\alpha_{k:K+1}}V_k] (x_{k:K+1}, y).
		\end{aligned}
	\end{equation}
	
	For $p > 0$, define
	\begin{equation*}
		\psi^{\varepsilon, \eta, p}(t, x_{K:K+1}, y) := v_{K,+} (T_K, \bar{x}_{K:K+1}, \bar{y}) + \frac{|(x_{K:K+1}, y) -  (\bar{x}_{K:K+1}, \bar{y})|^2}{\eta} + p(T_K - t).
	\end{equation*}
	With a large enough $p$,
	\begin{equation*}
		\beta \psi^{\varepsilon, \eta, p} - \partial_t \psi^{\varepsilon, \eta, p} - \inf_{\alpha_{K:K+1} \in \cA^{2}} L^{\alpha_{K:K+1}} \psi^{\varepsilon, \eta, p} > 0 \quad \text{ on } \overline{D(T_K, \bar{x}_{K:K+1}, \bar{y},\varepsilon)}.
	\end{equation*}
	Moreover, there exists a constant control $\bar{\alpha}_{K:K+1}$ such that
	\begin{equation}
		\beta \psi^{\varepsilon, \eta, p} - \partial_t \psi^{\varepsilon, \eta, p} - L^{\bar{\alpha}_{K:K+1}} \psi^{\varepsilon, \eta, p} > 0 \quad \text{ on } \overline{D(T_K, \bar{x}_{K:K+1}, \bar{y},\varepsilon)}.
	\end{equation}
	
	By the definition of set $E(\varepsilon)$, the inequality \eqref{n2}, and a large enough $p$,
	\begin{align}
		\psi^{\varepsilon, \eta, p}(t, x_{K:K+1}, y) & \geq v_{K, +} (T_K, \bar{x}_{K:K+1}, \bar{y}) + \frac{\varepsilon^2}{4 \eta} \nonumber \\
		& > \varepsilon + \sup_{(t, x_{K:K+1}, y) \in E(\varepsilon)} v^{n}_{K}(t, x_{K:K+1}, y) \nonumber \\
		& \geq \varepsilon + v^n_K (t, x_{K:K+1}, y) \quad \text{ on } E(\varepsilon). \label{Eineq}
	\end{align}
	
	Besides, for any $t \leq T_K$ and $(x_{K:K+1}, y) \in \overline{B(\bar{x}_{K:K+1}, \bar{y},\varepsilon)}$, \eqref{n1} leads to
	\begin{align}
		\psi^{\varepsilon, \eta, p}(t, x_{K:K+1}, y) & \geq v_{K, +} (T_K, \bar{x}_{K:K+1}, \bar{y}) \nonumber \\
		& \geq w_K (G_K - x_K)^+ + v^{n}_{K+1}(T_K, x_{K+1}, y) + \frac{\varepsilon}{2}. \label{psi_bd}
	\end{align}
	
	Let $0 <\delta < \frac{\varepsilon}{2}$ be small enough such that
	\begin{equation}
		\beta (\psi^{\varepsilon, \eta, p} - \delta) - \partial_t (\psi^{\varepsilon, \eta, p} - \delta) - L^{\bar{\alpha}_{K:K+1}} (\psi^{\varepsilon, \eta, p} - \delta) > 0 \quad \text{ on } \overline{D(T_K, \bar{x}_{K:K+1}, \bar{y},\varepsilon)}.
	\end{equation}
	Set
	\begin{equation}
		v^{\varepsilon, \eta, p, \delta}_{K} (t, x_{K:K+1}, y) := \left\{ 
		\begin{array}{ c l }
			v^n_K (t, x_{K:K+1}, y) \wedge (\psi^{\varepsilon, \eta, p}(t, x_{K:K+1}, y) - \delta) & \text{on } \overline{D(T_K, \bar{x}_{K:K+1}, \bar{y},\varepsilon)}, \\
			v^n_K (t, x_{K:K+1}, y), & \text{otherwise}.
		\end{array}
		\right.
	\end{equation}
	
	Next, we show that $(v^n_1, \ldots, v^{\varepsilon, \eta, p, \delta}_{K}, v^n_{K+1})$ is a stochastic supersolution. Then it leads to the following contradiction:
	$$v^{\varepsilon, \eta, p, \delta}_{K} (T_K, \bar{x}_{K:K+1}, \bar{y}) = v_{K, +} (T_K, \bar{x}_{K:K+1}, \bar{y})  - \delta < v_{K, +} (T_K, \bar{x}_{K:K+1}, \bar{y}).$$
	
	Clearly, $(v^n_1, \ldots, v^{\varepsilon, \eta, p, \delta}_{K}, v^n_{K+1})$ satisfies the boundary/terminal conditions in Definition \ref{def:sto_super} (2). For the supermartingale property in Definition \ref{def:sto_super} (1), we first verify it when the random initial condition $(\tau, \xi_{K:K+1}, \xi_y)$ satisfies $\tau \in [T_{K-1}, T_K]$.
	
	Define $\psi^{p, \delta} := \psi^{\varepsilon, \eta, p} - \delta$ and the event
	\begin{equation*}
		A := \{ (\tau, \xi_{K:K+1}, \xi_y) \in D(T_K, \bar{x}_{K:K+1}, \bar{y},\varepsilon/2) \} \cap \{\psi^{p, \delta}(\tau, \xi_{K:K+1}, \xi_y) < v^n_K(\tau, \xi_{K:K+1}, \xi_y)\}.
	\end{equation*}
	Then $A \in \cF_\tau$. 
	
	Let $U^0 := (\alpha^0_{K:K+1}, L^0_K, M^0_K)$ be a suitable control for $v^n_{K}$ with the random initial condition $(\tau, \xi_{K:K+1}, \xi_y)$. Define a new control $U^1 := (\alpha^1_{K:K+1}, L^1_K, M^1_K)$ by
	\begin{equation}\label{U1}
		\begin{aligned}
			\alpha^1_{K:K+1}(t) & := \bar{\alpha}_{K:K+1} \one_{A \cap \{ \tau \leq t\}} + \alpha^0_{K:K+1}(t) \one_{A^c \cap \{ \tau \leq t\}}, \\
			(L^1_K(t), M^1_K(t)) & := \one_{A^c \cap \{ \tau \leq t\}} (L^0_K(t) - L^0_K(\tau-), M^0_K(t) - M^0_K(\tau-)).
		\end{aligned}
	\end{equation}
	By Lemma \ref{lem:glue}, we have $U^1 \in \cU (\tau, \xi_{K:K+1}, \xi_y)$. Denote $\{ (X_{K:K+1}, Y)(t; \tau, \xi_{K:K+1}, \xi_y, U^1) \}_{t \in [\tau, T]}$ as the solution of the state process with the random initial condition $(\tau, \xi_{K:K+1}, \xi_y)$ under the control $U^1$. Then $\p((X_{K:K+1}, Y)(t; \tau, \xi_{K:K+1}, \xi_y, U^1) \in \cS_{K}, \; \tau \leq t \leq T ) = 1$.

	Moreover, if $A^c$ happens, then $v^{\varepsilon, \eta, p, \delta}_{K} (\tau, \xi_{K:K+1}, \xi_y) = v^n_{K} (\tau, \xi_{K:K+1}, \xi_y)$. Hence, under $A^c$, $U^1$ follows $U^0$, which is a suitable control for $v^n_K$ starting from time $\tau$.
	
	Let 
	\begin{align*}
		\tau^1 := \inf \{ t \in [\tau, T_K] \, | \,  (t,  (X_{K:K+1}, Y)(t; \tau, \xi_{K:K+1}, \xi_y, U^1)) \notin  D(T_K, \bar{x}_{K:K+1}, \bar{y}, \varepsilon/2) \} {\color{black} \wedge T_K}
	\end{align*}
	be the exit time of $D(T_K, \bar{x}_{K:K+1}, \bar{y},\varepsilon/2)$ and 
	\begin{align*}
		\xi^1 := (\xi^1_{K:K+1}, \xi^1_y) :=  ((X_{K:K+1}, Y)(\tau^1; \tau, \xi_{K:K+1}, \xi_y, U^1))
	\end{align*}
	be the exit position, which is $\cF_{\tau^1}$-measurable.  
	
	Since $\tau^1 \leq T_K$ and $\xi^1$ is $\cF_{\tau^1}$-measurable, then $(T_K, \xi^1_{K+1}, \xi^1_y)$ is a random initial condition. As $\p(\xi^1 \in \cS_{K}) = 1$, there exists a suitable control $U^2 := \{ \alpha^2_{K+1}(t) \}_{t \in [T_K, T]}$ for $v^n_{K+1}$ with the random initial condition $(T_K, \xi^1_{K+1}, \xi^1_y)$. $(\tau^1, \xi^1)$ is also a random initial condition. As $\p(\xi^1 \in \cS_{K}) = 1$, there is a suitable control $U^3 := ( \alpha^3_{K:K+1}, L^3_K, M^3_K)$ for $v^n_{K}$ with the random initial condition $(\tau^1, \xi^1)$. Finally, define a control $U := (\alpha_{K:K+1}, L_K, M_K)$ by
	\begin{equation}\label{eq:step3-1-U}
		\begin{aligned}
			& (\alpha_{K:K+1}(t), L_K(t), M_K(t)) \\
			& \quad  = \one_{\{ \tau \leq t < \tau^1 \}} U^1(t) + \one_{ \{ \tau^1 = T_K\} \cap \{ \tau^1 \leq t \leq T\} } (U^2(t), L^1_K(\tau^1), M^1_K(\tau^1)) \\
			& \qquad + \one_{\{ \tau^1 < T_K\}  \cap \{ \tau^1 \leq t \leq T\} } (\alpha^3_{K:K+1}(t), L^3_K(t) - L^3_K(\tau^1-) + L^1_K(\tau^1), \\ 
			& \hspace{4.5cm} M^3_K(t) - M^3_K(\tau^1-) + M^1_K(\tau^1) ).
		\end{aligned}
	\end{equation}

	By Lemma \ref{lem:glue}, the control $U \in \cU(\tau, \xi_{K:K+1}, \xi_y)$. Moreover, $\p((X_{K:K+1}, Y)(t; \tau, \xi_{K:K+1}, \xi_y, U) \in \cS_{K}, \; \tau \leq t \leq T ) = 1$. We verify that $U$ is suitable for $v^{\varepsilon, \eta, p, \delta}_{K}$ with $(\tau, \xi_{K:K+1}, \xi_y)$.
	
	 Consider a stopping time $\rho \in [\tau, T]$. Applying It\^o's formula to $\psi^{p, \delta}$ from $\tau$ to $\rho \wedge \tau^1$ under the event $A$, we obtain
	\begin{align}
		& \one_A v^{\varepsilon, \eta, p, \delta}_{K} (\tau, \xi_{K:K+1}, \xi_y) \nonumber \\
		& = \one_A \psi^{p, \delta} (\tau, \xi_{K:K+1}, \xi_y) \nonumber \\
		& = \one_A \psi^{p, \delta} (\tau, (X_{K:K+1}, Y)(\tau; \tau, \xi_{K:K+1}, \xi_y, U^1)) \nonumber \\
		 & \geq \E\Big[ \one_{A \cap \{ \rho < \tau^1\}} e^{- \beta(\rho - \tau)}  \psi^{p, \delta} (\rho, (X_{K:K+1}, Y)(\rho; \tau, \xi_{K:K+1}, \xi_y, U^1)) \label{psiAineq1} \\
		& \qquad + \one_{A \cap \{ \rho \geq \tau^1\}}  e^{- \beta(\tau^1 - \tau)}  \psi^{p, \delta} (\tau^1, \xi^1) \Big| \cF_\tau \Big]. \nonumber
	\end{align} 
	Moreover, \eqref{Eineq} and \eqref{psi_bd} lead to
	\begin{align}
		\one_{A \cap \{ \rho \geq \tau^1\}} \psi^{p, \delta} (\tau^1, \xi^1) \geq & \one_{A \cap \{ \rho \geq \tau^1\} \cap \{ \tau^1 < T_K\}} v^n_K(\tau^1, \xi^1) \label{psiAineq2} \\
		& + \one_{A \cap \{ \rho \geq \tau^1\} \cap \{ \tau^1 = T_K\}} \big( w_K(G_K - \xi^1_K)^+ + v^n_{K+1}(T_K, \xi^1_{K+1}, \xi^1_y) \big). \nonumber
	\end{align}
	Combining \eqref{psiAineq1} and \eqref{psiAineq2}, since $v^{\varepsilon, \eta, p, \delta}_{K} \leq \psi^{p, \delta}$ on $\overline{D(T_K, \bar{x}_{K:K+1}, \bar{y},\varepsilon)}$, we obtain
		\begin{align}
		& \one_A v^{\varepsilon, \eta, p, \delta}_{K} (\tau, \xi_{K:K+1}, \xi_y) \nonumber \\
		& \geq \E\Big[ \one_{A \cap \{ \rho < \tau^1\}} e^{- \beta(\rho - \tau)}  v^{\varepsilon, \eta, p, \delta}_{K} (\rho, (X_{K:K+1}, Y)(\rho; \tau, \xi_{K:K+1}, \xi_y, U^1)) \label{psiAineq} \\
		& \qquad + \one_{A \cap \{ \rho \geq \tau^1\} \cap \{ \tau^1 < T_K\}}  e^{- \beta(\tau^1 - \tau)}  v^n_K(\tau^1, \xi^1) \nonumber \\
		& \qquad + \one_{A \cap \{ \rho \geq \tau^1\} \cap \{ \tau^1 = T_K\}}  e^{- \beta(T_K - \tau)}  \big( w_K(G_K - \xi^1_K)^+ + v^n_{K+1}(T_K, \xi^1_{K+1}, \xi^1_y) \big)  \Big| \cF_\tau \Big]. \nonumber
	\end{align} 
	
	Under the event $A^c$, because $U^1$ is a suitable control for $v^n_{K}$ with the random initial condition $(\tau, \xi_{K:K+1}, \xi_y)$, we have
	\begin{align}
		& \one_{A^c} v^{\varepsilon, \eta, p, \delta}_{K} (\tau, \xi_{K:K+1}, \xi_y) = \one_{A^c} v^{n}_{K} (\tau, \xi_{K:K+1}, \xi_y) \nonumber \\
		& \geq \E\Big[ \one_{A^c \cap \{ \rho < \tau^1\}} e^{- \beta(\rho - \tau)}  v^{n}_{K} (\rho, (X_{K:K+1}, Y)(\rho; \tau, \xi_{K:K+1}, \xi_y, U^1))  \label{Acvnineq} \\
		& \qquad + \one_{A^c \cap \{ \rho \geq \tau^1\} \cap \{ \tau^1 < T_K \}} e^{- \beta(\tau^1 - \tau)}  v^{n}_{K} (\tau^1, \xi^1) \nonumber \\
		& \qquad + \one_{A^c \cap \{ \rho \geq \tau^1\} \cap \{ \tau^1 = T_K \}} e^{- \beta(T_K - \tau)}  \big( w_K(G_K - \xi^1_K)^+ + v^n_{K+1}(T_K, \xi^1_{K+1}, \xi^1_y) \big)  \nonumber \\
		& \qquad + \lambda_{K} \int^{\rho \wedge \tau^1}_\tau e^{-\beta (s - \tau)} dL^1_K(s) + \theta_{K} \int^{\rho \wedge \tau^1}_\tau e^{-\beta (s - \tau)} dM^1_K(s) \Big| \cF_\tau \Big]. \nonumber
	\end{align} 
	As $v^n_K \geq v^{\varepsilon, \eta, p, \delta}_{K}$ everywhere and $U = U^1$ when $\rho < \tau^1$,  \eqref{psiAineq} and \eqref{Acvnineq} yield
		\begin{align}
		& v^{\varepsilon, \eta, p, \delta}_{K} (\tau, \xi_{K:K+1}, \xi_y) \nonumber \\
		& \geq \E\Big[ \one_{\{ \rho < \tau^1\}} e^{- \beta(\rho - \tau)}  v^{\varepsilon, \eta, p, \delta}_{K} (\rho, (X_{K:K+1}, Y)(\rho; \tau, \xi_{K:K+1}, \xi_y, U))  \label{ineq:tau1} \\
		& \qquad + \one_{ \{ \rho \geq \tau^1\} \cap \{ \tau^1 < T_K \}} e^{- \beta(\tau^1 - \tau)}  v^{n}_{K} (\tau^1, \xi^1) \nonumber \\
		& \qquad + \one_{\{ \rho \geq \tau^1\} \cap \{ \tau^1 = T_K \}} e^{- \beta(T_K - \tau)}  \big( w_K(G_K - \xi^1_K)^+ + v^n_{K+1}(T_K, \xi^1_{K+1}, \xi^1_y) \big)  \nonumber \\
		& \qquad +\lambda_{K} \int^{\rho \wedge \tau^1}_\tau e^{-\beta (s - \tau)} dL^1_K(s) + \theta_{K} \int^{\rho \wedge \tau^1}_\tau e^{-\beta (s - \tau)} dM^1_K(s) \Big| \cF_\tau \Big]. \nonumber
	\end{align} 
	
	Since $U^2$ is a suitable control for $v^n_{K+1}$ with the random initial condition $(T_K, \xi^1_{K+1}, \xi^1_y)$ and $U^3$ is a suitable control for $v^n_{K}$ with the random initial condition $(\tau^1, \xi^1)$, \eqref{ineq:tau1} and the definition of $U$ yield the desired result:
	\begin{align*}
		v^{\varepsilon, \eta, p, \delta}_{K} (\tau, \xi_{K:K+1}, \xi_y) \geq \E\Big[& \one_{\{ \tau \leq \rho < T_K \}} \Big\{ e^{- \beta(\rho - \tau)}  v^{\varepsilon, \eta, p, \delta}_{K} (\rho, (X_{K:K+1}, Y)(\rho; \tau, \xi_{K:K+1}, \xi_y, U))  \\
		& \quad \qquad \qquad + \lambda_{K} \int^\rho_\tau e^{-\beta (s - \tau)} dL_K(s) + \theta_{K} \int^\rho_\tau e^{-\beta (s - \tau)} dM_K(s) \Big\} \nonumber \\
		& + \one_{\{ T_K \leq \rho \leq T\}} \Big\{ w_K e^{-\beta(T_K - \tau)} (G_K - X_K(T_K; \tau, \xi_{K:K+1}, \xi_y, U))^+ \nonumber \\
		& \hspace{2.5cm} + e^{-\beta(\rho - \tau)}  v^n_{K+1}(\rho, (X_{K+1}, Y)(\rho; \tau, \xi_{K:K+1}, \xi_y, U)) \nonumber \\
		& \hspace{2.5cm} +  \lambda_K \int^{T_K}_{\tau} e^{-\beta(s - \tau)} dL_K(s) \nonumber \\
		& \hspace{2.5cm} + \theta_K \int^{T_K}_{\tau}  e^{-\beta(s - \tau)} dM_K(s) \Big\} \Big| \cF_\tau \Big].
	\end{align*}
	It is straightforward to verify the supermartingale property when $\tau \in [T_{k-1}, T_k]$, $k \neq K$. We omit it here.

	{\bf Step 3: Case (2)}. $- \lambda_{K} + \partial_{K+1} \varphi(\bar{x}_{K:K+1}, \bar{y}) - \partial_{K} \varphi(\bar{x}_{K:K+1}, \bar{y}) > 0$.
	
	Since $\varphi$ is $C^2$, there exists a small closed ball $\overline{B(\bar{x}_{K:K+1}, \bar{y},\varepsilon)} \subset \cS_K$, such that
	\begin{equation*}
		- \lambda_{K} + \partial_{K+1} \varphi(x_{K:K+1}, y) - \partial_{K} \varphi(x_{K:K+1}, y) > 0 \quad \text{ on } \quad \overline{B(\bar{x}_{K:K+1}, \bar{y},\varepsilon)}.
	\end{equation*} 
	Since the maximum at $(\bar{x}_{K:K+1}, \bar{y})$ is strict, we have
	\begin{equation*}
		v_{K, +}(T_K, x_{K:K+1}, y) < \varphi(x_{K:K+1}, y) \quad \text{ on }  \overline{B(\bar{x}_{K:K+1}, \bar{y},\varepsilon)} \backslash \{ (\bar{x}_{K:K+1}, \bar{y}) \}.
	\end{equation*}
	Define
	\begin{equation*}
		\varphi^p(t, x_{K:K+1}, y) := \varphi(x_{K:K+1}, y) + p(T_K - t).
	\end{equation*}
	We introduce a function
	\begin{equation*}
		h(\delta) := \sup_{\substack{ T_K - \delta \leq t \leq T_K, \\ \frac{\varepsilon}{2} \leq | (x_{K:K+1}, y) -  (\bar{x}_{K:K+1}, \bar{y})| \leq \varepsilon }} \Big( v_{K, +}(t, x_{K:K+1}, y) - \varphi(x_{K:K+1}, y) \Big), \quad 0 < \delta < \delta_0,
	\end{equation*}
	with some small $\delta_0 > 0$. Since $v_{K, +} - \varphi$ is USC as a function of $(t, x_{K:K+1}, y)$, then the maximum in $h(\delta)$ is attained. Let a maximizer be $(t^\delta, x^\delta_{K:K+1}, y^\delta)$. By compactness, there exists a subsequence, still denoted as $\delta \searrow 0$, such that 
	\begin{equation*}
		(t^\delta, x^\delta_{K:K+1}, y^\delta) \rightarrow (T_K, x^*_{K:K+1}, y^*) \quad \text{and} \quad \frac{\varepsilon}{2} \leq | (x^*_{K:K+1}, y^*) -  (\bar{x}_{K:K+1}, \bar{y})| \leq \varepsilon.
	\end{equation*}
	Thus,
	\begin{align}
		\limsup_{\delta \searrow 0} h(\delta) & = \limsup_{\delta \searrow 0} \Big( v_{K, +}(t^\delta, x^\delta_{K:K+1}, y^\delta) - \varphi(x^\delta_{K:K+1}, y^\delta) \Big) \nonumber \\
		& \leq v_{K, +} (T_K, x^*_{K:K+1}, y^*) - \varphi(x^*_{K:K+1}, y^*) \label{usc_in}\\
		& \leq \sup_{\frac{\varepsilon}{2} \leq | (x_{K:K+1}, y) -  (\bar{x}_{K:K+1}, \bar{y})| \leq \varepsilon} \Big( v_{K, +}(T_K, x_{K:K+1}, y) - \varphi(x_{K:K+1}, y) \Big) < 0. \label{strict}
	\end{align}
	\eqref{usc_in} follows from the USC property. \eqref{strict} holds since the local maximum at $(\bar{x}_{K:K+1}, \bar{y})$ is assumed to be strict. 
	
	\eqref{strict} implies that there exists $\delta$ small enough such that $h(\delta) < 0$. For this fixed $\delta$, we define $\delta_h := - h(\delta) > 0$. When $T_K - \delta \leq t \leq T_K$ and $\frac{\varepsilon}{2} \leq | (x_{K:K+1}, y) -  (\bar{x}_{K:K+1}, \bar{y})| \leq \varepsilon$, we have
	\begin{equation}\label{region1} 
		v_{K, +}(t, x_{K:K+1}, y) - \varphi(x_{K:K+1}, y) \leq \delta_h < 0.
	\end{equation}
	
	When $T_K - \delta \leq t \leq T_K - \delta/2$, $|(x_{K:K+1}, y) -  (\bar{x}_{K:K+1}, \bar{y})| \leq \varepsilon/2$ and $p$ is large enough, we have
	\begin{align}
		& v_{K, +}(t, x_{K:K+1}, y) - \varphi^p(t, x_{K:K+1}, y) \nonumber \\
		& \quad = v_{K, +}(t, x_{K:K+1}, y) - \varphi(x_{K:K+1}, y) - p(T_K - t) \nonumber \\
		& \quad \leq \sup_{\substack{T_K - \delta \leq t \leq T_K - \delta/2, \\  | (x_{K:K+1}, y) -  (\bar{x}_{K:K+1}, \bar{y})| \leq \frac{\varepsilon}{2} }} \Big( v_{K, +}(t, x_{K:K+1}, y) - \varphi(x_{K:K+1}, y) \Big) - p(T_K - t) \leq - \delta_h. \label{region2} 
	\end{align}
	Therefore, \eqref{region1} and \eqref{region2} indicate that
	\begin{equation}\label{deltah_ineq}
		v_{K, +}(t, x_{K:K+1}, y) - \varphi^p(t, x_{K:K+1}, y) \leq - \delta_h 
	\end{equation} 
	holds on 
	\begin{align*}
		E(\delta, \varepsilon) :=&  \big\{ (t, x_{K:K+1}, y) \big|\, t \in [T_K - \delta, T_K], \, | (x_{K:K+1}, y) -  (\bar{x}_{K:K+1}, \bar{y})| \leq \varepsilon \big\} \\
		& - \big\{ (t, x_{K:K+1}, y) \big|\, t \in (T_K - \delta/2, T_K], \, | (x_{K:K+1}, y) -  (\bar{x}_{K:K+1}, \bar{y})| < \varepsilon/2 \big\}.
	\end{align*}
	
	Moreover, by letting $p$ even larger, we have
	\begin{equation*}
		\beta \varphi^p - \partial_t \varphi^p - H(x_{K:K+1}, y, \partial \varphi^p, \partial^2 \varphi^p) > 0 \quad \text{ on } [T_K - \delta, T_K] \times \overline{B(\bar{x}_{K:K+1}, \bar{y},\varepsilon)}.
	\end{equation*}
	Hence, there exists a constant control $\bar{\alpha}_{K:K+1}$ such that
	\begin{equation}\label{phip}
		\beta \varphi^p - \partial_t \varphi^p - L^{\bar{\alpha}_{K:K+1}} \varphi^p > 0 \quad \text{ on } [T_K - \delta, T_K] \times \overline{B(\bar{x}_{K:K+1}, \bar{y},\varepsilon)}.
	\end{equation}
	Using a Dini argument, \eqref{deltah_ineq} implies that there exists $v^n_K$, which is the $K$-th component of a stochastic supersolution $v^n := (v^n_1, \ldots, v^n_{K+1})$, such that
	\begin{equation}\label{phi_E}
		v^n_K(t, x_{K:K+1}, y) - \varphi^p(t, x_{K:K+1}, y) \leq -\delta_h/2 \quad \text{ on } E(\delta, \varepsilon).
	\end{equation}
	
	Let $0 < \eta < \delta_h/2$ be small enough and define $\varphi^{p, \eta} := \varphi^p - \eta$. \eqref{phip} also holds for $\varphi^{p, \eta}$. Define
	\begin{equation}
		v^{p, \eta}_{K} (t, x_{K:K+1}, y) := \left\{ 
		\begin{array}{ c l }
			v^n_K (t, x_{K:K+1}, y) \wedge \varphi^{p, \eta}(t, x_{K:K+1}, y) & \text{on } [T_K - \delta, T_K] \times \overline{B(\bar{x}_{K:K+1}, \bar{y},\varepsilon)}, \\
			v^n_K (t, x_{K:K+1}, y), & \text{otherwise}.
		\end{array}
		\right.
	\end{equation}
	We use a similar argument as in Case (1) to show $(v^n_1, \ldots, v^{p, \eta}_{K}, v^n_{K+1})$ is a stochastic supersolution. Only the supermartingale property is non-trivial. We first verify it when the random initial condition $(\tau, \xi_{K:K+1}, \xi_y)$ satisfies $\tau \in [T_{K-1}, T_K]$.
	
	Define the event 
	\begin{equation}\label{newA}
		\begin{aligned}
			A := & \{ (\tau, \xi_{K:K+1}, \xi_y) \in (T_K - \delta/2, T_K] \times B(\bar{x}_{K:K+1}, \bar{y},\varepsilon/2) \} \\
			& \cap \{\varphi^{p, \eta}(\tau, \xi_{K:K+1}, \xi_y) < v^n_K(\tau, \xi_{K:K+1}, \xi_y)\}.
		\end{aligned}
	\end{equation}
	Then $A \in \cF_\tau$. Define $U^1 := (\alpha^1_{K:K+1}, L^1_K, M^1_K)$ similarly to \eqref{U1}, but with the event $A$ in \eqref{newA} instead. We have $U^1 \in \cU (\tau, \xi_{K:K+1}, \xi_y)$ and $$\p((X_{K:K+1}, Y)(t; \tau, \xi_{K:K+1}, \xi_y, U^1) \in \cS_{K}, \; \tau \leq t \leq T ) = 1.$$
	
	Let 
	\begin{align*}
		\tau^1 := \inf \big\{ t \in [\tau, T_K] \, \big| &  (t,  (X_{K:K+1}, Y)(t; \tau, \xi_{K:K+1}, \xi_y, U^1)) \\
		& \notin (T_K - \delta/2, T_K] \times B(\bar{x}_{K:K+1}, \bar{y},\varepsilon/2) \big\} {\color{black} \wedge T_K}.
	\end{align*}
	Denote 
	\begin{align*}
		\xi^1 := (\xi^1_{K:K+1}, \xi^1_y) :=  ((X_{K:K+1}, Y)(\tau^1; \tau, \xi_{K:K+1}, \xi_y, U^1))
	\end{align*}
	as the exit position, which is $\cF_{\tau^1}$-measurable.

	Note that 
	\begin{equation}\label{imp_ineq}
		- \lambda_{K} + \partial_{K+1} \varphi^{p, \eta} - \partial_{K} \varphi^{p, \eta} > 0 \quad  \text{ on } \overline{B(\bar{x}_{K:K+1}, \bar{y},\varepsilon)}.
	\end{equation}
	By the mean value theorem, for any $(x_{K:K+1}, y) \in B(\bar{x}_{K:K+1}, \bar{y},\varepsilon)$ and $(x_K + h, x_{K+1} - h, y) \in B(\bar{x}_{K:K+1}, \bar{y},\varepsilon)$ with a small enough $h>0$, we have
	\begin{align*}
		& \varphi^{p, \eta}(t, x_K + h, x_{K+1} - h, y) - \varphi^{p, \eta}(t, x_K, x_{K+1}, y) + \lambda_K h \\
		& \quad = [- \partial_{K+1} \varphi^{p, \eta} + \partial_K \varphi^{p, \eta}](t, \tilde{x}_K, \tilde{x}_{K+1}, y) h + \lambda_K h < 0,
	\end{align*}
	where $(\tilde{x}_K, \tilde{x}_{K+1})$ is on the line segment between $x_{K:K+1}$ and $(x_K + h, x_{K+1} - h)$. This suggests transferring from $x_{K+1}$ to $x_K$.

	Denote $(\cI_K(x_K), \cI_{K+1}(x_{K+1}), y)$ as the intersection of the ray $\{ (x_K + h, x_{K+1} - h, y): h \geq 0 \}$ and $\partial B(\bar{x}_{K:K+1}, \bar{y},\varepsilon/2)$. Then $x_{K+1} - \cI_{K+1}(x_{K+1}) = \cI_K(x_K) - x_K$ and either one can be used when defining $U$ below. Let $U^2 := \{ (\alpha^2_{K:K+1}(t), L^2_K(t), M^2_K(t)) \}_{t \in [T_K, T]}$ be a suitable control for $v^n_{K}$ with the random initial condition $(T_K, \cI_K(\xi^1_K), \cI_{K+1}(\xi^1_{K+1}), \xi^1_y)$. Let $$U^3 := \{ (\alpha^3_{K:K+1}(t), L^3_K(t), M^3_K(t)) \}_{t \in [\tau^1, T]}$$ be a suitable control for $v^n_{K}$ with the random initial condition $(\tau^1, \xi^1)$. Let  $U^4 := \{ \alpha^4_{K+1}(t) \}_{t \in [T_K, T]}$ be a suitable control for $v^n_{K+1}$ with the random initial condition $(T_K, \xi^1_{K+1}, \xi^1_y)$. Finally, define a control $U := (\alpha_{K:K+1}, L_K, M_K)$ by
	\begin{align*}
		& (\alpha_{K:K+1}(t), L_K(t), M_K(t)) \\
		& = \one_{\{ \tau \leq t < \tau^1 \}} U^1(t) \\
		& \quad + \one_{A \cap \{ \tau^1 = T_K\} \cap \{\tau^1 \leq t \leq T\}} (\alpha^2_{K:K+1}(t),  L^2_K(t) - L^2_K(\tau^1-) + \xi^1_{K+1} - \cI_{K+1}(\xi^1_{K+1}) + L^1_K(\tau^1), \\
		& \hspace{4.5cm}  M^2_K(t) - M^2_K(\tau^1 -) + M^1_K(\tau^1)) \\
		& \quad + \one_{\{ \tau^1 < T_K\} \cap \{ \tau^1 \leq t \leq T\}} (\alpha^3_{K:K+1}(t), L^3_K(t) - L^3_K(\tau^1 -) + L^1_K(\tau^1), M^3_K(t) - M^3_K(\tau^1 -) + M^1_K(\tau^1) ) \\
		& \quad + \one_{A^c \cap \{ \tau^1 = T_K\} \cap \{ \tau^1 \leq t \leq T\}} (\alpha^4_{K+1}(t), L^1_K(\tau^1), M^1_K(\tau^1) ).
	\end{align*}

	By Lemma \ref{lem:glue}, the control $U \in \cU(\tau, \xi_{K:K+1}, \xi_y)$. When $A \cap \{ \tau^1 = T_K\}$ happens, there can be ``triple transactions" at time $t = \tau^1 = T_K$, including $(\Delta L^1_K(T_K), \Delta M^1_K(T_K))$, $(\xi^1_{K+1} - \cI_{K+1}(\xi^1_{K+1}), 0)$, and $(\Delta L^2_K(T_K), \Delta M^2_K(T_K))$. In particular, $(\xi^1_{K+1} - \cI_{K+1}(\xi^1_{K+1}), 0)$ brings the state to the boundary $\partial B(\bar{x}_{K:K+1}, \bar{y},\varepsilon/2)$. But it is still possible to transfer immediately again with $(\Delta L^2_K(T_K), \Delta M^2_K(T_K))$. Besides, the investment proportion $\alpha^2_K(t)$ in $U^2$ for the portfolio $K$ is not really needed, since the goal $K$ expires right now.  
	
	We verify that $U$ is suitable for $v^{p, \eta}_{K}$ with $(\tau, \xi_{K:K+1}, \xi_y)$. Similar to Case (1), the inequality \eqref{psiAineq1} holds with $\varphi^{p, \eta}$ instead. The proof uses \eqref{phip} with $\varphi^{p, \eta}$. \eqref{psiAineq2} is replaced by
	\begin{align*}
		& \one_{A \cap \{ \rho \geq \tau^1\}} \varphi^{p, \eta} (\tau^1, \xi^1)  \\
		& \geq \one_{A \cap \{ \rho \geq \tau^1\} \cap \{ \tau^1 < T_K\}} v^n_K (\tau^1, \xi^1)  \\
		& \quad + \one_{A \cap \{ \rho \geq \tau^1\} \cap \{ \tau^1 = T_K\}} \Big( \lambda_K (\xi^1_{K+1} - \cI_{K+1}(\xi^1_{K+1})) + \varphi^{p, \eta}(T_K, \cI_K(\xi^1_K), \cI_{K+1}(\xi^1_{K+1}), \xi^1_y) \Big)  \\
		& \geq \one_{A \cap \{ \rho \geq \tau^1\} \cap \{ \tau^1 < T_K\}} v^n_K (\tau^1, \xi^1) \\
		& \quad + \one_{A \cap \{ \rho \geq \tau^1\} \cap \{ \tau^1 = T_K\}} \Big( \lambda_K (\xi^1_{K+1} - \cI_{K+1}(\xi^1_{K+1})) + v^n_K(T_K, \cI_K(\xi^1_K), \cI_{K+1}(\xi^1_{K+1}), \xi^1_y) \Big).
	\end{align*} 
	Here, we use \eqref{phi_E}, \eqref{imp_ineq}, and $(\cI_K(\xi^1_K), \cI_{K+1}(\xi^1_{K+1}), \xi^1_y) \in \partial B(\bar{x}_{K:K+1}, \bar{y},\varepsilon/2)$.
	
	Similar to \eqref{Acvnineq} and \eqref{ineq:tau1}, the remaining proof of the supermartingale property when $\tau \in [T_{K-1}, T_K]$ follows from the definition of $U$. Other cases with $\tau \in [T_{k-1}, T_k]$, $k \neq K$ are straightforward to prove. Hence, $(v^n_1, \ldots, v^{p, \eta}_{K}, v^n_{K+1})$ is a stochastic supersolution, which leads to a contradiction.

	{\bf Step 3: Case (3)}. $- \theta_K - \partial_{K+1} \varphi(\bar{x}_{K:K+1}, \bar{y}) + \partial_{K} \varphi (\bar{x}_{K:K+1}, \bar{y}) > 0$. 
	
	This suggests transferring from $x_{K}$ to $x_{K+1}$. The proof is similar to Case (2) with the counterpart inequality to \eqref{imp_ineq}. The control when $A \cap \{ \tau^1 = T_K\}$ happens should be modified as follows. Denote $(\cN_K(x_K), \cN_{K+1}(x_{K+1}), y)$ as the intersection of the ray $\{ (x_K - h, x_{K+1} + h, y): h \geq 0 \}$ and $\partial B(\bar{x}_{K:K+1}, \bar{y},\varepsilon/2)$. Let $U^2 := \{ (\alpha^2_{K:K+1}(t), L^2_K(t), M^2_K(t)) \}_{t \in [T_K, T]}$ be a suitable control for $v^n_{K}$ with the random initial condition $(T_K, \cN_K(\xi^1_K), \cN_{K+1}(\xi^1_{K+1}), \xi^1_y)$. The controls $U^3$ and $U^4$ are defined similarly as in Case (2). Finally, define a control $U := (\alpha_{K:K+1}, L_K, M_K)$ by
	\begin{align*}
		& (\alpha_{K:K+1}(t), L_K(t), M_K(t)) \\
		& = \one_{\{ \tau \leq t < \tau^1 \}} U^1(t) \\
		& \quad + \one_{A \cap \{ \tau^1 = T_K\} \cap \{\tau^1 \leq t \leq T\}} (\alpha^2_{K:K+1}(t),  L^2_K(t) - L^2_K(\tau^1-) + L^1_K(\tau^1), \\
		& \hspace{4.5cm}  M^2_K(t) - M^2_K(\tau^1 -) + \xi^1_{K} - \cN_{K}(\xi^1_{K}) + M^1_K(\tau^1)) \\
		& \quad + \one_{\{ \tau^1 < T_K\} \cap \{ \tau^1 \leq t \leq T\}} (\alpha^3_{K:K+1}(t), L^3_K(t) - L^3_K(\tau^1 -) + L^1_K(\tau^1), M^3_K(t) - M^3_K(\tau^1 -) + M^1_K(\tau^1) ) \\
		& \quad + \one_{A^c \cap \{ \tau^1 = T_K\} \cap \{ \tau^1 \leq t \leq T\}} (\alpha^4_{K+1}(t), L^1_K(\tau^1), M^1_K(\tau^1) ).
	\end{align*}

	Then the remaining proof is similar to Case (2) above.

	{\bf Step 4}. The viscosity subsolution property in Definition \ref{def:vis_sub} (1), when $t \in [T_{K-1}, T_K)$: The proof is similar to the previous Step 3 and \cite{bayraktar2013stochastic,bayraktar2015stochastic}. We mainly report the construction of the stochastic supersolutions and suitable controls here.
	
	Let $(\bar{t}, \bar{x}_{K:K+1}, \bar{y}) \in [T_{K-1}, T_K) \times \cS_K$ and $\varphi(t, x_{K:K+1}, y) \in C^{1, 2}([T_{K-1}, T_K) \times \cS_K)$ be a test function such that $v_{K, +} - \varphi$ attains a strict local maximum of zero at $(\bar{t}, \bar{x}_{K:K+1}, \bar{y})$. We want to prove
	\begin{equation}
		\begin{aligned}
			\max \Big\{ & \beta \varphi - \partial_t \varphi - H(\bar{x}_{K:K+1}, \bar{y}, \partial \varphi, \partial^2 \varphi), \\
			& - \lambda_{K} + \partial_{K+1} \varphi - \partial_{K} \varphi, - \theta_K - \partial_{K+1} \varphi + \partial_{K} \varphi \Big\} (\bar{t}, \bar{x}_{K:K+1}, \bar{y}) \leq 0.
		\end{aligned}
	\end{equation}
	Assume on the contrary that the left-hand side is strictly positive.
	
	{\bf Step 4: Case (1)}. $\beta \varphi(\bar{t}, \bar{x}_{K:K+1}, \bar{y}) - \partial_t \varphi(\bar{t}, \bar{x}_{K:K+1}, \bar{y}) - H(\bar{x}_{K:K+1}, \bar{y}, \partial \varphi, \partial^2 \varphi) > 0$.
	
	Similar to Step 3, there exist $\varepsilon, \eta > 0$ and $v^n_K$, the $K$-th element of some $v^n := (v^n_1, \ldots, v^n_K, $ $v^n_{K+1}) \in \cV^+$, such that $\varphi^\eta := \varphi - \eta$ satisfies
	\begin{align*}
		& \beta \varphi^\eta - \partial_t \varphi^\eta - L^{\bar{\alpha}_{K:K+1}} \varphi^\eta > 0 \quad \text{ on } \overline{B(\bar{t}, \bar{x}_{K:K+1}, \bar{y}, \varepsilon)}, \\
		& \varphi^\eta \geq v^n_K \quad \text{ on } \overline{B(\bar{t}, \bar{x}_{K:K+1}, \bar{y}, \varepsilon)} \backslash B(\bar{t}, \bar{x}_{K:K+1}, \bar{y}, \varepsilon/2), \\
		& \varphi^\eta(\bar{t}, \bar{x}_{K:K+1}, \bar{y}) < v_{K, +}(\bar{t}, \bar{x}_{K:K+1}, \bar{y}).
	\end{align*}
	Here, the open ball $B(\bar{t}, \bar{x}_{K:K+1}, \bar{y}, \varepsilon) := (\bar{t}-\varepsilon, \bar{t}+\varepsilon) \times B(\bar{x}_{K:K+1}, \bar{y},\varepsilon)$.
	
	Define
	\begin{equation}\label{veta}
		v^{\eta}_{K} (t, x_{K:K+1}, y) := \left\{ 
		\begin{array}{ c l }
			v^n_K (t, x_{K:K+1}, y) \wedge \varphi^{\eta}(t, x_{K:K+1}, y), &\; \text{on } \overline{B(\bar{t}, \bar{x}_{K:K+1}, \bar{y}, \varepsilon)}, \\
			v^n_K (t, x_{K:K+1}, y), & \; \text{otherwise}.
		\end{array}
		\right.
	\end{equation}
	Consider the random initial condition $(\tau, \xi_{K:K+1}, \xi_y)$ with $\tau \in [T_{K-1}, T_K]$. Define the event 
	\begin{equation}\label{newA2}
		A := \{ (\tau, \xi_{K:K+1}, \xi_y) \in B(\bar{t}, \bar{x}_{K:K+1}, \bar{y}, \varepsilon/2) \} \cap \{\varphi^{\eta}(\tau, \xi_{K:K+1}, \xi_y) < v^n_K(\tau, \xi_{K:K+1}, \xi_y)\}.
	\end{equation}
	Define $U^1 := (\alpha^1_{K:K+1}, L^1_K, M^1_K)$ similarly to \eqref{U1} with $A$ in \eqref{newA2} instead. Let 
	\begin{align*}
		\tau^1 := \inf \{ t \in [\tau, T_K{\color{black}]} \, | \,  (t,  (X_{K:K+1}, Y)(t; \tau, \xi_{K:K+1}, \xi_y, U^1)) \notin B(\bar{t}, \bar{x}_{K:K+1}, \bar{y}, \varepsilon/2) \}.
	\end{align*}
	{\color{black} By choosing $\varepsilon > 0$ such that $\bar{t} + \varepsilon/2 < T_K$, the forward flow of time guarantees the process exits the neighborhood before $T_K$. This ensures the infimum is taken over a non-empty set, making $\tau^1$ finite without needing an explicit cap.} Denote $\xi^1 := (\xi^1_{K:K+1}, \xi^1_y)$ as the exit position. {\color{black} Define a control $U$ as in \eqref{eq:step3-1-U}.  We can verify that $U$ is suitable for $v^{\eta}_{K}$ with $(\tau, \xi_{K:K+1}, \xi_y)$. The proof is similar to Case (1) in Step 3. }

	{\bf Step 4: Case (2)}. $ - \lambda_{K} + \partial_{K+1} \varphi (\bar{t}, \bar{x}_{K:K+1}, \bar{y}) - \partial_{K} \varphi (\bar{t}, \bar{x}_{K:K+1}, \bar{y}) > 0$.
	
	There exist $\varepsilon, \eta > 0$ and $v^n_K$, the $K$-th element of some $v^n := (v^n_1, \ldots, v^n_K, v^n_{K+1}) \in \cV^+$, such that $\varphi^\eta := \varphi - \eta$ satisfies
	\begin{align*}
		& [- \lambda_{K} + \partial_{K+1} \varphi^\eta - \partial_{K} \varphi^\eta](t, x_{K:K+1}, y) > 0 \quad \text{ on } \overline{B(\bar{t}, \bar{x}_{K:K+1}, \bar{y}, \varepsilon)}, \\
		& \varphi^\eta \geq v^n_K \quad \text{ on } \overline{B(\bar{t}, \bar{x}_{K:K+1}, \bar{y}, \varepsilon)} \backslash B(\bar{t}, \bar{x}_{K:K+1}, \bar{y}, \varepsilon/2), \\
		& \varphi^\eta(\bar{t}, \bar{x}_{K:K+1}, \bar{y}) < v_{K, +}(\bar{t}, \bar{x}_{K:K+1}, \bar{y}).
	\end{align*}
	
	Define $v^{\eta}_{K} (t, x_{K:K+1}, y)$ as in \eqref{veta} and the event $A$ as in \eqref{newA2}. Consider the random initial condition $(\tau, \xi_{K:K+1}, \xi_y)$ with $\tau \in [T_{K-1}, T_K]$. Let $U^0 := (\alpha^0_{K:K+1}, L^0_K, M^0_K)$ be a suitable control for $v^n_{K}$ with the random initial condition $(\tau, \xi_{K:K+1}, \xi_y)$. Recall the function $\cI_{K+1}(x_{K+1})$ introduced in Case (2) of Step 3. Define a new control $U^1 := (\alpha^1_{K:K+1}, L^1_K, M^1_K)$ by
	\begin{equation}
		\begin{aligned}
			\alpha^1_{K:K+1}(t)  := & \alpha^0_{K:K+1}(t) \one_{\{ \tau \leq t\}}, \\
			(L^1_K(t), M^1_K(t)) := & \one_{A \cap \{ \tau \leq t\}}(\xi_{K+1} - \cI_{K+1}(\xi_{K+1}), 0) \\
			& + \one_{A^c \cap \{ \tau \leq t\}} (L^0_K(t) - L^0_K(\tau-), M^0_K(t) - M^0_K(\tau-)).
		\end{aligned}
	\end{equation}
	{\color{black}   The exit time $\tau^1$ and exit position $\xi^1$ are defined as in Case (1) of Step 4. Under the event $A$, the control $U^1$ jumps at time $\tau$, pushing the state to the boundary. Hence, under $A$, we have $\tau^1 = \tau < T_K$ and $\xi^1 = (\cI_K(\xi_K), \cI_{K+1}(\xi_{K+1}), \xi_y)$. The remaining proof follows similarly.
	}
	
	{\bf Step 4: Case (3)}. $ - \theta_{K} - \partial_{K+1} \varphi (\bar{t}, \bar{x}_{K:K+1}, \bar{y}) + \partial_{K} \varphi (\bar{t}, \bar{x}_{K:K+1}, \bar{y}) > 0$.
	
	The proof is symmetric to Case (2) above. The main difference is that $U^1$ uses
	\begin{equation}
		\begin{aligned}
			(L^1_K(t), M^1_K(t)) := & \one_{A \cap \{ \tau \leq t\}}(0, \xi_{K} - \cN_{K}(\xi_{K})) \\
			& + \one_{A^c \cap \{ \tau \leq t\}} (L^0_K(t) - L^0_K(\tau-), M^0_K(t) - M^0_K(\tau-)),
		\end{aligned}
	\end{equation} 
	where $\cN_K$ is defined in Case (3) of Step 3.
	
	{\bf Step 5}. With some modifications, the proof above also works when more than two goals are active. We omit the detail since it is almost the same.
\end{proof}

\begin{proof}[Proof of Lemma \ref{lem:boundary}.]
	Consider a given point $(x^o_{k:K+1}, y) \in \overline{\cS}^o_{k} \backslash \cS_k$. Let $i$ be the smallest index with $x^o_i > 0$. For a small $\zeta >0$, recall the definition of $Z_{k:K+1}(x^o_{k:K+1}, \zeta)$ in \eqref{z}. Note that $(Z_{k:K+1}(x^o_{k:K+1}, \zeta), y) \in \cS_k$ for small $\zeta > 0$.
	
	Given $\zeta >0$, we regard $(t, Z_{k:K+1}(x^o_{k:K+1}, \zeta), y) \in [T_{k-1}, T_k] \times \cS_k$ as the random initial condition. For every stochastic supersolution $v_k$, there exists a suitable control $U$ with respect to this random initial condition. As the value function $V_k$ takes the infimum over all admissible controls, we have
	\begin{align}
		V_k(t, x^o_{k:K+1}, y) \leq &  \sum_{i=k}^{K} \lambda_i \zeta \Delta L_i (x^o_{k:K+1}) +  \sum_{i=k}^{K} \theta_i \zeta \Delta M_i(x^o_{k:K+1}) \label{ineq:46} \\
		& + v_k (t, Z_{k:K+1}(x^o_{k:K+1}, \zeta), y), \nonumber
	\end{align}
	where $\Delta L_i(\cdot)$ and $\Delta M_i(\cdot)$ are defined in \eqref{transfer}. Taking the infimum over $v_k$, we can replace $v_k$ with the upper stochastic envelope $v_{k, +}$ in this inequality \eqref{ineq:46}.
	
	Letting $\zeta \rightarrow 0$, since $(Z_{k:K+1}(x^o_{k:K+1}, \zeta), y) \in \cS_k$, we have
	\begin{align*}
		\limsup_{\zeta \rightarrow 0} v_{k, +}(t, Z_{k:K+1}(x^o_{k:K+1}, \zeta), y) \leq \limsup_{\substack{ x'_{k:K+1} \rightarrow x^o_{k:K+1} \\ x'_j > 0, \, j = k,\ldots,K+1}} v_{k, +}(t, x'_{k:K+1}, y).
	\end{align*}
	The first two terms in \eqref{ineq:46} converge to zero when $\zeta \rightarrow 0$. Then the claim \eqref{ineq:uscbd} follows.
\end{proof}

\subsection{Results of the stochastic subsolution}\label{sec:proof_stosub}

	\begin{proof}[Proof of Lemma \ref{lem:no_mental}]
		By definition of the value function $J$ without mental accounting, we have	
		\begin{equation}
		\begin{aligned}
			\Gamma_k(t, 0, y) & \leq \sum^{K+1}_{i = k} w_i e^{-\beta (T_i-t)} G_i, \quad \forall\;  t \in [0, T], \; y \in \R^m, \; k = 1, \ldots, K+1, \\
			\Gamma_{K+1}(T, x_{K+1}, y) & \leq (G_{K+1} - x_{K+1})^+, \quad \forall \; (x_{K+1}, y) \in \cS_{K+1}.
		\end{aligned}
		\end{equation} 
		
		Next, we prove that $(\Gamma_1, \ldots, \Gamma_k, \ldots, \Gamma_{K+1})$ satisfies the condition (1) in Definition \ref{def:sto_sub}:
		\begin{equation}\label{eq:sub_ex}
			\Gamma_k(\tau, \xi_{k:K+1}, \xi_y) \leq \E \big[ \cM \big([\tau, \rho], \Gamma_{k:K+1}, (X_{k:K+1}, Y)(\cdot; \tau, \xi_{k:K+1}, \xi_y, U) \big) \big|\cF_\tau \big],
		\end{equation}
        where $\tau \in [T_{k-1}, T_k]$. 
        
        Indeed, mental accounting terms in the objective function are non-negative:
        \begin{equation*}
        	\sum^K_{i=k} \lambda_i \int^{T_i}_{\tau} e^{-\beta t} d L_i(t) \geq 0 \; \text{ and } \; \sum^K_{i=k} \theta_i \int^{T_i}_{\tau} e^{-\beta t} d M_i(t) \geq 0
        \end{equation*}
        Then \eqref{eq:sub_ex} holds once we show that $\Gamma$ satisfies one direction of the DPP. {\color{black} When $k=1, \ldots, K$,}
		\begin{align}
			& \Gamma_k(\tau, \xi_{k:K+1}, \xi_y) \label{ineq:Gam_k}\\
			& \leq \inf_{\substack{ U \in \cU(\tau, \xi_{k:K+1}, \xi_y) } } \E \Big[ e^{-\beta(\rho - \tau)} \Gamma_{k}(\rho, (X_{k:K+1}, Y)(\rho; \tau, \xi_{k:K+1}, \xi_y, U)) \one_{\{\tau \leq \rho < T_k\}} \nonumber \\		
			& \qquad \qquad \qquad \qquad \quad + \sum^{K-1}_{l=k} \Big\{ e^{-\beta(\rho - \tau)} \Gamma_{l+1}(\rho, (X_{l+1:K+1}, Y)(\rho; \tau, \xi_{k:K+1}, \xi_y, U)) \nonumber \\
			&\qquad \qquad \qquad \qquad \qquad \qquad \quad +  \sum^{l}_{i = k} w_i e^{-\beta (T_i - \tau)} (G_i - X_i(T_i; \tau, \xi_{k:K+1}, \xi_y, U))^+ \Big \} \one_{\{T_l \leq \rho < T_{l+1} \}} \nonumber \\
			&  \qquad \qquad \qquad \qquad \quad + \Big\{ e^{-\beta(\rho - \tau)} \Gamma_{K+1}(\rho, (X_{K+1}, Y)(\rho; \tau, \xi_{k:K+1}, \xi_y, U)) \nonumber \\
			&\qquad \qquad \qquad \qquad \qquad \quad +  \sum^{K}_{i = k} w_i e^{-\beta (T_i - \tau)} (G_i - X_i(T_i; \tau, \xi_{k:K+1}, \xi_y, U))^+ \Big \} \one_{\{T_K \leq \rho \leq T\}} \Big | \cF_{\tau} \Big]. \nonumber 
		\end{align}
		{\color{black} For $k = K + 1$, the inequality simplifies to
			\begin{equation} \label{ineq:Gam_K+1}
				\begin{aligned}
					\Gamma_{K+1}(\tau, \xi_{K+1}, \xi_y) \leq \inf_{\substack{ U \in \cU(\tau, \xi_{K+1}, \xi_y) } } \E \Big[ & e^{-\beta(\rho - \tau)} \Gamma_{K+1}(\rho, (X_{K+1}, Y)(\rho; \tau, \xi_{K+1}, \xi_y, U)) \Big | \cF_{\tau} \Big].
				\end{aligned}
		\end{equation}
		}
	Since $\Gamma$ is defined by $J$, inequalities \eqref{ineq:Gam_k} and \eqref{ineq:Gam_K+1} follow from the fact that $J$ satisfies the DPP and is continuous when $t \in [T_{k-1}, T_k)$, as shown in \citet[Theorem 4.1 and Lemma E.2]{capponi2024}.
\end{proof}
Besides, one can also apply stochastic Perron's method to the problem \eqref{obj:no_acc} without mental accounting. To avoid the technical difficulty from the state constraint $X(t) \geq 0$ in \eqref{eq:no_account}, we can modify the wealth dynamic as in \citet[Equation (3.5)]{capponi2024} and allow negative initial wealth:
\begin{equation}\label{eq:extended}
	\begin{aligned}
		d X(u) &=  r X(u) du + (\mu(Y(u)) - r \mathbf{1})^\top \alpha(u) X^+(u) du + X^+(u) \alpha^\top(u) \sigma(Y(u)) dW(u), \\
		X(t) &= x \in \R.
	\end{aligned}
\end{equation}
It extends the spatial domain of the problem \eqref{obj:no_acc} to $\R \times \R^m$, while the value function in the original spatial domain $[0, \infty) \times \R^m$ is unchanged. Since there is no gradient constraint, the proof of viscosity supersolution and subsolution properties is standard and similar to \cite{bayraktar2013stochastic,bayraktar2015stochastic}. Moreover, Lemma \ref{lem:boundary} related to the state constraint  is no longer needed after the extension \eqref{eq:extended}. The proof of comparison principle is similar to \cite{capponi2024} and \citet[Theorem 4.4.5]{pham2009book}. We mention that the truncation $\rho(x)$ in \cite{capponi2024} is not needed, since the sequence $\{x_n\}$ is bounded. The unique viscosity solution in the enlarged domain $[T_{k-1}, T_k) \times \R \times \R^m$, agrees with the value function when restricting to the original domain $[T_{k-1}, T_k) \times [0, \infty) \times \R^m$.

As a side remark, if there are gradient constraints due to mental accounting and the state dynamic is extended to \eqref{eq:extended}, we could not find a strict classical subsolution as in Lemma \ref{lem:strict_l}. Therefore, this paper does not use this extension technique in the mental accounting case.

\begin{proof}[Proof of Proposition \ref{prop:vissuper}]
	{\bf Step 1}. The boundary/terminal conditions in Definition \ref{def:vis_super} (4): By Lemma \ref{lem:no_mental}, the {\it value} function without mental accounting, i.e., $\lambda_{k} = \theta_k = 0$, $k=1, \ldots, K$, is a stochastic subsolution. Since $v_-$ is the supremum of stochastic subsolutions, $v_-$ satisfies the boundary condition \eqref{vissup_bd0} at $x_{k:K+1} = 0$. By a modification of the step 3 in \citet[Theorem 4.1]{bayraktar2013stochastic}, it follows that $v_{K+1, -}$ in $v_-$ satisfies the terminal condition \eqref{vissup_T} at $T$.
	
	{\bf Step 2}. The viscosity supersolution property in Definition \ref{def:vis_super} (3), when $t \in [T_K, T)$: Similarly, it follows from the proof of \citet[Theorem 4.1]{bayraktar2013stochastic}.
	
	{\bf Step 3}. The viscosity supersolution property in Definition \ref{def:vis_super} (2), when $t=T_K$: Let $(\bar{x}_{K:K+1}, \bar{y}) \in \overline{\cS}^o_K = \{ [0, \infty)^2 \backslash \{0\} \} \times \R^m$ and $\varphi(x_{K:K+1}, y) \in C^2(\overline{\cS}^o_K)$ be a test function such that $v_{K, -}(T_K, \cdot, \cdot) - \varphi(\cdot, \cdot)$ attains a strict local minimum of zero at $(\bar{x}_{K:K+1}, \bar{y})$. We want to prove
	\begin{equation}
		\begin{aligned}
			\max \Big\{ & \varphi(\bar{x}_{K:K+1}, \bar{y}) - w_K (G_K - \bar{x}_K)^+ - v_{K+1, -}(T_K, \bar{x}_{K+1}, \bar{y}), \\
			& - \lambda_{K} + \partial_{K+1} \varphi(\bar{x}_{K:K+1}, \bar{y}) - \partial_{K} \varphi(\bar{x}_{K:K+1}, \bar{y}), \\
			&- \theta_K - \partial_{K+1} \varphi(\bar{x}_{K:K+1}, \bar{y}) + \partial_{K} \varphi (\bar{x}_{K:K+1}, \bar{y}) \Big\} \geq 0.
		\end{aligned}
	\end{equation}
	Assume on the contrary that the left-hand side is strictly negative. Since $\varphi$ is $C^2$ and $v_{K+1, -}$ is LSC,
	\begin{equation}
		\begin{aligned}
			\max \Big\{ & \varphi(x_{K:K+1}, y) - w_K (G_K - x_K)^+ - v_{K+1, -}(T_K, x_{K+1}, y), \\
			& - \lambda_{K} + \partial_{K+1} \varphi(x_{K:K+1}, y) - \partial_{K} \varphi(x_{K:K+1}, y), \\
			&- \theta_K - \partial_{K+1} \varphi(x_{K:K+1}, y) + \partial_{K} \varphi (x_{K:K+1}, y) \Big\} < 0
		\end{aligned}
	\end{equation}
	on a small closed set $\overline{B_{\bar{\cS}}(\bar{x}_{K:K+1}, \bar{y}, \varepsilon)}$, which is the closure of  
	\begin{equation*}
		B_{\bar{\cS}}(\bar{x}_{K:K+1}, \bar{y}, \varepsilon) := B(\bar{x}_{K:K+1}, \bar{y}, \varepsilon) \cap \overline{\cS}_K.
	\end{equation*}
	Indeed, thanks to the LSC property of $v_{K+1, -}$ and the continuity of $\varphi$, the following set is open:
	\begin{equation*}
		\big\{ (x_{K:K+1}, y) \in \overline{\cS}_K | 0 < - \varphi(x_{K:K+1}, y) + w_K (G_K - x_K)^+ + v_{K+1, -}(T_K, x_{K+1}, y) \big\}.
	\end{equation*}
	Then, with a small enough $\varepsilon > 0$, we have
	\begin{equation*}
		\varepsilon \leq - \varphi(x_{K:K+1}, y) + w_K (G_K - x_K)^+ + v_{K+1, -}(T_K, x_{K+1}, y)
	\end{equation*}
	on $\overline{B_{\bar{\cS}}(\bar{x}_{K:K+1}, \bar{y}, \varepsilon)}$. Also, we can make $\varepsilon$ small such that $\overline{B_{\bar{\cS}}(\bar{x}_{K:K+1}, \bar{y}, \varepsilon)} \subset \overline{\cS}^o_K$. It is used to verify the boundary condition of $v^{p, \eta}_{K}$ in \eqref{vpeta} when $x_{K:K+1} = 0$.
	
	Similar to \citet[Theorem 4.1]{bayraktar2013stochastic} and the step 3 in Proposition \ref{prop:vissub}, we introduce a function
	\begin{equation*}
		h(\delta) := \inf_{\substack{ T_K - \delta \leq t \leq T_K, \; (x_{K:K+1}, y) \in \overline{\cS}_K \text{ and } \\ \frac{\varepsilon}{2} \leq | (x_{K:K+1}, y) -  (\bar{x}_{K:K+1}, \bar{y})| \leq \varepsilon}} \Big( v_{K, -}(t, x_{K:K+1}, y) - \varphi(x_{K:K+1}, y) \Big), \quad 0 < \delta < \delta_0,
	\end{equation*}
	with some small $\delta_0 > 0$. Since $v_{K, -} - \varphi$ is LSC as a function of $(t, x_{K:K+1}, y)$, the minimum in $h(\delta)$ is attained. Let a minimizer be $(t^\delta, x^\delta_{K:K+1}, y^\delta)$. By compactness, there exists a subsequence, still denoted as $\delta \searrow 0$, such that 
	\begin{align*}
		& (t^\delta, x^\delta_{K:K+1}, y^\delta) \rightarrow (T_K, x^*_{K:K+1}, y^*) \\ 
		& \text{and} \quad \frac{\varepsilon}{2} \leq | (x^*_{K:K+1}, y^*) -  (\bar{x}_{K:K+1}, \bar{y})| \leq \varepsilon, \; (x^*_{K:K+1}, y^*) \in \overline{\cS}_K.
	\end{align*}
	Thus,
	\begin{align*}
		\liminf_{\delta \searrow 0} h(\delta) & = \liminf_{\delta \searrow 0} \Big( v_{K, -}(t^\delta, x^\delta_{K:K+1}, y^\delta) - \varphi(x^\delta_{K:K+1}, y^\delta) \Big) \\
		& \geq v_{K, -} (T_K, x^*_{K:K+1}, y^*) - \varphi(x^*_{K:K+1}, y^*) \\
		& \geq \inf_{\substack{ (x_{K:K+1}, y) \in \overline{\cS}_K, \\ \frac{\varepsilon}{2} \leq | (x_{K:K+1}, y) -  (\bar{x}_{K:K+1}, \bar{y})| \leq \varepsilon}}  \Big( v_{K, -}(T_K, x_{K:K+1}, y) - \varphi(x_{K:K+1}, y) \Big) > 0. 
	\end{align*}
	It implies that there exists $\delta$ small enough such that $h(\delta) > 0$. For this fixed $\delta$, we define $\delta_h := h(\delta) > 0$. When $T_K - \delta \leq t \leq T_K$, $\frac{\varepsilon}{2} \leq | (x_{K:K+1}, y) -  (\bar{x}_{K:K+1}, \bar{y})| \leq \varepsilon$ and $(x_{K:K+1}, y) \in \overline{\cS}_K$, we have
	\begin{equation} 
		v_{K, -}(t, x_{K:K+1}, y) - \varphi(x_{K:K+1}, y) \geq \delta_h > 0.
	\end{equation}
	Define $\varphi^p(t, x_{K:K+1}, y) := \varphi(x_{K:K+1}, y) - p(T_K - t)$. When $T_K - \delta \leq t \leq T_K - \delta/2$, $|(x_{K:K+1}, y) -  (\bar{x}_{K:K+1}, \bar{y})| \leq \varepsilon/2$, $(x_{K:K+1}, y) \in \overline{\cS}_K$ and $p$ is large enough, we have
	\begin{align*}
		& v_{K, -}(t, x_{K:K+1}, y) - \varphi^p(t, x_{K:K+1}, y) \\
		& \quad = v_{K, -}(t, x_{K:K+1}, y) - \varphi(x_{K:K+1}, y) + p(T_K - t)  \\
		& \quad \geq \inf_{\substack{T_K - \delta \leq t \leq T_K - \delta/2, \\ (x_{K:K+1}, y) \in \overline{\cS}_K, \\  | (x_{K:K+1}, y) -  (\bar{x}_{K:K+1}, \bar{y})| \leq \frac{\varepsilon}{2} }} \Big( v_{K, -}(t, x_{K:K+1}, y) - \varphi(x_{K:K+1}, y) \Big) + p(T_K - t) \geq \delta_h.
	\end{align*}
	Combining these results, we derive that $v_{K, -}(t, x_{K:K+1}, y) - \varphi^p(t, x_{K:K+1}, y) \geq \delta_h$ on 
	\begin{align*}
		E_{\bar{\cS}}(\delta, \varepsilon) :=&  \big\{ (t, x_{K:K+1}, y) \big|\, t \in [T_K - \delta, T_K], \, (x_{K:K+1}, y) \in \overline{B_{\bar{\cS}}(\bar{x}_{K:K+1}, \bar{y}, \varepsilon)} \big\} \\
		& - \big\{ (t, x_{K:K+1}, y) \big|\, t \in (T_K - \delta/2, T_K], \, (x_{K:K+1}, y) \in B_{\bar{\cS}}(\bar{x}_{K:K+1}, \bar{y}, \varepsilon/2) \big\}.
	\end{align*}
	Moreover, by letting $p$ even larger, we have
	\begin{equation*}
		\beta \varphi^p - \partial_t \varphi^p - H(x_{K:K+1}, y, \partial \varphi^p, \partial^2 \varphi^p) < 0 \quad \text{ on } [T_K - \delta, T_K] \times \overline{B_{\bar{\cS}}(\bar{x}_{K:K+1}, \bar{y}, \varepsilon)}.
	\end{equation*}
	With a Dini's argument, there exists $v^{n_1}_K$, which is the $K$-th component of a stochastic subsolution $v^{n_1} := (v^{n_1}_1, \ldots,  v^{n_1}_{K}, v^{n_1}_{K+1})$, such that
	\begin{equation*}
		v^{n_1}_K(t, x_{K:K+1}, y) - \varphi^p(t, x_{K:K+1}, y) \geq \delta_h/2 \quad \text{ on } E_{\bar{\cS}}(\delta, \varepsilon).
	\end{equation*}
	There exists $v^{n_2}_{K+1}$, which is the last component of a stochastic subsolution $v^{n_2} := (v^{n_2}_1, \ldots,$ $ v^{n_2}_{K}, v^{n_2}_{K+1})$, such that
	\begin{equation*}
		\varepsilon/2 \leq - \varphi(x_{K:K+1}, y) + w_K (G_K - x_K)^+ + v^{n_2}_{K+1}(T_K, x_{K+1}, y)
	\end{equation*}
	on $\overline{B_{\bar{\cS}}(\bar{x}_{K:K+1}, \bar{y}, \varepsilon)}$.
	
	We take $v^n := (v^{n_1}_1 \vee v^{n_2}_1, \ldots,  v^{n_1}_{K} \vee v^{n_2}_{K}, v^{n_1}_{K+1} \vee v^{n_2}_{K+1})$, which is a stochastic subsolution by Lemma \ref{lem:two_sub}. Let $0 < \eta < \min\{ \delta_h/2, \varepsilon/2 \}$ be small enough and define $\varphi^{p, \eta} := \varphi^p + \eta$. In summary, the results above show that $\varphi^{p, \eta}$ and $v^n$ satisfy the following properties:
	\begin{enumerate}[label={(\arabic*)}]	
		\item On $[T_K - \delta, T_K] \times \overline{B_{\bar{\cS}}(\bar{x}_{K:K+1}, \bar{y}, \varepsilon)}$, we have
		\begin{align}
			& \beta \varphi^{p, \eta} - \partial_t \varphi^{p, \eta} - H(x_{K:K+1}, y, \partial \varphi^{p, \eta}, \partial^2 \varphi^{p, \eta}) < 0, \label{subsol} \\
			& - \lambda_{K} + \partial_{K+1} \varphi^{p, \eta} - \partial_{K} \varphi^{p, \eta} < 0, \quad - \theta_K - \partial_{K+1} \varphi^{p, \eta} + \partial_{K} \varphi^{p, \eta} < 0; \nonumber
		\end{align}
		\item On $\overline{B_{\bar{\cS}}(\bar{x}_{K:K+1}, \bar{y}, \varepsilon)}$ with $t = T_K$,   
		\begin{equation*}
			\varphi^{p, \eta}(T_K, x_{K:K+1}, y) \leq w_K (G_K - x_K)^+ + v^{n}_{K+1}(T_K, x_{K+1}, y); 
		\end{equation*}
		\item $v^n_K(t, x_{K:K+1}, y) \geq \varphi^{p, \eta}(t, x_{K:K+1}, y)$ on $E_{\bar{\cS}}(\delta, \varepsilon)$;
		\item $\varphi^{p, \eta}(T_K, \bar{x}_{K:K+1}, \bar{y}) > v_{K, -}(T_K, \bar{x}_{K:K+1}, \bar{y})$.
	\end{enumerate}
	
	Define
	\begin{equation}\label{vpeta}
		v^{p, \eta}_{K} (t, x_{K:K+1}, y) = \left\{ 
		\begin{array}{ c l }
			v^n_K (t, x_{K:K+1}, y) \vee \varphi^{p, \eta}(t, x_{K:K+1}, y) & \text{on } [T_K - \delta, T_K] \times \overline{B_{\bar{\cS}}(\bar{x}_{K:K+1}, \bar{y}, \varepsilon)}, \\
			v^n_K (t, x_{K:K+1}, y), & \text{otherwise}.
		\end{array}
		\right.
	\end{equation}
	We verify the submartingale property of $(v^n_1, \ldots, v^{p, \eta}_{K}, v^n_{K+1})$ as a stochastic subsolution. Consider a random initial condition $(\tau, \xi_{K:K+1}, \xi_y)$ with $\tau \in [T_{K-1}, T_K]$, $(\xi_{K:K+1}, \xi_y) \in \cF_\tau$ and $\p((\xi_{K:K+1}, \xi_y) \in \overline{\cS}^o_K) = 1$. Let $U:= (\alpha_{K:K+1}, L_K, M_K) \in \cU(\tau, \xi_{K:K+1}, \xi_y)$. Consider a stopping time $\rho \in [\tau, T]$.
	
	Define the event 
	\begin{align*}
		A := & \{ (\tau, \xi_{K:K+1}, \xi_y) \in (T_K - \delta/2, T_K] \times B_{\bar{\cS}}(\bar{x}_{K:K+1}, \bar{y}, \varepsilon/2) \} \\
		& \cap \{\varphi^{p, \eta}(\tau, \xi_{K:K+1}, \xi_y) > v^n_K(\tau, \xi_{K:K+1}, \xi_y)\}.
	\end{align*}
	Then $A \in \cF_\tau$. Let
	\begin{align*}
		\tau^1 := \inf \Big\{ t \in [\tau, T_K] \, \Big| \, &  (t,  (X_{K:K+1}, Y)(t; \tau, \xi_{K:K+1}, \xi_y, U)) \\
		& \notin (T_K - \delta/2, T_K] \times B_{\bar{\cS}}(\bar{x}_{K:K+1}, \bar{y}, \varepsilon/2) \Big\} {\color{black} \wedge T_K}.
	\end{align*}
	Denote 
	\begin{align*}
		\xi^1 := (\xi^1_{K:K+1}, \xi^1_y) :=  ((X_{K:K+1}, Y)(\tau^1; \tau, \xi_{K:K+1}, \xi_y, U))
	\end{align*}
	as the exit position, which is $\cF_{\tau^1}$-measurable. 
	
	Under the event $A$ and at the exit time $\tau^1$, there may be a transfer such that the post-jump position $\xi^1$ is not on $\partial B_{\bar{\cS}}(\bar{x}_{K:K+1}, \bar{y}, \varepsilon/2)$. Since \eqref{subsol} only holds locally, we need the truncation method in \cite{bayraktar2015stochastic} to avoid this technical issue. However, our problem is more complicated as the exit may be only triggered by time expiration. If $\tau^1 = T_K$, the exit position $\xi^1$ could either be in $\overline{B_{\bar{\cS}}(\bar{x}_{K:K+1}, \bar{y}, \varepsilon/2)}$ or $\xi^1 \notin \overline{B_{\bar{\cS}}(\bar{x}_{K:K+1}, \bar{y}, \varepsilon/2)}$. Loosely speaking, we should use the intermediate position in \cite{bayraktar2015stochastic}, but also avoid overshooting.
	
	Let $(\xi^{1-}_{K:K+1}, \xi^{1-}_y) := (X_{K:K+1}, Y)(\tau^1-; \tau, \xi_{K:K+1}, \xi_y, U)$ be the left-limit position before the exit. We have $(\xi^{1-}_{K:K+1}, \xi^{1-}_y) \in \overline{B_{\bar{\cS}}(\bar{x}_{K:K+1}, \bar{y}, \varepsilon/2)}$ when $A$ happens. With a slight abuse of notations, denote $(\cI_K(x_K), \cI_{K+1}(x_{K+1}), y)$ as the intersection of the ray $\{ (x_K + h, x_{K+1} - h, y): h \geq 0 \}$ and $\partial B_{\bar{\cS}}(\bar{x}_{K:K+1}, \bar{y}, \varepsilon/2)$. Let $(\cN_K(x_K), \cN_{K+1}(x_{K+1}), y)$ be the intersection of the ray $\{ (x_K - h, x_{K+1} + h, y): h \geq 0 \}$ and $\partial B_{\bar{\cS}}(\bar{x}_{K:K+1}, \bar{y}, \varepsilon/2)$. We introduce the intermediate position as
	\begin{align*}
		(\tilde{\xi}^1_{K:K+1}, \tilde{\xi}^1_y) := & \one_{A \cap \{ \Delta L_K(\tau^1) > 0\}} \Big(\min\{\cI_K(\xi^{1-}_K), \xi^1_K \}, \max\{\cI_{K+1}(\xi^{1-}_{K+1}), \xi^1_{K+1}\}, \xi^{1-}_y \Big) \\
		& + \one_{A \cap \{ \Delta M_K(\tau^1) > 0\}} \Big(\max\{\cN_K(\xi^{1-}_K), \xi^1_K \}, \min \{\cN_{K+1}(\xi^{1-}_{K+1}), \xi^1_{K+1}\}, \xi^{1-}_y \Big) \\
		& + \one_{A^c \cup {\color{black} (A \cap \{ \Delta L_K(\tau^1) = 0, \, \Delta M_K(\tau^1) = 0\})} } (\xi^1_{K:K+1}, \xi^1_y),
	\end{align*}
	which is $\cF_{\tau^1}$-measurable.
	
	Define 
	\begin{align*}
		(\Delta L^1_K(\tau^1), \Delta M^1_K(\tau^1)) := & \one_{A \cap \{ \Delta L_K(\tau^1) > 0\}} \Big( \xi^{1-}_{K+1} - \max\{\cI_{K+1}(\xi^{1-}_{K+1}), \xi^1_{K+1}\}, 0 \Big) \\
		& + \one_{A \cap \{ \Delta M_K(\tau^1) > 0\}} \Big(0, \xi^{1-}_K - \max\{\cN_K(\xi^{1-}_{K}), \xi^1_K \} \Big) \\
		& + \one_{A^c \cup {\color{black} (A \cap \{ \Delta L_K(\tau^1) = 0, \, \Delta M_K(\tau^1) = 0\})} } (\Delta L_K (\tau^1), \Delta M_K (\tau^1)).
	\end{align*}
	To interpret the case under $A \cap \{ \Delta L_K(\tau^1) > 0 \}$, we note that
	\begin{align*}
		\xi^{1-}_{K+1} - \max\{\cI_{K+1}(\xi^{1-}_{K+1}), \xi^1_{K+1}\} = & \min \{ \xi^{1-}_{K+1} - \cI_{K+1}(\xi^{1-}_{K+1}), \xi^{1-}_{K+1} -  \xi^1_{K+1} \} \\
		= &  \min \{ \xi^{1-}_{K+1} - \cI_{K+1}(\xi^{1-}_{K+1}), \Delta L_K(\tau^1) \}.
	\end{align*}
	It means that we avoid overshooting by not transferring higher than $\Delta L_K(\tau^1)$. If the first term is binding, then $\tilde{\xi}^1_{K:K+1} = (\cI_K(\xi^{1-}_K), \cI_{K+1}(\xi^{1-}_{K+1}))$. Otherwise, $\tilde{\xi}^1_{K:K+1} = \xi^1_{K:K+1}$. Another case under $A \cap \{ \Delta M_K(\tau^1) > 0 \}$ can be interpreted similarly. Hence, $(\tilde{\xi}^1_{K:K+1}, \tilde{\xi}^1_y) \in \overline{B_{\bar{\cS}}(\bar{x}_{K:K+1}, \bar{y}, \varepsilon/2)}$.
	
	Define a control $U^1 := (\alpha^1_{K:K+1}, L^1_K, M^1_K)$ by
	\begin{equation}\label{def:U1_vissup}
		\begin{aligned}
			\alpha^1_{K:K+1}(t)  := & \alpha_{K:K+1}(t) \one_{\{ \tau \leq t\}}, \\
			(L^1_K(t), M^1_K(t)) := & \one_{\{ \tau \leq t < \tau^1\}}(L_K(t), M_K(t)) \\
			& + \one_{\{ t \geq \tau^1 \}} (L_K(\tau^1- ) + \Delta L^1_K(\tau^1), M_K(\tau^1- ) + \Delta M^1_K(\tau^1)).
		\end{aligned}
	\end{equation}
	When $t \geq \tau^1$, $(L^1_K(t), M^1_K(t))$ only transact at $\tau^1$ and are unchanged afterward.
	
	Under the event $A$, $(X_{K:K+1}, Y)(t; \tau, \xi_{K:K+1}, \xi_y, U^1) \in \overline{B_{\bar{\cS}}(\bar{x}_{K:K+1}, \bar{y}, \varepsilon/2)}$ for all $t \in [\tau, \tau^1]$. By It\^o's formula for RCLL semimartingales, we have
	\begin{align}
		& e^{-\beta (\rho \wedge \tau^1 - \tau)} \varphi^{p, \eta} (\rho \wedge \tau^1 , (X_{K:K+1}, Y)(\rho \wedge \tau^1; \tau, \xi_{K:K+1}, \xi_y, U^1)) \nonumber \\
		& \qquad - \varphi^{p, \eta} (\tau, (X_{K:K+1}, Y)(\tau; \tau, \xi_{K:K+1}, \xi_y, U^1)) \nonumber \\
		& =  \int^{\rho \wedge \tau^1}_{\tau} e^{-\beta(s - \tau)}\big( - \beta \varphi^{p, \eta} + \partial_t \varphi^{p, \eta} + L^{\alpha^1_{K:K+1}} \varphi^{p, \eta} \big) ds \label{Ito} \\
		& \quad + \int^{\rho \wedge \tau^1}_{\tau} e^{-\beta(s - \tau)} \big( - \partial_{K+1} \varphi^{p, \eta} + \partial_K \varphi^{p, \eta} \big) dL^{1, c}_K(s) \nonumber \\
		& \quad + \int^{\rho \wedge \tau^1}_{\tau} e^{-\beta(s - \tau)} \big( \partial_{K+1} \varphi^{p, \eta} - \partial_K \varphi^{p, \eta} \big) dM^{1, c}_K(s) \nonumber \\
		& \quad + \int^{\rho \wedge \tau^1}_{\tau} e^{-\beta(s - \tau)} X_K(s) \partial_K \varphi^{p, \eta} \alpha^1_K(s)^\top \sigma(Y(s)) dW(s) \nonumber \\
		& \quad + \int^{\rho \wedge \tau^1}_{\tau} e^{-\beta(s - \tau)} X_{K+1}(s) \partial_{K+1} \varphi^{p, \eta} \alpha^1_{K+1}(s)^\top \sigma(Y(s)) dW(s) \nonumber\\
		& \quad  + \int^{\rho \wedge \tau^1}_{\tau} e^{-\beta(s - \tau)} \partial_y \varphi^{p, \eta} \sigma_Y(Y(s)) dW(s) \nonumber\\
		& \quad + \sum_{\tau < s \leq \rho \wedge \tau^1} e^{-\beta(s - \tau)} \big[ \varphi^{p, \eta} (s , (X_{K:K+1}, Y)(s; \tau, \xi_{K:K+1}, \xi_y, U^1)) \nonumber\\
		& \hspace{4cm} - \varphi^{p, \eta} (s-, (X_{K:K+1}, Y)(s-; \tau, \xi_{K:K+1}, \xi_y, U^1)) \big]. \nonumber
	\end{align} 
	$L^{1, c}_K$ and $M^{1, c}_K$ denote the continuous part of $L^1_K$ and $M^1_K$, respectively. \eqref{subsol} can be used to simplify the $ds$, $dL^{1, c}_K$ and $dM^{1, c}_K$ integrals. Since $X_{K:K+1}(\tau; \tau, \xi_{K:K+1}, \xi_y, U^1)$ is the post-jump position at $\tau$, the last term above does not include $s = \tau$. By \eqref{subsol} and the mean value theorem, if $\Delta L^1_K(s) > 0$ when $A$ happens and $s \in [\tau, \tau^1]$, then
	\begin{align*}
		& \varphi^{p, \eta} (s , (X_{K:K+1}, Y)(s; \tau, \xi_{K:K+1}, \xi_y, U^1)) - \varphi^{p, \eta} (s-, (X_{K:K+1}, Y)(s-; \tau, \xi_{K:K+1}, \xi_y, U^1)) \\
		& + \lambda_{K} \Delta L^1_K(s) \geq 0.
	\end{align*}
	If $\Delta M^1_K(s) > 0$ when $A$ happens and $s \in [\tau, \tau^1]$, then
	\begin{align*}
		& \varphi^{p, \eta} (s , (X_{K:K+1}, Y)(s; \tau, \xi_{K:K+1}, \xi_y, U^1)) - \varphi^{p, \eta} (s-, (X_{K:K+1}, Y)(s-; \tau, \xi_{K:K+1}, \xi_y, U^1)) \\
		& + \theta_{K} \Delta M^1_K(s) \geq 0.
	\end{align*}
	Moreover, $\Delta L^1_K(s)$ and $\Delta M^1_K(s)$ are not positive at the same time. Taking the $\cF_\tau$-conditional expectation on \eqref{Ito}, we obtain
	\begin{align}
		& \E \Big[ \one_A \Big\{ e^{-\beta (\rho \wedge \tau^1 - \tau)} \varphi^{p, \eta} (\rho \wedge \tau^1 , (X_{K:K+1}, Y)(\rho \wedge \tau^1; \tau, \xi_{K:K+1}, \xi_y, U^1))  \nonumber \\
		& \qquad \quad + \lambda_K \int^{\rho \wedge \tau^1}_{\tau+} e^{-\beta(s - \tau)} dL^1_K(s) + \theta_K \int^{\rho \wedge \tau^1}_{\tau+} e^{-\beta(s - \tau)} dM^1_K(s) \Big\} \Big| \cF_{\tau} \Big] \nonumber \\
		& \geq \one_A \varphi^{p, \eta} (\tau , (X_{K:K+1}, Y)(\tau; \tau, \xi_{K:K+1}, \xi_y, U^1)).    \label{tau1_vissup}
	\end{align}
	The $dL^1_K$ and $dM^1_K$ integrals do not include jumps at $\tau$. Instead, these jumps at $\tau$ are related to the right-hand side of \eqref{tau1_vissup}:
	\begin{align}
		& \one_A \varphi^{p, \eta} (\tau , (X_{K:K+1}, Y)(\tau; \tau, \xi_{K:K+1}, \xi_y, U^1)) \nonumber \\
		& \geq \one_A \big\{ - \lambda_K \Delta L^1_K(\tau) - \theta_K \Delta M^1_K(\tau) + \varphi^{p, \eta} (\tau, \xi_{K:K+1}, \xi_y) \big\} \nonumber \\
		& = \one_A \big\{ - \lambda_K \Delta L^1_K(\tau) - \theta_K \Delta M^1_K(\tau) + v^{p, \eta}_K (\tau, \xi_{K:K+1}, \xi_y) \big\}, \label{RHS}
	\end{align}
	where the last equality follows from the definition of $A$.
	
	Consider the left-hand side of \eqref{tau1_vissup}. If $\rho < \tau^1$, then $(X_{K:K+1}, Y)(\rho; \tau, \xi_{K:K+1}, \xi_y, U^1) \in \overline{B_{\bar{\cS}}(\bar{x}_{K:K+1}, \bar{y}, \varepsilon/2)}$. Moreover, $U=U^1$ when $t < \tau^1$. Thus,
	\begin{align}
		& \one_{A \cap \{ \rho < \tau^1 \}} \varphi^{p, \eta} (\rho , (X_{K:K+1}, Y)(\rho; \tau, \xi_{K:K+1}, \xi_y, U^1)) \nonumber \\
		& \quad \leq \one_{A \cap \{ \rho < \tau^1 \}} v^{p, \eta}_K (\rho , (X_{K:K+1}, Y)(\rho; \tau, \xi_{K:K+1}, \xi_y, U)). \label{lessrho}
	\end{align}

	To investigate the case with $\rho \geq \tau^1$, we consider three events:
		\begin{align*}
			Q_1 & := \{ \tau^1 < T_K\}, \\
			Q_2 & := \{ \tau^1 = T_K\} \cap \{ (\xi^1_{K:K+1}, \xi^1_y) \in B_{\bar{\cS}}(\bar{x}_{K:K+1}, \bar{y}, \varepsilon/2) \}, \\
			Q_3 & := \{ \tau^1 = T_K \} \cap \{ (\xi^1_{K:K+1}, \xi^1_y) \notin B_{\bar{\cS}}(\bar{x}_{K:K+1}, \bar{y}, \varepsilon/2) \}.
		\end{align*}
		Both $Q_1$ and $Q_3$ mean that $(\xi^1_{K:K+1}, \xi^1_y) \notin B_{\bar{\cS}}(\bar{x}_{K:K+1}, \bar{y}, \varepsilon/2)$. The event $Q_2$ indicates that the exit is triggered by the time only. 	Next, we introduce $U^2 := (\alpha^2_{K:K+1}, L^2_K, M^2_K)$ with
		\begin{equation}\label{def:U2_vissup}
			(\alpha^2_{K:K+1}(t), L^2_K(t), M^2_K(t)) = \left\{ 
			\begin{array}{ c l }
				(0, 0, 0), & \; t < \tau^1, \\
				(0, L_K(\tau^1)- L^1_K(\tau^1), M_K(\tau^1)- M^1_K(\tau^1)), &\; t \geq \tau^1.
			\end{array}
			\right.
		\end{equation}
		Under $Q_1$ and $Q_3$, $U^2$ brings $(\tilde{\xi}^1_{K:K+1}, \tilde{\xi}^1_y)$ to $(\xi^1_{K:K+1}, \xi^1_y)$ immediately at $\tau^1$ and stays inactive afterward. For $Q_2$, we have $U^2 = (0, 0, 0)$. 
		
		Since $(\tau^1, \tilde{\xi}^1_{K:K+1}, \tilde{\xi}^1_y) \in E_{\bar{\cS}}(\delta, \varepsilon)$ under $Q_1$, we have
		\begin{align}
			& \one_{A \cap \{ \rho \geq \tau^1 \} \cap Q_1} \varphi^{p, \eta}(\tau^1, \tilde{\xi}^1_{K:K+1}, \tilde{\xi}^1_y) \nonumber \\
			& \leq \one_{A \cap \{ \rho \geq \tau^1 \} \cap Q_1} v^n_K(\tau^1, \tilde{\xi}^1_{K:K+1}, \tilde{\xi}^1_y) \nonumber \\
			& \leq \one_{A \cap \{ \rho \geq \tau^1 \} \cap Q_1} \E \Big[ \lambda_K \Delta L^2_K(\tau^1) + \theta_K \Delta M^2_K (\tau^1) + v^n_K (\tau^1, (X_{K:K+1}, Y)(\tau^1; \tau^1, \tilde{\xi}^1_{K:K+1}, \tilde{\xi}^1_y, U^2)) \Big| \cF_{\tau^1} \Big] \nonumber \\
			& = \one_{A \cap \{ \rho \geq \tau^1 \} \cap Q_1} \big[ \lambda_K \Delta L^2_K(\tau^1) + \theta_K \Delta M^2_K (\tau^1) + v^n_K (\tau^1, \xi^1_{K:K+1}, \xi^1_y) \big].   \label{Q1}
		\end{align}
		The first inequality uses the property (3) above. The second inequality follows from the submartingale property of $v^n_K$ with the random initial condition $(\tau^1, \tilde{\xi}^1_{K:K+1}, \tilde{\xi}^1_y)$ and the control $U^2$. 
		
		With the property (2) and $(\tilde{\xi}^1_{K:K+1}, \tilde{\xi}^1_y) = (\xi^1_{K:K+1}, \xi^1_y)$ on $Q_2$ given $A$, we obtain
		\begin{align}
			& \one_{A \cap \{ \rho \geq \tau^1 \} \cap Q_2} \varphi^{p, \eta}(\tau^1, \tilde{\xi}^1_{K:K+1}, \tilde{\xi}^1_y) \nonumber \\
			& \leq \one_{A \cap \{ \rho \geq \tau^1 \} \cap Q_2} \big[ w_K(G_K - \tilde{\xi}^1_K)^+ + v^n_{K+1}(T_K, \tilde{\xi}^1_{K+1}, \tilde{\xi}^1_y) \big] \nonumber \\
			& = \one_{A \cap \{ \rho \geq \tau^1 \} \cap Q_2} \big[ w_K(G_K - \xi^1_K)^+ + v^n_{K+1}(T_K, \xi^1_{K+1}, \xi^1_y) \big]. \label{Q2}
		\end{align}

		Similarly,
		\begin{align}
			& \one_{A \cap \{ \rho \geq \tau^1 \} \cap Q_3} \varphi^{p, \eta}(\tau^1, \tilde{\xi}^1_{K:K+1}, \tilde{\xi}^1_y) \nonumber \\
			& \leq \one_{A \cap \{ \rho \geq \tau^1 \} \cap Q_3} v^n_K(\tau^1, \tilde{\xi}^1_{K:K+1}, \tilde{\xi}^1_y) \nonumber \\
			& \leq \one_{A \cap \{ \rho \geq \tau^1 \} \cap Q_3} \big[ \lambda_K \Delta L^2_K(\tau^1) + \theta_K \Delta M^2_K (\tau^1) + w_K(G_K - \xi^1_K)^+ + v^n_{K+1}(T_K, \xi^1_{K+1}, \xi^1_y) \big]. \label{Q3}	
		\end{align}
		The first inequality uses the property (3) and the fact that $(\tau^1, \tilde{\xi}^1_{K:K+1}, \tilde{\xi}^1_y) \in E_{\bar{\cS}}(\delta, \varepsilon)$ under $Q_3$. The second inequality applies the submartingale property of $v^n_K$ from $\tau^1$ to $\tau^1$ under the control $U^2$. 		

		Putting \eqref{tau1_vissup}, \eqref{RHS}, \eqref{lessrho}, \eqref{Q1}, \eqref{Q2}, and \eqref{Q3} together, we obtain
		\begin{align}
			\E \Big[ &  \one_{A \cap \{ \rho < \tau^1 \}} e^{-\beta (\rho - \tau)} v^{p, \eta}_K (\rho , (X_{K:K+1}, Y)(\rho; \tau, \xi_{K:K+1}, \xi_y, U)) \nonumber \\
			& + \one_{A \cap \{ \rho \geq \tau^1 \} \cap \{ \tau^1 < T_K \}} e^{-\beta (\tau^1 - \tau)} v^n_K (\tau^1, \xi^1_{K:K+1}, \xi^1_y) \nonumber \\
			& + \one_{A \cap \{ \rho \geq \tau^1 \} \cap \{ \tau^1 = T_K \}} e^{-\beta (T_K - \tau)} \big( w_K(G_K - \xi^1_K)^+ + v^n_{K+1}(T_K, \xi^1_{K+1}, \xi^1_y) \big)  \nonumber \\
			& + \one_A \Big(\lambda_K \int^{\rho \wedge \tau^1}_{\tau} e^{-\beta(s - \tau)} dL_K(s) + \theta_K \int^{\rho \wedge \tau^1}_{\tau} e^{-\beta(s - \tau)} dM_K(s) \Big) \Big| \cF_{\tau} \Big] \nonumber \\
			\geq & \one_A v^{p, \eta}_K (\tau, \xi_{K:K+1}, \xi_y). \label{Aineq}
		\end{align}
		Here, we use the fact that the sum of $(L^1_K, M^1_K)$ and $(L^2_K, M^2_K)$ equals $(L_K, M_K)$ from $\tau$ to $\rho \wedge \tau^1$.
		
		Under $A^c$, the submartingale property of $v^n_K$ with the random initial condition $(\tau, \xi_{K:K+1}, \xi_y)$ and the control $U$ leads to
		\begin{align}
			\E \Big[ &  \one_{A^c \cap \{ \rho < \tau^1 \}} e^{-\beta (\rho - \tau)} v^{n}_K (\rho , (X_{K:K+1}, Y)(\rho; \tau, \xi_{K:K+1}, \xi_y, U)) \nonumber \\
			& + \one_{A^c \cap \{ \rho \geq \tau^1 \} \cap \{ \tau^1 < T_K \}} e^{-\beta (\tau^1 - \tau)} v^n_K (\tau^1, \xi^1_{K:K+1}, \xi^1_y) \nonumber \\
			& + \one_{A^c \cap \{ \rho \geq \tau^1 \} \cap \{ \tau^1 = T_K \}} e^{-\beta (T_K - \tau)} \big( w_K(G_K - \xi^1_K)^+ + v^n_{K+1}(T_K, \xi^1_{K+1}, \xi^1_y) \big)  \nonumber \\
			& + \one_{A^c} \Big(\lambda_K \int^{\rho \wedge \tau^1}_{\tau} e^{-\beta(s - \tau)} dL_K(s) + \theta_K \int^{\rho \wedge \tau^1}_{\tau} e^{-\beta(s - \tau)} dM_K(s) \Big) \Big| \cF_{\tau} \Big] \nonumber \\
			\geq & \one_{A^c} v^n_K (\tau, \xi_{K:K+1}, \xi_y) = \one_{A^c} v^{p, \eta}_K (\tau, \xi_{K:K+1}, \xi_y), \label{Acineq}
		\end{align}
		where the last equality follows from the definition of the event $A$ and the property (3). Noting that $v^{p, \eta}_K \geq v^n_K$ everywhere, \eqref{Aineq} and \eqref{Acineq} lead to
		\begin{align}
			\E \Big[ &  \one_{\{ \rho < \tau^1 \}} e^{-\beta (\rho - \tau)} v^{p, \eta}_K (\rho , (X_{K:K+1}, Y)(\rho; \tau, \xi_{K:K+1}, \xi_y, U)) \nonumber \\
			& + \one_{\{ \rho \geq \tau^1 \} \cap \{ \tau^1 < T_K \}} e^{-\beta (\tau^1 - \tau)} v^n_K (\tau^1, \xi^1_{K:K+1}, \xi^1_y) \nonumber \\
			& + \one_{\{ \rho \geq \tau^1 \} \cap \{ \tau^1 = T_K \}} e^{-\beta (T_K - \tau)} \big( w_K(G_K - \xi^1_K)^+ + v^n_{K+1}(T_K, \xi^1_{K+1}, \xi^1_y) \big)  \nonumber \\
			& + \lambda_K \int^{\rho \wedge \tau^1}_{\tau} e^{-\beta(s - \tau)} dL_K(s) + \theta_K \int^{\rho \wedge \tau^1}_{\tau} e^{-\beta(s - \tau)} dM_K(s) \Big| \cF_{\tau} \Big] \nonumber \\
			\geq & v^{p, \eta}_K (\tau, \xi_{K:K+1}, \xi_y). \label{AAc}
		\end{align}
		
		Similar to \citet[p.109]{bayraktar2015stochastic}, we introduce another control $U^3:=(\alpha^3_{K:K+1}, L^3_K,$ $ M^3_K)$ to avoid double-counting the transaction of $U$ at $\tau^1$:
		\begin{equation}
			\begin{aligned}
				\alpha^3_{K:K+1}(t)  & := \alpha_{K:K+1}(t), \\
				(L^3_K(t), M^3_K(t)) & := (L_K(t) - \one_{\{\tau^1 \leq t \}} \Delta L_K(\tau^1), M_K(t) - \one_{\{\tau^1 \leq t \}} \Delta M_K(\tau^1)). 
			\end{aligned}
		\end{equation}
		The control $U^3$ makes
		\begin{align}
			(X_{K:K+1}, Y)(t; \tau^1, \xi^1_{K:K+1}, \xi^1_y, U^3) = (X_{K:K+1}, Y)(t; \tau, \xi_{K:K+1}, \xi_y, U), \quad t \geq \tau^1. \label{U3_U}
		\end{align}
		
		When $\tau^1 < T_K$, we have
		\begin{align}
			& \one_{\{ \rho \geq \tau^1 \} \cap \{\tau^1 < T_K\}} v^n_K (\tau^1, \xi^1_{K:K+1}, \xi^1_y) \nonumber \\
			& \leq \E \Big[\one_{\{ \rho \geq \tau^1 \} \cap \{\tau^1 < T_K\}} \cM \big([\tau^1, \rho], (v^n_K, v^n_{K+1}), (X_{K:K+1}, Y)(\cdot; \tau^1, \xi^1_{K:K+1}, \xi^1_y, U^3) \big) \Big|\cF_{\tau^1} \Big] \nonumber \\
			&  \leq \E \Big[\one_{\{ \rho \geq \tau^1 \} \cap \{\tau^1 < T_K\}} \cM \big([\tau^1, \rho], (v^{p, \eta}_K, v^n_{K+1}), (X_{K:K+1}, Y)(\cdot; \tau^1, \xi^1_{K:K+1}, \xi^1_y, U^3) \big) \Big|\cF_{\tau^1} \Big] \nonumber \\
			& = \E \Big[\one_{\{ \rho \geq \tau^1 \} \cap \{\tau^1 < T_K\}} \cM \big([\tau^1, \rho], (v^{p, \eta}_K, v^n_{K+1}), (X_{K:K+1}, Y)(\cdot; \tau, \xi_{K:K+1}, \xi_y, U) \big) \Big|\cF_{\tau^1} \Big]. \label{Q1M}
		\end{align}
		The first inequality comes from the submartingale property of $v^n_K$ with the admissible control $U^3$. The second inequality uses the fact that $v^{p, \eta}_K \geq v^n_K$. The last equality is due to \eqref{U3_U}.
		
		When $\tau^1 = T_K$, we have
		\begin{align*}
			& \one_{\{ \rho \geq \tau^1 \} \cap \{\tau^1 = T_K\}} \big( w_K(G_K - \xi^1_K)^+ + v^n_{K+1}(T_K, \xi^1_{K+1}, \xi^1_y) \big) \\
			& \leq \E \Big[ \one_{\{ \rho \geq \tau^1 \} \cap \{\tau^1 = T_K\}} \big\{ w_K(G_K - \xi^1_K)^+ \\
			& \hspace{3.8cm} + \cM \big([\tau^1, \rho], v^n_{K+1}, (X_{K+1}, Y)(\cdot; \tau^1, \xi^1_{K+1}, \xi^1_y, U^3) \big) \big\}  \Big| \cF_{\tau^1} \Big] \\
			& = \E \Big[ \one_{\{ \rho \geq \tau^1 \} \cap \{\tau^1 = T_K\}} \big\{ w_K(G_K - \xi^1_K)^+ \\
			& \hspace{3.8cm} + \cM \big([\tau^1, \rho], v^n_{K+1}, (X_{K+1}, Y)(\cdot; \tau, \xi_{K:K+1}, \xi_y, U) \big) \big\}  \Big| \cF_{\tau^1} \Big].
		\end{align*}
	The last equality comes from \eqref{U3_U}. Note that $X_{K+1}$ may also depend on $X_K$ and thus $\xi_K$, due to potential transactions between goals.
		
	Summing them up with \eqref{AAc}, we get the desired result.

	{\bf Step 4}. The viscosity supersolution property in Definition \ref{def:vis_super} (1), when $t \in [T_{K-1}, T_K)$: The main idea is similar to Step 3 above. We give the detail about stochastic subsolutions and the decomposition of controls.
	
	Let $(\bar{t}, \bar{x}_{K:K+1}, \bar{y}) \in [T_{K-1}, T_K) \times \overline{\cS}^o_K$ and $\varphi(t, x_{K:K+1}, y) \in C^{1, 2}([T_{K-1}, T_K) \times \overline{\cS}^o_K)$ be a test function such that $v_{K, -} - \varphi$ attains a strict local minimum of zero at $(\bar{t}, \bar{x}_{K:K+1}, \bar{y})$. We want to prove
	\begin{equation}
		\begin{aligned}
			\max \Big\{ & \beta \varphi - \partial_t \varphi - H(\bar{x}_{K:K+1}, \bar{y}, \partial \varphi, \partial^2 \varphi), \\
			& - \lambda_{K} + \partial_{K+1} \varphi - \partial_{K} \varphi, - \theta_K - \partial_{K+1} \varphi + \partial_{K} \varphi \Big\} (\bar{t}, \bar{x}_{K:K+1}, \bar{y}) \geq 0.
		\end{aligned}
	\end{equation}
	Assume on the contrary that the left-hand side is strictly negative. By the continuity of SDE parameters and Berge's maximum theorem, the Hamiltonian $$H(x_{K:K+1}, y, \partial \varphi(t, x_{K:K+1}, y), \partial^2 \varphi(t, x_{K:K+1}, y))$$ is continuous in $(t, x_{K:K+1}, y)$. 
	
	Denote 
	\begin{equation*}
		B_{\bar{\cS}} (\bar{t}, \bar{x}_{K:K+1}, \bar{y}, \varepsilon) := (\bar{t}-\varepsilon, \bar{t}+\varepsilon) \times B_{\bar{\cS}}(\bar{x}_{K:K+1}, \bar{y}, \varepsilon).
	\end{equation*} 
	Similar to Step 3, there exist $\varepsilon, \eta > 0$ and $v^n_K$, the $K$-th element of some $v^n := (v^n_1, \ldots, v^n_K, v^n_{K+1}) \in \cV^-$, such that $\varphi^\eta := \varphi + \eta$ satisfies
	
	\begin{enumerate}[label={(\arabic*)}]	
		\item on $\overline{B_{\bar{\cS}} (\bar{t}, \bar{x}_{K:K+1}, \bar{y}, \varepsilon)}$,
		\begin{equation}
			\begin{aligned}
				\max \Big\{ & \beta \varphi^\eta - \partial_t \varphi^\eta - H(x_{K:K+1}, y, \partial \varphi^\eta, \partial^2 \varphi^\eta), \\
				& - \lambda_{K} + \partial_{K+1} \varphi^\eta - \partial_{K} \varphi^\eta, - \theta_K - \partial_{K+1} \varphi^\eta + \partial_{K} \varphi^\eta \Big\} < 0;
			\end{aligned}
		\end{equation}
		\item $\varphi^\eta \leq v^n_K \quad \text{ on } \overline{B_{\bar{\cS}} (\bar{t}, \bar{x}_{K:K+1}, \bar{y}, \varepsilon)} \backslash B_{\bar{\cS}} (\bar{t}, \bar{x}_{K:K+1}, \bar{y}, \varepsilon/2)$;
		\item $\varphi^\eta(\bar{t}, \bar{x}_{K:K+1}, \bar{y}) > v_{K, -}(\bar{t}, \bar{x}_{K:K+1}, \bar{y})$.
	\end{enumerate}
	Also, we can make $\varepsilon$ small such that $\overline{B_{\bar{\cS}} (\bar{t}, \bar{x}_{K:K+1}, \bar{y}, \varepsilon)} \subset [T_{K-1}, T_K) \times \overline{\cS}^o_K$.
	
	Define
	\begin{equation}
		v^{\eta}_{K} (t, x_{K:K+1}, y) := \left\{ 
		\begin{array}{ c l }
			v^n_K (t, x_{K:K+1}, y) \vee \varphi^{\eta}(t, x_{K:K+1}, y) &\; \text{on } \overline{B_{\bar{\cS}} (\bar{t}, \bar{x}_{K:K+1}, \bar{y}, \varepsilon)}, \\
			v^n_K (t, x_{K:K+1}, y), & \; \text{otherwise}.
		\end{array}
		\right.
	\end{equation}
	Consider the random initial condition $(\tau, \xi_{K:K+1}, \xi_y)$ with $\tau \in [T_{K-1}, T_K]$. Define the event 
	\begin{equation*}
		A := \{ (\tau, \xi_{K:K+1}, \xi_y) \in B_{\bar{\cS}} (\bar{t}, \bar{x}_{K:K+1}, \bar{y}, \varepsilon/2) \} \cap \{\varphi^{\eta}(\tau, \xi_{K:K+1}, \xi_y) > v^n_K(\tau, \xi_{K:K+1}, \xi_y)\}.
	\end{equation*}
	Let 
	\begin{align*}
		\tau^1 := \inf \{ t \in [\tau, T_K{\color{black}]} \, | \,  (t,  (X_{K:K+1}, Y)(t; \tau, \xi_{K:K+1}, \xi_y, U^1)) \notin B_{\bar{\cS}} (\bar{t}, \bar{x}_{K:K+1}, \bar{y}, \varepsilon/2) \}.
	\end{align*}
	 {\color{black} By choosing $\varepsilon > 0$ such that $\bar{t} + \varepsilon/2 < T_K$, the exit time $\tau^1$ is finite.} Denote $\xi^1 := (\xi^1_{K:K+1}, \xi^1_y)$ as the exit position. $U^1$ and $U^2$ are defined as in \eqref{def:U1_vissup} and \eqref{def:U2_vissup}, respectively. There is no need to introduce events $Q_i, i=1,2,3$. The remaining proof is similar to but simpler than Step 3.
	
	{\bf Step 5}. When more than two goals are active, the claim follows from a similar argument with some direct modifications. We omit the detail here.
\end{proof}

\subsection{Results of the comparison principle}\label{sec:proof_comp}

	\begin{proof}[Proof of Lemma \ref{lem:strict_l}]
		The derivatives of $l(\cdot)$ are
		\begin{align*}
			& \partial_t l = - \gamma l, \quad \partial_i l = - c_1 e^{\gamma(T_k - t)} \Big(1 + \sum^{K+1}_{i=k} a_i x_i \Big)^{q-1} a_i q, \quad \partial_y l = - 2 c_2 e^{\gamma (T_k - t)} y, \\
			& \partial^2_{ij} l = - c_1 e^{\gamma(T_k - t)} \Big(1 + \sum^{K+1}_{i=k} a_i x_i \Big)^{q-2} a_i a_j q(q-1), \quad \partial^2_{iy} l = 0, \quad \partial^2_{yy} l = - 2 c_2 e^{\gamma(T_k - t)} I_m. 
		\end{align*}
		% A key observation is, the first derivative $\partial_i l$ has a common term $ c_1 e^{\gamma(T_k - t)} \left(1 + \sum^{K+1}_{i=k} a_i x_i \right)^{q-1} q$ and the second derivative $\partial^2_{ij} l $ has a common term $c_1 e^{\gamma(T_k - t)} \Big(1 + \sum^{K+1}_{i=k} a_i x_i \Big)^{q-2} q(q-1)$. 
		
		First, we consider the infimum in the Hamiltonian. Since $\mu(y)$ and $\alpha_i$ are bounded, there exists a generic constant $C_\mu > 0$, such that
		\begin{align*}
			 &  \sum^{K+1}_{i=k}  (\mu(y) - r \one )^\top \alpha_i a_i x_i \leq  C_\mu \sum^{K+1}_{i=k} a_i x_i.
		\end{align*} 
		Then
		\begin{align*}
			&  \sum^{K+1}_{i=k}  (\mu(y) - r \one )^\top \alpha_i x_i \partial_i l \geq  - c_1 e^{\gamma(T_k - t)} \Big(1 + \sum^{K+1}_{i=k} a_i x_i \Big)^{q} C_\mu q.
		\end{align*} 
		 For the quadratic term, we introduce the following notations:
		 \begin{align*}
		 	A & := (\alpha_k, \ldots, \alpha_{K+1}) \in \R^{n \times (K-k+2)}, \\
		 	B & := (\mu(y) - r \one, \ldots, \mu(y) - r \one) \in \R^{n \times (K-k+2)}, \\
		 	X & := \text{diag}(x_{k:K+1}) \in \R^{(K-k+2) \times (K-k+2)}, \\
		 	\text{diag}(\partial_x l) & := \text{diag}((\partial_k l, \ldots, \partial_{K+1}l)) \in \R^{(K-k+2) \times (K-k+2)}.
		 \end{align*}
		 Since $\partial^2_{iy} l = 0$, we have 
		 \begin{align*}
		 	\tr[\Sigma(\alpha_{k:K+1}, x_{k:K+1}, y) \partial^2 l] = -2 c_2 e^{\gamma(T_k - t)} \tr[ \sigma_Y(y) \sigma_Y(y)^\top] + \tr[\Sigma_{xx} \partial^2_{xx} l],
		 \end{align*}
		 where $\partial^2_{xx} l$ is the Hessian matrix with the second derivative of $l$ on $x_{k:K+1}$ and 
		 \begin{align*}
		 	\Sigma_{xx} := (\alpha x)^\top_{k:K+1} \sigma(y) \sigma(y)^\top (\alpha x)_{k:K+1} = X A^\top \sigma(y) \sigma(y)^\top A X. 
		 \end{align*}
		 Since $\sigma_Y(y)$ is Lipschitz, there exists $C_{\sigma_Y} > 0$ such that
		 \begin{equation*}
		 	\tr[ \sigma_Y(y) \sigma_Y(y)^\top] \leq C_{\sigma_Y} (1 + |y|^2).
		 \end{equation*}
		 Entries of $X\partial^2_{xx}l X$ are given by
		 \begin{align*}
		 	 - c_1 e^{\gamma(T_k - t)} \Big(1 + \sum^{K+1}_{i=k} a_i x_i \Big)^{q-2} a_i a_j q(q-1) x_i x_j, \quad i, j \in \{k, \ldots, K+1\}.
		 \end{align*}
		  As $\alpha_i$ and $\sigma(y)$ are bounded, entries of $A^\top \sigma(y) \sigma(y)^\top A$ are bounded. There exists a constant $C_{\sigma} > 0$ such that
			\begin{align}
				\tr[\Sigma_{xx} \partial^2_{xx}l] \geq - C_\sigma q |q - 1| c_1 e^{\gamma(T_k - t)} \Big(1 + \sum^{K+1}_{i=k} a_i x_i \Big)^{q}.
			\end{align}
		 In summary, there exists a generic constant $C_{\sigma, \sigma_Y, \mu, q}$ depending on $q$, $\mu(y)$, $\sigma(y)$, and $\sigma_Y(y)$, such that
		\begin{align*}
			& \inf_{\alpha_{k:K+1} \in \cA^{K-k+2}}  \Big\{  \sum^{K+1}_{i=k}  (\mu(y) - r \one )^\top \alpha_i x_i \partial_i l + \frac{1}{2} \tr\left[\Sigma(\alpha_{k:K+1}, x_{k:K+1}, y) \partial^2 l  \right] \Big\} \\
			& \quad \geq - C_{\sigma, \sigma_Y, \mu, q} c_2  e^{\gamma(T_k - t)} (1 + |y|^2) - C_{\sigma, \sigma_Y, \mu, q} c_1 e^{\gamma(T_k - t)} \Big(1 + \sum^{K+1}_{i=k} a_i x_i \Big)^q,
		\end{align*} 
		Therefore, we obtain
		\begin{align*}
			& \beta l - \partial_t l - H(x_{k:K+1}, y, \partial l, \partial^2 l) \\
			& \leq (\beta + \gamma) l + r q c_1 e^{\gamma(T_k - t)} \Big(1 + \sum^{K+1}_{i=k} a_i x_i \Big)^q + C_{\mu_Y} c_2 e^{\gamma(T_k - t)} (1 + |y|^2) \\
			& \quad  + C_{\sigma, \sigma_Y, \mu, q} c_2  e^{\gamma(T_k - t)} (1 + |y|^2) + C_{\sigma, \sigma_Y, \mu, q} c_1 e^{\gamma(T_k - t)} \Big(1 + \sum^{K+1}_{i=k} a_i x_i \Big)^q \\
			& \leq \left[\beta + \gamma - rq - C_{\mu_Y} - C_{\sigma, \sigma_Y, \mu, q} \right]l,
		\end{align*}
		where $C_{\mu_Y}$ is a generic constant for the linear growth rate of $\mu_Y(y)$.
		
		We can choose $\gamma$ large enough, such that the right-hand side is strictly negative. Importantly, $\gamma$ only needs to be bigger than a constant independent of $c_1$ and $c_2$.
		
		For the gradient constraints in \eqref{F}, we have
		\begin{align*}
			\partial_{K+1} l - \partial_i l = (a_i - a_{K+1}) c_1 q e^{\gamma(T_k - t)} \Big(1 + \sum^{K+1}_{i=k} a_i x_i \Big)^{q-1}.
		\end{align*}
		Since $(1 + \sum^{K+1}_{i=k} a_i x_i )^{q-1} \leq 1$, we can set 
		$$ - \theta_i < (a_i - a_{K+1}) < \lambda_i, $$
		and choose $c_1 > 0$ small enough, such that $0 < c_1 q e^{\gamma(T_k - t)} < 1$. Then
		\begin{equation}\label{eq:grad}
			- \lambda_i + \partial_{K+1} l - \partial_i l < 0, \quad - \theta_i - \partial_{K+1} l + \partial_i l < 0.
		\end{equation}
	\end{proof}

\begin{proof}[Proof of Lemma \ref{lem:conti}]
	
	The claim \eqref{conti} is trivial for $(x_{k:K+1}, y) \in \cS_k$ since $g(x_{k:K+1})=0$. 
	
	Consider fixed $t \in [T_{k-1}, T_k]$ and $(x^o_{k:K+1}, y) \in \overline{\cS}^o_{k} \backslash \cS_k$. We define the corresponding transfer plan $(\zeta \Delta L_{k:K}(x^o_{k:K+1}), \zeta \Delta M_{k:K}(x^o_{k:K+1}))$ in \eqref{transfer} and the wealth state $Z_{k:K+1}(x^o_{k:K+1}, \zeta)$ in \eqref{z} after the transaction. First, it follows from the assumption that
	\begin{equation}
		u(t, x^o_{k:K+1}, y) = \limsup_{\substack{ x'_{k:K+1} \rightarrow x^o_{k:K+1} \\ x'_j > 0, \, j = k,\ldots,K+1}} u(t, x'_{k:K+1}, y).
	\end{equation}
	To provide an upper bound of the right-hand side, we note that the viscosity subsolution property \eqref{u_sub} of $u$ holds since $(t, x'_{k:K+1}, y) \in [T_{k-1}, T_k] \times \cS_k$. Moreover, elements of $x'_{k:K+1} - \zeta g(x^o_{k:K+1})$ are positive for any sufficiently small $\zeta > 0$. We obtain that
	\begin{align*}
		u(t, x'_{k:K+1}, y) \leq & u(t, x'_{k:K+1} - \zeta g(x^o_{k:K+1}), y) \\
		& + \sum^K_{j=k} \lambda_j \zeta \Delta L_j(x^o_{k:K+1}) + \sum^K_{j=k} \theta_j \zeta \Delta M_j(x^o_{k:K+1}),
	\end{align*}
	when $\zeta \in [0, \varepsilon)$ for some sufficiently small $\varepsilon>0$. Hence,
	\begin{align*}
		& u(t, x^o_{k:K+1}, y) \\
		&= \limsup_{\substack{ x'_{k:K+1} \rightarrow x^o_{k:K+1} \\ x'_j > 0, \, j = k,\ldots,K+1}} u(t, x'_{k:K+1}, y) \\
		& \leq \limsup_{\substack{ x'_{k:K+1} \rightarrow x^o_{k:K+1} \\ x'_j > 0, \, j = k,\ldots,K+1}} \Big\{ u(t, x'_{k:K+1} - \zeta g(x^o_{k:K+1}), y) + \sum^K_{j=k} \lambda_j \zeta \Delta L_j(x^o_{k:K+1}) + \sum^K_{j=k} \theta_j \zeta \Delta M_j(x^o_{k:K+1}) \Big\} \\
		& \leq u(t, x^o_{k:K+1} - \zeta g(x^o_{k:K+1}), y) + \sum^K_{j=k} \lambda_j \zeta \Delta L_j(x^o_{k:K+1}) + \sum^K_{j=k} \theta_j \zeta \Delta M_j(x^o_{k:K+1}),
	\end{align*}
	where the last inequality follows from the USC property of $u$. Next, we let $\zeta \downarrow 0$ and obtain
	\begin{align*}
		& u(t, x^o_{k:K+1}, y) \\
		& \leq \liminf_{\zeta \downarrow 0} \Big\{ u(t, x^o_{k:K+1} - \zeta g(x^o_{k:K+1}), y) + \sum^K_{j=k} \lambda_j \zeta \Delta L_j(x^o_{k:K+1}) + \sum^K_{j=k} \theta_j \zeta \Delta M_j(x^o_{k:K+1}) \Big\} \\
		& \leq \limsup_{\zeta \downarrow 0} \Big\{ u(t, x^o_{k:K+1} - \zeta g(x^o_{k:K+1}), y) + \sum^K_{j=k} \lambda_j \zeta \Delta L_j(x^o_{k:K+1}) + \sum^K_{j=k} \theta_j \zeta \Delta M_j(x^o_{k:K+1}) \Big\} \\
		& \leq \limsup_{\zeta \downarrow 0} u(t, x^o_{k:K+1} - \zeta g(x^o_{k:K+1}), y) \\
		& \quad + \limsup_{\zeta \downarrow 0}\Big\{ \sum^K_{j=k} \lambda_j \zeta \Delta L_j(x^o_{k:K+1}) + \sum^K_{j=k} \theta_j \zeta \Delta M_j(x^o_{k:K+1}) \Big\} \\
		& \leq u(t, x^o_{k:K+1}, y),
	\end{align*}
	where we apply the USC property of $u$ again in the last inequality. Then all inequalities must be equalities and the claim follows.
\end{proof}

\begin{proof}[Proof of Proposition \ref{p:comp_btw}]
    For the ease of presentation, we consider $k = K$ only, since other cases are similar. 
    
    {\bf Step 1}. Denote $s := (t, x_{K:K+1}, y)$. Assume on the contrary that $ u_K(\bar{s}) - v_K(\bar{s}) > 0$ for some $\bar{s} = (\bar{t}, \bar{x}_{K:K+1}, \bar{y}) \in [T_{K-1}, T_K] \times \overline{\cS}_{K}$. Let $l(s)$ be the strict subsolution in Lemma \ref{lem:strict_l}. Consider a constant $\delta \in (0, 1)$ which is close enough to 1 and small positive constants $\varepsilon, \zeta > 0$, such that
	\begin{equation}
		\delta u_K(\bar{s}) - v_K(\bar{s})  + 2 \varepsilon l(\bar{s}) - 2 \zeta^2 > 0. 
	\end{equation}
	Hence,
	\begin{align}\label{shat}
		\sup_{s \in [T_{K-1}, T_K] \times \overline{\cS}_{K}} \Big\{ \delta u_K(s) - v_K(s) + 2 \varepsilon l(s) - 2 \zeta^2 \Big\} > 0.
	\end{align}
	Since $\delta u_K(s) - v_K(s)$ is USC and bounded, $l \rightarrow -\infty$ when $|(x_{K:K+1}, y) | \rightarrow \infty$ in $\overline{\cS}_K$, the maximum in \eqref{shat} is attained in a compact subset of $[T_{K-1}, T_K] \times \overline{\cS}_{K}$ at some $\hat{s} = (\hat{t}, \hat{x}_{K:K+1}, \hat{y})$. It follows from the boundary/terminal conditions that $\hat{t} < T_K$ and $\hat{x}_{K:K+1} \neq (0, 0)$. Moreover, \eqref{shat} indicates that $\hat{s}$ does not depend on $\zeta$.
	
	Define a map $g(\hat{s}): [T_{K-1}, T_K] \times\overline{\cS}^o_{K} \rightarrow \R^{3+m}$ as
	\begin{equation}\label{eq:ghats}
		g(\hat{s}) = g(\hat{t}, \hat{x}_{K:K+1}, \hat{y}) = \left\{ 
		\begin{array}{ c l }
			(0, 0, 0, 0_m), & \quad \hat{x}_{K} > 0 \text{ and } \hat{x}_{K+1} > 0, \\
			(0, -1, 1, 0_m), & \quad \hat{x}_{K} = 0 \text{ and } \hat{x}_{K+1} > 0, \\
			(0, 1, -1, 0_m), & \quad \hat{x}_{K} > 0 \text{ and } \hat{x}_{K+1} = 0.
		\end{array}
		\right.
	\end{equation}
	Then $\hat{s} - \zeta g(\hat{s}) \in [T_{K-1}, T_K) \times \cS_K$ for small $\zeta > 0$.
	
	Next, for each $n > 0$, define
	\begin{align}
		\Phi_n(s, s') & := \delta u_K(s) - v_K(s') + \varepsilon l(s) + \varepsilon l(s') - \psi_n(s, s'), \label{Phi} \\
		\psi_n(s, s') & := |n(s - s') + \zeta g(\hat{s})|^2 + \zeta | s' - \hat{s} |^2.
	\end{align}
	Similarly, thanks to the fact that $\delta u_K(s) - v_K(s')$ is USC and bounded, $l \rightarrow -\infty$ when $|(x_{K:K+1}, y) | \rightarrow \infty$ in $\overline{\cS}_K$, the maximum of $\Phi_n$ in \eqref{Phi} is attained at 
	$$(s_n, s'_n) = ((t_n, x_{K:K+1, n}, y_n), (t'_n, x'_{K:K+1, n}, y'_n))$$
	in some compact subset of $[T_{K-1}, T_K] \times \overline{\cS}_{K} \times [T_{K-1}, T_K] \times \overline{\cS}_{K}$. Moreover, in view of \eqref{shat},
	\begin{align}
		\Phi_n(s_n, s'_n) \geq \Phi_n(\hat{s}, \hat{s}) > 0.
	\end{align}
	If we set $s = \hat{s} - \zeta g(\hat{s})/n$ and $s' = \hat{s}$, the inequality $\Phi_n(s_n, s'_n) \geq \Phi_n(\hat{s} - \zeta g(\hat{s})/n, \hat{s})$ yields
	\begin{align}
		& \delta u_K(s_n) - v_K(s'_n) + \varepsilon l(s_n) + \varepsilon l(s'_n) - |n(s_n - s'_n) + \zeta g(\hat{s})|^2 - \zeta | s'_n - \hat{s} |^2 \nonumber \\
		& \quad \geq \delta u_K(\hat{s} - \zeta g(\hat{s})/n) - v_K(\hat{s})  + \varepsilon l(\hat{s} - \zeta g(\hat{s})/n) + \varepsilon l(\hat{s}).
	\end{align}
	It reduces to
	\begin{align}
		& |n(s_n - s'_n) + \zeta g(\hat{s})|^2 + \zeta | s'_n - \hat{s} |^2 \nonumber \\
		& \quad \leq \delta u_K(s_n) - v_K(s'_n) + \varepsilon l(s_n) + \varepsilon l(s'_n) - \delta u_K(\hat{s} - \zeta g(\hat{s})/n) + v_K(\hat{s}) \label{ubd} \\
		& \qquad  - \varepsilon l(\hat{s} - \zeta g(\hat{s})/n) - \varepsilon l(\hat{s}). \nonumber 
	\end{align}
	Since $u_K$ and $v_K$ are bounded and $l(\hat{s} - \zeta g(\hat{s})/n) \rightarrow l(\hat{s})$ when $n \rightarrow \infty$ by the continuity of $l$, the right-hand side of \eqref{ubd} is bounded for all $n \geq 1$. Then $\{s_n\}^\infty_{n=1}$ and $\{s'_n\}^\infty_{n=1}$ are bounded. Up to a subsequence, both $\{s_n\}^\infty_{n=1}$ and $\{s'_n\}^\infty_{n=1}$ are convergent. They must converge to the same limit, denoted as $s^*$, since $\{|n(s_n - s'_n) + \zeta g(\hat{s})|^2 \}^\infty_{n=1}$ is bounded. To show that $s^* = \hat{s}$, we use the upper semicontinuity to derive
	\begin{equation*}
		\limsup_{n \rightarrow \infty} \Big\{ \delta u_K(s_n) - v_K(s'_n) \Big\} \leq \delta u_K(s^*) - v_K(s^*).
	\end{equation*}
	By the condition \eqref{modi_bd_u} and Lemma \ref{lem:conti},
	\begin{equation*}
		\lim_{n \rightarrow \infty} u_K(\hat{s} - \zeta g(\hat{s})/n) = u_K(\hat{s}).
	\end{equation*}
	Putting together, \eqref{ubd} becomes
	\begin{align}
		& \limsup_{n \rightarrow \infty} |n(s_n - s'_n) + \zeta g(\hat{s})|^2 + \zeta | s'_n - \hat{s} |^2 \nonumber \\
		& \quad \leq \delta u_K(s^*) - v_K(s^*) + 2 \varepsilon l(s^*) - [\delta u_K(\hat{s}) - v_K(\hat{s}) + 2 \varepsilon l(\hat{s})] \\
		& \quad \leq 0, \nonumber
	\end{align}
	where the last inequality follows from the fact that $\hat{s}$ is a maximum point in \eqref{shat}. Therefore,
	\begin{equation}\label{s_limit} 
		\lim_{n \rightarrow \infty} s_n = \lim_{n \rightarrow \infty} s'_n = \hat{s} = s^*, \quad \lim_{n \rightarrow \infty} |n(s_n - s'_n) + \zeta g(\hat{s})|^2 = 0.
	\end{equation}
	It leads to 
	\begin{equation}
		s_n = s'_n - \frac{\zeta g(\hat{s}) + o(1)}{n}.
	\end{equation}
	For sufficiently large $n$, $\hat{s} \in [T_{K-1}, T_K) \times \overline{\cS}^o_K$ implies that $s'_n \in [T_{K-1}, T_K) \times \overline{\cS}^o_K$ and $s_n \in [T_{K-1}, T_K) \times \cS_K$. It ensures that the viscosity sub/supersolution properties of $u_k$ and $v_k$ can be applied at $s_n$ and $s'_n$, respectively.

	{\bf Step 2}. Since $(s_n, s'_n)$ is a maximum point of $\Phi_n(s, s')$, by applying Ishii's lemma to the USC function $\delta u_K(s) + \varepsilon l(s)$ and the LSC function $ v_K(s') - \varepsilon l(s')$, we derive that there exist $(m+2)\times(m+2)$ symmetric matrices $M_n$ and $N_n$ satisfying
	\begin{align}
		& \left( \frac{\partial \psi_n(s_n, s'_n)}{\partial s}, M_n \right) \in \bar{J}^{2, +}_{[T_{K-1}, T_K) \times \cS_K} \big[ \delta u_K + \varepsilon l \big] (s_n), \\
		& \left( - \frac{\partial \psi_n(s_n, s'_n)}{\partial s'}, N_n \right) \in \bar{J}^{2, -}_{[T_{K-1}, T_K) \times \overline{\cS}^o_K} \big[ v_K - \varepsilon l \big](s'_n),
	\end{align} 
	and 
	\begin{equation}
		\begin{pmatrix}
			M_n & 0\\
			0 & - N_n
		\end{pmatrix} 
		\leq \Xi_n + \frac{1}{n^2} (\Xi_n)^2,
	\end{equation}
	where 
	\begin{align}
		& \frac{\partial \psi_n(s_n, s'_n)}{\partial s} = \left( \frac{\partial \psi_n(s_n, s'_n)}{\partial t}, \frac{\partial \psi_n(s_n, s'_n)}{\partial x_{K:K+1}}, \frac{\partial \psi_n(s_n, s'_n)}{\partial y} \right), \\
		& \frac{\partial \psi_n(s_n, s'_n)}{\partial s'} = \left( \frac{\partial \psi_n(s_n, s'_n)}{\partial t'}, \frac{\partial \psi_n(s_n, s'_n)}{\partial x'_{K:K+1}}, \frac{\partial \psi_n(s_n, s'_n)}{\partial y'} \right).
	\end{align}
	The Hessian matrix $\Xi_n$ is given by
	\begin{align*}
		\Xi_n = 2n^2 \begin{pmatrix}
			I_{2+m} & - I_{2+m}\\
			-I_{2+m} & I_{2+m}
		\end{pmatrix} 
		+ 2 \zeta \begin{pmatrix}
			0 & 0\\
			0 & I_{2+m}
		\end{pmatrix}.
	\end{align*}
	
	Since $l$ is $C^{1, 2}$ smooth, we can define
	\begin{align*}
		p_{t, n} & := \frac{1}{\delta} \left( \frac{\partial \psi_n(s_n, s'_n)}{\partial t} -  \varepsilon \frac{\partial l(s_n)}{\partial t} \right), \quad p_{K, n} := \frac{1}{\delta} \left( \frac{\partial \psi_n(s_n, s'_n)}{\partial x_K} -  \varepsilon \frac{\partial l(s_n)}{\partial x_K} \right),  \\
		p_{K+1, n} & := \frac{1}{\delta} \left( \frac{\partial \psi_n(s_n, s'_n)}{\partial x_{K+1}} -  \varepsilon \frac{\partial l(s_n)}{\partial x_{K+1}} \right), \quad p_{y, n} := \frac{1}{\delta} \left( \frac{\partial \psi_n(s_n, s'_n)}{\partial y} -  \varepsilon \frac{\partial l(s_n)}{\partial y} \right), \\
		A_n & := \frac{1}{\delta} \left( M_n -  \varepsilon \frac{\partial^2 l(s_n)}{\partial (x_{K:K+1}, y)^2} \right), \\
		q_{t, n} & := - \frac{\partial \psi_n(s_n, s'_n)}{\partial t'} + \varepsilon \frac{\partial l(s'_n)}{\partial t'}, \quad q_{K, n} := - \frac{\partial \psi_n(s_n, s'_n)}{\partial x'_K} +  \varepsilon \frac{\partial l(s'_n)}{\partial x'_K}, \\
		q_{K+1, n} & := - \frac{\partial \psi_n(s_n, s'_n)}{\partial x'_{K+1}} + \varepsilon \frac{\partial l(s'_n)}{\partial x'_{K+1}}, \quad q_{y', n} := - \frac{\partial \psi_n(s_n, s'_n)}{\partial y'} + \varepsilon \frac{\partial l(s'_n)}{\partial y'}, \\
		B_n & := N_n +  \varepsilon \frac{\partial^2 l(s'_n)}{\partial (x'_{K:K+1}, y')^2}.
	\end{align*}
	Then 
	\begin{align*}
		& \left( p_{t, n}, (p_{K:K+1, n}, p_{y, n}),  A_n\right) \in \bar{J}^{2, +}_{[T_{K-1}, T_K) \times \cS_K} \big[ u_K\big](s_n), \\
		& \left( q_{t, n}, (q_{K:K+1, n}, q_{y', n}),  B_n \right) \in \bar{J}^{2, -}_{[T_{K-1}, T_K) \times \overline{\cS}^o_K} \big[ v_K \big](s'_n).
	\end{align*} 
	The semijets definition of viscosity solution yields
	\begin{equation}
		\begin{aligned}
			\max \Big\{ & \beta u_K(s_n) - p_{t, n} - H(x_{K:K+1, n}, y_n, (p_{K:K+1, n}, p_{y, n}), A_n), \\
			& - \lambda_{K} + p_{K+1, n} - p_{K, n}, - \theta_{K} - p_{K+1, n} + p_{K, n}\Big\} \leq 0
		\end{aligned}
	\end{equation}
	and 
	\begin{equation}
		\begin{aligned}
			\max \Big\{ & \beta v_K(s'_n) - q_{t, n} - H(x'_{K:K+1, n}, y'_n, (q_{K:K+1, n}, q_{y', n}), B_n), \\
			& - \lambda_{K} + q_{K+1, n} - q_{K, n}, - \theta_{K} - q_{K+1, n} + q_{K, n}\Big\} \geq 0.
		\end{aligned}
	\end{equation}
	
        {\bf Step 3}. We consider three scenarios.
	
	{\bf Case 1}. Suppose that $- \lambda_{K} + q_{K+1, n} - q_{K, n} \geq 0$ holds for infinitely many $n$'s. Then
	\begin{align*}
		0 \geq & \delta (- \lambda_{K} + p_{K+1, n} - p_{K, n}) - (- \lambda_{K} + q_{K+1, n} - q_{K, n}) \\
		= & (1 - \delta) \lambda_K + \frac{\partial \psi_n(s_n, s'_n)}{\partial x_{K+1}} + \frac{\partial \psi_n(s_n, s'_n)}{\partial x'_{K+1}} - \frac{\partial \psi_n(s_n, s'_n)}{\partial x_{K}} - \frac{\partial \psi_n(s_n, s'_n)}{\partial x'_{K}} \\
		& -  \varepsilon \frac{\partial l(s_n)}{\partial x_{K+1}} + \varepsilon \frac{\partial l(s_n)}{\partial x_{K}}  -  \varepsilon \frac{\partial l(s'_n)}{\partial x'_{K+1}} + \varepsilon \frac{\partial l(s'_n)}{\partial x'_{K}}.
	\end{align*}
	We note that
	\begin{align*}
		\frac{\partial \psi_n(s_n, s'_n)}{\partial x_{K:K+1}} + \frac{\partial \psi_n(s_n, s'_n)}{\partial x'_{K:K+1}} = 2 \zeta (x'_{K:K+1, n} - \hat{x}_{K:K+1}) \rightarrow 0 \text{ as } n \rightarrow \infty.
	\end{align*}
	Moreover,
	\begin{align*}
		- \frac{\partial l(s_n)}{\partial x_{K+1}} + \frac{\partial l(s_n)}{\partial x_{K}} > - \lambda_{K}, \quad - \frac{\partial l(s'_n)}{\partial x'_{K+1}} + \frac{\partial l(s'_n)}{\partial x'_{K}} > - \lambda_{K}
	\end{align*}
	by the strict subsolution property of $l$. Since the first-order derivative of $l$ is continuous, the inequality becomes
	\begin{equation}
		0  \geq  (1 - \delta) \lambda_K  -  2 \varepsilon \frac{\partial l(\hat{s})}{\partial x_{K+1}} + 2 \varepsilon \frac{\partial l(\hat{s})}{\partial x_{K}} > (1 - \delta - 2 \varepsilon) \lambda_{K}, \text{ as } n \rightarrow \infty.
	\end{equation}
	We apply the strict subsolution property of $l$ at $\hat{s}$ in the second inequality. It contradicts the fact that $(1 - \delta - 2 \varepsilon) > 0$ and $\lambda_{K} \geq 0$.
	
	{\bf Case 2}. Suppose that $- \theta_{K} - q_{K+1, n} + q_{K, n} \geq 0$ holds for infinitely many $n$'s. Similarly, we can find a contradiction by $0 > (1 - \delta - 2 \varepsilon) \theta_{K}$.
	
	{\bf Case 3}. $\beta v_K(s'_n) - q_{t, n} - H(x'_{K:K+1, n}, y'_n, (q_{K:K+1,n}, q_{y', n}), B_n) \geq 0$ for infinitely many $n$'s. 
	
	For convenience, we define an operator 
	\begin{align*}
		& \cL^{\alpha_{k:K+1}} (x_{k:K+1}, y, \partial V_k, \partial^2 V_k) \\
		& := \sum^{K+1}_{i=k}  (\mu(y) - r \one )^\top \alpha_i x_i \partial_i V_k + \frac{1}{2} \tr\left[\Sigma(\alpha_{k:K+1}, x_{k:K+1}, y) \partial^2 V_k  \right].
	\end{align*} 
	By continuity, the infimum in $H(x'_{K:K+1, n}, y'_n, (q_{K:K+1, n}, q_{y', n}), B_n)$ is attained by some $\alpha^*_{K:K+1}$.  Then
	\begin{align}
		0 \leq & \beta v_K(s'_n) - q_{t, n} - H(x'_{K:K+1, n}, y'_n, (q_{K:K+1, n}, q_{y', n}), B_n) \nonumber \\
		& - \left[ \beta u_K(s_n) - p_{t, n} - H(x_{K:K+1, n}, y_n, (p_{K:K+1, n}, p_{y, n}), A_n) \right] \delta \nonumber \\
		\leq & \beta [v_K(s'_n) - \delta u_K(s_n)] + \frac{\partial \psi_n(s_n, s'_n)}{\partial t} -  \varepsilon \frac{\partial l(s_n)}{\partial t} + \frac{\partial \psi_n(s_n, s'_n)}{\partial t'} - \varepsilon \frac{\partial l(s'_n)}{\partial t'} \nonumber \\
		& - \sum^{K+1}_{i=K} r x'_{i, n} \left(- \frac{\partial \psi_n(s_n, s'_n)}{\partial x'_i} + \varepsilon \frac{\partial l(s'_n)}{\partial x'_i} \right) - \mu_Y(y'_n)^\top \left(- \frac{\partial \psi_n(s_n, s'_n)}{\partial y'} + \varepsilon \frac{\partial l(s'_n)}{\partial y'} \right)  \nonumber \\
		& - \cL^{\alpha^*_{K:K+1}} \left[ x'_{K:K+1, n}, y'_n, - \frac{\partial \psi_n(s_n, s'_n)}{\partial (x'_{K:K+1}, y')} + \varepsilon \frac{\partial l(s'_n)}{\partial (x'_{K:K+1}, y')}, N_n +  \varepsilon \frac{\partial^2 l(s'_n)}{\partial (x'_{K:K+1}, y')^2} \right]  \nonumber \\
		& + \sum^{K+1}_{i=K} r x_{i,n} \left( \frac{\partial \psi_n(s_n, s'_n)}{\partial x_i} - \varepsilon \frac{\partial l(s_n)}{\partial x_i} \right) + \mu_Y(y_n)^\top \left( \frac{\partial \psi_n(s_n, s'_n)}{\partial y} - \varepsilon \frac{\partial l(s_n)}{\partial y} \right)  \nonumber \\
		& + \cL^{\alpha^*_{K:K+1}} \left[ x_{K:K+1, n}, y_n, \frac{\partial \psi_n(s_n, s'_n)}{\partial (x_{K:K+1}, y)} - \varepsilon \frac{\partial l(s_n)}{\partial (x_{K: K+1}, y)}, M_n -  \varepsilon \frac{\partial^2 l(s_n)}{\partial (x_{K:K+1}, y)^2} \right], \label{Ham1}
	\end{align}
	where we used the fact that $\delta$ cancels with $1/\delta$ in $p_{K:K+1, n}$ and $A_n$, since $\cL^{\alpha_{K:K+1}}$ is linear in derivatives.
	
	%a direct calculation yields
	%\begin{align*}
	%	& \frac{\partial \psi_n(s_n, s'_n)}{\partial t} + \frac{\partial \psi_n(s_n, s'_n)}{\partial t'} = 2 \zeta (t_n - \hat{t}),	
	%\end{align*}
	
	A direct calculation simplifies the right-hand side of \eqref{Ham1} as
	\begin{align*}
		& \beta [v_K(s'_n) - \delta u_K(s_n)] + 2 \zeta (t'_n - \hat{t})  \\
		& + \sum^{K+1}_{i=K} r x'_{i,n} \frac{\partial \psi_n(s_n, s'_n)}{\partial x'_i} + \mu_Y(y'_n)^\top \frac{\partial \psi_n(s_n, s'_n)}{\partial y'} \\
		& + \cL^{\alpha^*_{K:K+1}} \left[ x'_{K:K+1, n}, y'_n, \frac{\partial \psi_n(s_n, s'_n)}{\partial (x'_{K:K+1}, y')}, -N_n \right]  \\
		& + \sum^{K+1}_{i=K} r x_{i,n}  \frac{\partial \psi_n(s_n, s'_n)}{\partial x_i} + \mu_Y(y_n)^\top \frac{\partial \psi_n(s_n, s'_n)}{\partial y} \\
		& + \cL^{\alpha^*_{K:K+1}} \left[ x_{K:K+1, n}, y_n, \frac{\partial \psi_n(s_n, s'_n)}{\partial (x_{K:K+1}, y)}, M_n \right] \\
		& - \varepsilon \frac{\partial l(s_n)}{\partial t} - \varepsilon \sum^{K+1}_{i=K} r x_{i, n} \frac{\partial l(s_n)}{\partial x_i}  - \varepsilon \mu_Y(y_n)^\top \frac{\partial l(s_n)}{\partial y} \\
		&  - \varepsilon \cL^{\alpha^*_{K:K+1}} \left[ x_{K:K+1, n}, y_n, \frac{\partial l(s_n)}{\partial (x_{K:K+1}, y)}, \frac{\partial^2 l(s_n)}{\partial (x_{K:K+1}, y)^2} \right] \\
		& - \varepsilon \frac{\partial l(s'_n)}{\partial t'} - \varepsilon \sum^{K+1}_{i=K} r x'_{i, n} \frac{\partial l(s'_n)}{\partial x'_i}  - \varepsilon \mu_Y(y'_n)^\top \frac{\partial l(s'_n)}{\partial y'} \\
		&  - \varepsilon \cL^{\alpha^*_{K:K+1}} \left[ x'_{K:K+1, n}, y'_n, \frac{\partial l(s'_n)}{\partial (x'_{K: K+1}, y')}, \frac{\partial^2 l(s'_n)}{\partial (x'_{K:K+1}, y')^2} \right].
	\end{align*}

	%Then the inequality \eqref{Ham1} reduces to 
	%\begin{align}
	%	0 < & \beta [v_K(s'_n) - \delta u_K(s_n)] - \varepsilon \beta l(s'_n) - \varepsilon \beta l(s_n) + 2 \zeta (t_n - \hat{t})  \label{Ham2} \\
	%	& + \sum^{K+1}_{i=K} r x'_{i,n} \frac{\partial \psi_n(s_n, s'_n)}{\partial x'_i} + \mu_Y(y'_n)^\top \frac{\partial \psi_n(s_n, s'_n)}{\partial y'} + \cL^{\alpha^*_{K:K+1}} \left[ x'_{K:K+1, n}, y'_n, \frac{\partial \psi_n(s_n, s'_n)}{\partial (x'_{K:K+1}, y')}, -N_n \right]  \nonumber \\
	%	& + \sum^{K+1}_{i=K} r x_{i,n}  \frac{\partial \psi_n(s_n, s'_n)}{\partial x_i} + \mu_Y(y_n)^\top \frac{\partial \psi_n(s_n, s'_n)}{\partial y} + \cL^{\alpha^*_{K:K+1}} \left[ x_{K:K+1, n}, y_n, \frac{\partial \psi_n(s_n, s'_n)}{\partial (x_{K:K+1}, y)}, M_n \right], \nonumber
	%\end{align}
	
	We discuss upper bounds for each term separately. Define matrices
	\begin{align}
		& C_n := \begin{pmatrix}
			(\alpha^* x_n)^\top_{K:K+1} \sigma(y_n)  \\
			\sigma_Y(y_n)
		\end{pmatrix} \in \R^{(2+m) \times d}, 
		\quad D_n := \begin{pmatrix}
			(\alpha^* x'_n)^\top_{K:K+1} \sigma(y'_n)  \\
			\sigma_Y(y'_n)
		\end{pmatrix} \in \R^{(2+m) \times d},
		\nonumber
	\end{align} 
	Then
	\begin{align}
		& \tr\left[\Sigma(\alpha^*_{K:K+1}, x_{K:K+1, n}, y_n) M_n \right] - \tr\left[\Sigma(\alpha^*_{K:K+1}, x'_{K:K+1, n}, y'_n) N_n \right] \nonumber \\
		& \leq 10 n^2 | C_n - D_n|^2 + \left(10 \zeta +  \frac{4\zeta^2}{n^2} \right)  \tr[D^\top_n D_n] - 8 \zeta \tr[C^\top_n D_n]. \label{Sigdiff} 
	\end{align}
	By first letting $n \rightarrow \infty$ and then $\zeta \rightarrow 0$, the right-hand side of \eqref{Sigdiff} converges to zero, thanks to the bounded Lipschitz $\sigma(\cdot)$, Lipschitz $\sigma_Y(\cdot)$, \eqref{s_limit}, and the boundedness of $\{s_n\}$ and $\{s'_n\}$ obtained from \eqref{s_limit}.
	
	We also have
	\begin{align*}
		& x'_{i,n} \frac{\partial \psi_n(s_n, s'_n)}{\partial x'_i} + x_{i,n}  \frac{\partial \psi_n(s_n, s'_n)}{\partial x_i} \\
		& \leq 2n|x_{i, n} - x'_{i, n}|\times|n(x_{i, n} - x'_{i, n}) + \zeta g(\hat{x}_{i})| + 2 \zeta x'_{i, n} |x'_{i, n} - \hat{x}_i| \\
		& \rightarrow 0, \text{ as } n \rightarrow \infty, \\
		& \mu_Y(y'_n)^\top \frac{\partial \psi_n(s_n, s'_n)}{\partial y'} + \mu_Y(y_n)^\top \frac{\partial \psi_n(s_n, s'_n)}{\partial y} \\
		& \leq c \left(  n^2|y_n - y'_n|^2 + \zeta | y'_n - \hat{y} | \right) \\
		& \rightarrow 0, \text{ as } n \rightarrow \infty,
	\end{align*}
	where $c$ is a generic constant, and
	\begin{align*}
		& (\mu(y_n) - r \one )^\top \alpha^*_i x_{i, n} \frac{\partial \psi_n(s_n, s'_n)}{\partial x_i} + (\mu(y'_n) - r \one )^\top \alpha^*_i x'_{i, n} \frac{\partial \psi_n(s_n, s'_n)}{\partial x'_i}   \\
		& \leq c (n |x_{i, n} - x'_{i, n}| + n |y_n - y'_n|) \times|n(x_{i, n} - x'_{i, n}) + \zeta g(\hat{x}_{i})| + c \zeta |x'_{i, n} - \hat{x}_i| \\
		& \rightarrow 0, \text{ as } n \rightarrow \infty.
	\end{align*}
        Here, $g(\hat{x}_i)$ denotes the entry of $g(\hat{s})$ in \eqref{eq:ghats} corresponding to $\hat{x}_i$. Hence, $g(\hat{x}_i)$ takes values in $\{1, -1, 0\}$.
    
	First, when $\beta > 0$, thanks to the estimates above and the fact that $l(s)$ is a strict subsolution, \eqref{Ham1} reduces to
	\begin{align*}
		0 < & \beta [v_K(s'_n) - \delta u_K(s_n)] - \beta \varepsilon l(s_n) - \beta \varepsilon l(s'_n) + 2 \zeta |t'_n - \hat{t}|\\
		& + \sum^{K+1}_{i=K} \left( 2n|x_{i, n} - x'_{i, n}|\times|n(x_{i, n} - x'_{i, n}) + \zeta g(\hat{x}_{i})| + 2 \zeta x'_{i, n} |x'_{i, n} - \hat{x}_i| \right) \\
		& + c \left(  n^2|y_n - y'_n|^2 + \zeta | y'_n - \hat{y} | \right) \\
		& + \sum^{K+1}_{i=K} \Big\{ c (n |x_{i, n} - x'_{i, n}| + n |y_n - y'_n|) \times|n(x_{i, n} - x'_{i, n}) + \zeta g(\hat{x}_{i})| + c \zeta |x'_{i, n} - \hat{x}_i| \Big\} \\
		& + 10 n^2 | C_n - D_n|^2 + \left(10 \zeta +  \frac{4\zeta^2}{n^2} \right)  \tr[D^\top_n D_n] - 8 \zeta \tr[C^\top_n D_n] \\
		\leq & \beta [v_K(\hat{s}) - \delta u_K(\hat{s}) - 2 \varepsilon l(\hat{s}) + 2 \zeta^2 - \psi_n(s_n, s'_n)] + 2 \zeta |t'_n - \hat{t}|\\
		& + \sum^{K+1}_{i=K} \left( 2n|x_{i, n} - x'_{i, n}|\times|n(x_{i, n} - x'_{i, n}) + \zeta g(\hat{x}_{i})| + 2 \zeta x'_{i, n} |x'_{i, n} - \hat{x}_i| \right) \\
		& + c \left(  n^2|y_n - y'_n|^2 + \zeta | y'_n - \hat{y} | \right) \\
		& + \sum^{K+1}_{i=K} \Big\{ c (n |x_{i, n} - x'_{i, n}| + n |y_n - y'_n|) \times|n(x_{i, n} - x'_{i, n}) + \zeta g(\hat{x}_{i})| + c \zeta |x'_{i, n} - \hat{x}_i| \Big\} \\
		& + 10 n^2 | C_n - D_n|^2 + \left(10 \zeta +  \frac{4\zeta^2}{n^2} \right)  \tr[D^\top_n D_n] - 8 \zeta \tr[C^\top_n D_n].
	\end{align*}
	The second inequality is from
	\begin{align*}
		\Phi_n (s_n, s'_n) \geq \delta u_K(\hat{s}) - v_K(\hat{s}) + 2 \varepsilon l(\hat{s}) - 2 \zeta^2,
	\end{align*}
	since $(s_n, s'_n)$ is a maximum point of $\Phi_n$. By first letting $n \rightarrow \infty$ and then $\zeta \rightarrow 0$, it leads to
	\begin{align}\label{eq:contrad}
		0 \leq \beta [v_K(\hat{s}) - \delta u_K(\hat{s}) - 2 \varepsilon l(\hat{s})]. 
	\end{align}
	Since $\beta > 0$, \eqref{eq:contrad} contradicts \eqref{shat}.
	
	When $\beta = 0$, we have
	\begin{align*}
		0 \leq 	& - \varepsilon \frac{\partial l(s_n)}{\partial t} - \varepsilon \sum^{K+1}_{i=K} r x_{i, n} \frac{\partial l(s_n)}{\partial x_i}  - \varepsilon \mu_Y(y_n)^\top \frac{\partial l(s_n)}{\partial y} \\
		&  - \varepsilon \cL^{\alpha^*_{K:K+1}} \left[ x_{K:K+1, n}, y_n, \frac{\partial l(s_n)}{\partial (x_{K:K+1}, y)}, \frac{\partial^2 l(s_n)}{\partial (x_{K:K+1}, y)^2} \right] \\
		& - \varepsilon \frac{\partial l(s'_n)}{\partial t'} - \varepsilon \sum^{K+1}_{i=K} r x'_{i, n} \frac{\partial l(s'_n)}{\partial x'_i}  - \varepsilon \mu_Y(y'_n)^\top \frac{\partial l(s'_n)}{\partial y'} \\
		&  - \varepsilon \cL^{\alpha^*_{K:K+1}} \left[ x'_{K:K+1, n}, y'_n, \frac{\partial l(s'_n)}{\partial (x'_{K: K+1}, y')}, \frac{\partial^2 l(s'_n)}{\partial (x'_{K:K+1}, y')^2} \right] \\
		& + 2 \zeta |t'_n - \hat{t}| + \sum^{K+1}_{i=K} \left( 2n|x_{i, n} - x'_{i, n}|\times|n(x_{i, n} - x'_{i, n}) + \zeta g(\hat{x}_{i})| + 2 \zeta x'_{i, n} |x'_{i, n} - \hat{x}_i| \right) \\
		& + c \left(  n^2|y_n - y'_n|^2 + \zeta | y'_n - \hat{y} | \right) \\
		& + \sum^{K+1}_{i=K} \Big\{ c (n |x_{i, n} - x'_{i, n}| + n |y_n - y'_n|) \times|n(x_{i, n} - x'_{i, n}) + \zeta g(\hat{x}_{i})| + c \zeta | x'_{i, n} - \hat{x}_i| \Big\}\\
		& + 10 n^2 | C_n - D_n|^2 + \left(10 \zeta +  \frac{4\zeta^2}{n^2} \right)  \tr[D^\top_n D_n] - 8 \zeta \tr[C^\top_n D_n].
	\end{align*}
	
	First letting $n \rightarrow \infty$ and then $\zeta \rightarrow 0$, by the continuity of derivatives of $l(s)$, we have
	\begin{align*}
		0 \leq 	& - 2 \varepsilon \frac{\partial l(\hat{s})}{\partial t} - 2 \varepsilon \sum^{K+1}_{i=K} r \hat{x}_{i} \frac{\partial l(\hat{s})}{\partial x_i}  - 2 \varepsilon \mu_Y(\hat{y})^\top \frac{\partial l(\hat{s})}{\partial y} \\
		&  - 2 \varepsilon \cL^{\alpha^*_{K:K+1}} \left[ \hat{x}_{K:K+1}, \hat{y}, \frac{\partial l(\hat{s})}{\partial (x_{K:K+1}, y)}, \frac{\partial^2 l(\hat{s})}{\partial (x_{K:K+1}, y)^2} \right].
	\end{align*}
	However, the strict subsolution property of $l$ at $\hat{s}$ indicates that the right-hand side is strictly smaller than zero, which is a contradiction.
\end{proof}

	\begin{proof}[Proof of Proposition \ref{prop:Tk}]
		Consider $k = K$ for ease of presentation. 
		
		With a slight abuse of notations, denote $s := (x_{K:K+1}, y)$. Assume on the contrary that $ u_K(\bar{s}) - v_K(\bar{s}) > 0$ for some $\bar{s} = (\bar{x}_{K:K+1}, \bar{y}) \in \overline{\cS}_{K}$. A strict subsolution is
		\begin{equation}
			l(x_{K:K+1}, y) := - c_1 \Big(1 + \sum^{K+1}_{i=K} a_i x_i \Big)^q - c_2 (1+|y|^2),
		\end{equation}
		where 
		\begin{equation}
			\begin{aligned}
				c_1, c_2 >0, \; a_i >0, \; q \in (0, 1), \; - \theta_K < (a_K - a_{K+1}) < \lambda_K, \quad 0 < c_1 q < 1,
			\end{aligned} 
		\end{equation}
		and $c_2$ is large enough to guarantee $l(x_{K:K+1}, y) - w_K (G_K - x_K)^+ - f(x_{K+1}, y) < 0$, thanks to the boundedness of $f$.
		
		Since $f$ is bounded, there exists a constant $C_f > 0$ such that $|f(x_{K+1}, y)| \leq C_f$. Consider a constant $\delta \in (0, 1)$ which is close enough to 1 and small positive constants $\varepsilon, \zeta > 0$, such that
		\begin{equation}
			\delta u_K(\bar{s}) - v_K(\bar{s})  + 2 \varepsilon l(\bar{s}) - 2 \zeta^2 - (1 - \delta) C_f > 0. 
		\end{equation}
		Hence,
		\begin{align}\label{shat_terminal}
			\sup_{s \in \overline{\cS}_{K}} \Big\{ \delta u_K(s) - v_K(s) + 2 \varepsilon l(s) - 2 \zeta^2 - (1 - \delta) C_f  \Big\} > 0,
		\end{align}
		and the maximum is attained in a compact subset of $\overline{\cS}_{K}$ at some $\hat{s} = (\hat{x}_{K:K+1}, \hat{y})$.
		
		Similar to the proof of Proposition \ref{p:comp_btw}, we can obtain a contradiction if $- \lambda_{K} + q_{K+1, n} - q_{K, n} \geq 0$ or $- \theta_{K} - q_{K+1, n} + q_{K, n} \geq 0$ holds for infinitely many $n$'s, where $q_{K+1, n}$ and $q_{K, n}$ are defined similarly as in the proof of Proposition \ref{p:comp_btw}. Hence, we only focus on the third case: $v_K(x'_{K:K+1, n}, y'_n) - w_K (G_K - x'_{K, n})^+ - f(x'_{K+1, n}, y'_n) \geq 0$ holds for infinitely many $n$'s. Then
		\begin{align}
			0 \leq & v_K(s'_n) - w_K (G_K - x'_{K, n})^+ - f(x'_{K+1, n}, y'_n)  \label{comTk1}\\
			& - \big( u_K(s_n) - w_K (G_K - x_{K, n})^+ - f(x_{K+1, n}, y_n)  \big) \delta. \nonumber
		\end{align}
		
		As in the proof of Proposition \ref{p:comp_btw}, $(s_n, s'_n)$ is a maximum point of $\Phi_n(s, s')$, defined in the same way as \eqref{Phi}. We have
		\begin{align}
			&\delta u_K(s_n) - v_K(s'_n) + \varepsilon l(s_n) + \varepsilon l(s'_n) - \psi_n(s_n, s'_n) \nonumber \\
			& \geq \delta u_K(\hat{s}) - v_K(\hat{s}) + 2 \varepsilon l(\hat{s}) - 2 \zeta^2. \label{comTk2}
		\end{align} 
		Therefore, \eqref{comTk1} and \eqref{comTk2} lead to
		\begin{align*}
			0 \leq & \delta w_K (G_K - x_{K, n})^+ - w_K (G_K - x'_{K, n})^+ - f(x'_{K+1, n}, y'_n) + \delta f(x_{K+1, n}, y_n) \\
			& + \varepsilon l(s_n) + \varepsilon l(s'_n) - \psi_n(s_n, s'_n) - \delta u_K(\hat{s}) + v_K(\hat{s}) - 2 \varepsilon l(\hat{s}) + 2 \zeta^2.
		\end{align*}
		When $n \rightarrow \infty$, by the continuity of $l$ and $f$, we have
		\begin{align*}
			0 & \leq v_K(\hat{s}) - \delta u_K(\hat{s}) + (\delta - 1) w_K (G_K - \hat{x}_K)^+ + (\delta - 1) f(\hat{x}_{K+1}, \hat{y}) + 2 \zeta^2\\
			& \leq v_K(\hat{s}) - \delta u_K(\hat{s}) + (1 - \delta) C_f + 2 \zeta^2,
		\end{align*}
		which contradicts \eqref{shat_terminal}.
	\end{proof}

\begin{proof}[Proof of Proposition \ref{prop:comp_last}]
	Denote $s := (t, x_{K+1}, y)$. Assume on the contrary that $ u_{K+1}(\bar{s}) - v_{K+1}(\bar{s}) > 0$ for some $\bar{s} = (\bar{t}, \bar{x}_{K+1}, \bar{y}) \in [T_{K}, T] \times \overline{\cS}_{K+1}$. A strict subsolution $l(s)$ is given by
		\begin{equation}
			l(t, x_{K+1}, y) := - c_1 e^{\gamma(T - t)} \Big(1 + a_{K+1} x_{K+1} \Big)^q - c_2 e^{\gamma(T - t)} (1 + |y|^2),
		\end{equation}
		with $c_1, c_2, \gamma > 0$, $q \in (0, 1)$, and $\gamma$ large enough. Similar to Lemma \ref{lem:strict_l}, it is direct to verify that $l(s)$ is a strict subsolution of \eqref{Vlast} on $[T_{K}, T) \times \overline{\cS}_{K+1}$.

		Consider a small positive constant $\varepsilon > 0$, such that
		\begin{equation}
			u_{K+1}(\bar{s}) - v_{K+1}(\bar{s})  + 2 \varepsilon l(\bar{s}) > 0. 
		\end{equation}
		
		For each $n \geq 0$, define
		\begin{align}
			\Phi_n(s, s') & := u_{K+1}(s) - v_{K+1}(s') + \varepsilon l(s) + \varepsilon l(s') - \psi_n(s, s'), \label{Phi_K+1} \\
			\psi_n(s, s') & := \frac{n}{2} |s - s'|^2.
		\end{align}
		Since $u_{K+1}(s) - v_{K+1}(s')$ is USC and bounded, and $l \rightarrow -\infty$ when $|(x_{K+1}, y) | \rightarrow \infty$ in $\overline{\cS}_{K+1}$, the maximum of $\Phi_n$ in \eqref{Phi_K+1} is attained at 
		$$(s_n, s'_n) = ((t_n, x_{K+1, n}, y_n), (t'_n, x'_{K+1, n}, y'_n))$$
		in some compact subset of $[T_{K}, T] \times \overline{\cS}_{K+1} \times [T_{K}, T] \times \overline{\cS}_{K+1}$. By the compactness, we can extract a subsequence, still denoted as $(s_n, s'_n)$, such that $(s_n, s'_n) \rightarrow (\hat{s}, \hat{s}')$. Moreover,
		\begin{align*}
			\Phi_n(s_n, s'_n) & \geq \sup_{s \in [T_K, T] \times \overline{\cS}_{K+1}} \Phi_0(s, s) \geq u_{K+1}(\bar{s}) - v_{K+1}(\bar{s}) + 2 \varepsilon l(\bar{s}) > 0.
		\end{align*}
		
		It reduces to
		\begin{align}
			& \frac{n}{2} | s_n - s'_n |^2 \leq u_{K+1}(s_n) - v_{K+1}(s'_n) + \varepsilon l(s_n) + \varepsilon l(s'_n) -  \sup_{s \in [T_K, T] \times \overline{\cS}_{K+1}} \Phi_0(s, s). \label{ubd_K+1}
		\end{align}
		Since $u_{K+1}$ and $v_{K+1}$ are bounded, $l(s_n)$ and $l(s'_n)$ converge to $l(\hat{s})$ by the continuity of $l$, and $\sup_{s \in [T_K, T] \times \overline{\cS}_{K+1}} \Phi_0(s, s) > 0$, the right-hand side of \eqref{ubd_K+1} is bounded for all $n \geq 1$. Then $| s_n - s'_n |^2 \rightarrow 0$. Hence $\hat{s} = \hat{s}'$. Putting this back into \eqref{ubd_K+1}, we obtain
		\begin{equation}\label{eq:limsup_K+1}
			\begin{aligned} 
				0 & \leq \limsup_{n \rightarrow \infty} \frac{n}{2}| s_n - s'_n|^2 \\
				& \leq \limsup_{n \rightarrow \infty} \left\{ u_{K+1}(s_n) - v_{K+1}(s'_n) + \varepsilon l(s_n) + \varepsilon l(s'_n) \right\} - \sup_{s \in [T_K, T] \times \overline{\cS}_{K+1}} \Phi_0(s, s) \\
				& \leq u_{K+1}(\hat{s}) - v_{K+1}(\hat{s}) + 2 \varepsilon l(\hat{s}) - \sup_{s \in [T_K, T] \times \overline{\cS}_{K+1}} \Phi_0(s, s) \leq 0, 
			\end{aligned}
		\end{equation} 
		where the third inequality holds since $u_{K+1} - v_{K+1}$ is USC and $l$ is continuous. Therefore,
		\begin{equation}\label{slim_K+1} 
			\lim_{n \rightarrow \infty} n|s_n - s'_n|^2 = 0.
		\end{equation}
		
		We note that $\Phi_n(s_n, s'_n)$ is monotone in $n$ since
		\begin{equation}
			\Phi_n(s_n, s'_n) \geq \Phi_n(s_{n+1}, s'_{n+1}) \geq \Phi_{n+1}(s_{n+1}, s'_{n+1}).
		\end{equation}
		Then the following limit exists and is given by
		\begin{equation}\label{shat_last}
			\begin{aligned}
				\lim_{n \rightarrow \infty} \Phi_n (s_n, s'_n) = & \lim_{n \rightarrow \infty} (u_{K+1}(s_n) - v_{K+1}(s'_n)) + \lim_{n \rightarrow \infty}  (\varepsilon l(s_n) + \varepsilon l(s'_n)) \\
				& - \lim_{n \rightarrow \infty} \frac{n}{2} |s_n - s'_n|^2 \\
				= &  u_{K+1}(\hat{s}) - v_{K+1}(\hat{s}) + 2 \varepsilon l(\hat{s})  = \sup_{s \in [T_K, T] \times \overline{\cS}_{K+1}} \Phi_0(s, s)  > 0,
			\end{aligned}
		\end{equation}
		where \eqref{eq:limsup_K+1} and \eqref{slim_K+1} are used.
		
		By the boundary/terminal conditions, we have $\hat{t} < T$ and $\hat{x}_{K+1} > 0$. Then for sufficiently large $n$, $s_n$ and $s'_n$ are also in $[T_K, T) \times \cS_{K+1}$. Hence, the viscosity sub/supersolution properties of $u_{K+1}$ and $v_{K+1}$ can be applied at $s_n$ and $s'_n$, respectively.

		Since $(s_n, s'_n)$ is a maximum point of $\Phi_n(s, s')$, by applying Ishii's lemma to the USC function $u_{K+1}(s) + \varepsilon l(s)$ and the LSC function $ v_{K+1}(s') - \varepsilon l(s')$, we derive that there exist $(m+1) \times (m + 1)$ symmetric matrices $M_n$ and $N_n$ satisfying
		\begin{align}
			& \left( \frac{\partial \psi_n(s_n, s'_n)}{\partial s}, M_n \right) \in \bar{J}^{2, +}_{[T_K, T) \times \cS_{K+1}} \big[ u_{K+1} + \varepsilon l \big](s_n), \\
			& \left( - \frac{\partial \psi_n(s_n, s'_n)}{\partial s'}, N_n \right) \in \bar{J}^{2, -}_{[T_K, T) \times \cS_{K+1}} \big[ v_{K+1} - \varepsilon l \big](s'_n),
		\end{align} 
		and 
		\begin{equation}\label{eq:Ishii}
			\begin{pmatrix}
				M_n & 0\\
				0 & - N_n
			\end{pmatrix} 
			\leq \Xi_n + \frac{1}{n} (\Xi_n)^2,
		\end{equation}
		where the Hessian matrix $\Xi_n$ is given by
		\begin{align*}
			\Xi_n = n \begin{pmatrix}
				I_{1+m} & - I_{1+m}\\
				-I_{1+m} & I_{1+m}
			\end{pmatrix}.
		\end{align*}
		
		Since $l$ is $C^{1, 2}$ smooth, we can define
		\begin{align*}
			p_{t, n} & :=  \frac{\partial \psi_n(s_n, s'_n)}{\partial t} -  \varepsilon \frac{\partial l(s_n)}{\partial t} = n(t_n - t'_n) - \varepsilon \frac{\partial l(s_n)}{\partial t}, \\
			p_{K+1, n} & := \frac{\partial \psi_n(s_n, s'_n)}{\partial x_{K+1}}  -  \varepsilon \frac{\partial l(s_n)}{\partial x_{K+1}} = n (x_{K+1, n} - x'_{K+1, n}) -  \varepsilon \frac{\partial l(s_n)}{\partial x_{K+1}}, \\
			p_{y, n} & :=  \frac{\partial \psi_n(s_n, s'_n)}{\partial y} -  \varepsilon \frac{\partial l(s_n)}{\partial y} = n (y_n - y'_n) -  \varepsilon \frac{\partial l(s_n)}{\partial y}, \\
			q_{t, n} & := - \frac{\partial \psi_n(s_n, s'_n)}{\partial t'} + \varepsilon \frac{\partial l(s'_n)}{\partial t'} =  n(t_n - t'_n) + \varepsilon \frac{\partial l(s'_n)}{\partial t'}, \\
			q_{K+1, n} & := - \frac{\partial \psi_n(s_n, s'_n)}{\partial x'_{K+1}} + \varepsilon \frac{\partial l(s'_n)}{\partial x'_{K+1}} = n(x_{K+1, n} - x'_{K+1, n}) + \varepsilon \frac{\partial l(s'_n)}{\partial x'_{K+1}}, \\
			q_{y', n} &:= - \frac{\partial \psi_n(s_n, s'_n)}{\partial y'} + \varepsilon \frac{\partial l(s'_n)}{\partial y'} = n(y_n - y'_n) + \varepsilon \frac{\partial l(s'_n)}{\partial y'}, \\
			A_n & := M_n -  \varepsilon \frac{\partial^2 l(s_n)}{\partial (x_{K+1}, y)^2}, \quad B_n := N_n +  \varepsilon \frac{\partial^2 l(s'_n)}{\partial (x'_{K+1}, y')^2}.
		\end{align*}
		Then 
		\begin{align*}
			& \left( p_{t, n}, (p_{K+1, n}, p_{y, n}),  A_n\right) \in \bar{J}^{2, +}_{[T_K, T) \times \cS_{K+1}} \big[ u_{K+1} \big](s_n), \\
			& \left( q_{t, n}, (q_{K+1, n}, q_{y', n}),  B_n \right) \in \bar{J}^{2, -}_{[T_K, T) \times \cS_{K+1}} \big[ v_{K+1}\big](s'_n).
		\end{align*} 
		The semijets definition of viscosity solution yields
		\begin{equation}
			\begin{aligned} 
				& \beta u_{K+1}(s_n) - p_{t, n} - H(x_{K+1, n}, y_n, (p_{K+1, n}, p_{y, n}), A_n) \leq 0, \\
				& \beta v_{K+1}(s'_n) - q_{t, n} - H(x'_{K+1, n}, y'_n, (q_{K+1, n}, q_{y', n}), B_n) \geq 0.
			\end{aligned} 
		\end{equation}
		
		We argue by contradiction. Suppose $\beta v_{K+1}(s'_n) - q_{t, n} - H(x'_{K+1, n}, y'_n, (q_{K+1, n}, q_{y', n}), B_n) \geq 0$ for infinitely many $n$'s. By continuity, the infimum in $H(x'_{K+1, n}, y'_n, (q_{K+1, n}, q_{y', n}), B_n)$ is attained at some $\alpha^*_{K+1}$.  Then
		\begin{align}
			0 \leq & \beta v_{K+1}(s'_n) - q_{t, n} - H(x'_{K+1, n}, y'_n, (q_{K+1, n}, q_{y', n}), B_n) \nonumber \\
			& - \left[ \beta u_{K+1}(s_n) - p_{t, n} - H(x_{K+1, n}, y_n, (p_{K+1, n}, p_{y, n}), A_n) \right] \nonumber \\
			\leq & \beta [v_{K+1}(s'_n) - u_{K+1}(s_n)] - \varepsilon \frac{\partial l(s_n)}{\partial t} - \varepsilon \frac{\partial l(s'_n)}{\partial t'}  \label{ineq:last} \\
			& - r \varepsilon x_{K+1, n} \frac{\partial l(s_n)}{\partial x_{K+1}} - r \varepsilon x'_{K+1, n} \frac{\partial l(s'_n)}{\partial x'_{K+1}}  - \varepsilon \mu_Y(y_n)^\top \frac{\partial l(s_n)}{\partial y}  - \varepsilon \mu_Y(y'_n)^\top \frac{\partial l(s'_n)}{\partial y'}  \nonumber \\
			& +  r n (x_{K+1, n} - x'_{K+1,n})^2 + n [ \mu_Y(y_n) - \mu_Y(y'_n)]^\top(y_n - y'_n) \nonumber \\
			& - (\mu(y'_n) - r\one)^\top \alpha^*_{K+1} x'_{K+1, n} \left( n (x_{K+1, n} - x'_{K+1, n}) + \varepsilon \frac{\partial l(s'_n)}{\partial x'_{K+1}} \right) \nonumber \\
			& - \frac{1}{2} \tr[\Sigma(\alpha^*_{K+1}, x'_{K+1, n}, y'_n) N_n] - \frac{\varepsilon}{2} \tr \left[\Sigma(\alpha^*_{K+1}, x'_{K+1, n}, y'_n) \frac{\partial^2 l(s'_n)}{\partial (x'_{K+1}, y')^2} \right] \nonumber \\
			& + (\mu(y_n) - r\one)^\top \alpha^*_{K+1} x_{K+1, n} \left( n (x_{K+1, n} - x'_{K+1, n}) - \varepsilon \frac{\partial l(s_n)}{\partial x_{K+1}} \right) \nonumber \\
			& + \frac{1}{2} \tr[\Sigma(\alpha^*_{K+1}, x_{K+1, n}, y_n) M_n] - \frac{\varepsilon}{2} \tr \left[\Sigma(\alpha^*_{K+1}, x_{K+1, n}, y_n) \frac{\partial^2 l(s_n)}{\partial (x_{K+1}, y)^2} \right].   \nonumber
		\end{align}

		By the Lipschitz property of $\mu_Y(y)$,
		\begin{align*}
			n [ \mu_Y(y_n) - \mu_Y(y'_n)]^\top(y_n - y'_n) \leq cn | y_n - y'_n|^2,
		\end{align*}
		where $c$ is the Lipschitz constant.
		
		Similarly, by the boundedness of $\{x_{K+1, n}\}$, $\{ x'_{K+1, n}\}$, $\mu(y)$, and the Lipschitz property of $\mu(y)$, we have
		\begin{equation}
			\begin{aligned}
				& (\mu(y_n) - r\one)^\top \alpha^*_{K+1} x_{K+1, n} n (x_{K+1, n} - x'_{K+1, n}) \\
				& \quad  - (\mu(y'_n) - r\one)^\top \alpha^*_{K+1} x'_{K+1, n} n (x_{K+1, n} - x'_{K+1, n}) \\
				& \leq c n |x_{K+1, n} - x'_{K+1, n} |^2 + c n | y_n - y'_n|^2.
			\end{aligned}
		\end{equation}

		With \eqref{eq:Ishii}, Lipschitz property of $\sigma_Y(y)$, and the bounded Lipschitz property of $\sigma(y)$, we have
		\begin{align}
			& \tr\left[\Sigma(\alpha^*_{K+1}, x_{K+1, n}, y_n) M_n \right] - \tr\left[\Sigma(\alpha^*_{K+1}, x'_{K+1, n}, y'_n) N_n \right] \nonumber \\
			& \leq cn \left( |x_{K+1, n} - x'_{K+1, n}|^2 + |y_n - y'_n|^2 \right).
		\end{align}
		
		First, when $\beta > 0$, thanks to the estimates above and the fact that $l(s)$ is a strict subsolution, \eqref{ineq:last} reduces to
		\begin{align}
			0 < & \beta [v_{K+1}(s'_n) - u_{K+1}(s_n)] - \beta \varepsilon l(s_n) - \beta \varepsilon l(s'_n) + c n |x_{K+1, n} - x'_{K+1, n} |^2 + c n | y_n - y'_n|^2 \nonumber \\
			\leq & \beta [v_{K+1}(\hat{s}) - u_{K+1}(\hat{s})] - 2 \beta \varepsilon l(\hat{s}) + c n | s_n - s'_n|^2. \label{ineq:beta1}
		\end{align}
		The second inequality is from
		\begin{align*}
			\Phi_n (s_n, s'_n) \geq u_{K+1}(\hat{s}) - v_{K+1}(\hat{s}) + 2 \varepsilon l(\hat{s}),
		\end{align*}
		since $(s_n, s'_n)$ is a maximum point of $\Phi_n$. When $n \rightarrow \infty$, \eqref{ineq:beta1} leads to
		\begin{align}\label{eq:con}
			0 \leq \beta [v_{K+1}(\hat{s}) - u_{K+1}(\hat{s}) - 2 \varepsilon l(\hat{s})]. 
		\end{align}
		Since $\beta > 0$, \eqref{eq:con} contradicts \eqref{shat_last}.
		
		When $\beta = 0$, we have
		\begin{align*}
			0 \leq & - \varepsilon \frac{\partial l(s_n)}{\partial t} - \varepsilon \frac{\partial l(s'_n)}{\partial t'} \\
			& - \varepsilon r  x_{K+1, n} \frac{\partial l(s_n)}{\partial x_{K+1}} - \varepsilon r x'_{K+1, n} \frac{\partial l(s'_n)}{\partial x'_{K+1}} - \varepsilon \mu_Y(y_n)^\top \frac{\partial l(s_n)}{\partial y}  - \varepsilon \mu_Y(y'_n)^\top \frac{\partial l(s'_n)}{\partial y'}  \\
			&  - \varepsilon (\mu(y'_n) - r\one)^\top \alpha^*_{K+1} x'_{K+1, n}  \frac{\partial l(s'_n)}{\partial x'_{K+1}}  - \frac{\varepsilon}{2} \tr \left[\Sigma(\alpha^*_{K+1}, x'_{K+1, n}, y'_n) \frac{\partial^2 l(s'_n)}{\partial (x'_{K+1}, y')^2} \right]  \\
			&  - \varepsilon (\mu(y_n) - r\one)^\top \alpha^*_{K+1} x_{K+1, n}  \frac{\partial l(s_n)}{\partial x_{K+1}} - \frac{\varepsilon}{2} \tr \left[\Sigma(\alpha^*_{K+1}, x_{K+1, n}, y_n) \frac{\partial^2 l(s_n)}{\partial (x_{K+1}, y)^2} \right] \\
			& + c n | s_n - s'_n|^2.   
		\end{align*}
		
		Letting $n \rightarrow \infty$, by the continuity of derivatives of $l(s)$, we have
		\begin{align*}
			0 \leq 	& - 2 \varepsilon \frac{\partial l(\hat{s})}{\partial t} - 2 \varepsilon r \hat{x}_{K+1} \frac{\partial l(\hat{s})}{\partial x_{K+1}}  - 2 \varepsilon \mu_Y(\hat{y})^\top \frac{\partial l(\hat{s})}{\partial y} \\
			&  - 2 \varepsilon \cL^{\alpha^*_{K+1}} \left[ \hat{x}_{K+1}, \hat{y}, \frac{\partial l(\hat{s})}{\partial (x_{K+1}, y)}, \frac{\partial^2 l(\hat{s})}{\partial (x_{K+1}, y)^2} \right].
		\end{align*}
		However, the strict subsolution property of $l$ at $\hat{s}$ indicates that the right-hand side is strictly smaller than zero, which is a contradiction.
	\end{proof}
	
\end{document}